\DeclareSymbolFont{fouriernc}{FML}{fncmi}{m}{it}%
\DeclareMathAccent{\fvec}{0}{fouriernc}{"7E}
\renewcommand{\vec}[1]{\fvec{#1}}
\numberwithin{equation}{section}
\numberwithin{figure}{section}
\theoremstyle{definition}
\newtheorem{definition}{Definition}
\newtheorem{problem}{Problem}
\newtheorem{example}{Example}
\theoremstyle{plain}
\newtheorem{theorem}{Theorem}
\newtheorem{lemma}{Lemma}
\newtheorem{fact}{Fact}
\newcommand{\eq}[1]{\hyperref[eq:#1]{(\ref*{eq:#1})}}
\renewcommand{\sec}[1]{\hyperref[sec:#1]{Section~\ref*{sec:#1}}}
\newcommand{\app}[1]{\hyperref[app:#1]{Appendix~\ref*{app:#1}}}
\newcommand{\thm}[1]{\hyperref[thm:#1]{Theorem~\ref*{thm:#1}}}
\newcommand{\lem}[1]{\hyperref[lem:#1]{Lemma~\ref*{lem:#1}}}
\newcommand{\defn}[1]{\hyperref[defn:#1]{Definition~\ref*{defn:#1}}}
\newcommand{\ex}[1]{\hyperref[ex:#1]{Example~\ref*{ex:#1}}}
\newcommand{\fct}[1]{\hyperref[fct:#1]{Fact~\ref*{fct:#1}}}
\newcommand{\fig}[1]{\hyperref[fig:#1]{Figure~\ref*{fig:#1}}}
\newcommand{\figs}[1]{\hyperref[fig:#1]{Figures~\ref*{fig:#1}}}
\newcommand{\subfig}[1]{\protect\subref{fig:#1}}
\newcommand{\ocl}{\hyperref[lem:oc]{Occupancy Constraints Lemma}\xspace}
\newcommand{\npl}{\hyperref[lem:npl]{Nullspace Projection Lemma}\xspace}
\newcounter{mycount}
\DeclareMathOperator{\poly}{poly}
\DeclareMathOperator{\spn}{span}
\DeclareMathOperator{\Sym}{Sym}
\DeclareMathOperator{\AP}{AP}
\DeclareMathOperator{\In}{In}
\DeclareMathOperator{\new}{new}
\DeclareMathOperator{\Cube}{Cube}
\DeclareMathOperator{\diam}{diam}
\newcommand{\C}{\mathbb{C}}
\newcommand{\id}{1}
\newcommand{\wit}{\mathrm{wit}}
\newcommand{\prop}{\mathrm{prop}}
\newcommand{\CNOT}{\mathrm{CNOT}}
\newcommand{\SWAP}{\mathrm{SWAP}}
\newcommand{\occ}{\mathrm{occ}}
\newcommand{\inn}{\mathrm{in}}
\newcommand{\out}{\mathrm{out}}
\newcommand{\legal}{\mathrm{legal}}
\newcommand{\illegal}{\mathrm{illegal}}
\newcommand{\clock}{\mathrm{clock}}
\newcommand{\anc}{\mathrm{anc}}
\newcommand{\bnd}{\mathrm{bnd}}
\newcommand{\gxoc}{G_{X}^{\occ}}
\newcommand{\gx}{G_{X}}
\newcommand{\goc}{G^{\occ}}
\newcommand{\hopping}{t_{\mathrm{hop}}}
\newlength{\vdotheight}
\newcommand{\mystrut}{\rule{0pt}{\vdotheight}} 
\def\@tocline#1#2#3#4#5#6#7{\relax
 \ifnum #1>\c@tocdepth 
 \else
   \par \addpenalty\@secpenalty\addvspace{#2}%
   \begingroup \hyphenpenalty\@M
   \@ifempty{#4}{%
     \@tempdima\csname r@tocindent\number#1\endcsname\relax
   }{%
     \@tempdima#4\relax
   }%
   \parindent\z@ \leftskip#3\relax \advance\leftskip\@tempdima\relax
   \rightskip\@pnumwidth plus4em \parfillskip-\@pnumwidth
   #5\leavevmode\hskip-\@tempdima
     \ifcase #1
      \or\or \hskip 1.75em \or \hskip 2em \else \hskip 3em \fi%
     #6\nobreak\relax
   \dotfill\hbox to\@pnumwidth{\@tocpagenum{#7}}\par
   \nobreak
   \endgroup
 \fi}
\begin{document}

\title{The Bose-Hubbard model is QMA-complete}

\author{Andrew M. Childs$^{1,2}$}
\author{David Gosset$^{1,2}$}
\author{Zak Webb$^{2,3}$}

\address{$^1$
Department of Combinatorics \& Optimization,
University of Waterloo}

\address{$^2$
Institute for Quantum Computing,
University of Waterloo}

\address{$^3$
Department of Physics \& Astronomy,
University of Waterloo}

\email{amchilds@uwaterloo.ca, dngosset@gmail.com, zakwwebb@gmail.com}

\renewcommand{\abstractname}{\normalsize Abstract}
\begin{abstract}
The Bose-Hubbard model is a system of interacting bosons that live on the vertices of a graph. The particles can move between adjacent vertices and experience a repulsive on-site interaction.  The Hamiltonian is determined by a choice of graph that specifies the geometry in which the particles move and interact.  We prove that approximating the ground energy of the Bose-Hubbard model on a graph at fixed particle number is QMA-complete. In our QMA-hardness proof, we encode the history of an $n$-qubit computation in the subspace with at most one particle per site (i.e., hard-core bosons). This feature, along with the well-known mapping between hard-core bosons and spin systems, lets us prove a related result for a class of $2$-local Hamiltonians defined by graphs that generalizes the XY model.  By avoiding the use of perturbation theory in our analysis, we circumvent the need to multiply terms in the Hamiltonian by large coefficients.
\end{abstract}

\maketitle

\section{Introduction}
\label{sec:intro}
The problem of approximating the ground energy of a given Hamiltonian is a natural quantum analog of classical constraint satisfaction. Many authors have considered the computational complexity of such quantum ground state problems. For a variety of classes of Hamiltonians and a suitable notion of approximation, this task is complete for the complexity class QMA, the quantum version of NP with two-sided error (see reference \cite{Boo12} for a recent review). These results provide evidence that approximating the ground energy of such quantum systems is likely intractable.

The first such example is the Local Hamiltonian problem introduced by Kitaev \cite{KSV02}. A $k$-local Hamiltonian acts on a system of $n$ qubits and can be written as a sum of terms, each acting nontrivially on at most $k$ qubits. The $k$-Local Hamiltonian problem is a promise problem related to the task of approximating the ground energy of a $k$-local Hamiltonian. Given such a Hamiltonian and two thresholds $a$ and $b$, one is asked to determine if the ground energy is below $a$ or above $b$ (promised that one of these conditions holds). Kitaev's original work showed that the $5$-local Hamiltonian problem is QMA-complete \cite{KSV02}; subsequent works proved QMA-completeness of the $3$-local Hamiltonian problem \cite{KR03}, the $2$-local Hamiltonian problem \cite{KKR04}, and the $2$-local Hamiltonian problem with interactions between qubits restricted to a two-dimensional lattice \cite{OT05}.

The complexity of similar computational problems related to other classes of Hamiltonians has also been considered.  These include Hamiltonians in one dimension \cite{AGIK09,GI09}, frustration-free Hamiltonians \cite{Bra06,GN13}, and stoquastic Hamiltonians (Hamiltonians with no ``sign problem'') \cite{BDOT08,BT09}, among others.

The QMA-hardness of ground energy problems for local Hamiltonians acting on qubits has implications for Hamiltonians acting on indistinguishable particles (bosons or fermions) due to formal mappings between these systems.  By applying such mappings to the Local Hamiltonian problem, one can show that certain bosonic \cite{WMN10} and fermionic \cite{LCV07} Hamiltonian problems are QMA-hard.  A more restrictive class of QMA-complete fermionic Hamiltonians was considered by Schuch and Verstraete, who showed that the Hubbard model with a site-dependent magnetic field is QMA-complete \cite{SV09}. This is a specific model of interacting electrons (i.e., spin-$\frac{1}{2}$ fermions) on a two-dimensional lattice, with a magnetic field that may take different values and point in different directions (in three dimensions) at distinct sites of the lattice.

Many of the QMA-complete problems considered previously have the property that the form of the terms in the Hamiltonian is part of the specification of the instance.  For example, a $2$-local Hamiltonian is specified by a $2$-local Hermitian operator for each pair of qubits. In the Hubbard model considered in reference \cite{SV09}, there is a similar freedom in the choice of magnetic field at each site.

In contrast, here we consider a system of interacting bosons with fixed movement and interaction terms.  Specifically, we consider the Bose-Hubbard model, which has one of the simplest interactions between particles that conserves total particle number. Although the Bose-Hubbard model is traditionally defined on a lattice \cite{FWGF89}, here we consider its natural extension to a general graph.

We consider undirected graphs without multiple edges and with at most one self-loop per vertex. Any such graph $G$ (with vertex set $V$) can be specified by its adjacency matrix, a symmetric $0$-$1$ matrix denoted $A(G)$. The Bose-Hubbard model on $G$ with hopping strength $\hopping$ and interaction strength $J_{\mathrm{int}}$ has the Hamiltonian
\begin{equation}
  H_{G}
  =\hopping\sum_{i,j \in V} A(G)_{ij}a_{i}^{\dagger}a_{j}
  +J_{\mathrm{int}}\sum_{k\in V}n_{k}\left(n_{k}-1\right)
\label{eq:Bose-Hubbard_Ham_intro}
\end{equation}
where $a_{i}^{\dagger}$ creates a boson at vertex $i$ and $n_{i}=a_{i}^{\dagger}a_{i}$ counts the number of bosons at vertex $i$. Our results apply to the Bose-Hubbard model for any fixed positive hopping strength $\hopping > 0$ and any fixed positive (i.e., repulsive) interaction strength $J_\mathrm{int}>0$.  Unlike in other QMA-completeness results, in our work the coefficients $\hopping,J_{\mathrm{int}}$ are not inputs to the problem; rather, each fixed choice defines a computational problem and we prove QMA-completeness for each of them.

Observe that the Bose-Hubbard Hamiltonian \eq{Bose-Hubbard_Ham_intro} conserves the total number of particles $N=\sum_{k\in V}n_k$. We focus on the space of $N$-particle states, which can be identified with the symmetric subspace of $(\C^{|V|})^{\otimes N}$ (as we discuss in more detail in \sec{Definitions-and-Results}). The first term in \eq{Bose-Hubbard_Ham_intro} allows particles to move between vertices; the second term is an interaction between particles that assigns an energy penalty for each vertex occupied by more than one particle. The Bose-Hubbard model is an example of a multi-particle quantum walk, a generalization of quantum walk to systems with more than one walker.

Recently we showed that the Bose-Hubbard model on a graph can perform efficient universal quantum computation \cite{CGW13}. Sometimes universality goes hand-in-hand with QMA-completeness, e.g., for local Hamiltonians, whose dynamics are BQP-complete \cite{Fey85} and whose ground energy problem is QMA-complete \cite{KSV02}. However, not all classes of Hamiltonians with universal dynamics have QMA-complete ground energy problems. For example, the dynamics of stoquastic local Hamiltonians are BQP-complete (as follows from \cite{JW06} and time-reversal symmetry), whereas the corresponding ground energy problem is in AM \cite{BDOT08}, which is presumably smaller than QMA.  Similarly, the ground energy problem for a Bose-Hubbard model with $\hopping<0$ is also in AM \cite{BDOT08}, whereas the dynamics of such Hamiltonians are universal \cite{CGW13}. The ferromagnetic Heisenberg model on a graph provides an even starker contrast: its dynamics are BQP-complete (as can be inferred from \cite{CGW13} using a correspondence between spins and hard-core bosons) but its ground energy problem is trivial since the ground space is the symmetric subspace.

\subsection{Overview of results}

In this paper we define the Bose-Hubbard Hamiltonian problem and characterize its complexity. In this problem one is given a graph $G$ and a number of particles $N$ and asked to approximate the ground energy of the Bose-Hubbard Hamiltonian \eq{Bose-Hubbard_Ham_intro} in the $N$-particle sector (in a precise sense described in \sec{Definitions-and-Results}). We prove that this problem is QMA-complete. 

To prove QMA-hardness of the Bose-Hubbard Hamiltonian problem, we show that in fact a notable special case of this problem, called Frustration-Free Bose Hubbard Hamiltonian, is QMA-hard. In this problem one is asked (roughly) to determine if the ground energy of the Bose-Hubbard Hamiltonian \eq{Bose-Hubbard_Ham_intro} in the $N$-particle sector is close to $N$ times its single-particle ground energy (i.e., $N$ times the smallest eigenvalue of the adjacency matrix $A(G)$). This is always a lower bound on the $N$-particle energy, and when it is achieved we say the $N$-particle ground states are frustration free. A frustration-free state has the special property that it has minimal energy for both terms in \eq{Bose-Hubbard_Ham_intro}, and in particular it is annihilated by the interaction term. Frustration-free states therefore live in the subspace of \emph{hard-core} bosons, where no more than one boson can occupy each site of the graph.

Furthermore, we prove a reduction from Frustration-Free Bose-Hubbard Hamiltonian to an eigenvalue problem for a class of $2$-local Hamiltonians defined by graphs. The two problems are related by a well-known mapping between hard-core bosons and spin systems. Specifically, given a graph $G$  (with vertex set $V$)  we consider the Hamiltonian
\begin{equation}
 \sum_{\substack{A(G)_{ij}=1 \\ i\neq j}}\big(|01\rangle\langle 10|+|10\rangle\langle 01|\big)_{ij} +\sum_{A(G)_{ii}=1} |1\rangle\langle1|_i=\sum_{\substack{A(G)_{ij}=1 \\ i\neq j}}\frac{\sigma_x^i \sigma_x^j+\sigma_y^i \sigma_y^j}{2}+\sum_{A(G)_{ii}=1}\frac{1-\sigma_z^{i}}{2}.\label{eq:xymodel}
\end{equation}
Note that this Hamiltonian commutes with the magnetization operator $M_z=\sum_{i=1}^{|V|}\frac{1-\sigma_z^{i}}{2}$ and has a sector for each of its eigenvalues $M_z\in\{0,1,\ldots,|V|\}$. We reduce Frustration-Free Bose-Hubbard Hamiltonian (with $N$ particles on a graph $G$) to the problem of approximating the smallest eigenvalue of \eq{xymodel} within the sector with magnetization $M_z=N$. We call this the XY Hamiltonian problem because of its connection to the XY model from condensed matter physics. Since this problem is contained in QMA, our reduction shows it to be QMA-complete.

We also obtain a related result that may be of independent interest. In \app{complexity_smallest_graph_eig} we give a self-contained proof that computing the smallest eigenvalue of a sparse, efficiently row-computable \cite{AT03} symmetric $0$-$1$ matrix (the adjacency matrix of a graph) is QMA-complete. This can alternatively be viewed as a result about the QMA-completeness of a single-particle quantum walk on a graph with at most one self-loop per vertex. To prove this,  we use a mapping from circuits to graphs that is also used in our main result.  Note that Janzing and Wocjan used a similar construction to design a BQP-complete problem \cite{JW06}.

\subsection{Proof techniques}

We prove our main result by direct reduction from quantum circuit satisfiability.  We introduce several new techniques in order to do this using the Bose-Hubbard model on an unweighted graph.

Kitaev's original proof of QMA-hardness of the Local Hamiltonian problem encodes a QMA verification circuit using ideas from a computationally universal Hamiltonian proposed by Feynman \cite{Fey85}.  This Hamiltonian uses a ``clock register'' to record the progress of the computation; in an appropriate basis, the Hamiltonian can be seen as a quantum walk on a path whose vertices represent the steps of the computation. Other proofs of QMA-hardness have used alternative encodings of the temporal structure of a verification circuit into a quantum state.  In our construction, we encode the history of an $n$-qubit verification circuit in the state of $n$ interacting particles on a graph, where each particle encodes a single qubit.

Our construction uses a class of graphs we define called \emph{gate graphs}.  Gate graphs are built from a basic subgraph whose single-particle ground states encode the history of a simple single-qubit computation. By suitably combining copies of this basic unit, we define gadgets with other functionality.  (Note that these gadgets realize some desired behavior exactly; they are not ``perturbative gadgets'' in the sense of \cite{KKR04,JF08}.)  In particular, we design gadgets for two-qubit gates such that each ground state of the two-particle Bose-Hubbard model encodes a two-qubit computation. We now give a high-level description of how these gadgets work and how we use them to construct a graph for a QMA verification circuit.

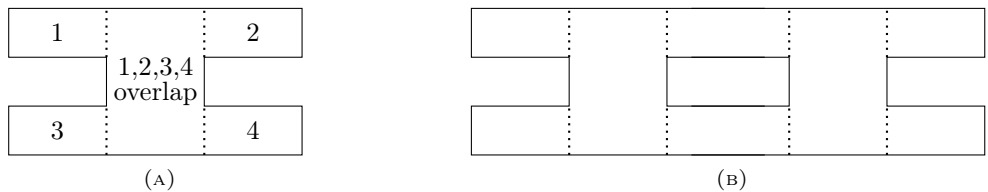
\begin{figure}
\subfloat[][]{\begin{tikzpicture}[scale=0.65]\label{fig:Vgraphstruct}
\draw (2,2)--(2,1)--(0,1)--(0,0)--(4,0)--(6,0)--(6,1)--(4,1)--(4,2)--(6,2)--(6,3)--(4,3)--(0,3)--(0,2)--(2,2);
\node at (1,0.5){3};
\node at (1,2.5){1};
\node at (5,0.5){4};
\node at (5,2.5){2};
\node at (3,1.75) {1,2,3,4};
\node at (3,1.25){overlap};
\draw[thick,dotted] (2,0)--(2,1);
\draw[thick,dotted] (2,2)--(2,3);
\draw[thick,dotted] (4,0)--(4,1);
\draw[thick,dotted] (4,2)--(4,3);
\end{tikzpicture}
}
\hspace{2cm}
\subfloat[][]{\begin{tikzpicture}[scale=0.65]\label{fig:circuitstruct}
\draw (2,2)--(2,1)--(0,1)--(0,0)--(4,0)--(6,0);
\draw (6,1)--(4,1)--(4,2)--(6,2);
\draw (6,3)--(4,3)--(0,3)--(0,2)--(2,2);
\draw[thick,dotted] (2,0)--(2,1);
\draw[thick,dotted] (2,2)--(2,3);
\draw[thick,dotted] (4,0)--(4,1);
\draw[thick,dotted] (4,2)--(4,3);
\begin{scope}[xshift=4.5cm]
\draw (2,2)--(2,1)--(0,1);
\draw (0,0)--(4,0)--(6,0)--(6,1)--(4,1)--(4,2)--(6,2)--(6,3)--(4,3)--(0,3);
\draw (0,2)--(2,2);
\draw[thick,dotted] (2,0)--(2,1);
\draw[thick,dotted] (2,2)--(2,3);
\draw[thick,dotted] (4,0)--(4,1);
\draw[thick,dotted] (4,2)--(4,3);
\end{scope}
\end{tikzpicture}
}
\caption{We design graphs for two-qubit gates with overlapping regions as shown in \subfig{Vgraphstruct}. Regions 1 and 2 are associated with the first encoded qubit and regions 3 and 4 are associated with the second encoded qubit. One could imagine designing a graph for a circuit with two-qubit gates $U_1$ followed by $U_2$ by connecting the corresponding gadgets as in \subfig{circuitstruct}. In the text we describe a challenge with this approach.}
\end{figure}

For each two-qubit gate $U$ from a fixed universal set, we design a graph $G_U$ that can be divided into four overlapping regions as shown schematically in \fig{Vgraphstruct}. (The specific graphs we use for two-qubit gates each have 4096 vertices and are described using the gate graph formalism.) The two-particle Bose-Hubbard model on this graph has ground states that encode the two-qubit computation. To describe them it is helpful to first consider the single-particle ground states, i.e., the ground states of the adjacency matrix $A(G_U)$. This matrix has 16 orthonormal single-particle ground states $|\rho^{i,U}_{z,a}\rangle$. Each index $i\in \{1,2,3,4\}$ is associated with the corresponding region in the graph, as $|\rho^{i,U}_{z,a}\rangle$ is supported entirely within region $i$.  The index $z\in\{0,1\}$ corresponds to the computational basis states of a single encoded qubit. Note that, since $A(G_U)$ is a real matrix, the complex conjugate of any eigenstate is also an eigenstate with the same eigenvalue.  The index $a\in\{0,1\}$ is associated with this freedom, i.e., $|\rho^{i,U}_{z,1}\rangle=|\rho^{i,U}_{z,0}\rangle^*$.  The ground space of the two-particle Bose-Hubbard model on $G_U$ is spanned by 16 states, indexed by two choices $z_1,z_2\in\{0,1\}$ of computational basis states for the encoded qubits and two bits $a_1,a_2\in\{0,1\}$ associated with complex conjugation. These states can be represented as symmetric states in the Hilbert space $\C^{4096}\otimes\C^{4096}$; they are
\[
\frac{1}{2}(|\rho^{1,U}_{z_1,a_1}\rangle|\rho^{3,U}_{z_2,a_2}\rangle+|\rho^{3,U}_{z_2,a_2}\rangle|\rho^{1,U}_{z_1,a_1}\rangle)+\frac{1}{2}\sum_{x_1,x_2\in{0,1}} U(a_1)_{x_1,x_2,z_1,z_2}(|\rho^{1,U}_{x_1,a_1}\rangle|\rho^{3,U}_{x_2,a_2}\rangle+|\rho^{3,U}_{x_2,a_2}\rangle|\rho^{1,U}_{x_1,a_1}\rangle)
\]
where $U(0)=U$ is the two-qubit gate of interest and $U(1)=U^*$ is its elementwise complex conjugate. Observe that each of these states is a superposition of a term where both particles are on the left-hand side of the graph, encoding a two-qubit input state $|z_1\rangle|z_2\rangle$, and a term where both particles are on the right-hand side of the graph, encoding the two-qubit output state $U(a_1)|z_1\rangle|z_2\rangle$ where either $U$ or its complex conjugate has been applied. In other words, the particles ``move together'' through the graph as the gate $U(a)$ is applied. While we might prefer the ground states to only encode the computation corresponding to $U$, we must include the possibility of $U^*$ because the Hamiltonian is real. The same issue arises for $n$-qubit verification circuits. Fortunately, the complex conjugate of a circuit is equally useful for QMA verification.

It is natural to attempt to construct a graph for an $n$-qubit verification circuit by combining gadgets for each of the two-qubit gates. However, there is an obstacle to this approach, as illustrated by the example of a two-qubit circuit consisting of only two gates $U_1$ and $U_2$. One could construct a graph for such a circuit as shown schematically in \fig{circuitstruct}, where the two-qubit gadgets for $U_1$ and $U_2$ are connected in some unspecified way in the middle. However, not every ground state of the two-particle Bose-Hubbard model on such a graph encodes a computation. For example, there could be a ground state where one of the particles is in the single-particle state $|\rho^{1,U_1}_{z,a}\rangle$ localized on the left side of the graph and the other particle is in the state $|\rho^{2,U_2}_{z,a}\rangle$ with support on a disjoint region of the graph on the right-hand side. To eliminate such spurious ground states, we develop a method to enforce \emph{occupancy constraints} on the locations of particles in gate graphs using the Bose-Hubbard interaction. Although this interaction only directly penalizes simultaneous occupation of the same vertex, we show how to simulate terms that penalize simultaneous occupation of different regions of the graph. We formalize this method by proving an ``\ocl'' for gate graphs. 

In summary, our construction of the graph for an $n$-qubit verification circuit proceeds in two steps. We first construct a graph $G$ by connecting two-qubit gadgets for each of the gates in the circuit. As discussed above, the ground space of the $n$-particle Bose-Hubbard model on $G$ includes a subspace of states that encode computations and a subspace of states that do not. We construct a set of occupancy constraints that are only satisfied by states in the former subspace. We then apply the \ocl to obtain a gate graph $G^\square$ where each $N$-particle ground state encodes a computation.

Unlike many previous works, we do not use perturbation theory in our analysis.  Instead, we use a ``\npl'' that characterizes the smallest nonzero eigenvalue of a sum of two positive semidefinite matrices $H_A+H_B$ in terms of the smallest nonzero eigenvalue of $H_A$ and the smallest nonzero eigenvalue of $H_B$ restricted to the nullspace of $H_A$.  This Lemma allows us to establish an eigenvalue promise gap (i.e., to bound the ground energies of yes instances away from those of no instances) without having to multiply terms in the Hamiltonian by large coefficients, something that is not allowed in the setting of the Bose-Hubbard model on a graph.  Whereas QMA-hardness proofs such as those of \cite{KR03,KKR04,OT05,SV09} require multiplying terms in the Hamiltonian by unphysical, problem-size dependent coefficients, our approach avoids this.

Note that the \npl was used implicitly in \cite{MLM99}, which claimed to give a simple proof of the computational universality of adiabatic evolution. That paper encoded a circuit into the multi-particle ground state of a fermionic Hamiltonian in a way that shares some features with our approach. Unfortunately, although the encoding from \cite{MLM99} is novel and interesting, the analysis of the resulting Hamiltonian appears to lack details that are crucial to proving the stated result.

\subsection{Extensions and open questions}
\label{sec:openquestions}

Our result shows that approximating the ground energy of the Bose-Hubbard model on a graph at fixed particle number is likely intractable.  In showing this, we introduce techniques that we expect will be useful in other contexts.  Here we briefly discuss some related questions for future work.

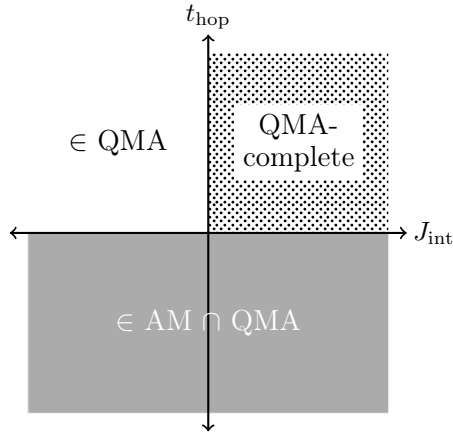
\begin{figure}
\begin{tikzpicture}[scale=0.8]
\draw[white, pattern=crosshatch dots] (3,0) rectangle (6,3);
\draw[white, fill=black!33!white] (0,0) rectangle +(6,-3);
\draw [<->, thick] (-.3,0)--(6.3,0);
\draw [<->, thick] (3,-3.3)--(3,3.3);
\node at (3,-1.5){\large{\color{white}{$\in$ AM $\cap$ QMA}}};
\node[fill=white, text width=42, align=center] at (4.5, 1.5){\large{QMA-complete}};
\node at (1.5, 1.5){\large{$\in$ QMA}};
\node at (6.75,0){$J_\mathrm{int}$};
\node at (3,3.6){$\hopping$};
\end{tikzpicture}
\caption{Each choice for the coefficients $\hopping$ and $J_\mathrm{int}$ defines a computational problem. We prove the problem is QMA-complete for any  positive hopping strength $\hopping>0$ and repulsive interaction strength $J_\mathrm{int}>0$.  For $\hopping<0$ the problem is contained in AM \cite{BDOT08}. For any values of $J_\mathrm{int}$ and $\hopping$ the problem is contained in QMA (see \sec{containment_in_QMA}). \label{fig:complexity}}
\end{figure}

One might consider the complexity of variants of the Bose-Hubbard Hamiltonian problem. For example, one could consider the problem with negative hopping (i.e., $\hopping < 0$), with attractive interactions (i.e., $J_{\mathrm{int}}<0$), or both.  For negative hopping, the results of \cite{BDOT08} show that the problem is in AM; we do not know if it is AM-hard.  Containment in AM suggests that the problem is easier with $\hopping<0$ than with $\hopping,J_{\mathrm{int}}>0$.  For attractive interactions, the problem is clearly in QMA (the verification procedure described in \sec{containment_in_QMA} applies independent of the signs of $\hopping,J_{\mathrm{int}}$), but again we do not know the true complexity. \fig{complexity} summarizes our knowledge of the complexity of the Bose-Hubbard Hamiltonian problem with different choices of $\hopping$ and $J_{\mathrm{int}}$.

One can define other variants of the Bose-Hubbard Hamiltonian problem by lifting the restriction to fixed particle number.

One could also consider other classes of graphs. The graphs we consider in this paper are described by symmetric $0$-$1$ matrices and have at most one self-loop per vertex. We do not know if the model remains QMA-hard on simple graphs, i.e., without any self-loops. Our reduction from Frustration-Free Bose-Hubbard Hamiltonian to the ground energy problem for the spin model \eq{xymodel} gives a stronger result for simple graphs, in which case the second term in \eq{xymodel} vanishes.  Thus, if the Frustration-Free Bose-Hubbard Hamiltonian problem for simple graphs is QMA-hard, then approximating the ground energy of the XY model on a (simple) graph at fixed magnetization is QMA-complete.

There are many open questions concerning the complexity of the ground energy problem for other quantum systems defined by graphs.  For example, one could consider fermions or bosons on a graph with nearest-neighbor interactions. One could also consider quantum spin models such as the XY model or the antiferromagnetic Heisenberg model defined on graphs. Both of these examples correspond to Hamiltonians that conserve magnetization, so one could consider the ground energy problem with or without a restriction to a fixed-magnetization sector. This would complement existing results about the complexity of computing the lowest-energy configuration of classical spin models defined by graphs (for example, the antiferromagnetic Ising model on a graph is NP-complete, as it is equivalent to Max Cut).

As emphasized previously, the Hamiltonians we consider are determined entirely by a choice of graph, with the same type of movement and interaction terms applied throughout the graph.  It might be interesting to find other QMA-complete problems with similar features, such as a version of Local Hamiltonian with only one type of local term.  Analogous classical constraint satisfaction problems with a fixed type of constraint are well known (e.g., Exact Cover and Not-All-Equal SAT) and have been widely studied.

\subsection{Outline of the paper}

The remainder of this paper is organized as follows.  In \sec{Definitions-and-Results} we define the class QMA, introduce the Bose-Hubbard model, formally state our main result, and describe a connection to spin models.  In \sec{containment_in_QMA} we explain why the ground state problem for the Bose-Hubbard model is contained in QMA. In \sec{A-class-of} we define the notion of gate graphs, analyze frustration-free ground states of gate graphs, and introduce the idea of occupancy constraints.  In \sec{Gadgets} we design and analyze gadgets for implementing gates.  In \sec{From-circuits-to} we construct the graph that we use to show QMA-hardness of the Bose-Hubbard model.  In \sec{Proof-of-Theorem} we perform the spectral analysis needed to prove our main result.  Some additional results and technical details appear in the appendices.  In \app{complexity_smallest_graph_eig} we prove that computing the smallest eigenvalue of a (succinctly specified) graph is QMA-complete. In \app{XY} we reduce Frustration-Free Bose-Hubbard Hamiltonian to XY Hamiltonian. In \app{Occupancy-Constraints-Lemma} we prove the \ocl.  In \app{graph_gadgets} we analyze gadgets for two-qubit gates.  Finally, in \app{tech_support} we provide various necessary technical results, including the \npl.


\newpage
\newgeometry{margin=1in,bottom=.5in}
\tableofcontents
\newpage
\newgeometry{margin=1in}

\section{Definitions and results}
\label{sec:Definitions-and-Results}

In this Section we define the complexity class QMA, introduce the
Bose-Hubbard model and a related spin model, and formally state our results.

\subsection{Quantum Merlin-Arthur}

Quantum Merlin-Arthur, or QMA, is a class of promise problems. Informally, a promise problem is in QMA if, for any yes instance $X$, there exists a quantum state $|\psi_{\wit}\rangle$ (with a number of qubits polynomial in the input size $|X|$) that ``proves'' $X$ is a yes instance. We imagine that all-powerful Merlin prepares the quantum proof $|\psi_{\wit}\rangle$ and hands it to polynomially-bounded Arthur, who checks it using his quantum computer.

Arthur checks Merlin's proof using a verification circuit $\mathcal{\mathcal{C}}_{X}$ that is uniformly generated: $\mathcal{C}_{X}$ is computed from $X$ using a deterministic polynomial-time (as a function of $|X|$) classical algorithm. The circuit $\mathcal{C}_{X}$ has an $n_{\inn}$-qubit input register, $n-n_{\inn}$ ancilla qubits initialized to $|0\rangle$, and one of the $n$ qubits designated as an output qubit. The number of qubits $n$ and the number of gates in the circuit are upper bounded by a polynomial function of $|X|$. We write $U_{\mathcal{C}_{X}}$ for the unitary operation that the verification circuit implements.

The acceptance probability of $\mathcal{C}_{X}$ given some $n_{\inn}$-qubit input state $|\phi\rangle$ is the probability of measuring the output qubit to be in the state $|1\rangle$ after the circuit is applied, namely
\[
\AP(\mathcal{C}_{X},|\phi\rangle)=\left\Vert |1\rangle\langle1|_{\out}U_{\mathcal{C}_{X}}|\phi\rangle|0\rangle^{\otimes n-n_{\inn}}\right\Vert^{2}.
\]
Arthur applies the verification circuit to the input state $|\psi_{\wit}\rangle$. If he measures the output qubit to be $|1\rangle$, he concludes $X$ is a yes instance. QMA is the class of problems for which reliable verification circuits exist, so that Arthur can trust this conclusion with reasonable probability. In other words, if $X$ is a yes instance then there exists a state $|\psi_{\wit}\rangle$ that $\mathcal{C}_{X}$ accepts with high probability, whereas if $X$ is a no instance then any state $|\phi\rangle$ has low probability of being accepted. To define the class we must specify which probabilities are considered ``high'' and which are ``low.'' A standard choice uses thresholds of $\frac{2}{3}$ and $\frac{1}{3}$ (called completeness and soundness, respectively).

\begin{definition}[QMA]\label{defn:QMA}A promise problem $L_{\mathrm{yes}}\cup L_{\mathrm{no}}\subset\{0,1\}^*$ is contained in QMA if there exists a uniform polynomial-size circuit family $\mathcal{C}_{X}$ with the following two properties. If $X\in L_{\mathrm{yes}}$, there exists an input state $|\psi_{\mathrm{wit}}\rangle$ such that 
\[
\AP(\mathcal{C}_{X},|\psi_{\mathrm{wit}}\rangle)\geq\frac{2}{3}\quad\text{(completeness).}
\]
If $X\in L_{\mathrm{no}}$ then all input states $|\phi\rangle$ satisfy
\[
\AP(\mathcal{C}_{X},|\phi\rangle)\leq\frac{1}{3}\quad{\text{(soundness).}}
\]
\end{definition}

It is a nontrivial fact that many other choices for the completeness and soundness thresholds lead to equivalent definitions of QMA \cite{KSV02,MW05}. In particular, we obtain the same class QMA if we change the completeness threshold $\frac{2}{3}$ in the above definition to $1-\frac{1}{2^{|X|}}$.

\subsection{The Bose-Hubbard model on a graph}

As discussed in \sec{intro}, we consider the Bose-Hubbard model on a graph $G$, with Hamiltonian
\begin{equation}
  H_{G}
  =\hopping\sum_{i,j\in V} A(G)_{ij}a_{i}^{\dagger}a_{j}
  +J_{\mathrm{int}}\sum_{k\in V}n_{k}\left(n_{k}-1\right)
\label{eq:Bose-Hubbard_Ham}
\end{equation}
where $a_{i}^{\dagger}$ creates a boson at vertex $i$ and $n_{i}=a_{i}^{\dagger}a_{i}$ counts the number of particles at vertex $i$. In this second-quantized formulation of the Bose-Hubbard model, the Hamiltonian $H_{G}$ acts on the Fock space with orthonormal basis vectors
\[
|l_{1},l_{2},\ldots,l_{|V|}\rangle
\]
where $l_{j}\in\{0,1,2,\ldots\}$ specifies the number of bosons at vertex $j\in V$. The operator $a_{i}$ is defined by its action in this basis: 
\begin{equation}
a_{i}|l_{1},l_{2},\ldots,l_{i},\ldots,l_{|V|}\rangle=\sqrt{l_{i}}|l_{1},l_{2},\ldots,l_{i}-1,\ldots,l_{|V|}\rangle.\label{eq:a_mat_els}
\end{equation}
The Hamiltonian \eq{Bose-Hubbard_Ham} conserves the total particle number $N=n_{1}+n_{2}+\cdots+n_{|V|}$. In the $N$-particle sector, the Bose-Hubbard model acts on the finite-dimensional Hilbert space
\begin{equation}
\spn\{|l_{1},\ldots,l_{|V|}\rangle\colon\sum_{j}l_{j}=N\}.\label{eq:occupation_num_states}
\end{equation}
The dimension of this space is 
\begin{equation}
D_{N}=\binom{N+|V|-1}{|V|-1}.\label{eq:DN}
\end{equation}

Our results apply to the Bose-Hubbard model for any strictly positive hopping and interaction strengths. Henceforth we set $\hopping=J_{\mathrm{int}}=1$ for convenience, but all our complexity-theoretic results hold for any fixed $\hopping,J_{\mathrm{int}}>0$.

An equivalent (first-quantized) formulation of the Bose-Hubbard model is as follows. Consider the Hilbert space 
\begin{equation}
(\C^{|V|})^{\otimes N}=\spn\{|i_{1},i_{2},\ldots,i_{N}\rangle\colon i_{1},i_{2},\ldots,i_{N}\in V\}\label{eq:disting_n_particle}
\end{equation}
where each basis state corresponds to an $N$-tuple of vertices in the graph. Define the linear operator Sym that symmetrizes over all $N!$ permutations of the $N$ particles: 
\[
  \Sym(|i_{1}\rangle|i_{2}\rangle\ldots|i_{N}\rangle)
  = \frac{1}{\sqrt{N!}} \sum_{\pi\in S_{N}}|i_{\pi(1)} 
    \rangle|i_{\pi(2)}\rangle\ldots|i_{\pi(N)}\rangle.
\]
Note that $\Sym$ does not in general preserve the norm (for example, any antisymmetric state is mapped to zero). Every state in the Hilbert space \eq{occupation_num_states} can be identified with a state in 
\[
\mathcal{Z}_{N}(G)=\spn\{\Sym(|i_{1},i_{2},\ldots,i_{N}\rangle)\colon i_{1},i_{2},\ldots,i_{N}\in V\}
\]
and vice versa since the two spaces have the same dimension. It is natural to identify states in the two Hilbert spaces by the following linear mapping, defined by its action on basis states. We identify each basis state $|l_{1},\ldots,l_{|V|}\rangle$ with the normalized state
\begin{equation}
  \frac{1}{\sqrt{l_{1}!\, l_{2}!\,\ldots\, l_{|V|}!}} \,
  \Sym\bigg(
    \underbrace{|1\rangle|1\rangle\ldots|1\rangle}_{l_{1}}
    \underbrace{|2\rangle|2\rangle\ldots|2\rangle}_{l_{2}}\ldots
    \underbrace{||V|\rangle||V|\rangle\ldots||V|\rangle}_{l_{|V|}}\bigg).
\label{eq:occup_num_symmetrized}
\end{equation}

The Hamiltonian of the Bose-Hubbard model in the $N$-particle sector acts as an operator $H_{G}^{N}$ on the space $\mathcal{Z}_{N}(G)$ (see for example \cite[\S 64]{LL94}): 
\begin{equation}
H_{G}^{N}=\sum_{w=1}^{N} A(G)^{(w)}+\sum_{k\in V}\hat{n}_{k}\left(\hat{n}_{k}-1\right)\label{eq:HGn}
\end{equation}
(recall we set $\hopping=J_{\mathrm{int}}=1)$ where the number operator is
\begin{equation}
\hat{n}_{i}=\sum_{w=1}^{N} |i\rangle\langle i|^{(w)}.\label{eq:n_hat}
\end{equation}
Here (and throughout the paper) we use the notation 
\[
M^{(w)}=\id^{\otimes w-1}\otimes M\otimes\id^{\otimes N-w}
\]
to indicate that the operator $M$ acts on subsystem $w$.

While $H_{G}^{N}$ is defined as a $|V|^{N}\times|V|^{N}$ matrix in the space \eq{disting_n_particle}, we consider its restriction
\[
\bar{H}_{G}^{N} = H_{G}^{N}\Big|_{\mathcal{Z}_{N}(G)}
\]
to the bosonic $N$-particle subspace $\mathcal{Z}_{N}(G)$ (throughout the paper we write $H|_{\mathcal{W}}$ for the restriction of a Hermitian operator $H$ to a subspace $\mathcal{W}$, which can be written as a $\dim\mathcal{W}\times\dim\mathcal{W}$ matrix). It is convenient to add a term proportional to the identity to obtain a positive semidefinite operator. Letting $\mu(G)$ denote the smallest eigenvalue of the adjacency matrix $A(G)$, we consider 
\[
H(G,N)=\bar{H}_{G}^{N}-N\mu(G).
\]
Clearly $H(G,N)\geq0$ since the interaction term is positive semidefinite. Also note that, given the graph $G$, the smallest eigenvalue $\mu(G)$ of its adjacency matrix can be efficiently approximated using a classical polynomial-time algorithm, so the complexity of approximating the ground energy of $H(G,N)$ is equivalent to the complexity of approximating $\bar{H}_{G}^{N}$.
(Note that here the graph is specified explicitly by its adjacency matrix. In other contexts one might consider a graph specified compactly, e.g., by a circuit that computes rows of its adjacency matrix. Then the situation is more complex since the input size can be much smaller than the number of vertices in the graph. Indeed, we prove in \app{complexity_smallest_graph_eig} that approximating the smallest eigenvalue of such a graph is QMA-complete.)

We write 
\[
0\leq\lambda_{N}^{1}(G)\leq\lambda_{N}^{2}(G)\leq\ldots\leq\lambda_{N}^{D_{N}}(G)
\]
for the eigenvalues of $H(G,N)$ and $\{|\lambda_{N}^{j}(G)\rangle\}$ for the corresponding normalized eigenvectors. 

When $\lambda_{N}^{1}(G)=0$, the ground energy of the $N$-particle Bose-Hubbard model $\bar{H}_{G}^{N}$ is equal to $N$ times the one-particle energy $\mu(G)$. In this case we say that the $N$-particle Bose-Hubbard model is frustration free. We also define frustration freeness for $N$-particle states.

\begin{definition}
\label{defn:FF_states}If $|\psi\rangle\in\mathcal{Z}_{N}(G)$ satisfies
$H(G,N)|\psi\rangle=0$ then we say $|\psi\rangle$ is an $N$-particle
frustration-free state for $G$.
\end{definition}

We now present two basic properties of $H(G,N)$. The following Lemma states that the ground energy is non-decreasing as a function
of the number of particles $N$.

\begin{restatable}{lemma}{incrementN}
\label{lem:increase_part_number}For all $N\geq1$, $\lambda_{N+1}^{1}(G)\geq\lambda_{N}^{1}(G)$.
\end{restatable}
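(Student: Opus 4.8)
The plan is to relate the $(N+1)$-particle model to the $N$-particle model by tracing out one of the particles. Let $|\psi\rangle \in \mathcal{Z}_{N+1}(G)$ be a normalized ground state of $H(G,N+1)$, so $H(G,N+1)|\psi\rangle = \lambda_{N+1}^1(G)\,|\psi\rangle$. View $|\psi\rangle$ as a symmetric state in $(\C^{|V|})^{\otimes(N+1)}$ and let $\sigma$ be the reduced density operator obtained by tracing out the last particle; since $|\psi\rangle$ is symmetric, $\sigma$ acts on $(\C^{|V|})^{\otimes N}$ and is supported on the symmetric subspace, hence on $\mathcal{Z}_N(G)$. The idea is to show that $\operatorname{Tr}\!\big(H(G,N)\,\sigma\big) \le \lambda_{N+1}^1(G)$, which together with $H(G,N) \ge 0$ and $\operatorname{Tr}\sigma = 1$ forces $\lambda_N^1(G) \le \lambda_{N+1}^1(G)$ by the variational characterization of the smallest eigenvalue (the smallest eigenvalue is the minimum of $\operatorname{Tr}(H\rho)$ over density operators $\rho$ on the relevant space).

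The key computation is to express $\operatorname{Tr}(H(G,N)\sigma)$ in terms of $|\psi\rangle$. Writing $H(G,N) = \sum_{w=1}^N A(G)^{(w)} + \sum_{k} \hat n_k(\hat n_k - 1) - N\mu(G)$ and using the permutation symmetry of $|\psi\rangle$, each single-particle hopping term contributes the same amount, and similarly for the interaction terms, so one can relate $\operatorname{Tr}(H(G,N)\sigma)$ to $\langle\psi| H(G,N+1)|\psi\rangle$ up to combinatorial factors. Concretely: the hopping part of $H(G,N+1)$ is a sum of $N+1$ identical terms, each with expectation $\langle\psi|A(G)^{(1)}|\psi\rangle$; tracing out a particle and summing the $N$ remaining hopping terms of $H(G,N)$ gives $N\langle\psi|A(G)^{(1)}|\psi\rangle$, i.e. a fraction $\tfrac{N}{N+1}$ of the $(N+1)$-particle hopping energy. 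The interaction term needs a bit more care: I would show $\sum_k \langle\psi|\hat n_k^{(N)}(\hat n_k^{(N)}-1)|\psi\rangle \le \sum_k \langle\psi|\hat n_k^{(N+1)}(\hat n_k^{(N+1)}-1)|\psi\rangle$, where $\hat n_k^{(m)}$ denotes the number operator on the first $m$ particles, since restricting to fewer particles can only decrease each local occupancy and the function $x(x-1)$ is monotincreasing on nonnegative integers; again symmetry lets one compare the two expectations term by term. Finally the $-N\mu(G)$ versus $-(N+1)\mu(G)$ shift works in our favor because $\mu(G) \le 0$ is not needed — what we need is that $\langle\psi|A(G)^{(1)}|\psi\rangle \ge \mu(G)$, which holds since $A(G)^{(1)}$ acting on any normalized state has expectation at least $\mu(G)$.

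Putting the pieces together: $\operatorname{Tr}(H(G,N)\sigma) = N\langle\psi|A(G)^{(1)}|\psi\rangle + (\text{interaction on }N\text{ particles}) - N\mu(G) \le N\langle\psi|A(G)^{(1)}|\psi\rangle + (\text{interaction on }N{+}1\text{ particles}) - N\mu(G)$, and comparing with $\lambda_{N+1}^1(G) = (N{+}1)\langle\psi|A(G)^{(1)}|\psi\rangle + (\text{interaction on }N{+}1\text{ particles}) - (N{+}1)\mu(G)$, the difference is $\lambda_{N+1}^1(G) - \operatorname{Tr}(H(G,N)\sigma) \ge \langle\psi|A(G)^{(1)}|\psi\rangle - \mu(G) \ge 0$. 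Hence $\lambda_N^1(G) \le \operatorname{Tr}(H(G,N)\sigma) \le \lambda_{N+1}^1(G)$, as desired.

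I expect the main obstacle to be bookkeeping rather than conceptual: carefully justifying that the reduced state $\sigma$ of a symmetric state lies in $\mathcal{Z}_N(G)$ (so it is a legitimate trial state for $H(G,N)$), and handling the combinatorial factors when matching up the hopping and interaction expectations between the two particle numbers. The monotonicity claim $\hat n_k^{(N)}(\hat n_k^{(N)}-1) \preceq \hat n_k^{(N+1)}(\hat n_k^{(N+1)}-1)$ in expectation against a symmetric state should be straightforward once phrased correctly, but it is the one inequality that could hide a subtlety, so I would verify it explicitly on occupation-number basis states.
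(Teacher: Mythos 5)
Your argument is correct, and at its core it is the same as the paper's: both hinge on the two operator inequalities $A(G)^{(N+1)} \geq \mu(G)$ and $\hat n_k^{N+1}(\hat n_k^{N+1}-1) \geq \hat n_k^{N}(\hat n_k^{N}-1)\otimes\id$ (which follows cleanly from $\hat n_k^{N+1} = \hat n_k^{N}\otimes\id + |k\rangle\langle k|^{(N+1)}$, giving difference $2(\hat n_k^{N}\otimes\id)|k\rangle\langle k|^{(N+1)}\geq 0$), together with passing from $N+1$ particles to $N$. The only cosmetic difference is the wrapper: you phrase the comparison via the reduced density matrix $\sigma$ of a ground state (which then requires the small extra check that $\sigma$ is supported on $\mathcal{Z}_N(G)$, which does hold for reduced states of symmetric states), whereas the paper packages the same content as the single operator inequality $H_G^{N+1}-(N+1)\mu(G)\geq\big(H_G^N-N\mu(G)\big)\otimes\id$ followed by the inclusion $\mathcal{Z}_{N+1}(G)\subset\mathcal{Z}_N(G)\otimes\C^{|V|}$, which sidesteps the need to reason about the support of $\sigma$.
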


In this paper we will encounter disconnected graphs $G$. In the cases of interest, the smallest eigenvalue of the adjacency matrix for each component is the same. The following Lemma shows that the eigenvalues of $H(G,N)$ on such a graph can be written as sums of eigenvalues for the components. In this Lemma (and throughout the paper), we let $[k] = \{1,2,\ldots,k\}$.

\begin{restatable}{lemma}{disc}
\label{lem:BH_disconnected_graphs}
Suppose $G=\bigcup_{i=1}^{k}G_{i}$ with $\mu(G_{1})=\mu(G_{2})=\cdots=\mu(G_{k})$. The eigenvalues of $H(G,N)$ are 
\[
\sum_{i\in[k]\colon N_{i}\neq0}\lambda_{N_{i}}^{y_{i}}(G_{i})
\]
where $N_{1},\ldots,N_{k}\in\{0,1,2,\ldots\}$ with $\sum_{i}N_{i}=N$ and $y_{i}\in[D_{N_{i}}].$ The corresponding eigenvectors are (up to normalization) 
\begin{equation}
\Sym\Bigg(\bigotimes_{i\in[k]\colon N_{i}\neq0}|\lambda_{N_{i}}^{y_{i}}(G_{i})\rangle\Bigg).
\label{eq:eigvecs_disconnected}
\end{equation}
\end{restatable}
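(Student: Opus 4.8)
The plan is to exploit the fact that, for a disjoint union of graphs, the bosonic $N$-particle Hilbert space factorizes according to the number of particles in each component, and that $\bar H_G^N$ respects this factorization. First I would observe that a basis state $\Sym(|i_1,\dots,i_N\rangle)$ of $\mathcal Z_N(G)$ is determined by its occupation numbers, hence in particular by the number of particles $N_i$ it places in each component $G_i$. Grouping basis states by the tuple $(N_1,\dots,N_k)$ with $\sum_i N_i=N$ gives an orthogonal decomposition
\[
\mathcal Z_N(G)=\bigoplus_{N_1+\dots+N_k=N}\mathcal W_{N_1,\dots,N_k},
\qquad
\mathcal W_{N_1,\dots,N_k}\cong\bigotimes_{i\in[k]}\mathcal Z_{N_i}(G_i),
\]
where the isomorphism sends $\bigotimes_i|\phi_i\rangle$ (with $|\phi_i\rangle\in\mathcal Z_{N_i}(G_i)$) to the symmetrization, suitably normalized, of the product state in which the first $N_1$ particles carry $|\phi_1\rangle$, the next $N_2$ carry $|\phi_2\rangle$, and so on; equivalently this is $\Sym\big(\bigotimes_i|\phi_i\rangle\big)$ up to a constant. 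Components with $N_i=0$ contribute a one-dimensional ``vacuum'' factor and can be dropped, which is why the Lemma restricts the tensor product and the sum to indices with $N_i\neq0$.

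Next I would check that $\bar H_G^N$ is block diagonal in this decomposition and restricts, on $\mathcal W_{N_1,\dots,N_k}$, to $\sum_{i\in[k]}\bar H_{G_i}^{N_i}$, each summand acting on the corresponding tensor factor. The hopping term $\sum_w A(G)^{(w)}$ never moves a particle between components, since $A(G)_{ab}=0$ when $a,b$ lie in different components; and the interaction $\sum_{k\in V}\hat n_k(\hat n_k-1)$ is on-site, so it too splits as a sum over components. Tracking these operators through the $\Sym$-embedding yields $\bar H_G^N|_{\mathcal W_{N_1,\dots,N_k}}=\sum_i\bar H_{G_i}^{N_i}$. Now using the hypothesis $\mu(G_1)=\dots=\mu(G_k)=:\mu$, which forces $\mu(G)=\mu$, we have $N\mu(G)=\sum_iN_i\mu(G_i)$, so
\[
H(G,N)\Big|_{\mathcal W_{N_1,\dots,N_k}}=\sum_{i\in[k]}\big(\bar H_{G_i}^{N_i}-N_i\mu(G_i)\big)=\sum_{i\in[k]}H(G_i,N_i),
\]
with the $i$-th summand acting on the $i$-th factor (and contributing $H(G_i,0)=0$ when $N_i=0$).

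Finally I would invoke the elementary fact that the eigenvalues of a sum of operators acting on distinct tensor factors are the sums of the individual eigenvalues, with tensor-product eigenvectors. On the block $\mathcal W_{N_1,\dots,N_k}$ this gives eigenvalues $\sum_{i\colon N_i\neq0}\lambda_{N_i}^{y_i}(G_i)$ for $y_i\in[D_{N_i}]$, with eigenvectors $\bigotimes_{i\colon N_i\neq0}|\lambda_{N_i}^{y_i}(G_i)\rangle$, which transport back to $\Sym\big(\bigotimes_{i\colon N_i\neq0}|\lambda_{N_i}^{y_i}(G_i)\rangle\big)$ in $\mathcal Z_N(G)$. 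Ranging over all tuples and all choices of $y_i$ gives an orthogonal system of eigenvectors --- vectors from different blocks are orthogonal by their occupation patterns, and within a block orthogonality is inherited from the orthonormality of $\{|\lambda_{N_i}^{y_i}(G_i)\rangle\}$ --- and a dimension count $\sum_{N_1+\dots+N_k=N}\prod_iD_{N_i}=\binom{N+|V|-1}{|V|-1}=D_N$ (the Vandermonde convolution for multiset coefficients) shows this system is complete. The only real work is in the first two steps: setting up the factorization $\mathcal Z_N(G)\cong\bigoplus\bigotimes_i\mathcal Z_{N_i}(G_i)$ together with its $\Sym$-embedding and pushing $\bar H_G^N$ through it while keeping careful track of the normalization constants produced by $\Sym$ on products of states supported on disjoint vertex sets; once that is done, the remaining steps are routine.
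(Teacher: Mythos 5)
Your proposal is correct and follows essentially the same route as the paper: decompose the $N$-particle bosonic space by the occupation numbers $(N_1,\ldots,N_k)$ of particles per component, observe that both the hopping and interaction terms are block diagonal in this decomposition and restrict on each block to a sum of commuting operators acting on disjoint tensor factors, then read off the eigenvalues and (symmetrized) product eigenvectors. The only cosmetic difference is that the paper passes to second quantization to make the tensor factorization of $\mathcal Z_N(G)$ and the blockwise identity $H_G = \sum_i H_{G_i}$ immediate, then translates back, whereas you work directly with $\Sym$ in first quantization and therefore carry the normalization bookkeeping yourself --- but this does not change the substance of the argument.
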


Proofs of \lem{increase_part_number} and \lem{BH_disconnected_graphs} appear in \sec{basic_properties}.

\subsection{Complexity of the Bose-Hubbard model}\label{sec:complexity}

Given a $K$-vertex graph $G$ and a number of particles $N$, how hard is it to approximate the ground energy of the $N$-particle Bose-Hubbard model $\bar{H}_{G}^{N}$ on $G$? We consider the following decision version of this computational problem.

\begin{mdframed}
\begin{problem}
[\textbf{Bose-Hubbard Hamiltonian}]
We are given a $K$-vertex graph $G$, a number of particles $N$, a real number $c$, and a precision parameter $\epsilon=\frac{1}{T}$. The positive integers $N$ and $T$ are provided in unary; the graph is specified by its adjacency matrix, which can be any $K\times K$ symmetric $0$-$1$ matrix. We are promised that either the smallest eigenvalue of $\bar{H}_{G}^{N}$ is at most $c$ (yes instance) or is at least $c+\epsilon$ (no instance) and we are asked to decide which is the case.
\end{problem}
\end{mdframed}

In this problem $c$ is provided in a straightforward manner, with enough precision to resolve $\epsilon$, i.e., using $\mathcal{O}(\log |c|+\log T)$ bits. The input size is therefore $\Theta(K^{2}+T+N+\log |c|)$ bits. We prove that this problem is QMA-complete, providing evidence that approximating the ground energy of the $N$-particle Bose-Hubbard model on a graph $G$ is intractable.

\begin{theorem}
Bose-Hubbard Hamiltonian is QMA-complete.
\end{theorem}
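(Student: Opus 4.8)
The plan is to prove the two halves of QMA-completeness --- containment in QMA and QMA-hardness --- separately. Containment is the routine direction: given an instance $(G,N,c,\epsilon)$, Arthur asks Merlin for a state on the Fock space of $G$, measures the total number operator $\sum_k\hat n_k$ in the occupation-number basis and rejects unless the outcome is $N$ (which is free for an honest Merlin, since $H_G$ conserves particle number), and then runs phase estimation on $e^{-iH(G,N)t}$, accepting if the estimated eigenvalue is below about $c+\epsilon/2$. Since $G$ is given explicitly by its adjacency matrix, $H(G,N)$ is $O(N)$-sparse and efficiently row-computable, so this evolution can be simulated to inverse-polynomial precision in time $\poly(K,N,T)$; completeness follows by sending a ground state and soundness because on a no instance every $N$-particle state has energy at least $c+\epsilon$. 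As $T$ is given in unary, the required precision is only polynomial. (This is carried out in \sec{containment_in_QMA}.)

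For QMA-hardness the approach is a direct reduction from quantum circuit satisfiability, avoiding perturbation theory entirely. Given a QMA instance $X$ with verification circuit $\mathcal C_X$ on $n$ qubits, one first amplifies so that completeness is $1-2^{-|X|}$ and soundness is, say, $\tfrac13$, and rewrites $\mathcal C_X$ as a sequence of gates from a fixed universal set of one- and two-qubit gates. The goal is to build, in polynomial time, a \emph{gate graph} $G^\square$ and a particle number $N$ so that the shifted $N$-particle Bose-Hubbard ground energy $\lambda_N^1(G^\square)$ is at most some exponentially small $c$ on yes instances and at least $c+1/\poly(|X|)$ on no instances; this proves QMA-hardness (indeed, of the Frustration-Free variant). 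The graph is assembled hierarchically. One starts from a basic unit whose single-particle ground modes encode the two computational basis states of one qubit together with a quantum-walk-on-a-path clock, and combines copies of it into a gadget graph $G_U$ for each two-qubit gate $U$, arranged into the four overlapping regions of \fig{Vgraphstruct}, whose two-particle Bose-Hubbard ground states encode the map $|z_1z_2\rangle\mapsto U(a)|z_1z_2\rangle$ --- including the branch $a=1$ applying $U^*$, which reality of the Hamiltonian forces but which is equally useful for QMA verification. Chaining these gadgets along the gate sequence of $\mathcal C_X$ yields a disjoint union of basic units whose $N$-particle ground space (its non-interacting part analyzed with \lem{BH_disconnected_graphs}) contains the history states encoding executions of $\mathcal C_X$ and of $\mathcal C_X^*$, but also spurious ground states in which particles sit in incompatible modes of different gadgets. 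These are eliminated by prescribing a set of occupancy constraints satisfied exactly by the history states --- including one that checks the output qubit --- and invoking the \ocl, which harnesses the Bose-Hubbard interaction to produce the final graph $G^\square$ in which every $N$-particle ground state encodes a legal accepting computation.

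It then remains to bound $\lambda_N^1(G^\square)$, and this is where the \npl takes the place of perturbation theory. Write $H(G^\square,N)=H_{\mathrm{hop}}+H_{\mathrm{int}}$, both positive semidefinite, with $\ker H_{\mathrm{hop}}$ spanned by the symmetrized products of single-particle ground modes. On a yes instance the history state of the amplified accepting witness (or its conjugate, for $\mathcal C_X^*$) lies in $\ker H_{\mathrm{hop}}$, satisfies every structural constraint, and has interaction energy only $c=2^{-\Omega(|X|)}$ from the graded output check, so $\lambda_N^1(G^\square)\le c$. On a no instance the \npl bounds the smallest eigenvalue of $H_{\mathrm{hop}}+H_{\mathrm{int}}$ from below in terms of the smallest nonzero eigenvalue of $H_{\mathrm{hop}}$ --- which is $1/\poly$ from the path-clock gap of the gadgets, using \lem{increase_part_number} to discard states carrying too few particles in the relevant components --- and the smallest nonzero eigenvalue of $H_{\mathrm{int}}$ restricted to $\ker H_{\mathrm{hop}}$ --- which is $1/\poly$ because double occupancies cost $\Omega(1)$, violations of structural constraints cost $1/\poly$ by the \ocl, and valid history states pay $\Omega(1)/\poly$ from the output check since no witness is accepted with probability exceeding $\tfrac13$. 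Hence $\lambda_N^1(G^\square)\ge c+1/\poly$, and choosing $\epsilon=1/T$ below this gap completes the reduction.

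The whole hardness direction is the work, and two points are the crux. First, the gadget design: one must actually exhibit \emph{unweighted} graphs whose single- and two-particle ground spaces are \emph{exactly} the claimed spaces, with no extra low-lying states and with spectral gap no worse than $1/\poly$ --- this is what forces the gate graphs to have thousands of vertices, and it is the technical heart of \app{graph_gadgets}. Second, and conceptually the decisive point, is getting the $1/\poly$ promise gap on no instances using only the fixed physical couplings $\hopping,J_{\mathrm{int}}>0$: the Kitaev-style proofs of \cite{KR03,KKR04,OT05,SV09} all multiply Hamiltonian terms by large instance-dependent coefficients, which is not available here, so the spectral argument must be rebuilt around the \npl together with careful eigenvalue bookkeeping for $H_{\mathrm{hop}}$ and for the interaction-induced penalties supplied by the \ocl.
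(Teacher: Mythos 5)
Your proposal follows the paper's overall architecture closely: a straightforward QMA-containment argument via phase estimation and sparse Hamiltonian simulation, then QMA-hardness by direct reduction from circuit satisfiability using gate graphs, two-qubit gate gadgets, the \ocl, and the \npl in place of perturbation theory. The containment direction and the high-level hardness plan are essentially the paper's.

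However, two concrete misstatements in your hardness sketch would lead a reader astray. First, you attribute the output check (and implicitly the ancilla-initialization check) to the occupancy constraints and, later, to ``interaction energy.'' In the paper neither check lives in the interaction term or in $G^{\occ}$. The occupancy constraints graph $\gxoc$ only encodes which diagram elements may be simultaneously occupied (no two particles in the same row; synchronization with the clock particle in row~1). The ancilla initialization and output acceptance are enforced by \emph{self-loops} added to the gate diagram in step~4 of the construction, i.e.\ by the terms $h_{\inn,i}$ and $h_{\out}$, which sit inside $\sum_w (A(G)-\mu(G))^{(w)}$ and are therefore hopping-type energy, not on-site interaction energy. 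Consequently the completeness bound $\lambda_n^1(\gx,\gxoc)\le 2^{-|X|}$ comes from $\langle\widehat{\mathcal H}|H_{\out}|\widehat{\mathcal H}\rangle$, not from a small interaction energy; the history state has interaction energy exactly zero since it lives in the hard-core subspace $\mathcal{I}(\gx,\gxoc,n)$.

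Second, your single decomposition $H(G^{\square},N)=H_{\mathrm{hop}}+H_{\mathrm{int}}$ with one NPL application does not reflect how the gap is actually obtained, and I do not see how it would go through directly: if $H_A=H_{\mathrm{int}}$ then $\ker H_A$ is the full hard-core-boson space and the restricted hopping term has no useful structure to exploit, while if $H_A=H_{\mathrm{hop}}$ then $\ker H_A$ already contains the clock structure but the nonzero eigenvalues of $H_{\mathrm{int}}$ restricted there cannot encode the acceptance probability, which lives in $H_{\out}\subseteq H_{\mathrm{hop}}$. The paper instead builds up the Hamiltonian through a sequence $H(G_1,\gxoc,n),\ldots,H(G_4,\gxoc,n),H(\gx,\gxoc,n)$, each obtained from the previous by adding a positive semidefinite edge- or self-loop term, and applies the \npl once per stage (and further times inside the proof of the \ocl). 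The interaction term and occupancy constraints are already absorbed at the first stage; the soundness penalty comes from the final application with $H_B=H_{\out}|_{\mathcal{I}(\gx,\gxoc,n)}$, exactly the self-loop term you mislabel as interaction. So the iterative structure is not optional bookkeeping --- it is what makes the nullspaces and gaps tractable to compute.
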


The proof of this Theorem has two parts. 

The easy part is to show that Bose-Hubbard Hamiltonian is contained in QMA. The basic strategy of Arthur's verification protocol is to measure the energy of the Bose-Hubbard Hamiltonian in the state $|\phi\rangle$ given to him by Merlin, using phase estimation and Hamiltonian simulation. Arthur accepts if the energy is small enough and rejects otherwise. We give a more detailed description of the verification procedure in \sec{containment_in_QMA}.

The more involved part is to show that Bose-Hubbard Hamiltonian is QMA-hard. For this we show that any instance of a QMA problem can be converted (in deterministic polynomial time on a classical computer) into an equivalent instance of Bose-Hubbard Hamiltonian. In fact, our reduction proves a slightly stronger result, namely that a notable extremal case of Bose-Hubbard Hamiltonian is already QMA-hard. We now discuss this special case.

Recall from the previous section that the ground energy of the $N$-particle Bose-Hubbard model is at least $N$ times the single-particle ground energy $\mu(G)$, i.e., $\lambda_{N}^{1}(G)\geq0$. We can ask if this inequality is close to equality, i.e., is the $N$-particle Bose-Hubbard model close to being frustration free?

\begin{mdframed}
\begin{problem}
[\textbf{Frustration-Free Bose-Hubbard Hamiltonian}]
We are given a $K$-vertex graph $G$, a number of particles $N\leq K$, and a precision parameter $\epsilon=\frac{1}{T}$. The integer $T \ge 4K$ is provided in unary; the graph is specified by its adjacency matrix, which can be any $K\times K$ symmetric $0$-$1$ matrix. We are promised that either $\lambda_{N}^{1}(G)\leq\epsilon^3$ (yes instance) or $\lambda_{N}^{1}(G)\geq \epsilon+\epsilon^3$ (no instance) and we are asked to decide which is the case. 
\end{problem}
\end{mdframed}

For concreteness, we have made some specific choices in defining this problem. Our proof that it is QMA-hard also applies, for example, to variants of the problem where $\epsilon^3$ is replaced (in both places it appears) by $\epsilon^\alpha$ for any constant $\alpha\in \{1,2,3,\ldots\}$. In \sec{reduction}, we use the version with $\alpha=3$ as stated above to facilitate a reduction to the XY Hamiltonian problem.

The requirement $T\geq 4K$ ensures that $\epsilon$ is small so that, for a yes instance, the system is very close to being frustration free. We choose the specific threshold $4K$ for concreteness.

The restriction $N\leq K$ is without loss of generality since the problem is trivial otherwise. To see this, note that any state with more than $K$ particles is orthogonal to the nullspace of the interaction term since there are always two or more particles located at one vertex; hence $\lambda_{N}^{1}(G)\geq2$ whenever $N\geq K+1$.

Frustration-Free Bose-Hubbard Hamiltonian is a special case of Bose-Hubbard Hamiltonian with $c=N\mu(G)+\epsilon^3$. To prove that Bose-Hubbard Hamiltonian is QMA-hard, it therefore suffices to prove that Frustration-Free Bose-Hubbard Hamiltonian is QMA-hard. The bulk of this paper is concerned with the proof of this fact.

\subsection{Complexity of the XY Hamiltonian problem}
\label{sec:reduction}

We reduce Frustration-Free Bose-Hubbard Hamiltonian to an eigenvalue problem for a class of $2$-local Hamiltonians defined by graphs. The reduction is based on a well-known mapping between hard-core bosons and spin systems, which we now review.

We define the subspace $\mathcal{W}_N(G)\subset \mathcal{Z}_N (G)$ of $N$ hard-core bosons on a graph $G$ to consist of the states where each vertex of $G$ is occupied by either $0$ or $1$ particle, i.e., 
\[
  \mathcal{W}_N(G)=\text{span}\{\text{Sym}(|i_1,i_2,\ldots,i_N\rangle) \colon
  i_1,\ldots,i_N\in V,\; i_j\neq i_k \text{ for distinct } j,k\in[N] \}.
\]
A basis for $\mathcal{W}_N(G)$ is the subset of occupation-number states \eq{occup_num_symmetrized} labeled by bit strings $l_1\ldots l_{|V|}\in \{0,1\}^{|V|}$ with Hamming weight $\sum_{j\in V}l_j=N$.  The space $\mathcal{W}_N(G)$ can thus be identified with the weight-$N$ subspace
\[
\text{Wt}_N(G)=\text{span}\{|z_1,\ldots, z_{|V|}\rangle:\; z_i\in\{0,1\},\; \sum_{i=1}^{|V|} z_i =N\}
\]
of a $|V|$-qubit Hilbert space. We consider the restriction of $H_G^N$ to the space $\mathcal{W}_N(G)$, which can equivalently be written as a $|V|$-qubit Hamiltonian $O_G$ restricted to the space $\text{Wt}_N(G)$.  In particular,
\begin{equation}
H_G^N\big|_{\mathcal{W}_N(G)}=O_G\big|_{\text{Wt}_N(G)}
\label{eq:restriction_equality}
\end{equation}
where 
\begin{align*}
O_G & = \sum_{\substack{A(G)_{ij}=1 \\ i\neq j}}\big(|01\rangle\langle 10|+|10\rangle\langle 01|\big)_{ij} +\sum_{A(G)_{ii}=1} |1\rangle\langle1|_i\\
&=\sum_{\substack{A(G)_{ij}=1 \\ i\neq j}}\frac{\sigma_x^i \sigma_x^j+\sigma_y^i \sigma_y^j}{2}+\sum_{A(G)_{ii}=1}\frac{1-\sigma_z^{i}}{2}.
\end{align*}
Note that the Hamiltonian $O_G$ conserves the total magnetization $M_z=\sum_{1=1}^{|V|}\frac{1-\sigma_z^{i}}{2}$ along the $z$ axis. 

We define $\theta_N(G)$ to be the smallest eigenvalue of \eq{restriction_equality}, i.e., the ground energy of $O_G$ in the sector with magnetization $N$. We show that approximating this quantity is QMA-complete.

\begin{mdframed}
\begin{problem}
[\textbf{XY Hamiltonian}]
We are given a $K$-vertex graph $G$, an integer $N\leq K$, a real number $c$, and a precision parameter $\epsilon=\frac{1}{T}$. The positive integer $T$ is provided in unary; the graph is specified by its adjacency matrix, which can be any $K\times K$ symmetric $0$-$1$ matrix. We are promised that either $\theta_N(G)\leq c$ (yes instance) or else $\theta_N(G)\geq c+\epsilon$ (no instance) and we are asked to decide which is the case.
\end{problem}
\end{mdframed}

\begin{theorem}\label{thm:XY}
XY Hamiltonian is QMA-complete.
\end{theorem}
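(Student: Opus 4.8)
The theorem has two halves: XY Hamiltonian lies in QMA, and it is QMA-hard.

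For containment in QMA I would reuse the template of \sec{containment_in_QMA}. The operator $O_G$ is $2$-local on $K=|V|$ qubits, has $\|O_G\|\le K^2$, and commutes with the magnetization $M_z$, which is a sum of commuting single-qubit projectors. Arthur receives a witness $|\phi\rangle$, measures $M_z$ and rejects unless the outcome is $N$, and otherwise estimates $\langle\phi|O_G|\phi\rangle$ by phase estimation on the efficiently implementable unitary $e^{-iO_G t}$, accepting iff the estimate falls below roughly $c+\epsilon/2$. Completeness holds because in a yes instance Merlin can send the ground state of $O_G$ in the magnetization-$N$ sector; soundness holds because in a no instance every state in that sector has energy at least $c+\epsilon$, so after measuring $M_z=N$ Arthur always (up to amplification) sees too much energy.

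For QMA-hardness I would reduce from Frustration-Free Bose-Hubbard Hamiltonian, which the bulk of the paper shows is QMA-hard. Given such an instance --- a $K$-vertex graph $G$, a particle number $N\le K$, and $\epsilon=1/T$ with $T\ge 4K$ --- the reduction outputs the same $G$ and $N$, the precision $\epsilon'=\epsilon/4$ (so $T'=4T$, still unary), and a rational threshold $c$ defined below. The structural input is the identity \eq{restriction_equality}: since the interaction term annihilates $\mathcal{W}_N(G)$, the restriction of the Bose-Hubbard Hamiltonian to $\mathcal{W}_N(G)$ is just $\big(\sum_w A(G)^{(w)}\big)\big|_{\mathcal{W}_N(G)}$, so that $\theta_N(G)-N\mu(G)$ equals the smallest eigenvalue of the positive semidefinite operator $H_{\mathrm{hop}}:=\big(\sum_w A(G)^{(w)}-N\mu(G)\big)\big|_{\mathcal{Z}_N(G)}$ when it is further restricted to the subspace $\mathcal{W}_N(G)\subseteq\mathcal{Z}_N(G)$; meanwhile $H(G,N)=H_{\mathrm{hop}}+H_{\mathrm{int}}$ with $H_{\mathrm{int}}:=\big(\sum_k\hat n_k(\hat n_k-1)\big)\big|_{\mathcal{Z}_N(G)}\ge 0$. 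I then need to compare $\theta_N(G)-N\mu(G)$ with $\lambda_N^1(G)$ in both directions. One direction is free: since $\mathcal{W}_N(G)\subseteq\mathcal{Z}_N(G)$, the variational principle gives $\theta_N(G)-N\mu(G)\ge\lambda_N^1(G)$ for every $G$, so a no instance ($\lambda_N^1(G)\ge\epsilon+\epsilon^3$) has $\theta_N(G)\ge N\mu(G)+\epsilon$. The other direction is a perturbative estimate for yes instances ($\lambda_N^1(G)\le\epsilon^3$): taking the ground state $|\psi\rangle$ of $H(G,N)$, positivity of the two terms gives $\langle\psi|H_{\mathrm{hop}}|\psi\rangle\le\epsilon^3$ and $\langle\psi|H_{\mathrm{int}}|\psi\rangle\le\epsilon^3$; since every occupation-number state outside $\mathcal{W}_N(G)$ has interaction energy at least $2$, we have $H_{\mathrm{int}}\ge 2\big(I-\Pi_{\mathcal{W}_N(G)}\big)$ on $\mathcal{Z}_N(G)$, so the component of $|\psi\rangle$ orthogonal to $\mathcal{W}_N(G)$ has squared norm at most $\epsilon^3/2$; combining this with $\|H_{\mathrm{hop}}\|\le 2NK\le 2K^2$ and the Cauchy--Schwarz inequality for the positive semidefinite form $H_{\mathrm{hop}}$, the normalized projection of $|\psi\rangle$ onto $\mathcal{W}_N(G)$ is a trial vector for $H_{\mathrm{hop}}|_{\mathcal{W}_N(G)}$ of energy $O(K^2\epsilon^3)$, whence $\theta_N(G)-N\mu(G)\le O(K^2\epsilon^3)$.

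Because $\epsilon\le 1/(4K)$, the bound $O(K^2\epsilon^3)$ sits comfortably below $\epsilon/2$; this is exactly where the hypothesis $T\ge 4K$ and the cubic precision $\epsilon^3$ in the Frustration-Free problem are used (with weaker powers the slack would not fit under the gap). I would finish by taking $c$ to be a rational number slightly larger than $N\tilde\mu+O(K^2\epsilon^3)$, where $\tilde\mu$ is a classically computed rational approximation of $\mu(G)$ with error at most $\epsilon/(8N)$ --- computable in polynomial time using polynomially many bits --- chosen so that yes instances of the Frustration-Free problem have $\theta_N(G)\le c$ while no instances have $\theta_N(G)\ge c+\epsilon'$; then $(G,N,c,\epsilon')$ is the desired XY Hamiltonian instance, and the reduction runs in polynomial time. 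The main obstacle is the perturbative estimate above --- controlling how far the approximate Bose-Hubbard ground state leaks out of the hard-core subspace and how much hopping energy it picks up once projected back in, and then verifying that the resulting error fits under the promise gap precisely when $T\ge 4K$; the rest (containment in QMA, the variational lower bound, approximating $\mu(G)$, and the arithmetic of the thresholds) is routine.
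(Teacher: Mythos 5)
Your proposal is correct, and it follows the same overall reduction strategy as the paper (from Frustration-Free Bose-Hubbard Hamiltonian, using \eq{restriction_equality} and the one-line variational inequality $\theta_N(G)-N\mu(G)\ge\lambda_N^1(G)$ for the no case), but the upper bound for yes instances is established by a genuinely different argument. The paper splits the yes case into $\lambda_N^1(G)=0$ (where the ground state is exactly hard-core and $\theta_N(G)=N\mu(G)$) and $0<\lambda_N^1(G)\le\epsilon^3$, and in the second subcase invokes the Nullspace Projection Lemma with $H_A=\sum_k\hat n_k(\hat n_k-1)|_{\mathcal{Z}_N(G)}$ and $H_B=\sum_w(A(G)-\mu(G))^{(w)}|_{\mathcal{Z}_N(G)}$ to derive $\lambda_N^1(G)\ge 2(\theta_N(G)-N\mu(G))/(2+(\theta_N(G)-N\mu(G))+2K^2)$, which rearranges to $\theta_N(G)-N\mu(G)\le 4K^2\epsilon^3$. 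You instead take the approximate ground state $|\psi\rangle$ directly, use $H_{\mathrm{int}}\ge 2(I-\Pi_{\mathcal{W}_N(G)})$ to bound its leakage out of the hard-core subspace by $\epsilon^3/2$, and then apply the triangle inequality for $\|H_{\mathrm{hop}}^{1/2}\cdot\|$ to show the renormalized projection onto $\mathcal{W}_N(G)$ has hopping energy $O(K^2\epsilon^3)$. Your route is more elementary (no auxiliary lemma) and handles $\lambda_N^1(G)=0$ uniformly rather than as a separate case; the paper's route is shorter on the page because the NPL has already been established, and it yields a clean closed-form constant $4K^2$ rather than a big-$O$. Both arguments use the bound $\|H_{\mathrm{hop}}\|\le 2K^2$ and the hypothesis $T\ge 4K$ in exactly the same way, so the threshold arithmetic is equivalent up to small constants. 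You are also more explicit than the paper about computing a rational approximation $\tilde\mu$ of $\mu(G)$ when setting $c$; the paper leaves this implicit, having noted earlier in \sec{Definitions-and-Results} that $\mu(G)$ is efficiently approximable.
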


We prove QMA-hardness of XY Hamiltonian by reduction from Frustration-Free Bose-Hubbard Hamiltonian.  The proof of \thm{XY} appears in \app{XY}.

\section{Bose-Hubbard Hamiltonian is contained in QMA}
\label{sec:containment_in_QMA}

To prove that Bose-Hubbard Hamiltonian is contained in QMA, we provide a verification algorithm satisfying the requirements of \defn{QMA}. In the Definition this algorithm is specified by a circuit involving only one measurement of the output qubit at the end of the computation. The procedure we describe below, which contains intermediate measurements in the computational basis, can be converted into a verification circuit of the desired form by standard techniques.

We are given an instance specified by $G$, $N$, $c$, and $\epsilon$. We are also given an input state $|\phi\rangle$ of $n_{\inn}$ qubits, where $n_{\inn}=\lceil \log_{2}D_{N}\rceil $ and $D_{N}$ is the dimension of $\mathcal{Z}_{N}(G)$ as given in equation \eq{DN}. Note, using the inequality $\binom{a}{b} \leq a^{b}$ in equation \eq{DN}, that $n_{\inn}=\mathcal{O}(K\log\left(N+K\right))$, where $K=|V|$ is the number of vertices in the graph $G$. We embed $\mathcal{Z}_{N}(G)$ into the space of $n_{\inn}$ qubits straightforwardly as the subspace spanned by the first $D_{N}$ standard basis vectors (with lexicographic ordering, say). The first step of the verification procedure is to measure the projector onto this space $\mathcal{Z}_{N}(G)$. If the measurement outcome is $1$ then the resulting state $|\phi^{\prime}\rangle$ is in $\mathcal{Z}_{N}(G)$ and we continue; otherwise we reject.

In the second step of the verification procedure, the goal is to measure $\bar{H}_{G}^{N}$ in the state $|\phi^{\prime}\rangle$. The Hamiltonian $\bar{H}_{G}^{N}$ is sparse and efficiently row-computable, with norm
\[
\left\Vert \bar{H}_{G}^{N}\right\Vert \leq\left\Vert H_{G}^{N}\right\Vert \leq N\left\Vert A(G)\right\Vert +\left\Vert \sum_{k\in V}\hat{n}_{k}\left(\hat{n}_{k}-1\right)\right\Vert \leq NK+N^{2}.
\]
We use phase estimation (see for example \cite{CEMM98}) to estimate the energy of $|\phi^{\prime}\rangle$, using sparse Hamiltonian simulation \cite{AT03} to approximate evolution according to $\bar{H}_{G}^{N}$. We choose the parameters of the phase estimation so that, with probability at least $\frac{2}{3}$, it produces an approximation $E$ of the energy with error at most $\frac{\epsilon}{4}$. This can be done in time $\poly(N,K,\frac{1}{\epsilon})$. If $E\leq c+\frac{\epsilon}{2}$ then we accept; otherwise we reject.

We now show that this verification procedure satisfies the completeness and soundness requirements of \defn{QMA}. For a yes instance, an eigenvector of $\bar{H}_{G}^{N}$ with eigenvalue $e\leq c$ is accepted by this procedure as long as the energy $E$ computed in the phase estimation step has the desired precision. To see this, note that we measure $\left|E-e\right|\leq\frac{\epsilon}{4}$, and hence $E\leq c+\frac{\epsilon}{4}$, with probability at least $\frac{2}{3}$.  For a no instance, write $|\phi^{\prime}\rangle\in\mathcal{Z}_{N}(G)$ for a state obtained after passing the first step. The value $E$ computed by the subsequent phase estimation step satisfies $E\geq c+\frac{3\epsilon}{4}$ with probability at least $\frac{2}{3}$, in which case the state is rejected. From this we see that the probability of accepting a no instance is at most $\frac{1}{3}$.

\section{Gate graphs}
\label{sec:A-class-of}

In this Section we define a class of graphs (\emph{gate graphs}) and a diagrammatic notation for them (\emph{gate diagrams}). We also discuss the Bose-Hubbard model on these graphs.

Every gate graph is constructed using a specific $128$-vertex graph $g_{0}$ as a building block. This graph is shown in \fig{g_0} and discussed in \sec{Encoding-a-Computation}. In \sec{Gate-graphs-and} we define gate graphs and gate diagrams. A gate graph is obtained by adding edges and self-loops (in a prescribed way) to a collection of disjoint copies of $g_{0}$.

In \sec{FF_State} we discuss the ground states of the Bose-Hubbard model on gate graphs. For any gate graph $G$, the smallest eigenvalue $\mu(G)$ of the adjacency matrix $A(G)$ satisfies $\mu(G)\geq-1-3\sqrt{2}$. It is convenient to define the constant
\begin{equation}
e_{1}=-1-3\sqrt{2}.\label{eq:e1_defn}
\end{equation}
When $\mu(G)=e_{1}$ we say $G$ is an $e_{1}$-gate graph. We focus on the frustration-free states of $e_1$-gate graphs (recall from \defn{FF_states} that $|\phi\rangle\in \mathcal{Z}_N(G)$ is frustration free if and only if $H(G,N)|\phi\rangle=0$). We show that all such states live in a convenient subspace (called $\mathcal{I}(G,N)$) of the $N$-particle Hilbert space. This subspace has the property that no two (or more) particles ever occupy vertices of the same copy of $g_{0}$. The restriction to this subspace makes it easier to analyze the ground space.

In \sec{Occupancy-constraints} we consider a class of subspaces that, like $\mathcal{I}(G,N)$, are defined by a set of constraints on the locations of $N$ particles in an $e_{1}$-gate graph $G$. We state an ``\ocl'' (proven in \app{Occupancy-Constraints-Lemma}) that relates a subspace of this form to the ground space of the Bose-Hubbard model on a graph derived from $G$.

\subsection{The graph $g_{0}$}
\label{sec:Encoding-a-Computation}

The graph $g_{0}$ shown in \fig{g_0} is closely related to a single-qubit circuit $\mathcal{C}_{0}$ with eight gates $U_{j}$ for $j\in[8]$, where 
\begin{align*}
U_{1} & =U_{2}=U_{7}=U_{8}=H\qquad U_{3}=U_{5}=HT\qquad U_{4}=U_{6}=\left(HT\right)^{\dagger}
\end{align*}
with 
\[
H=\frac{1}{\sqrt{2}}\begin{pmatrix}
1 & 1\\
1 & -1
\end{pmatrix}\qquad T=\begin{pmatrix}
1 & 0\\
0 & e^{i\frac{\pi}{4}}
\end{pmatrix}.
\]
In this section we map this circuit to the graph $g_{0}$. The mapping we use can be generalized to map an arbitrary quantum circuit with any number of qubits to a graph, but for simplicity we focus here on $g_{0}$. In \app{complexity_smallest_graph_eig} we discuss the more general mapping and use it to prove that computing (in a certain precise sense specified in the Appendix) the smallest eigenvalue of a sparse, efficiently row-computable symmetric $0$-$1$ matrix is QMA-complete.

\begin{figure}
\centering
\begin{tikzpicture}[scale = 0.75]

  \foreach \sym in {1,-1}{  
  \begin{scope}[xscale = \sym]
  \foreach \theta/\shiftaa /\shiftab /\shiftba /\shiftbb in {
      0/4/4/4/0,
      -45/4/4/4/0,
      -90/4/4/5/1,
      -135/4/3/4/7}{
  \begin{scope}[rotate = \theta]
    \foreach \zstart / \zend /\shift in {6/6/\shiftaa,6/1.5/\shiftab,1.5/6/\shiftba,1.5/1.5/\shiftbb}{
    \foreach \w in {0,...,7}{
      \draw[draw=black!70] let \n1={int(mod(\w + \shift,8)) /2 + \zend} in (90:{\w/2 + \zstart} ) -- (45: \n1 cm);
    }}
  \end{scope}}
  \end{scope}}

  \foreach \theta in {0,45,...,315}{
  \foreach \yshift in {1.5,6}{
  \begin{scope}[rotate = \theta,yshift=\yshift cm]
  \begin{scope}[draw=black!40]
    \draw (0,3.5) to[out=210,in = 150] (0,1);
    \draw (0,3.5) to[out=225,in = 135] (0,1.5);
    \draw (0,3.5) to[out=240,in=120] (0,2);
  
    \draw (0,3) to[out=330,in = 30] (0,.5);
    \draw (0,3) to[out=315,in=45] (0,1);
    \draw (0,3) to[out=300,in=60] (0,1.5);
  
    \draw (0,0) to[out=30,in=330] (0,2.5);
    \draw (0,0) to[out=45,in=315] (0,2);
    \draw (0,0) to[out=60,in=300] (0,1.5);
  
    \draw (0,.5) to[out=135,in=225] (0,2.5);
    \draw (0,.5) to[out=120,in=240] (0,2);
  
    \draw (0,2.5) to[out=240,in=120] (0,1);
  \end{scope}
   \foreach \y in {0,.5,...,3.5}{
    \draw[fill = black,draw=black] (0,\y cm) circle (.33mm);
  }
  
  \end{scope}}}
  
  \foreach \t/\gate in {1/H,2/H,3/HT,4/{(HT)^\dag},5/HT,6/{(HT)^\dag},7/H,8/H}{
    \node at ({135 - \t * 45} :10.25) {$t = \t$};
    \draw[->,draw=black] ({120-\t*45} :10.25) arc[radius=10.25, start angle = {120-\t*45} ,end angle= {105-\t*45}];
    \node[fill=white] at ({112.5 - \t * 45} :10.25) {$\gate$ };
  }
\end{tikzpicture}

\caption{The graph $g_{0}$.\label{fig:g_0}}
\end{figure}

Starting with the circuit $\mathcal{C}_{0}$, we apply the Feynman-Kitaev circuit-to-Hamiltonian mapping \cite{Fey85,KSV02} (up to a constant term and overall multiplicative factor) to get the Hamiltonian
\begin{equation}
-\sqrt{2}\sum_{t=1}^{8}\left(U_{t}^{\dagger}\otimes|t\rangle\langle t+1|+U_{t}\otimes|t+1\rangle\langle t|\right).\label{eq:single_qubit_ham}
\end{equation}
This Hamiltonian acts on the Hilbert space $\C^{2}\otimes\C^{8}$, where the second register (the ``clock register'') has periodic boundary conditions (i.e., we let $|8+1\rangle=|1\rangle$). The ground space of \eq{single_qubit_ham} is spanned by so-called history states
\begin{align*}
  |\phi_{z}\rangle & 
  =\frac{1}{\sqrt{8}} \big(|z\rangle(|1\rangle+|3\rangle+|5\rangle+|7\rangle)
    +H|z\rangle(|2\rangle+|8\rangle)
    +HT|z\rangle(|4\rangle+|6\rangle)\big),
  \quad z\in\{0,1\},
\end{align*}
that encode the history of the computation where the circuit $\mathcal{C}_{0}$ is applied to $|z\rangle$. One can easily check that $|\phi_{z}\rangle$ is an eigenstate of the Hamiltonian with eigenvalue $-2\sqrt{2}$. 

Now we modify \eq{single_qubit_ham} to give a symmetric $0$-$1$ matrix. The trick we use is a variant of one used in references \cite{JW06,JGL10} for similar purposes. 

The nonzero standard basis matrix elements of \eq{single_qubit_ham} are integer powers of $\omega=e^{i\frac{\pi}{4}}$. Note that $\omega$ is an eigenvalue of the $8\times8$ shift operator 
\[
S=\sum_{j=0}^{7}|j+1\bmod 8\rangle\langle j|
\]
with eigenvector 
\[
|\omega\rangle=\frac{1}{\sqrt{8}}\sum_{j=0}^{7}\omega^{-j}|j\rangle.
\]
For each operator $-\sqrt{2}H,-\sqrt{2}HT,$ or $-\sqrt{2}(HT)^{\dagger}$ appearing in equation \eq{single_qubit_ham}, define another operator acting on $\C^{2}\otimes\C^{8}$ by replacing nonzero matrix elements with powers of the operator $S$, namely $\omega^k \mapsto S^k$. Write $B(U)$ for the operator obtained by making this replacement in $U$, e.g.,
\begin{align*}
-\sqrt{2}HT & = \begin{pmatrix}
\omega^{4} & \omega^{5}\\
\omega^{4} & \omega
\end{pmatrix} 
\mapsto B(HT)
=\begin{pmatrix}
S^{4} & S^{5}\\
S^{4} & S
\end{pmatrix}.
\end{align*}
We adjoin an 8-level ancilla and we make this replacement in equation \eq{single_qubit_ham}. This gives
\begin{align}
H_{\prop} & =\sum_{t=1}^{8}\left(B(U_{t})^{\dagger}_{13}\otimes|t\rangle\langle t+1|_2+B(U_{t})_{13}\otimes|t+1\rangle\langle t|_2\right),\label{eq:Hprop}
\end{align}
a symmetric $0$-1 matrix acting on $\C^{2}\otimes\C^{8}\otimes\C^{8}$, where the second register is the clock register and the third register is the ancilla register on which the $S$ operators act (the subscripts indicate which registers are acted upon). It is an insignificant coincidence that the clock and ancilla registers have the same dimension. 

Note that $H_{\prop}$ commutes with $S$ (acting on the $8$-level ancilla) and therefore is block diagonal with eight sectors. In the sector where $S$ has eigenvalue $\omega$, it is identical to the Hamiltonian we started with, equation \eq{single_qubit_ham}. There is also a sector (where $S$ has eigenvalue $\omega^*$) where the Hamiltonian is the entrywise complex conjugate of the one we started with. We add a term to $H_{\prop}$ that assigns an energy penalty to states in any of the other six sectors, ensuring that none of these states lie in the ground space of the resulting operator.

Now we can define the graph $g_{0}$. Each vertex in $g_0$ corresponds to a standard basis vector in the Hilbert space $\C^{2}\otimes\C^{8}\otimes\C^{8}$. We label the vertices $(z,t,j)$ with $z\in\{0,1\}$ describing the state of the computational qubit, $t\in[8]$ giving the state of the clock, and $j\in\{0,\ldots,7\}$ describing the state of the ancilla. The adjacency matrix is
\[
A(g_{0})=H_{\prop}+H_{\text{penalty}}
\]
where the penalty term
\[
H_{\text{penalty}}=\id\otimes\id\otimes\left(S^{3}+S^{4}+S^{5}\right)
\]
acts nontrivially on the third register. The graph $g_{0}$ is shown in \fig{g_0}.

Now consider the ground space of $A(g_{0})$. Note that $H_{\prop}$ and $H_{\text{penalty}}$ commute, so they can be simultaneously diagonalized. Furthermore, $H_{\text{penalty}}$ has smallest eigenvalue $-1-\sqrt{2}$ (with eigenspace spanned by $|\omega\rangle$ and $|\omega^*\rangle$) and first excited energy $-1$. The norm of $H_{\prop}$ satisfies $\left\Vert H_{\prop}\right\Vert \leq4$, which follows from the fact that $H_{\prop}$ has four ones in each row and column (with the remaining entries all zero).

The smallest eigenvalue of $A(g_{0})$ lives in the sector where $H_{\text{penalty}}$
has eigenvalue $-1-\sqrt{2}$ and is equal to 
\begin{equation}
  -2\sqrt{2}+(-1-\sqrt{2})=-1-3\sqrt{2}=-5.24\ldots.\label{eq:e_1_definition}
\end{equation}
This is the constant $e_{1}$ from equation \eq{e1_defn}. To see this, note that in any other sector $H_{\text{penalty}}$ has eigenvalue at least $-1$ and every eigenvalue of $A(g_{0})$ is at least $-5$ (using the fact that $H_{\prop}\geq-4$). An orthonormal basis for the ground space of $A(g_{0})$ is furnished by the states
\begin{align}
|\psi_{z,0}\rangle & =\frac{1}{\sqrt{8}}\big(|z\rangle(|1\rangle+|3\rangle+|5\rangle+|7\rangle)+H|z\rangle(|2\rangle+|8\rangle)+HT|z\rangle(|4\rangle+|6\rangle)\big)|\omega\rangle\label{eq:psi0m}\\
|\psi_{z,1}\rangle & =|\psi_{z,0}\rangle^{*}\label{eq:psi1m}
\end{align}
where $z\in \{0,1\}$.

Note that the amplitudes of $|\psi_{z,0}\rangle$ in the above basis contain the result of computing either the identity, Hadamard, or $HT$ gate acting on the ``input'' state $|z\rangle$.

\subsection{Gate graphs and gate diagrams}
\label{sec:Gate-graphs-and}

We use three different schematic representations of the graph $g_{0}$ (defined in \sec{Encoding-a-Computation}), as depicted in \fig{diagram_elements}. We call these Figures \emph{diagram elements}; they are also the simplest examples of \emph{gate diagrams}, which we define shortly.

\begin{figure}
\centering
\subfloat[][]{ 
\label{fig:diagram_elementH}
\begin{tikzpicture}
  \draw[rounded corners=2mm,thick] (0,0) rectangle (3.24 cm, 2cm);

 \foreach \y in {.33,.66,1.33,1.66}
{ 
 \foreach \x/\color in {0/black,3.24/gray}
{      
\draw[fill=\color,draw=\color] (\x cm, \y cm) circle (.66mm);   
}
}   

\foreach \z in {0,1}
{    
\node[right] at (0,1.66-\z) {\footnotesize $(\z,1)$}; 
\node[right] at (0,1.33-\z) {\footnotesize $(\z,3)$};    

   \node[left] at (3.24,1.66-\z) {\footnotesize $(\z,2)$}; 
   \node[left] at (3.24,1.33-\z) {\footnotesize $(\z,8)$};  
}   

  \node at (1.62,1) {\huge $H$};   
\end{tikzpicture} 
} 
\qquad
\subfloat[][]{
\label{fig:diagram_elementHT}
\begin{tikzpicture}
  \draw[rounded corners=2mm,thick] (0,0) rectangle (3.24 cm, 2cm);
   
\foreach \y in {.33,.66,1.33,1.66}
{  
\foreach \x/\color in {0/black,3.24/gray}
{     
\draw[fill=\color,draw=\color] (\x cm, \y cm) circle (.66mm);   
}
}     

\foreach \z in {0,1}
{    
\node[right] at (0,1.66-\z) {\footnotesize $(\z,1)$}; 
   \node[right] at (0,1.33-\z) {\footnotesize $(\z,3)$};     
  
  \node[left] at (3.24,1.66-\z) {\footnotesize $(\z,4)$}; 
   \node[left] at (3.24,1.33-\z) {\footnotesize $(\z,6)$}; 
 }    
  
\node at (1.62,1) {\huge $HT$}; 
 \end{tikzpicture}
} 
\qquad 
\subfloat[][]
{
\label{fig:diagram_element1}
\begin{tikzpicture}
  \draw[rounded corners=2mm,thick] (0,0) rectangle (3.24 cm, 2cm);

\foreach \y in {.33,.66,1.33,1.66}
{   
\foreach \x/\color in {0/black,3.24/gray}
{     
\draw[fill=\color,draw=\color] (\x cm, \y cm) circle (.66mm);   
}
}    

\foreach \z in {0,1}
{     
\node[right] at (0,1.66-\z) {\footnotesize $(\z,1)$};   
 \node[right] at (0,1.33-\z) {\footnotesize $(\z,3)$};    

 \node[left] at (3.24,1.66-\z) {\footnotesize $(\z,5)$};  
  \node[left] at (3.24,1.33-\z) {\footnotesize $(\z,7)$};   }    
  
\node at (1.62,1) {\huge $1$};  
\end{tikzpicture}  
}

\caption{Diagram elements from which a gate diagram is constructed. Each diagram element is a schematic representation of the graph $g_{0}$ shown
in \fig{g_0}. \label{fig:diagram_elements}}
\end{figure}
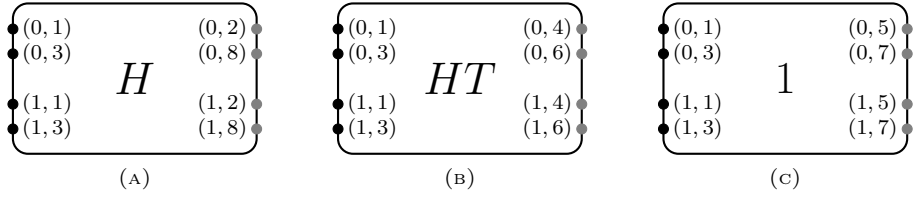

The black and grey circles in a diagram element are called ``nodes.'' Each node has a label $(z,t)$. The only difference between the three diagram elements is the labeling of their nodes. In particular, the nodes in the diagram element $U\in\{\id,H,HT\}$ correspond to values of $t\in[8]$ where the first register in equation \eq{psi0m} is either $|z\rangle$ or $U|z\rangle$. For example, the nodes for the $H$ diagram element have labels with $t\in\{1,3\}$ (where $|\psi_{z,0}\rangle$ contains the ``input'' $|z\rangle$) or $t=\{2,8\}$ (where $|\psi_{z,0}\rangle$ contains the ``output'' $H|z\rangle$). We draw the input nodes in black and the output nodes in grey.

The rules for constructing gate diagrams are simple. A gate diagram consists of some number $R \in \{1,2,\ldots\}$ of diagram elements, with self-loops attached to a subset $\mathcal{S}$ of the nodes and edges connecting a set $\mathcal{E}$ of pairs of nodes. A node may have a single edge or a single self-loop attached to it, but never more than one edge or self-loop and never both an edge and a self-loop. Each node in a gate diagram has a label $(q,z,t)$ where $q \in [R]$ indicates the diagram element it belongs to. An example is shown in \fig{simple_gate_diagram}. Sometimes it is convenient to draw the input nodes on the right-hand side of a diagram element; e.g., in \fig{W_gadget} the node closest to the top left corner is labeled $(q,z,t)=(3,0,2)$.

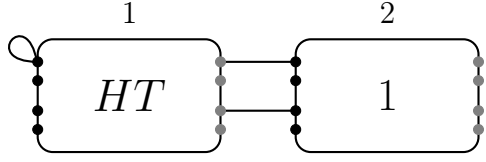
\begin{figure}
\centering \begin{tikzpicture}

 \foreach \y in {1.2,.55}
{   
 \draw[thick] (2.43,\y) -- (3.43,\y);  
}      
\draw[thick,looseness = 200] (0,1.2) to [out = 180, in = 100] (-.01,1.2);

 \foreach \offset/\unitary/\label in {0/HT/1,3.43/1/2}
{ 
 \begin{scope}[xshift=\offset cm] 
 \draw[rounded corners=2mm,thick] (0,0) rectangle (2.43cm,1.5 cm);
 \foreach \x /\color in {0/black,2.43/gray}
{     
\foreach \y in {1.2,.95,.55,.3}
{      
\draw[fill=\color,draw=\color] (\x cm, \y cm) circle (.66mm);  
  }
}    
\node at (1.22cm, .75cm) {\huge $\unitary$};   
 \node at (1.22cm, 1.85cm){\Large \label};  
\end{scope}
}    
\end{tikzpicture}
\caption{A gate diagram with two diagram elements labeled $q=1$ (left) and $q=2$ (right).
\label{fig:simple_gate_diagram}}
\end{figure}

To every gate diagram we associate a \emph{gate graph} $G$ with
vertex set 
\[
\left\{ (q,z,t,j)\colon q\in[R],\, z\in\{0,1\},\, t\in[8],\, j\in\{0,\ldots,7\}\right\} 
\]
and adjacency matrix 
\begin{align}
A(G) & =\id_{q}\otimes A(g_{0})+h_{\mathcal{S}}+h_{\mathcal{E}}\label{eq:adj_gate_graph}\\
h_{\mathcal{S}} & =\sum_{\mathcal{S}}|q,z,t\rangle\langle q,z,t|\otimes\id_{j}\label{eq:h_loops}\\
h_{\mathcal{E}} & =\sum_{\mathcal{E}}\left(|q,z,t\rangle+|q^{\prime},z^{\prime},t^{\prime}\rangle\right)\left(\langle q,z,t|+\langle q^{\prime},z^{\prime},t^{\prime}|\right)\otimes\id_{j}.\label{eq:h_edges}
\end{align}
The sums in equations \eq{h_loops} and \eq{h_edges} run over the set of nodes with self-loops $(q,z,t)\in\mathcal{S}$ and the set of pairs of nodes connected by edges $\{(q,z,t),(q^{\prime},z^{\prime},t^{\prime})\}\in\mathcal{E}$, respectively. We write $\id_{q}$ and $\id_{j}$ for the identity operator on the registers with variables $q$ and $j$, respectively. We see from the above expression that each self-loop in the gate diagram corresponds to $8$ self-loops in the graph $G$, and an edge in the gate diagram corresponds to $8$ edges and $16$ self-loops in $G$.

Since a node in a gate graph never has more than one edge or self-loop attached to it, equations \eq{h_loops} and \eq{h_edges} are sums of orthogonal Hermitian operators.  Therefore 
\begin{align}
\left\Vert h_{\mathcal{S}}\right\Vert  & =\max_{\mathcal{S}}\left\Vert |q,z,t\rangle\langle q,z,t|\otimes\id_{j}\right\Vert =1\quad\text{if }\mathcal{S}\neq\emptyset\label{eq:h_S_bound}\\
\left\Vert h_{\mathcal{E}}\right\Vert  & =\max_{\mathcal{E}}\left\Vert \left(|q,z,t\rangle+|q^{\prime},z^{\prime},t^{\prime}\rangle\right)\left(\langle q,z,t|+\langle q^{\prime},z^{\prime},t^{\prime}|\right)\otimes\id_{j}\right\Vert =2\quad\text{if }\mathcal{E}\neq\emptyset\label{eq:h_E_bound}
\end{align}
for any gate graph. (Of course, this also shows that $\|{h_{\mathcal{S}^\prime}}\|=1$ and $\|{h_{\mathcal{E}^\prime}}\|=2$ for any nonempty subsets $\mathcal{S}^\prime\subseteq \mathcal{S}$ and $\mathcal{E}^\prime\subseteq \mathcal{E}$.)

\subsection{Frustration-free states on $e_{1}$-gate graphs}
\label{sec:FF_State}

Consider the adjacency matrix $A(G)$ of a gate graph $G$, and note (from equation \eq{adj_gate_graph}) that its smallest eigenvalue $\mu(G)$ satisfies
\[
\mu(G)\geq e_{1}
\]
since $h_{\mathcal{S}}$ and $h_{\mathcal{E}}$ are positive semidefinite and $A(g_{0})$ has smallest eigenvalue $e_{1}$. In the special case where $\mu(G)=e_{1}$, we say $G$ is an $e_{1}$-gate graph.

\begin{definition}
An $e_{1}$-gate graph is a gate graph $G$ such that the smallest eigenvalue of its adjacency matrix is $e_{1}=-1-3\sqrt{2}$.
\end{definition}

When $G$ is an $e_{1}$-gate graph, a single-particle ground state $|\Gamma\rangle$ of $A(G)$ has minimal energy for each term in \eq{adj_gate_graph}, i.e., it satisfies 
\begin{align}
\left(\id\otimes A(g_{0})\right)|\Gamma\rangle & =e_{1}|\Gamma\rangle\label{eq:Gamma_disc}\\
h_{\mathcal{S}}|\Gamma\rangle & =0\label{eq:h_s_0}\\
h_{\mathcal{E}}|\Gamma\rangle & =0.\label{eq:h_e_0}
\end{align}
Indeed, to show that a given gate graph $G$ is an $e_{1}$-gate graph, it suffices to find a state $|\Gamma\rangle$ satisfying these conditions. Note that equation \eq{Gamma_disc} implies that $|\Gamma\rangle$ can be written as a superposition of the states
\[
  |\psi_{z,a}^{q}\rangle=|q\rangle|\psi_{z,a}\rangle,\quad
  z,a\in\{0,1\},\, q\in[R]
\]
where $|\psi_{z,a}\rangle$ is given by equations \eq{psi0m} and \eq{psi1m}. The coefficients in the superposition are then constrained by equations \eq{h_s_0} and \eq{h_e_0}.
 
\begin{example}\label{ex:As-an-example}
As an example, we show the gate graph in \fig{simple_gate_diagram} is an $e_{1}$-gate graph. As noted above, equation \eq{Gamma_disc} lets us restrict our attention to the space spanned by the eight states $|\psi_{z,a}^{q}\rangle$ with $z,a\in \{0,1\}$ and $q\in \{1,2\}$. In this basis, the operators $h_{\mathcal{S}}$ and $h_{\mathcal{E}}$ only have nonzero matrix elements between states with the same value of $a\in\{0,1\}$. We therefore solve for the $e_{1}$ energy ground states with $a=0$ and those with $a=1$ separately. Consider a ground state of the form
\[
\left(\tau_{1}|\psi_{0,a}^{1}\rangle+\nu_{1}|\psi_{1,a}^{1}\rangle\right)+\left(\tau_{2}|\psi_{0,a}^{2}\rangle+\nu_{2}|\psi_{1,a}^{2}\rangle\right)
\]
and note that in this case \eq{h_s_0} implies $\tau_{1}=0$. Equation \eq{h_e_0} gives
\[
\begin{pmatrix}
\tau_{2}\\
\nu_{2}
\end{pmatrix}=\begin{cases}
HT\begin{pmatrix}
-\tau_{1}\\
-\nu_{1}
\end{pmatrix} & a=0\\
(HT)^{*}\begin{pmatrix}
-\tau_{1}\\
-\nu_{1}
\end{pmatrix} & a=1.
\end{cases}
\]
We find two orthogonal $e_{1}$-energy states, which are (up to normalization)
\begin{align}
|\psi_{1,0}^{1}\rangle-\frac{e^{i\frac{\pi}{4}}}{\sqrt{2}}\left(|\psi_{0,0}^{2}\rangle-|\psi_{1,0}^{2}\rangle\right)\label{eq:example_ffstate_1}\\
|\psi_{1,1}^{1}\rangle-\frac{e^{-i\frac{\pi}{4}}}{\sqrt{2}}\left(|\psi_{0,1}^{2}\rangle-|\psi_{1,1}^{2}\rangle\right) & .\label{eq:example_ffstate_2}
\end{align}
We interpret each of these states as encoding a qubit that is transformed at each set of input/output nodes in the gate diagram in \fig{simple_gate_diagram}. The encoded qubit begins on the input nodes of the first diagram element in the state 
\[
\begin{pmatrix}
\tau_{1}\\
\nu_{1}
\end{pmatrix}=\begin{pmatrix}
0\\
1
\end{pmatrix}
\]
because the self-loop penalizes the basis vectors $|\psi_{0,a}^{1}\rangle$. On the output nodes of diagram element $1$, the encoded qubit is in the state where either $HT$ (if $a=0$) or its complex conjugate (if $a=1$) has been applied. The edges in the gate diagram ensure that the encoded qubit on the input nodes of diagram element 2 is minus the state on the output nodes of diagram element $1$.
\end{example}

In this example, each single-particle ground state encodes a single-qubit computation. Later we show how $N$-particle frustration-free states on $e_{1}$-gate graphs can encode computations on $N$ qubits. Recall from \defn{FF_states} that a state $|\Gamma\rangle\in\mathcal{Z}_{N}(G)$ is said to be frustration free if and only if $H(G,N)|\Gamma\rangle=0.$ Note that $H(G,N)\geq0$, so an $N$-particle frustration-free state is necessarily a ground state. Putting this together with \lem{increase_part_number}, we see that the existence of an $N$-particle frustration-free state implies
\[
\lambda_{N}^{1}(G)=\lambda_{N-1}^{1}(G)=\cdots=\lambda_{1}^{1}(G)=0,
\]
i.e., there are $N^{\prime}$-particle frustration-free states for all $N^{\prime}\leq N$. 

We prove that the graph $g_{0}$ has no two-particle frustration-free states. By \lem{increase_part_number}, it follows that $g_0$ has no $N$-particle frustration-free states for $N\geq 2$.

\begin{lemma}
\label{lem:2particle}$\lambda_{2}^{1}(g_{0})>0$.
\end{lemma}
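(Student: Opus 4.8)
The plan is to show that the two-particle Bose-Hubbard Hamiltonian $H(g_0,2)$ has strictly positive ground energy, i.e., that there is no two-particle frustration-free state. Recall that a two-particle frustration-free state $|\psi\rangle \in \mathcal{Z}_2(g_0)$ would have to be annihilated by both terms of $H_{g_0}^2$: it must be a two-particle state built from single-particle ground states of $A(g_0)$ (so that the hopping term achieves energy $2e_1$), and it must lie in the nullspace of the interaction term, meaning no vertex is doubly occupied. So I would begin by recalling the explicit single-particle ground space of $A(g_0)$ from \sec{Encoding-a-Computation}: it is four-dimensional, spanned by $|\psi_{z,a}\rangle$ for $z,a\in\{0,1\}$, where $|\psi_{z,0}\rangle$ is the history state \eq{psi0m} tensored with $|\omega\rangle$ and $|\psi_{z,1}\rangle = |\psi_{z,0}\rangle^*$. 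Any two-particle state of minimal hopping energy is therefore $\mathrm{Sym}$ applied to a vector in the symmetric part of the span of $\{|\psi_{z,a}\rangle \otimes |\psi_{z',a'}\rangle\}$.

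The key structural observation I would exploit is that these four single-particle ground states have substantial pairwise overlap in their supports on the vertex set. In particular, every $|\psi_{z,a}\rangle$ has the same support in the clock register (the times $t\in\{1,\dots,8\}$ all appear) and the same support $\{|\omega\rangle$ or $|\omega^*\rangle\}$ structure in the ancilla register; the only register that distinguishes them is the computational qubit register $\C^2$, and even there $H|z\rangle$ and $HT|z\rangle$ are never computational basis states. The concrete computation I would carry out is: for a general two-particle state $|\psi\rangle = \mathrm{Sym}\big(\sum_{z,a,z',a'} c_{za,z'a'} |\psi_{z,a}\rangle|\psi_{z',a'}\rangle\big)$, compute the expectation of the interaction term $\sum_k \hat n_k(\hat n_k-1)$, or equivalently show that the projection of any such $|\psi\rangle$ onto the doubly-occupied subspace is nonzero unless $|\psi\rangle=0$. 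Because each $|\psi_{z,a}\rangle$ is a spread-out superposition with nonzero amplitude on many vertices, the symmetrized product $|\psi_{z,a}\rangle|\psi_{z',a'}\rangle$ necessarily has weight on ``diagonal'' vertices $(v,v)$; I would make this quantitative by picking out a specific vertex (for instance one with $t=1$, where the computational register is cleanly $|z\rangle$) and showing the total diagonal weight is bounded below by a fixed positive multiple of $\||\psi\rangle\|^2$.

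I expect the main obstacle to be bookkeeping: handling the four-dimensional single-particle ground space and the symmetrization cleanly enough that the lower bound on doubly-occupied weight is transparent. The cleanest route is probably to split by the conjugation index $a$: since $|\omega\rangle$ and $|\omega^*\rangle$ are orthogonal in the ancilla register, a two-particle ground state decomposes according to whether the two particles are in the $(a,a')=(0,0)$, $(0,1)$ (or $(1,0)$), or $(1,1)$ sectors, and these sectors are mutually orthogonal. Within each sector the relevant single-particle states live in a two-dimensional space (spanned by $|\psi_{0,a}\rangle,|\psi_{1,a}\rangle$, essentially a copy of $\C^2$ for the encoded qubit), so the two-particle ground states in that sector are parametrized by a single symmetric $2\times2$ matrix, and one checks directly that any such state has a doubly-occupied component — e.g. by evaluating $\langle\psi| \hat n_k(\hat n_k - 1)|\psi\rangle$ at a convenient vertex $k$ in diagram-element register and showing it cannot vanish for all $k$ unless all coefficients are zero. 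Assembling these three sectors (using their orthogonality and \lem{increase_part_number} if convenient) then yields $\lambda_2^1(g_0) > 0$.
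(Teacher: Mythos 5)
Your overall strategy is sound and matches the spirit of the paper's argument: since both summands of $H_{g_0}^2 - 2e_1$ are positive semidefinite, showing $\lambda_2^1(g_0)>0$ is equivalent to showing that no state in the span of $\Sym(|\psi_{z,a}\rangle|\psi_{x,y}\rangle)$ is annihilated by every $|v\rangle\langle v|\otimes|v\rangle\langle v|$, and this is exactly what the paper does (by contradiction, deriving $Q=0$). The sector decomposition by the conjugation indices is a reasonable organizational device, but note that it is not automatic that the interaction $\sum_v |v\rangle\langle v|\otimes|v\rangle\langle v|$ block-diagonalizes across your three sectors when restricted to the hopping ground space; it happens to, because the cross-sector matrix elements carry a net ancilla phase $\omega^{2j}$ or $\omega^{4j}$ and $\sum_{j=0}^{7}\omega^{kj}=0$ for $k\not\equiv 0 \bmod 8$, but you would need to state and verify this rather than assert ``orthogonality of the sectors.''

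The real gap is the claim that checking double occupancy at ``a convenient vertex (for instance one with $t=1$)'' suffices, possibly supplemented by a vague appeal to boundedness. It does not. The cross-sector state
\[
|Q\rangle = |\psi_{1,0}\rangle|\psi_{0,1}\rangle + |\psi_{0,1}\rangle|\psi_{1,0}\rangle - |\psi_{1,1}\rangle|\psi_{0,0}\rangle - |\psi_{0,0}\rangle|\psi_{1,1}\rangle
\]
has \emph{zero} double-occupation amplitude at every vertex with $t\in\{1,3,5,7\}$ (where the computational register carries a definite bit, so each term of $|Q\rangle$ always has one factor with the wrong bit) and also at every vertex with $t\in\{2,8\}$ (where the register is $H|z\rangle$; since $H$ is real, the two pairs of terms in $|Q\rangle$ cancel). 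Only at $t\in\{4,6\}$ — where the register is $HT|z\rangle$ and $(HT)^* \ne HT$ — does the double-occupation amplitude become proportional to $e^{i\pi/4}-e^{-i\pi/4}\ne 0$, and this is precisely the last step of the paper's argument, using the vertex $|0,4,j\rangle$. In other words, the irrational $T$ gate built into $g_0$ is what makes the lemma true, and an argument that looks only at the ``clean'' $t=1$ vertices (or indeed at any subset of the real-gate vertices $t\in\{1,2,3,5,7,8\}$) cannot close the $\{a,a'\}=\{0,1\}$ sector. You should make explicit which vertices you are using; the paper needs the full sequence $t=1$, $t=2$, and $t=4$ (with varying ancilla index $j$ at each stage to decouple $\omega^{\pm 2j}$ from the constant term).
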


\begin{proof}
Suppose (for a contradiction) that $|Q\rangle\in\mathcal{Z}_{2}(g_{0})$ is a nonzero vector in the nullspace of $H(g_{0},2)$, so 
\[
H_{g_{0}}^{2}|Q\rangle=\bigg(A(g_{0})\otimes\id+\id\otimes A(g_{0})+2\sum_{v\in g_{0}}|v\rangle\langle v|\otimes|v\rangle\langle v|\bigg)|Q\rangle=2e_{1}|Q\rangle.
\]
This implies 
\[
A(g_{0})\otimes\id|Q\rangle=\id\otimes A(g_{0})|Q\rangle=e_{1}|Q\rangle
\]
since $A(g_0)$ has smallest eigenvalue $e_1$ and the interaction term is positive semidefinite. We can therefore write 
\[
|Q\rangle=\sum_{z,a,x,y\in\{0,1\}}Q_{za,xy}|\psi_{z,a}\rangle|\psi_{x,y}\rangle
\]
with $Q_{za,xy}=Q_{xy,za}$ (since $|Q\rangle\in\mathcal{Z}_{2}(g_{0})$) and 
\begin{equation}
\left(|v\rangle\langle v|\otimes|v\rangle\langle v|\right)|Q\rangle=0\label{eq:eqn_twoparticleannihilate}
\end{equation}
for all vertices $v=(z,t,j)\in g_{0}.$ Using this equation with
$|v\rangle=|0,1,j\rangle$ gives 
\begin{align*}
&Q_{00,00}\langle0,1,j|\psi_{0,0}\rangle^{2}  +2Q_{01,00}\langle0,1,j|\psi_{0,1}\rangle\langle0,1,j|\psi_{0,0}\rangle+Q_{01,01}\langle0,1,j|\psi_{0,1}\rangle^{2}\\
 &\quad =\frac{1}{64}\left(Q_{00,00}i^{-j}+2Q_{01,00}+Q_{01,01}i^{j}\right)\\
 &\quad =0
\end{align*}
for each $j\in\{0,\ldots,7\}$. The only solution to this set of equations is $Q_{00,00}=Q_{01,00}=Q_{01,01}=0$. The same analysis, now using $|v\rangle=|1,1,j\rangle$, gives $Q_{10,10}=Q_{11,10}=Q_{11,11}=0$. Finally, using equation \eq{eqn_twoparticleannihilate} with $|v\rangle=|0,2,j\rangle$ gives
\begin{align*}
 & \frac{1}{64}\langle0|H|1\rangle\langle0|H|0\rangle\left(2Q_{10,00}i^{-j}+2Q_{10,01}+2Q_{11,00}+2Q_{11,01}i^{j}\right)\\
 &\quad =\frac{1}{64}\left(Q_{10,00}i^{-j}+Q_{10,01}+Q_{11,00}+Q_{11,01}i^{j}\right)\\
 &\quad =0
\end{align*}
for all $j\in\{0,\ldots,7\}$, which implies that $Q_{10,00}=Q_{11,01}=0$ and $Q_{11,00}=-Q_{10,01}$. Thus, up to normalization, 
\[
|Q\rangle=|\psi_{1,0}\rangle|\psi_{0,1}\rangle+|\psi_{0,1}\rangle|\psi_{1,0}\rangle-|\psi_{11}\rangle|\psi_{00}\rangle-|\psi_{00}\rangle|\psi_{11}\rangle.
\]
Now applying equation \eq{eqn_twoparticleannihilate} with $|v\rangle=|0,4,j\rangle$, we see that the quantity 
\begin{align*}
\frac{1}{64}\left(2\langle0|HT|1\rangle\langle0|(HT)^{*}|0\rangle-2\langle0|(HT)^{*}|1\rangle\langle0|HT|0\rangle\right) & =\frac{1}{64}\left(e^{i\frac{\pi}{4}}-e^{-i\frac{\pi}{4}}\right)
\end{align*}
must be zero, which is a contradiction. Hence we conclude that the nullspace of $H(g_{0},2)$ is empty.
\end{proof}
We now characterize the space of $N$-particle frustration-free states on an $e_{1}$-gate graph $G$. Define the subspace $\mathcal{I}(G,N)\subset\mathcal{Z}_{N}(G)$ where each particle is in a ground state of $A(g_{0})$ and no two particles are located within the same diagram element: 
\begin{equation}
  \mathcal{I}(G,N)=\spn\{
  \Sym(|\psi_{z_{1},a_{1}}^{q_{1}}\rangle
  \ldots|\psi_{z_{N},a_{N}}^{q_{N}}\rangle)\colon 
  z_{i},a_{i}\in\{0,1\},\; q_{i}\in[R],\; 
  q_{i}\neq q_{j}\;\text{whenever}\; i\neq j\}.\label{eq:Ign}
\end{equation}

\begin{lemma}\label{lem:FF_characterization}
Let $G$ be an $e_{1}$-gate graph. A state $|\Gamma\rangle\in\mathcal{Z}_{N}(G)$ is frustration free if and only if 
\begin{align}
\left(A(G)-e_{1}\right)^{(w)}|\Gamma\rangle & =0\;\text{ for all }w\in[N]\label{eq:ff_condition1}\\
|\Gamma\rangle & \in\mathcal{I}(G,N).\label{eq:ff_condition2}
\end{align}
\end{lemma}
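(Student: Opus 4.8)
The plan is to prove the two directions of the equivalence, with most of the work going into establishing the ``only if'' direction, i.e. that every frustration-free state lies in $\mathcal{I}(G,N)$. The ``if'' direction should be essentially immediate: if $|\Gamma\rangle\in\mathcal{I}(G,N)$ then by the definition \eq{Ign} it is a symmetrized tensor product of single-particle ground states $|\psi^{q_i}_{z_i,a_i}\rangle$ supported on distinct diagram elements, so no two particles ever occupy the same vertex; hence the interaction term annihilates $|\Gamma\rangle$, and if in addition \eq{ff_condition1} holds then each particle is in a single-particle ground state of $A(G)$ with energy $e_1$, so $\bar H_G^N|\Gamma\rangle = Ne_1|\Gamma\rangle$ and therefore $H(G,N)|\Gamma\rangle=0$. (One should note that \eq{ff_condition1} together with $|\Gamma\rangle\in\mathcal{I}(G,N)$ is slightly redundant, since membership in $\mathcal{I}(G,N)$ already forces each particle into the ground space of $A(G)$; but stating both is harmless.)

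For the forward direction, start as in the proof of \lem{2particle}: since $H(G,N)|\Gamma\rangle=0$ and $H(G,N)=\bar H_G^N - N\mu(G)$ with $\mu(G)=e_1$, and since $\bar H_G^N$ is a sum of the $N$ single-particle terms $A(G)^{(w)}$ (each $\geq e_1$) plus the positive-semidefinite interaction term, every summand must be individually minimized. This gives \eq{ff_condition1} immediately, and also gives that the interaction term annihilates $|\Gamma\rangle$, i.e. no two particles occupy the same vertex of $G$. From \eq{ff_condition1} we may expand $|\Gamma\rangle$ in the product basis $|\psi^{q_1}_{z_1,a_1}\rangle\cdots|\psi^{q_N}_{z_N,a_N}\rangle$ (symmetrized). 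It remains to show that only terms with all $q_i$ distinct can appear — this is the content of \eq{ff_condition2} and is the main obstacle.

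The key step is therefore: \emph{if two particles both have their $g_0$-ground-state content in the same copy $q$ of $g_0$, the interaction term cannot annihilate the state.} I would prove this by reduction to the two-particle statement \lem{2particle}. Concretely, fix a copy $q\in[R]$ and let $P_q$ be the projector onto vertices of the form $(q,z,t,j)$; then $P_q^{\otimes 2}$ commutes with the relevant operators, and the structure of a gate graph (its vertex set factorizes as $[R]\times(\text{vertices of }g_0)$, and the interaction term is a sum over vertices, which restricted to copy $q$ is exactly the $g_0$ interaction term) means that the component of $|\Gamma\rangle$ with two particles in copy $q$, both in $\mathrm{span}\{|\psi_{z,a}\rangle\}$ inside that copy, would furnish a nonzero vector in the nullspace of $H(g_0,2)$ — contradicting \lem{2particle}. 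Making this precise requires care: one works with the occupation-number picture, isolates the sector where exactly two (or more) of the $N$ particles lie in copy $q$, uses that $A(G)$ restricted to copy $q$ acts as $A(g_0)$ (since the self-loops and cross-edges $h_{\mathcal S},h_{\mathcal E}$ are separately annihilated by \eq{h_s_0}--\eq{h_e_0}, so within each copy the particle sees only $A(g_0)$), and then invokes \lem{2particle} (together with \lem{increase_part_number} to handle three or more particles in one copy) to force that sector to vanish. I expect the bookkeeping of symmetrization and of extracting the ``two particles in copy $q$'' sub-state to be the most delicate part; the conceptual content is entirely carried by \lem{2particle}.
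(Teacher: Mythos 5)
Your proposal is conceptually correct, and both directions rely on exactly the same ingredients as the paper: positive semidefiniteness forces each term to vanish separately, and \lem{2particle} together with \lem{increase_part_number} rules out two or more particles per diagram element. The difference is purely in how the ``only if'' direction is organized, and here the paper takes a cleaner route that you might find worth knowing.

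Instead of decomposing $|\Gamma\rangle$ into sectors according to how many particles sit in a given copy $q$ of $g_0$ and then extracting a two-particle substate by hand (the ``bookkeeping'' you correctly flag as delicate — symmetrization, conditioning on the positions of the other $N-2$ particles, checking that the copy-$q$ restriction of the various terms behaves as claimed), the paper introduces the auxiliary graph $G_{\mathrm{rem}}$ obtained from $G$ by deleting all the edges and self-loops in the gate diagram, so that $A(G_{\mathrm{rem}})=\id\otimes A(g_0)$ and $G_{\mathrm{rem}}$ is a disjoint union of $R$ copies of $g_0$. Since $h_{\mathcal{S}},h_{\mathcal{E}}\geq 0$, one has the operator inequality $H(G,N)\geq H(G_{\mathrm{rem}},N)\geq 0$, so frustration-freeness for $G$ immediately gives $H(G_{\mathrm{rem}},N)|\Gamma\rangle=0$. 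Now \lem{BH_disconnected_graphs} — which you never invoke — completely characterizes the nullspace of the Bose-Hubbard Hamiltonian on a disconnected graph as spanned by symmetrized products of component-wise frustration-free states. Combined with \lem{2particle} and \lem{increase_part_number}, this says at once that each copy of $g_0$ holds at most one particle, i.e. the nullspace of $H(G_{\mathrm{rem}},N)$ is exactly $\mathcal{I}(G,N)$. This sidesteps all of the sector-projection bookkeeping: the comparison graph plus the already-proven disconnected-graph lemma do the work. Your version could be pushed through — it is not wrong — but it would effectively re-derive the relevant special case of \lem{BH_disconnected_graphs} inline, and you would need to be careful that the extraction of a two-particle sub-state genuinely lands you a nullvector of $H(g_0,2)$ rather than merely something close. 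The auxiliary-graph trick is the economical way to make the reduction airtight.
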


\begin{proof}
First suppose that equations \eq{ff_condition1} and \eq{ff_condition2} hold. From \eq{ff_condition2} we see that $|\Gamma\rangle$ has no support on states where two or more particles are located at the same vertex. Hence 
\begin{equation}
\sum_{k\in V}\hat{n}_{k}\left(\hat{n}_{k}-1\right)|\Gamma\rangle=0.\label{eq:ff_condition3}
\end{equation}
Putting together equations \eq{ff_condition1} and \eq{ff_condition3}, we get 
\[
H(G,N)|\Gamma\rangle=\left(H_{G}^{N}-Ne_{1}\right)|\Gamma\rangle=0,
\]
so $|\Gamma\rangle$ is frustration free.

To complete the proof, we show that if $|\Gamma\rangle$ is frustration free, then conditions \eq{ff_condition1} and \eq{ff_condition2} hold. By definition, a frustration-free state $|\Gamma\rangle$ satisfies 
\begin{equation}
H(G,N)|\Gamma\rangle=\left(\sum_{w=1}^{N}\left(A(G)-e_{1}\right)^{(w)}+\sum_{k\in V}\hat{n}_{k}\left(\hat{n}_{k}-1\right)\right)|\Gamma\rangle=0.\label{eq:defn_frustr}
\end{equation}
Since both terms in the large parentheses are positive semidefinite, they must both annihilate $|\Gamma\rangle$; similarly, each term in the first summation must be zero. Hence equation \eq{ff_condition1} holds. Let $G_{\mathrm{rem}}$ be the graph obtained from $G$ by removing all of the edges and self-loops in the gate diagram of $G$. In other words,
\[
A(G_{\mathrm{rem}})=\sum_{q=1}^{R}|q\rangle\langle q|\otimes A(g_{0})=\id\otimes A(g_{0}).
\]
Noting that 
\[
H(G,N)\geq H(G_{\mathrm{rem}},N)\geq0,
\]
we see that equation \eq{defn_frustr} also implies 
\begin{equation}
H(G_{\mathrm{rem}},N)|\Gamma\rangle=0.\label{eq:Grem_gamma}
\end{equation}
Since each of the $R$ components of $G_{\mathrm{rem}}$ is an identical copy of $g_{0}$, the eigenvalues and eigenvectors of $H(G_{\mathrm{rem}},N)$ are characterized by \lem{BH_disconnected_graphs} (along with knowledge of the eigenvalues and eigenvectors of $g_{0}$). By \lem{2particle} and \lem{increase_part_number}, no component has a two- (or more) particle frustration-free state. Combining these two facts, we see that in an $N$-particle frustration-free state, every component of $G_{\mathrm{rem}}$ must contain either $0$ or $1$ particles, and the nullspace of $H(G_{\mathrm{rem}},N)$ is the space $\mathcal{I}(G,N).$ From equation \eq{Grem_gamma} we get $|\Gamma\rangle\in\mathcal{I}(G,N)$.
\end{proof}

Note that if $\mathcal{I}(G,N)$ is empty then \lem{FF_characterization} says that $G$ has no $N$-particle frustration-free states. For example, this holds for any $e_{1}$-gate graph $G$ whose gate diagram has $R<N$ diagram elements.

A useful consequence of \lem{FF_characterization} is the fact that every $k$-particle reduced density matrix of an $N$-particle frustration-free state $|\Gamma\rangle$ on an $e_{1}$-gate graph $G$ (with $k\leq N$) has all of its support on $k$-particle frustration-free states. To see this, note that for any partition of the $N$ registers into subsets $A$ (of size $k$) and $B$ (of size $N-k$), we have
\[
\mathcal{I}(G,N)\subseteq\mathcal{I}(G,k)_{A}\otimes\mathcal{Z}_{N-k}(G)_{B}.
\]
Thus, if condition \eq{ff_condition2} holds, then all $k$-particle reduced density matrices of $|\Gamma\rangle$ are contained in $\mathcal{I}(G,k)$. Furthermore, \eq{ff_condition1} is a statement about the single-particle reduced density matrices, so it also holds for each $k$-particle reduced density matrix. From this we see that each reduced density matrix of $|\Gamma\rangle$ is frustration free.

\subsection{Occupancy constraints}
\label{sec:Occupancy-constraints}

\lem{FF_characterization} says in particular that a frustration-free state on an $e_{1}$-gate graph has no support on states where multiple particles occupy the same diagram element. Indeed, the Lemma allows us to restrict our attention to the subspace $\mathcal{I}(G,N)$ when solving for $N$-particle frustration-free states. Here we consider restrictions to other subspaces corresponding to more general constraints on the locations of particles in a gate graph.

For any $e_{1}$-gate graph $G$ with $R$ diagram elements and any simple $R$-vertex graph $\goc$ with edge set $E(\goc)$, we define a subspace 
\begin{equation}
  \mathcal{I}(G,\goc,N)
  =\spn\{ \Sym(|\psi_{z_{1},a_{1}}^{q_{1}}\rangle
  \ldots|\psi_{z_{N},a_{N}}^{q_{N}}\rangle)\colon
  z_{i},a_{i}\in\{0,1\},\; q_{i}\neq q_{j} \text{ for } i \ne j,\; 
  \{q_{i},q_{j}\}\notin E(\goc)\}.
  \label{eq:occup_space_defn}
\end{equation}
In this subspace, no two particles occupy the same diagram element and no two particles occupy diagram elements connected by an edge in $G_{\occ}$. Note that 
\[
\mathcal{I}(G,\goc,N)\subseteq\mathcal{I}(G,N)\subseteq\mathcal{Z}_{N}(G).
\]

We say that $\goc$ specifies a set of \emph{occupancy constraints} on $G$. If $\mathcal{I}(G,\goc,N)$ is not empty, we define the restriction 
\[
  H(G,\goc,N)=H(G,N)\big|_{\mathcal{I}(G,\goc,N)}
\]
and write $\lambda_{N}^{1}(G,\goc)$ for its smallest eigenvalue. 

We now explain how the subspace $\mathcal{\mathcal{I}}(G,\goc,N)$ relates to the Bose-Hubbard model on a graph derived from $G$ and $\goc$. Specifically, the following Lemma states that we can construct another graph $G^{\square}$ (depending on $G$ and $\goc$) so that the ground state of the Bose-Hubbard model on $G^{\square}$ effectively ``simulates'' the restriction to $\mathcal{I}(G,\goc,N)$. The ground energy $\lambda_{N}^{1}(G,\goc)$ of $H(G,\goc,N)$ is zero if and only if the ground energy $\lambda_{N}^{1}(G^{\square})$ of $H(G^{\square},N)$ is zero. The Lemma also shows that the two energies are related even if they are nonzero.

\begin{restatable}[\textbf{Occupancy Constraints Lemma}]{lemma}{occup}\label{lem:oc}
Let $G$ be an $e_{1}$-gate graph specified as a gate diagram with $R\geq2$ diagram elements. Let $N\in[R]$, let $\goc$ specify a set of occupancy constraints on $G$, and suppose the subspace $\mathcal{I}(G,\goc,N)$ is nonempty. Then there exists an efficiently computable $e_{1}$-gate graph $G^{\square}$ with at most $7R^{2}$ diagram elements such that
\begin{enumerate}
\item If $\lambda_{N}^{1}(G,\goc)\leq a$ then $\lambda_{N}^{1}(G^{\square})\leq\frac{a}{R}$.
\item If $\lambda_{N}^{1}(G,\goc)\geq b$ with $b\in[0,1]$, then $\lambda_{N}^{1}(G^{\square})\geq \frac{b}{\left(13R\right)^{9}}$.
\end{enumerate}
\end{restatable}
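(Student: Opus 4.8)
The plan is to build $G^{\square}$ from $G$ by attaching, for each edge $\{q,q'\}$ of $\goc$, a small ``occupancy-constraint gadget'': a gate diagram built from a constant number of copies of $g_{0}$, a few of whose input/output nodes are joined by gate-diagram edges to nodes of the diagram elements $q$ and $q'$ of $G$. Since $\goc$ is a simple $R$-vertex graph it has at most $\binom{R}{2}$ edges, so with $O(1)$ diagram elements per gadget the count $R+O(\binom{R}{2})$ is at most $7R^{2}$. The first thing to check is that $G^{\square}$ is an $e_{1}$-gate graph; by the criterion following the definition of $e_{1}$-gate graphs it suffices to exhibit one single-particle state satisfying \eq{Gamma_disc}--\eq{h_e_0}. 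Since $\mathcal{I}(G,\goc,N)$ is nonempty, $\mathcal{I}(G,1)$ is nonempty, so $G$ has such a state; the gadget is designed to have a compatible ``idle'' single-particle $e_{1}$-state, and one combines them.

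The heart of the argument is a self-contained \emph{gadget lemma}, proved in the style of \lem{2particle} and \ex{As-an-example} by explicitly solving the single- and two-particle frustration-free equations for the concrete (constant-size) gadget graph: it asserts that a single gadget, viewed together with its two attachment diagram elements, (i) contributes no energy and has a unique frustration-free configuration when at most one of the two attachment elements is occupied, and (ii) admits no frustration-free state, and has a constant spectral gap, when both are occupied — the obstruction being of the same $\omega$-versus-$\omega^{*}$ phase type that produces the contradiction at the end of the proof of \lem{2particle}. Granting this, I would analyze $G^{\square}$ as follows. By \lem{FF_characterization} applied to $G^{\square}$, every $N$-particle frustration-free state lies in $\mathcal{I}(G^{\square},N)$, on which the interaction term vanishes, so $H(G^{\square},N)$ restricted to $\mathcal{I}(G^{\square},N)$ is just its self-loop-and-edge part. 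Combining the gadget lemma over all edges of $\goc$ yields a norm-preserving correspondence between the nullspace of $H(G^{\square},N)$ and that of $H(G,\goc,N)$: a frustration-free state of $G^{\square}$ must have every gadget in its idle configuration and is therefore the lift of a frustration-free state for $H(G,\goc,N)$, and conversely. This already gives the ``$\lambda=0$'' equivalence implicit in the two inequalities.

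For the quantitative bounds I would use the \npl in a short chain. Write $H(G^{\square},N)=H_{0}+H_{1}$, where $H_{0}=H(G^{\square}_{\mathrm{rem}},N)$ is the sum over the disjoint copies of $g_{0}$ plus the interaction term; by the argument in the proof of \lem{FF_characterization} (using \lem{BH_disconnected_graphs}, \lem{2particle}, \lem{increase_part_number}, and the spectral gap of $A(g_{0})$ above $e_{1}$), $\ker H_{0}=\mathcal{I}(G^{\square},N)$ and the smallest nonzero eigenvalue of $H_{0}$ is an absolute constant $\gamma_{0}>0$. On $\mathcal{I}(G^{\square},N)$ the operator $H_{1}$ is the self-loop-and-edge Hamiltonian; I would split it into the edges pinning the gadgets to their idle configurations — whose nullspace inside $\mathcal{I}(G^{\square},N)$ is isometric to $\mathcal{I}(G,\goc,N)$, with restricted gap $\mathrm{poly}(1/R)$ by the gadget lemma — plus the remaining self-loops and edges, which on that nullspace act exactly as $H(G,\goc,N)$. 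Applying the \npl twice more and tracking operator norms (the relevant pieces are sums of at most $N\le R$ single-particle terms, and $\|h_{\mathcal{S}}\|=1$, $\|h_{\mathcal{E}}\|=2$ regardless of the gadget count, since these are sums of orthogonal projectors) yields $\lambda^{1}_{N}(G^{\square})\ge b/(13R)^{9}$ for $b\in[0,1]$; the one-sided bound $\lambda^{1}_{N}(G^{\square})\le a/R$ follows from the Rayleigh quotient of the explicit lift, through the idle gadget configurations, of an optimal state of $H(G,\goc,N)$.

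The main obstacle is the gadget lemma itself: designing a concrete small gate graph that simultaneously (a) keeps the enclosing graph $e_{1}$, (b) genuinely forbids simultaneous occupation of its two attachment diagram elements at the frustration-free level, and (c) has a quantitative spectral gap — all without perturbation theory, i.e., by directly solving the frustration-free equations and one more layer of \npl bookkeeping. Everything else is assembly, plus a careful accounting of the constants in the \npl estimates to pin down the exponent $9$ and the prefactor $13$.
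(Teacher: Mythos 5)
Your overall architecture (build $G^\square$, verify it is an $e_1$-gate graph, characterize frustration-free states via \lem{FF_characterization}, chain applications of the \npl) mirrors the paper's, but the central design idea --- a constant-size gadget per occupancy constraint, attached to both $q$ and $q'$ by gate-diagram edges --- runs into a degree bottleneck that the paper's construction is specifically built to avoid.

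Each diagram element has exactly $8$ nodes, and a node in a gate diagram carries at most one edge or self-loop (and some of the $8$ nodes of $q$ are already consumed by the edges and self-loops of $G$ itself, since $G$ is an arbitrary gate graph). But a vertex $q$ of $\goc$ can have degree up to $R-1$. So you cannot attach a distinct gadget, by gate-diagram edges into $q$, for each of the up to $R-1$ constraints incident on $q$: you run out of free nodes after a constant number of constraints. This is not an implementation detail; it is precisely why the paper's $G^\square$ \emph{replaces} each diagram element $q$ by a \emph{chain} of $\Theta(R)$ new diagram elements $q_{\inn},d(q,1),\ldots,d(q,R),q_{\out}$ (\fig{replace_gate_diagram}), so that there is a dedicated slot $d(q,s)$ for every potential partner $s\in[R]$, each with fresh nodes available for wiring. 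The constraint for $\{q,s\}\in E(\goc)$ is then enforced not by a separate gadget connected by edges to $q$ and $s$, but by making the chains for $q$ and $s$ \emph{share} the diagram elements $e_{ij}(q,s)$ (step~2 of the construction and \fig{add_edges}). The single-particle $e_1$-ground states localized around $q$ and around $s$ then have literal overlapping support on a common copy of $g_0$, so simultaneous occupation is blocked by the same on-site Bose-Hubbard obstruction as in \lem{2particle} --- that is indeed the ``$\omega$ versus $\omega^*$'' mechanism you had in mind, but it is realized by sharing a diagram element, not by an auxiliary gadget hung off of $q$ and $s$.

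A secondary consequence is that your expected ``norm-preserving correspondence'' between frustration-free states of $G^\square$ and of $H(G,\goc,N)$ is off. Because each single-particle ground state of $G^{\triangle}$ is smeared over a chain of $\Theta(R)$ diagram elements (equation \eq{phi_z_a_q}), the matrix elements of the residual self-loop and edge Hamiltonian that reproduces $H(G,\goc,N)$ are attenuated by a factor of roughly $\frac{1}{3R}$ (equation \eq{h_eG_hsG}); this attenuation, not the Rayleigh quotient of a norm-preserving lift, is exactly where the $a/R$ upper bound in part~(1) comes from. And the gadget lemma you declare to be the heart of the argument would have to be proved for a concrete constant-size design that respects the node budget and is shared coherently across the up to $R-1$ constraints meeting at a single $q$; once you allow a linear-size structure per $q$ to absorb the degree, you have essentially rediscovered the paper's $d(q,s)$ chain.
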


Note that the Lemma stipulates $N\leq R$ without loss of generality, since $\mathcal{\mathcal{I}}(G,\goc,N)$ is empty otherwise.  The proof of this Lemma appears in \app{Occupancy-Constraints-Lemma}. In the proof we show how to construct the gate graph $G^\square$ from $G$ and $\goc$.

\section{Gadgets}
\label{sec:Gadgets}

In \ex{As-an-example} we saw how a single-particle ground state can encode a single-qubit computation. In this Section we see how a two-particle frustration-free state on a suitably designed $e_{1}$-gate graph can encode a two-qubit computation. We design specific $e_{1}$-gate graphs (called \emph{gadgets}) that we use in \sec{From-circuits-to} to prove that Bose-Hubbard Hamiltonian is QMA-hard. For each gate graph we discuss, we show that the smallest eigenvalue of its adjacency matrix is $e_{1}$ and we solve for all of the frustration-free states.

We first design a gate graph where, in any two-particle frustration-free state, the locations of the particles are synchronized. This ``move-together'' gadget is presented in \sec{move-together}. In \sec{Gadgets-for-two-qubit}, we design gadgets for two-qubit gates using four move-together gadgets, one for each two-qubit computational basis state. Finally, in \sec{Other-gate-graph} we describe a small modification of a two-qubit gate gadget called the ``boundary gadget.''

The circuit-to-gate graph mapping described in \sec{From-circuits-to} uses a two-qubit gate gadget for each gate in the circuit, together with boundary gadgets in parts of the graph corresponding to the beginning and end of the computation.

\subsection{The move-together gadget}
\label{sec:move-together}

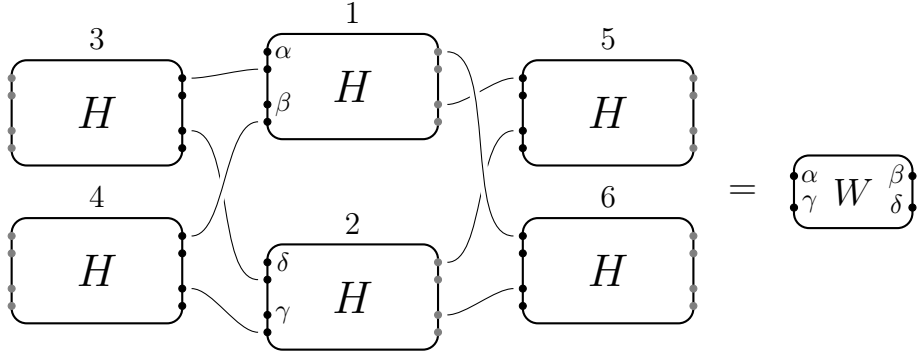
\begin{figure}
\centering \begin{tikzpicture}[scale=.7]

\foreach \startpoint/\endpoint in {{(-1.618,1.16)}/{(0,.33)},{(-1.618,4.16)}/{(0,1.33)},{(-1.618,2.16)}/{(0,4.33)}, {(-1.618,5.16)}/{(0,5.33)},{(3.24,.66)}/{(4.85,1.16)},{(3.24,1.66)}/{(4.85,4.16)},{(3.24,4.66)}/{(4.85,5.16)},{(3.24,5.66)}/{(4.85,2.16)}}
{   
\node (a) at \startpoint {}; 
 \node (b) at \endpoint {};
  \draw[looseness=.66,line width=4pt,color=white] (a) to [out=0,in=180] (b);
  \draw[looseness=.66] (a) to [out=0,in=180] (b); 
}

\foreach \xshift / \yshift /\xscale /\lab in  {0/0/1/2,  0/4/1/1,  4.85/.5/1/6,  4.85/3.5/1/5,  -1.618/.5/-1/4,  -1.618/3.5/-1/3}{   \begin{scope}[shift={(\xshift,\yshift)},xscale=\xscale]   
 \draw[rounded corners=2mm,thick] (0,0) rectangle (3.24cm, 2cm);
 \node at (1.62,2.4) {\Large \lab};   
 \foreach \y in {.66,.33,1.33,1.66}
{
    \foreach \x /\color in {0/black,3.24 /gray}
{       
		\draw[fill=\color,draw=\color] (\x cm, \y cm) circle (.66mm);   
}
}
  \node at (1.618,1) {\huge$H$}; \end{scope}
}  

\foreach \point/\name in {{(0,.66)}/\gamma,{(0,1.66)}/\delta,{(0,4.66)}/\beta,{(0,5.66)}/\alpha}
{ 
\begin{scope}[shift=\point] 
 \node at (3mm,0mm){$\name$}; 
\end{scope} 
}
\begin{scope}[shift={(10,3)},scale=1.5]
  \node at (-.65 ,0) {\LARGE $=$};
  \draw[rounded corners=2mm,thick] (0,-.46) rectangle (1.5cm,.46cm); 
    \node at (.75,0) {\LARGE $W$};
  \foreach \x/\y  in {0,1.5}
{  
\foreach \y in {-.2,.2}
{    
\draw[fill=black] (\x,\y) circle (.45mm);  
}
}      
\node at (.2,.2) {$\alpha$};  
\node at (.2,-.15) {$\gamma$}; 
 \node at (1.3,.2) {$\beta$}; 
 \node at (1.3,-.15) {$\delta$}; 
  \end{scope} 
\end{tikzpicture}

\caption{The gate diagram for the move-together gadget. \label{fig:W_gadget}}
\end{figure}

The gate diagram for the \emph{move-together gadget} is shown in \fig{W_gadget}. Using equation \eq{adj_gate_graph}, we write the adjacency matrix of the corresponding gate graph $G_W$ as 
\begin{equation}
A(G_W)=\sum_{q=1}^{6}|q\rangle\langle q|\otimes A(g_{0})+h_{\mathcal{E}}\label{eq:move_together_adj}
\end{equation}
where $h_{\mathcal{E}}$ is given by \eq{h_edges} and $\mathcal{E}$ is the set of edges in the gate diagram (in this case $h_{\mathcal{S}}=0$ as there are no self-loops).

We begin by solving for the single-particle ground states, i.e., the eigenvectors of \eq{move_together_adj} with eigenvalue $e_{1}=-1-3\sqrt{2}$. As in \ex{As-an-example}, we can solve for the states with $a=0$ and $a=1$ separately, since
\[
\langle\psi_{x,1}^{j}|h_{\mathcal{E}}|\psi_{z,0}^{i}\rangle=0
\]
for all $i,j\in\{1,\ldots,6\}$ and $x,z\in\{0,1\}$. We write a single-particle ground state as
\[
\sum_{i=1}^{6}\left(\tau_{i}|\psi_{0,a}^{i}\rangle+\nu_{i}|\psi_{1,a}^{i}\rangle\right)
\]
and solve for the coefficients $\tau_{i}$ and $\nu_{i}$ using equation \eq{h_e_0} (in this case equation \eq{h_s_0} is automatically satisfied since $h_{\mathcal{S}}=0$). Enforcing \eq{h_e_0} gives eight equations, one for each edge in the gate diagram:
\begin{align*}
  \tau_{3}&=-\tau_{1} & 
  \frac{1}{\sqrt{2}}(\tau_{1}+\nu_{1})&=-\tau_{6} \\
  \tau_{4}&=-\nu_{1} & 
  \frac{1}{\sqrt{2}}(\tau_{1}-\nu_{1})&=-\tau_{5}\\
  \nu_{3}&=-\tau_{2} & 
  \frac{1}{\sqrt{2}}(\tau_{2}+\nu_{2})&=-\nu_{5}\\
  \nu_{4}&=-\nu_{2} & 
  \frac{1}{\sqrt{2}}(\tau_{2}-\nu_{2})&=-\nu_{6}.
\end{align*}
There are four linearly independent solutions to this set of equations, given by 
\begin{align*}
  \text{\emph{Solution 1:}} && 
    \tau_{1} &= 1 & \tau_{3} &=-1 & 
    \tau_{5} &= -\frac{1}{\sqrt{2}} & \tau_{6} &= -\frac{1}{\sqrt{2}} &&
    \text{all other coefficients }0 \\
  \text{\emph{Solution 2:}} && 
    \nu_{1} &= 1 & \tau_{4} &= -1 &
    \tau_{5} &= \frac{1}{\sqrt{2}} & \tau_{6} &= -\frac{1}{\sqrt{2}} &&
    \text{all other coefficients }0 \\
  \text{\emph{Solution 3:}} && 
    \nu_{2} &= 1 & \nu_{4} &= -1 &
    \nu_{5} &= -\frac{1}{\sqrt{2}} & \nu_{6} &= \frac{1}{\sqrt{2}} &&
    \text{all other coefficients }0 \\
  \text{\emph{Solution 4:}} && 
    \tau_{2} &= 1 & \nu_{3} &= -1 &
    \nu_{5} &= -\frac{1}{\sqrt{2}} & \nu_{6} &= -\frac{1}{\sqrt{2}} &&
    \text{all other coefficients }0.
\end{align*}
For each of these solutions, and for each $a\in\{0,1\}$, we find a single-particle state with energy $e_1$. This result is summarized in the following Lemma.

\begin{lemma}
$G_{W}$ is an $e_{1}$-gate graph. A basis for the eigenspace
of $A(G_{W})$ with eigenvalue $e_1$ is 
\begin{align}
|\chi_{1,a}\rangle & =\frac{1}{\sqrt{3}}|\psi_{0,a}^{1}\rangle-\frac{1}{\sqrt{3}}|\psi_{0,a}^{3}\rangle-\frac{1}{\sqrt{6}}|\psi_{0,a}^{5}\rangle-\frac{1}{\sqrt{6}}|\psi_{0,a}^{6}\rangle\label{eq:chi_alpha}\\
|\chi_{2,a}\rangle & =\frac{1}{\sqrt{3}}|\psi_{1,a}^{1}\rangle-\frac{1}{\sqrt{3}}|\psi_{0,a}^{4}\rangle+\frac{1}{\sqrt{6}}|\psi_{0,a}^{5}\rangle-\frac{1}{\sqrt{6}}|\psi_{0,a}^{6}\rangle\label{eq:chi_beta}\\
|\chi_{3,a}\rangle & =\frac{1}{\sqrt{3}}|\psi_{1,a}^{2}\rangle-\frac{1}{\sqrt{3}}|\psi_{1,a}^{4}\rangle-\frac{1}{\sqrt{6}}|\psi_{1,a}^{5}\rangle+\frac{1}{\sqrt{6}}|\psi_{1,a}^{6}\rangle\label{eq:chi_gamma}\\
|\chi_{4,a}\rangle & =\frac{1}{\sqrt{3}}|\psi_{0,a}^{2}\rangle-\frac{1}{\sqrt{3}}|\psi_{1,a}^{3}\rangle-\frac{1}{\sqrt{6}}|\psi_{1,a}^{5}\rangle-\frac{1}{\sqrt{6}}|\psi_{1,a}^{6}\rangle\label{eq:chi_delta}
\end{align}
where $a\in\{0,1\}$. 
\end{lemma}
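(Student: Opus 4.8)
The plan is to mirror the single-particle analysis of \ex{As-an-example}, now applied to the six diagram elements of the move-together gadget. First I would note that, by \eq{move_together_adj}, $A(G_W)=\id_{q}\otimes A(g_{0})+h_{\mathcal{E}}$ is a sum of a term bounded below by $e_{1}$ and the positive semidefinite term $h_{\mathcal{E}}$ (here $h_{\mathcal{S}}=0$ since the gate diagram has no self-loops), so $\mu(G_W)\geq e_{1}$, and any eigenvector $|\Gamma\rangle$ of $A(G_W)$ with eigenvalue $e_{1}$ must satisfy both $(\id_{q}\otimes A(g_{0}))|\Gamma\rangle=e_{1}|\Gamma\rangle$ and $h_{\mathcal{E}}|\Gamma\rangle=0$, exactly as in \eq{Gamma_disc} and \eq{h_e_0} (with \eq{h_s_0} vacuous); conversely any such $|\Gamma\rangle$ is an $e_{1}$-eigenvector. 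Hence the $e_{1}$-eigenspace of $A(G_W)$ is precisely the intersection of $\spn\{|\psi_{z,a}^{q}\rangle\colon z,a\in\{0,1\},\,q\in[6]\}$ with $\ker h_{\mathcal{E}}$, and it suffices to compute this intersection.

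Next I would split into the sectors $a=0$ and $a=1$: because $h_{\mathcal{E}}$ acts trivially on the ancilla register and $|\psi_{z,0}\rangle$, $|\psi_{x,1}\rangle$ have disjoint ancilla support, we have $\langle\psi_{x,1}^{j}|h_{\mathcal{E}}|\psi_{z,0}^{i}\rangle=0$, so the two sectors decouple and may be solved independently. Fixing $a$, I would write a candidate state as $\sum_{i=1}^{6}(\tau_{i}|\psi_{0,a}^{i}\rangle+\nu_{i}|\psi_{1,a}^{i}\rangle)$ and evaluate $h_{\mathcal{E}}$ on it term by term: each edge $\{(q,z,t),(q^{\prime},z^{\prime},t^{\prime})\}\in\mathcal{E}$ contributes the rank-one operator $\big(|q,z,t\rangle+|q^{\prime},z^{\prime},t^{\prime}\rangle\big)\big(\langle q,z,t|+\langle q^{\prime},z^{\prime},t^{\prime}|\big)\otimes\id_{j}$ of \eq{h_edges}, and using \eq{psi0m} to read off the amplitudes $\langle q,z,t,j|\psi_{x,a}^{q^{\prime}}\rangle$ (with the convention, as in \ex{As-an-example}, that an output node of an $H$-element carries the Hadamard-transformed amplitude) the condition that this term annihilate the state becomes a single scalar linear relation between the two node amplitudes. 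Reading off the eight edges of \fig{W_gadget} in this way reproduces exactly the eight displayed equations.

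I would then solve this homogeneous linear system in the twelve unknowns $\tau_{1},\ldots,\tau_{6},\nu_{1},\ldots,\nu_{6}$: the equations express $\tau_{3},\tau_{4},\nu_{3},\nu_{4},\tau_{5},\tau_{6},\nu_{5},\nu_{6}$ as (distinct, hence independent) functions of the four free parameters $\tau_{1},\nu_{1},\tau_{2},\nu_{2}$, so the solution space is four-dimensional and the four choices of a single free parameter equal to $1$ give the four displayed solutions. Each solution has squared norm $3$, and normalizing reproduces the states $|\chi_{1,a}\rangle,\ldots,|\chi_{4,a}\rangle$ of \eq{chi_alpha}--\eq{chi_delta}. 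Since a nonzero solution exists, $A(G_W)$ has an eigenvector with eigenvalue $e_{1}$; combined with $\mu(G_W)\geq e_{1}$ this shows $G_W$ is an $e_{1}$-gate graph. Finally I would check orthonormality of the eight states $|\chi_{k,a}\rangle$: those with different $a$ are orthogonal because $|\psi_{z,0}\rangle\perp|\psi_{x,1}\rangle$; within a fixed $a$, most pairs have disjoint support on the $|\psi_{z,a}^{q}\rangle$ basis, while $|\chi_{1,a}\rangle,|\chi_{2,a}\rangle$ (resp.\ $|\chi_{3,a}\rangle,|\chi_{4,a}\rangle$) overlap only on $|\psi_{0,a}^{5}\rangle,|\psi_{0,a}^{6}\rangle$ (resp.\ $|\psi_{1,a}^{5}\rangle,|\psi_{1,a}^{6}\rangle$), where the two cross terms cancel. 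Thus the eight states form an orthonormal basis of the $e_{1}$-eigenspace, which is therefore $8$-dimensional, as claimed.

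The computation is routine linear algebra; the one step requiring care---and the one I would treat as the main obstacle---is translating \fig{W_gadget} into the correct eight edge equations, specifically matching each edge's endpoints to the right pair among $\{\tau_{i},\nu_{i}\}$ (which encoded-qubit component a node carries is determined by its label $(z,t)$) and inserting the Hadamard conjugation and the overall minus sign dictated by $h_{\mathcal{E}}$, exactly as in \ex{As-an-example}. Once the eight equations are pinned down, the rest follows mechanically.
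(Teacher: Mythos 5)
Your proposal is correct and follows essentially the same route as the paper: restrict to the $e_1$-ground space of $\id_q\otimes A(g_0)$, note the $a=0$ and $a=1$ sectors decouple, write out the eight edge equations from \eq{h_e_0}, and solve the resulting linear system to get the four solutions, which normalize to \eq{chi_alpha}--\eq{chi_delta}. The only minor imprecision is the phrase ``disjoint ancilla support''---$|\omega\rangle$ and $|\omega^*\rangle$ are orthogonal, not disjointly supported---but the conclusion $\langle\psi_{x,1}^{j}|h_{\mathcal{E}}|\psi_{z,0}^{i}\rangle=0$ is exactly right since $h_{\mathcal{E}}$ acts as identity on the ancilla register.
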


In \fig{W_gadget} we have used a shorthand $\alpha,\beta,\gamma,\delta$ to identify four nodes of the move-together gadget; these are the nodes with labels $(q,z,t)=(1,0,1),(1,1,1),(2,1,1),(2,0,1)$, respectively. We view $\alpha$ and $\gamma$ as ``input'' nodes and $\beta$ and $\delta$ as ``output'' nodes for the move-together gadget. It is natural to associate each single-particle state $|\chi_{i,a}\rangle$ with one of these four nodes.  We also associate the set of 8 vertices represented by the node with the corresponding node, e.g.,
\[
  S_{\alpha}=\left\{ (1,0,1,j)\colon j\in\{0,\ldots,7\}\right\} .
\]
Looking at equation \eq{chi_alpha} (and perhaps referring back to equation \eq{psi0m}) we see that $|\chi_{1,a}\rangle$ has support on vertices in $S_{\alpha}$ but not on vertices in $S_{\beta}$, $S_{\gamma}$, or $S_{\delta}$. Looking at the picture on the right-hand side of the equality sign in \fig{W_gadget}, we think of $|\chi_{1,a}\rangle$ as localized at the node $\alpha$, with no support on the other three nodes. The states $|\chi_{2,a}\rangle,|\chi_{3,a}\rangle,|\chi_{4,a}\rangle$ are similarly localized at nodes $\beta,\gamma,\delta$. We view $|\chi_{1,a}\rangle$ and $|\chi_{3,a}\rangle$ as input states and $|\chi_{2,a}\rangle$ and $|\chi_{4,a}\rangle$ as output states.

Now we turn our attention to the two-particle frustration-free states of the move-together gadget, i.e., the states $|\Phi\rangle\in\mathcal{Z}_{2}(G_{W})$ in the nullspace of $H(G_W,2)$. Using \lem{FF_characterization} we can write
\begin{equation}
|\Phi\rangle=\sum_{a,b \in \{0,1\},\,I,J \in [4]}C_{(I,a),(J,b)}|\chi_{I,a}\rangle|\chi_{J,b}\rangle\label{eq:chi_superposition}
\end{equation}
where the coefficients are symmetric, i.e.,
\begin{equation}
C_{(I,a),(J,b)}=C_{(J,b),(I,a)},\label{eq:symmetric_coefs}
\end{equation}
and where 
\begin{equation}
\langle\psi_{z,a}^{q}|\langle\psi_{x,b}^{q}|\Phi\rangle=0\label{eq:frustration_free}
\end{equation}
for all $z,a,x,b\in\{0,1\}$ and $q\in[6].$

The move-together gadget is designed so that each solution $|\Phi\rangle$ to these equations  is a superposition of a term where both particles are in input states and a term where both particles are in output states. The particles move from input nodes to output nodes together. We now solve equations \eq{chi_superposition}--\eq{frustration_free} and prove the following.

\begin{lemma}
\label{lem:Wgadget_lemma}
A basis for the nullspace of $H(G_{W},2)$ is 
\begin{equation}
|\Phi_{a,b}\rangle=\Sym\left(\frac{1}{\sqrt{2}}|\chi_{1,a}\rangle|\chi_{3,b}\rangle+\frac{1}{\sqrt{2}}|\chi_{2,a}\rangle|\chi_{4,b}\rangle\right),\quad a,b\in\{0,1\}.\label{eq:phi_a1_a2}
\end{equation}
There are no $N$-particle frustration-free states on $G_{W}$ for $N\geq3$, i.e.,
\[
\lambda_{N}^{1}(G_{W})>0\quad\text{for }N\geq3.
\]
\end{lemma}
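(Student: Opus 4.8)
The plan is to treat the two assertions in turn, relying on \lem{FF_characterization} and on the reduction of the two-particle problem to the linear system \eq{chi_superposition}--\eq{frustration_free} carried out just above.

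\emph{The two-particle nullspace.} It remains only to solve \eq{chi_superposition}--\eq{frustration_free} for the symmetric array $C_{(I,a),(J,b)}$. First I would read off from \eq{chi_alpha}--\eq{chi_delta} the support pattern of the single-particle ground states across the six diagram elements: $|\chi_{1,a}\rangle$ is supported on elements $\{1,3,5,6\}$, $|\chi_{2,a}\rangle$ on $\{1,4,5,6\}$, $|\chi_{3,a}\rangle$ on $\{2,4,5,6\}$, and $|\chi_{4,a}\rangle$ on $\{2,3,5,6\}$, and on each of the ``end'' elements $q\in\{1,2,3,4\}$ exactly two of these four states have support, at opposite values of the encoded bit. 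Imposing \eq{frustration_free} at $q=1,2,3,4$ in turn then forces $C_{(I,a),(J,b)}=0$ whenever $\{I,J\}\subseteq\{1,2\}$, $\{I,J\}\subseteq\{3,4\}$, $\{I,J\}\subseteq\{1,4\}$, or $\{I,J\}\subseteq\{2,3\}$, respectively; together these leave only $C_{(1,a),(3,b)}$ and $C_{(2,a),(4,b)}$ (and their transposes) possibly nonzero. Finally, imposing \eq{frustration_free} at $q=5$ and at $q=6$, the two elements where all four $|\chi_{I,a}\rangle$ overlap, each collapses to the single relation $C_{(1,a),(3,b)}=C_{(2,a),(4,b)}$ for all $a,b$. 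Hence the solution space is four-dimensional, indexed by $(a,b)\in\{0,1\}^2$, and unwinding the symmetrization in \eq{chi_superposition} identifies it with $\spn\{|\Phi_{a,b}\rangle\}$ for the states of \eq{phi_a1_a2}; these are visibly linearly independent, and they are frustration free because every vector of the solution space is. I expect the one thing to be careful about here is the bookkeeping of the support patterns and of the symmetric-coefficient convention; the rest is routine linear algebra.

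\emph{No frustration-free states for $N\geq3$.} By \lem{increase_part_number} it suffices to show $\lambda_3^1(G_W)>0$, so suppose for contradiction that $|\Gamma\rangle\in\mathcal{Z}_3(G_W)$ is a nonzero frustration-free state. By the consequence of \lem{FF_characterization} noted above, every two-particle reduced density matrix of $|\Gamma\rangle$ is supported on the space of two-particle frustration-free states, which by the first part equals $\spn\{|\Phi_{a,b}\rangle\colon a,b\in\{0,1\}\}$. Let $P_1$ and $P_2$ be the orthogonal projectors onto $\spn\{|\chi_{1,a}\rangle,|\chi_{2,a}\rangle\colon a\in\{0,1\}\}$ and $\spn\{|\chi_{3,a}\rangle,|\chi_{4,a}\rangle\colon a\in\{0,1\}\}$, respectively; a short check shows these two subspaces are orthogonal and together span the $e_1$-eigenspace of $A(G_W)$, so $P_1+P_2$ is the projector onto that eigenspace. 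Inspecting \eq{phi_a1_a2}, each $|\Phi_{a,b}\rangle$ is a superposition of terms in which one particle lies in the range of $P_1$ and the other in the range of $P_2$, whence $(P_1\otimes P_1)|\Phi_{a,b}\rangle=(P_2\otimes P_2)|\Phi_{a,b}\rangle=0$. Therefore the two-particle reduced density matrix of $|\Gamma\rangle$ on any pair of registers annihilates both $P_1\otimes P_1$ and $P_2\otimes P_2$, and taking the trace against each projector gives $(P_1^{(i)}P_1^{(j)})|\Gamma\rangle=(P_2^{(i)}P_2^{(j)})|\Gamma\rangle=0$ for every pair $i\neq j$. Since $|\Gamma\rangle$ lies in the threefold tensor power of the $e_1$-eigenspace by \eq{ff_condition1}, we may write $|\Gamma\rangle=(P_1+P_2)^{\otimes3}|\Gamma\rangle=\sum_{s\in\{1,2\}^3}(P_{s_1}\otimes P_{s_2}\otimes P_{s_3})|\Gamma\rangle$; every length-$3$ string over $\{1,2\}$ repeats some value, so each term contains a factor $P_c^{(i)}P_c^{(j)}$ with $i\neq j$ and hence vanishes, giving $|\Gamma\rangle=0$, a contradiction. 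Thus $\lambda_3^1(G_W)>0$, and \lem{increase_part_number} upgrades this to $\lambda_N^1(G_W)>0$ for all $N\geq3$.

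The bulk of the work is the finite case check over the six diagram elements in the first part; the second part is then a short tensor-power argument, and I do not anticipate a real obstacle there.
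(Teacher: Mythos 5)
Your proposal is correct, and for the first part it follows essentially the same route as the paper: impose the frustration-free conditions \eq{frustration_free} one diagram element at a time. The support patterns you read off from \eq{chi_alpha}--\eq{chi_delta} match, the constraints at $q=1,\ldots,4$ kill exactly the coefficient classes you list, and the constraint at $q=5$ (and equally at $q=6$) does indeed collapse to $C_{(1,a),(3,b)}=C_{(2,a),(4,b)}$; checking e.g.\ $\langle\psi^5_{0,a}|\langle\psi^5_{1,b}|\Phi\rangle=\tfrac16 C_{(1,a),(3,b)}-\tfrac16 C_{(2,a),(4,b)}$ confirms this. The other $(z,x)$ choices at $q=5,6$ only reproduce constraints already obtained, so your accounting is complete.

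For the second part your argument is valid but packaged a bit differently from the paper's. The paper writes out the coefficients $D_{(i,a),(j,b),(k,c)}$ explicitly and uses the full support characterization of the two-particle nullspace (each pair $(i,j)$ must lie in $\{(1,3),(3,1),(2,4),(4,2)\}$) to force all $D$'s to vanish. You instead introduce the orthogonal projectors $P_1$, $P_2$ onto the ``input'' and ``output'' halves of the $e_1$-eigenspace, use only the weaker consequence that $(P_c\otimes P_c)$ annihilates each $|\Phi_{a,b}\rangle$, deduce $P_c^{(i)}P_c^{(j)}|\Gamma\rangle=0$ from positivity of the reduced density matrix, and finish by expanding $|\Gamma\rangle=(P_1+P_2)^{\otimes3}|\Gamma\rangle$ and applying pigeonhole over $\{1,2\}^3$. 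This is essentially the same idea, but the projector/pigeonhole phrasing is slightly cleaner and needs less from the structure of the two-particle ground space (it would go through even if the nullspace also contained, say, $|\chi_{1,a}\rangle|\chi_{4,b}\rangle$ terms, which the paper's argument rules out but does not actually need). Both approaches are fine; yours is marginally more conceptual, the paper's marginally more explicit.
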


\begin{proof}
The states $|\Phi_{a,b}\rangle$ manifestly satisfy equations \eq{chi_superposition} and \eq{symmetric_coefs}, and one can directly verify that they also satisfy \eq{frustration_free} (the nontrivial cases to check are $q=5$ and $q=6$). 

To complete the proof that \eq{phi_a1_a2} is a basis for the nullspace of $H(G_W,2)$, we verify that any state satisfying these conditions must be a linear combination of these four states. Applying equation \eq{frustration_free} gives
\begin{align*}
\langle\psi_{0,a}^{1}|\langle\psi_{0,b}^{1}|\Phi\rangle & =\frac{1}{3} C_{(1,a),(1,b)}=0 &
\langle\psi_{1,a}^{1}|\langle\psi_{1,b}^{1}|\Phi\rangle & =\frac{1}{3} C_{(2,a),(2,b)}=0\\
\langle\psi_{1,a}^{2}|\langle\psi_{1,b}^{2}|\Phi\rangle & =\frac{1}{3} C_{(3,a),(3,b)}=0 &
\langle\psi_{0,a}^{2}|\langle\psi_{0,b}^{2}|\Phi\rangle & =\frac{1}{3} C_{(4,a),(4,b)}=0\\
\langle\psi_{0,a}^{1}|\langle\psi_{1,b}^{1}|\Phi\rangle & =\frac{1}{3} C_{(1,a),(2,b)}=0 &
\langle\psi_{0,a}^{2}|\langle\psi_{1,b}^{2}|\Phi\rangle & =\frac{1}{3} C_{(4,a),(3,b)}=0\\
\langle\psi_{0,a}^{3}|\langle\psi_{1,b}^{3}|\Phi\rangle & =\frac{1}{3} C_{(1,a),(4,b)}=0 &
\langle\psi_{0,a}^{4}|\langle\psi_{1,b}^{4}|\Phi\rangle & =\frac{1}{3} C_{(2,a),(3,b)}=0
\end{align*}
for all $a,b\in \{0,1\}$. Using the fact that all of these coefficients are zero, and using equation \eq{symmetric_coefs}, we get 
\[
|\Phi\rangle=\sum_{a,b\in\{0,1\}}\left(C_{(1,a),(3,b)}\left(|\chi_{1,a}\rangle|\chi_{3,b}\rangle+|\chi_{3,b}\rangle|\chi_{1,a}\rangle\right)+C_{(2,a),(4,b)}\left(|\chi_{2,a}\rangle|\chi_{4,b}\rangle+|\chi_{4,b}\rangle|\chi_{2,a}\rangle\right)\right).
\]
Finally, applying equation \eq{frustration_free} again gives
\[
\langle\psi_{0,a}^{6}|\langle\psi_{1,b}^{6}|\Phi\rangle=\frac{1}{6}C_{(2,a),(4,b)}-\frac{1}{6}C_{(1,a),(3,b)}=0.
\]
Hence
\[
|\Phi\rangle=\sum_{a,b\in\{0,1\}}C_{(1,a),(3,b)}\left(|\chi_{1,a}\rangle|\chi_{3,b}\rangle+|\chi_{3,b}\rangle|\chi_{1,a}\rangle+|\chi_{2,a}\rangle|\chi_{4,b}\rangle+|\chi_{4,b}\rangle|\chi_{2,a}\rangle\right),
\]
which is a superposition of the states $|\Phi_{a,b}\rangle.$ 

Finally, we prove that there are no frustration-free ground states of the Bose-Hubbard model on $G_{W}$ with more than two particles. By \lem{increase_part_number}, it suffices to prove that there are no frustration-free three-particle states.

Suppose (for a contradiction) that $|\Gamma\rangle\in\mathcal{Z}_{3}(G_{W})$ is a normalized three-particle frustration-free state. Write 
\[
|\Gamma\rangle=\sum D_{(i,a),(j,b),(k,c)}|\chi_{i,a}\rangle|\chi_{j,b}\rangle|\chi_{k,c}\rangle.
\]
Note that each reduced density matrix of $|\Gamma\rangle$ on two of the three subsystems must have all of its support on two-particle frustration-free states (see the remark following \lem{FF_characterization}), i.e., on the states $|\Phi_{a,b}\rangle$. Using this fact for the subsystem consisting of the first two particles, we see in particular that
\begin{equation}
(i,j)\notin\{(1,3),(3,1),(2,4),(4,2)\}\quad\Longrightarrow\quad D_{(i,a),(j,b),(k,c)}=0\label{eq:ij_constraint1}
\end{equation}
(since $|\Phi_{a_1,a_2}\rangle$ only has support on vectors $|\chi_{i,a}\rangle|\chi_{j,b}\rangle$ with $i,j\in \{(1,3),(3,1),(2,4),(4,2)\}$).

Using this fact for subsystems consisting of particles $2,3$ and $1,3$, respectively, gives 
\begin{align}
(j,k)\notin\{(1,3),(3,1),(2,4),(4,2)\}\quad\Longrightarrow\quad D_{(i,a),(j,b),(k,c)} & =0\label{eq:ij_constraint2}\\
(i,k)\notin\{(1,3),(3,1),(2,4),(4,2)\}\quad\Longrightarrow\quad D_{(i,a),(j,b),(k,c)} & =0.\label{eq:ij_constraint3}
\end{align}
Putting together equations \eq{ij_constraint1}, \eq{ij_constraint2}, and \eq{ij_constraint3}, we see that $|\Gamma\rangle=0$. This is a contradiction, so no three-particle frustration-free states exist.
\end{proof}

Next we show how the move-together gadget can be used to build gadgets that implement two-qubit gates.

\subsection{Gadgets for two-qubit gates}
\label{sec:Gadgets-for-two-qubit}

In this Section we define a gate graph for each of the two-qubit unitaries
\[
  \{\CNOT_{12}, \CNOT_{21}, \CNOT_{12}\left(H\otimes\id\right),
    \CNOT_{12}\left(HT\otimes\id\right)\}.
\]
Here $\CNOT_{12}$ is the standard controlled-not gate with the second qubit as a target, whereas $\CNOT_{21}$ has the first qubit as target.

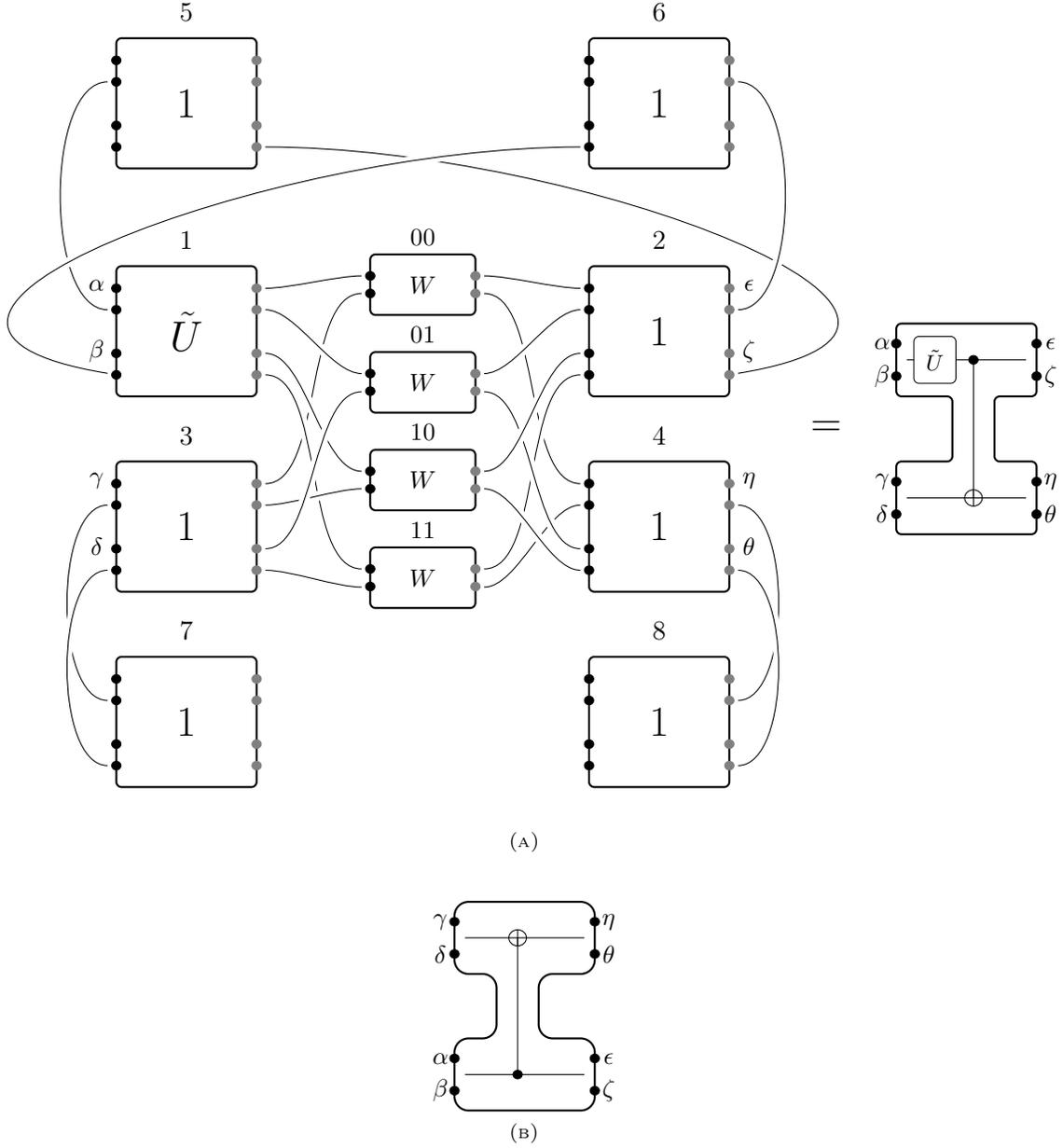
\begin{figure}
\centering 
\subfloat[][]{
\begin{tikzpicture}[yscale=.929] \label{fig:GVucnot}

\path[use as bounding box](-5.5,-3) rectangle (10,10.5);
     
\foreach \xshift / \yshift /\xscale / \lab / \unitary in{3.12/0.5/1/4/1,3.12/3.5/1/2/1,-3.62/0.5/1/3/1,-3.62/3.5/1/1/\tilde U, -3.62/-2.5/1/7/1,3.12/-2.5/1/8/1, 3.12/7/1/6/1, -3.62/7/1/5/1}
{ 
\begin{scope}[shift={(\xshift,\yshift)},xscale=\xscale]
  \draw[rounded corners=0.75mm,thick] (0,0) rectangle (2 cm, 2cm);
  \node at (1,1) {\huge$\unitary$};   
\node at (1,2.4) {\large\lab};
  \foreach \y in {.33,.66,1.33,1.66}
	{   
		\foreach \x /\color in {0/black,2/gray}
			{    
				\draw[fill=\color,draw=\color] (\x cm, \y cm) circle (.66mm);  
			}
	} 
\end{scope}
}
  
\foreach \yshift/\from in {0.25/11,1.75/10,3.25/01,4.75/00}
{ 
\begin{scope}[yshift=\yshift cm] 
 \draw[rounded corners=0.75mm,thick] (0,0) rectangle (1.5cm,.93cm); 
 \node at (.75,.465) {$W$};   
\node at (.75,1.2) {$\from$};
  \foreach \x/\color in {0/black,1.5/gray}
	{   
		\foreach \y in {0.33,.6}
		{   
			 \draw[fill=\color,draw=\color] (\x,\y) circle (.66mm); 
		 }
	}
\end{scope}
}
  
\foreach \l/\r in {   5.35/5.16,5.08/2.16,   .85/3.83,.57/1.83,   3.85/4.83,3.57/1.16,   2.35/4.16,2.08/.83} 
{   
\node (a) at (1.5,\l) {}; 
 \node (b) at (3.12,\r)   {}; 
 \draw[looseness=.66,line width=4pt,color=white] (a) to [out=0,in=180] (b);
  \draw[looseness=.66] (a) to [out=0,in=180] (b); 
}
  
\foreach \l/\r in {5.16/5.35,2.16/5.08,   3.83/.85,.83/.57,   4.83/3.85,1.83/2.08,   4.16/2.35,1.16/3.57}
{   
\node (a) at (-1.618,\l) {}; 
 \node (b) at (0,\r)   {};
  \draw[looseness=.66,line width=4pt,color=white] (a) to [out=0,in=180] (b);
  \draw[looseness=.66] (a) to [out=0,in=180] (b); 
}

\node (c) at (-3.62,1.83){};
\node (d) at (-3.62,-1.17){};

 \draw[looseness=.66,line width=4pt,color=white] (c) to [out=180,in=180] (d);
 \draw[looseness=.66] (c) to [out=180,in=180] (d); 

\node (c) at (-3.62,0.83){};
\node (d) at (-3.62,-2.17){};

 \draw[looseness=.66,line width=4pt,color=white] (c) to [out=180,in=180] (d);
 \draw[looseness=.66] (c) to [out=180,in=180] (d);

\node (c) at (5.12,1.83){};
\node (d) at (5.12,-1.17){};

 \draw[looseness=.66,line width=4pt,color=white] (c) to [out=0,in=0] (d);
 \draw[looseness=.66] (c) to [out=0,in=0] (d); 

\node (c) at (5.12,0.83){};
\node (d) at (5.12,-2.17){};

 \draw[looseness=.66,line width=4pt,color=white] (c) to [out=0,in=0] (d);
 \draw[looseness=.66] (c) to [out=0,in=0] (d);

\node (c) at (-3.62,8.33){};
\node (d) at (-3.62,4.83){};

 \draw[looseness=.66,line width=4pt,color=white] (c) to [out=180,in=180] (d);
 \draw[looseness=.66] (c) to [out=180,in=180] (d); 

\node (c) at (-1.62,7.33){};
\node (d) at (5.12,3.83){};

 \draw[looseness=1.5,line width=4pt,color=white] (c) to [out=0,in=10] (d);
 \draw[looseness=1.5] (c) to [out=0,in=10] (d);

\node (c) at (5.12,8.33){};
\node (d) at (5.12,4.83){};

 \draw[looseness=.66,line width=4pt,color=white] (c) to [out=0,in=0] (d);
 \draw[looseness=.66] (c) to [out=0,in=0] (d); 

\node (c) at (3.12,7.33){};
\node (d) at (-3.62,3.83){};

 \draw[looseness=1.5,line width=4pt,color=white] (c) to [out=180,in=170] (d);
 \draw[looseness=1.5] (c) to [out=180,in=170] (d);

 \node at (-3.9,5.2) {$\alpha$};
 \node at (-3.9,4.2) {$\beta$};
\node at (-3.9,2.2) {$\gamma$};
\node at (-3.9,1.2) {$\delta$};

 \node at (5.4,5.2) {$\epsilon$};
 \node at (5.4,4.2) {$\zeta$};
\node at (5.4,2.2) {$\eta$};
\node at (5.4,1.2) {$\theta$};

\setcounter{mycount}{`a} \begin{scope}[xshift = 7.5cm,yshift = 3cm]
  \node at (-1,0) {\huge $=$}; 
 \begin{scope}[xscale=-1,xshift=-2cm]   
\draw[rounded corners = .75mm,thick] (0cm, -1.618 cm) -- (2 cm, -1.618 cm) -- (2cm, -.5 cm) -- (1.2cm, -.5 cm) -- (1.2 cm,.5cm) -- (2cm, .5cm) -- (2cm, 1.618cm) -- (0cm, 1.618cm) -- (0cm, .5cm) -- (.6cm, .5cm) -- (.6cm, -.5cm) -- (0cm, -.5cm) -- cycle; 
    \draw (.15cm, 1.06cm) -- (1.85cm, 1.06cm); 
 \draw (.15cm, -1.06cm) -- (1.85cm, -1.06cm);
  \draw (.9cm,1.06cm) -- (.9cm,-1.06cm);   
\draw[fill=black] (.9,1.06cm) circle (.66mm); 
 \draw (.9,-1.06cm) circle (1.25mm); 
 \draw (.9,-1.19cm) -- (.9,-.93cm);   
  \draw[rounded corners=.75mm,fill=white] (1.15cm, .7cm) rectangle (1.75cm, 1.42cm); 
 \node at (1.45cm, 1.06cm) {\small $\tilde U$}; 
 \end{scope}

\node at ( -0.2,1.31) {$\alpha$};
 \draw[fill=black] (0,1.31) circle (.66mm);   

\node at ( -0.2,0.81) {$\beta$};
 \draw[fill=black] (0,0.81) circle (.66mm);   

\node at ( -0.2,-0.81) {$\gamma$};
 \draw[fill=black] (0,-0.81) circle (.66mm);   

\node at ( -0.2,-1.31) {$\delta$};
 \draw[fill=black] (0,-1.31) circle (.66mm);   

\node at (2.2,1.31) {$\epsilon$};
 \draw[fill=black] (2,1.31) circle (.66mm);   

\node at (2.2,0.81) {$\zeta$};
 \draw[fill=black] (2,0.81) circle (.66mm);   

\node at ( 2.2,-0.81) {$\eta$};
 \draw[fill=black] (2,-0.81) circle (.66mm);   

\node at ( 2.2,-1.31) {$\theta$};
 \draw[fill=black] (2,-1.31) circle (.66mm);

\end{scope} 
\end{tikzpicture}}
\vspace{0.63cm}
\subfloat[][]{\begin{tikzpicture}[yscale=.92] \label{fig:GVcnot}

  \setcounter{mycount}{`a} \begin{scope}[xshift = 7.2 cm,yshift = 3cm]   
     
\draw[rounded corners = 2mm,thick] (0cm, -1.618 cm) -- (2 cm, -1.618 cm) -- (2cm, -.5 cm) -- (1.2cm, -.5 cm) -- (1.2 cm,.5cm) -- (2cm, .5cm) -- (2cm, 1.618cm) -- (0cm, 1.618cm) -- (0cm, .5cm) -- (.6cm, .5cm) -- (.6cm, -.5cm) -- (0cm, -.5cm) -- cycle;      \draw (.15cm, 1.06cm) -- (1.85cm, 1.06cm);   \draw (.15cm, -1.06cm) -- (1.85cm, -1.06cm);   \draw (.9cm,1.06cm) -- (.9cm,-1.06cm);
 \draw[fill=black] (.9,-1.06cm) circle (.66mm);   \draw (.9,1.06cm) circle (1.25mm);   \draw (.9,1.19cm) -- (.9,.93cm);

\node at ( -0.2,-0.81) {$\alpha$};
 \draw[fill=black] (0,-0.81) circle (.66mm);   

\node at ( -0.2,-1.31) {$\beta$};
 \draw[fill=black] (0,-1.31) circle (.66mm);   

\node at ( -0.2,1.31) {$\gamma$};
 \draw[fill=black] (0,1.31) circle (.66mm);   

\node at ( -0.2,0.81) {$\delta$};
 \draw[fill=black] (0,0.81) circle (.66mm);   

\node at ( 2.2,-0.81) {$\epsilon$};
 \draw[fill=black] (2,-0.81) circle (.66mm);   

\node at ( 2.2,-1.31) {$\zeta$};
 \draw[fill=black] (2,-1.31) circle (.66mm);   

\node at (2.2,1.31) {$\eta$};
 \draw[fill=black] (2,1.31) circle (.66mm);   

\node at (2.2,0.81) {$\theta$};
 \draw[fill=black] (2,0.81) circle (.66mm);   

\end{scope}
\end{tikzpicture}}

\caption{\subfig{GVucnot} Gadget for the
 two-qubit unitary $U=(\tilde U\otimes\id)\CNOT_{12}$ with $\tilde U\in \{1,H,HT\}$.
\subfig{GVcnot} For the $U=\CNOT_{21}$ gate (first qubit is the target), we use the same gate graph as in \subfig{GVucnot} with $\tilde U=1$; we represent it schematically as shown.}
\end{figure}

We define the gate graphs by exhibiting their gate diagrams. For the three cases
\[
  U=\CNOT_{12}(\tilde U\otimes\id)
\]
with $\tilde U\in\{\id,H,HT\}$, we associate $U$ with the gate diagram shown in \fig{GVucnot}. In the Figure we also indicate a shorthand used to represent this gate diagram. As one might expect, for the case $U=\CNOT_{21}$, we use the same gate diagram as for $U=\CNOT_{12}$; however, we use the slightly different shorthand shown in \fig{GVcnot}.

Roughly speaking, the two-qubit gate gadgets work as follows. In \fig{GVucnot} there are four move-together gadgets, one for each two-qubit basis state $|00\rangle, |01\rangle, |10\rangle, |11\rangle$. These enforce the constraint that two particles must move through the graph together. The connections between the four diagram elements labeled $1,2,3,4$ and the move-together gadgets ensure that certain frustration-free two-particle states encode two-qubit computations, while the connections between diagram elements $1,2,3,4$ and $5,6,7,8$ ensure that there are no additional frustration-free two-particle states (i.e., states that do not encode computations).

To describe the frustration-free states of the gate graph depicted in \fig{GVucnot}, first recall the definition of the states $|\chi_{1,a}\rangle, |\chi_{2,a}\rangle, |\chi_{3,a}\rangle, |\chi_{4,a}\rangle$ from equations \eq{chi_alpha}--\eq{chi_delta}. For each of the move-together gadgets $xy\in\{00,01,10,11\}$ in \fig{GVucnot}, write 
\[
|\chi_{L,a}^{xy}\rangle
\]
for the state $|\chi_{L,a}\rangle$ with support (only) on the gadget labeled $xy$. Write 
\[
U(a)=\begin{cases}
U & \text{ if }a=0\\
U^{*} & \text{ if }a=1
\end{cases}
\]
and similarly for $\tilde U$ (we use this notation throughout the paper to indicate a unitary or its elementwise complex conjugate).

In \app{graph_gadgets}
we prove the following Lemma, which shows that $G_{U}$ is an $e_{1}$-gate
graph and characterizes its frustration-free states.

\begin{restatable}{lemma}{Twoqub}\label{lem:2qub_gate}
Let $U=\CNOT_{12}(\tilde U\otimes\id)$ where $\tilde U\in\{\id,H,HT\}$. The corresponding gate graph $G_U$ is defined by its gate diagram shown in \fig{GVucnot}. The adjacency matrix $A(G_U)$ has ground energy $e_{1}$; a basis for the corresponding eigenspace is
\begin{align}
|\rho_{z,a}^{1,U}\rangle & =\frac{1}{\sqrt{8}}|\psi_{z,a}^{1}\rangle-\frac{1}{\sqrt{8}}|\psi_{z,a}^{5+z}\rangle-\sqrt{\frac{3}{8}}\sum_{x,y=0}^{1}\tilde U(a)_{yz}|\chi_{1,a}^{yx}\rangle
 & |\rho_{z,a}^{2,U}\rangle
 & =\frac{1}{\sqrt{8}}|\psi_{z,a}^{2}\rangle-\frac{1}{\sqrt{8}}|\psi_{z,a}^{6-z}\rangle-\sqrt{\frac{3}{8}}\sum_{x=0}^{1}|\chi_{2,a}^{zx}\rangle\label{eq:rho1_1}\\
|\rho_{z,a}^{3,U}\rangle & =\frac{1}{\sqrt{8}}|\psi_{z,a}^{3}\rangle-\frac{1}{\sqrt{8}}|\psi_{z,a}^{7}\rangle-\sqrt{\frac{3}{8}}\sum_{x=0}^{1}|\chi_{3,a}^{xz}\rangle 
& |\rho_{z,a}^{4,U}\rangle & =\frac{1}{\sqrt{8}}|\psi_{z,a}^{4}\rangle-\frac{1}{\sqrt{8}}|\psi_{z,a}^{8}\rangle-\sqrt{\frac{3}{8}}\sum_{x=0}^{1}|\chi_{4,a}^{x\left(z\oplus x\right)}\rangle\label{eq:rho2_1}
\end{align}
where $z,a\in\{0,1\}$. A basis for the nullspace of $H(G_U,2)$ is
\begin{equation}
\Sym(|T_{z_{1},a,z_{2},b}^U\rangle),\quad z_{1},z_{2},a,b\in\{0,1\}\label{eq:twopartstate_1}
\end{equation}
where 
\begin{equation}
|T_{z_{1},a,z_{2},b}^U\rangle=\frac{1}{\sqrt{2}}|\rho_{z_{1},a}^{1,U}\rangle|\rho_{z_{2},b}^{3,U}\rangle+\frac{1}{\sqrt{2}}\sum_{x_1,x_2=0}^{1}U(a)_{x_{1}x_{2},z_{1}z_{2}}|\rho_{x_{1},a}^{2,U}\rangle|\rho_{x_{2},b}^{4,U}\rangle\label{eq:twopartstate_2}
\end{equation}
for $z_{1},z_{2},a,b\in\{0,1\}$. There are no $N$-particle frustration-free
states on $G_U$ for $N\geq3$, i.e., 
\[
\lambda_{N}^{1}(G_U)>0\quad\text{for }N\geq3.
\]
\end{restatable}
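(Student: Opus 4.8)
The proof is deferred to \app{graph_gadgets}; the plan is as follows. The structure parallels the analysis of the move-together gadget (\lem{Wgadget_lemma}) and \ex{As-an-example}, but carried out for the larger gate graph $G_U$, which is assembled from the eight diagram elements labeled $1,\dots,8$ in \fig{GVucnot} together with the four embedded move-together gadgets, one for each $xy\in\{00,01,10,11\}$. Throughout, the key simplification is that since $A(g_0)$ is real, the operators $h_{\mathcal S}$ and $h_{\mathcal E}$ do not connect the $a=0$ and $a=1$ sectors, so every linear system below decouples into an $a=0$ piece and an $a=1$ piece.

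\textbf{Single-particle ground states.} By \eq{Gamma_disc}, any energy-$e_1$ eigenvector of $A(G_U)$ is a superposition of the states $|\psi_{z,a}^q\rangle$. Rather than solving directly for all the $|\psi^q\rangle$-coefficients, I would use the modular structure: by \lem{Wgadget_lemma} the energy-$e_1$ states supported within an embedded move-together gadget are exactly the span of the $|\chi_{L,a}^{xy}\rangle$, so a candidate single-particle ground state is parametrized by the $|\psi_{z,a}^q\rangle$-coefficients of the eight diagram elements $q\in\{1,\dots,8\}$ together with the $|\chi_{L,a}^{xy}\rangle$-coefficients of the four gadget pieces. Imposing the self-loop conditions \eq{h_s_0} and the edge conditions \eq{h_e_0} for every self-loop and edge drawn in \fig{GVucnot} gives a linear system in these coefficients; solving it shows the solution space has dimension $8$ for each value of $a$ (hence $16$ total) and is spanned by the states $|\rho_{z,a}^{i,U}\rangle$ of \eq{rho1_1}--\eq{rho2_1}. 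Verifying that these explicit states satisfy every constraint is where the $\CNOT$-plus-$\tilde U$ structure of $U$ and the unitarity of $\tilde U$ enter (and the norms work out to $1$, so the $|\rho^{i,U}_{z,a}\rangle$ are in fact orthonormal). Since such states exist, $\mu(G_U)=e_1$ and $G_U$ is an $e_1$-gate graph.

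\textbf{Two- and higher-particle frustration-free states.} By \lem{FF_characterization}, a two-particle frustration-free state $|\Phi\rangle\in\mathcal Z_2(G_U)$ is a symmetric superposition of the orthonormal products $|\rho_{z_1,a}^{i,U}\rangle|\rho_{z_2,b}^{j,U}\rangle$ obeying $\langle\psi_{z,a}^q|\langle\psi_{x,b}^q|\Phi\rangle=0$ for every diagram element $q$ of $G_U$ and all $z,x,a,b$. I would first check that each $\Sym(|T_{z_1,a,z_2,b}^U\rangle)$ of \eq{twopartstate_1}--\eq{twopartstate_2} satisfies these conditions (the nontrivial checks being the ``middle'' elements $5$ and $6$ of the move-together gadgets and the diagram elements $5,\dots,8$, where cross terms cancel precisely because of the $\CNOT$/$\tilde U$ structure), and then show conversely that the conditions force $|\Phi\rangle$ into the span of these sixteen (manifestly linearly independent) states, by expanding $|\Phi\rangle$ in the $|\rho^{i,U}\rangle$ basis and eliminating coefficient families one at a time exactly as in the proof of \lem{Wgadget_lemma}; the net effect is that only the region pairs $(1,3),(3,1),(2,4),(4,2)$ survive. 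For $N\ge3$, by \lem{increase_part_number} it suffices to rule out $N=3$: if $|\Gamma\rangle\in\mathcal Z_3(G_U)$ were frustration free, then (by the observation following \lem{FF_characterization} that reduced density matrices of frustration-free states are frustration free, together with the $N=2$ result just established) each of its three two-particle marginals would be supported on $\spn\{\Sym(|T^U_{z_1,a,z_2,b}\rangle)\}$, whose elements only occupy region pairs $\{1,3\}$ and $\{2,4\}$; expanding $|\Gamma\rangle$ in the $|\rho^{i,U}\rangle$ basis with symmetric coefficients, this forces the coefficient of any term with regions $(i,j,k)$ to vanish unless $\{i,j\},\{j,k\},\{i,k\}$ all lie in $\{\{1,3\},\{2,4\}\}$, which is impossible for three regions; hence $|\Gamma\rangle=0$, a contradiction.

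\textbf{Main obstacle.} Conceptually this is a direct extension of the move-together analysis, so the real difficulty is the size of the bookkeeping: $G_U$ has $4096$ vertices and \fig{GVucnot} has many edges, making the single-particle and two-particle linear systems large. The most error-prone step is enumerating all the edges of the gate diagram and confirming that the claimed explicit states $|\rho^{i,U}_{z,a}\rangle$ and $|T^U_{z_1,a,z_2,b}\rangle$ annihilate all the constraints; organizing the computation around the two $a$-sectors, the four symmetric copies of the move-together gadget, and the block structure indexed by which region a particle occupies is what keeps it tractable.
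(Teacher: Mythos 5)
Your proposal is correct and follows essentially the same route as the paper's proof in \app{graph_gadgets}: split off the edges external to the move-together gadgets, parametrize the $e_1$-energy states of the stripped graph by the $|\psi_{z,a}^q\rangle$ coefficients ($q\in[8]$) together with the $|\chi_{L,a}^{xy}\rangle$ coefficients (using \lem{Wgadget_lemma} for each gadget), impose \eq{h_s_0}--\eq{h_e_0} to get the $16$ single-particle ground states, then use \lem{FF_characterization} and the gadget's two-particle structure to force the frustration-free two-particle states into the span of the $|T^U\rangle$'s, and finally rule out $N\geq 3$ via the reduced-density-matrix argument. The one technical device the paper uses that you don't quite spell out is the change of basis replacing $|\chi_{1,a}^{xy}\rangle$ by $\sum_x\tilde U(a)_{xz}|\chi_{1,a}^{xy}\rangle$, which makes the single-particle constraint matrix block diagonal with sixteen \emph{identical} $4\times 4$ blocks and collapses the ``size of the bookkeeping'' you worry about to a single small computation.
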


We view the nodes labeled $\alpha,\beta,\gamma,\delta$ in \fig{GVucnot} as ``input'' nodes and those labeled $\epsilon, \zeta,\eta,\theta$ as ``output nodes''. Each of the states $|\rho_{x,y}^{i,U}\rangle$ is associated with one of the nodes, depending on the values of $i\in\{1,2,3,4\}$ and $x\in\{0,1\}$. For example, the states $|\rho_{0,0}^{1,U}\rangle$ and $|\rho_{0,1}^{1,U}\rangle$ are associated with input node $\alpha$ since they both have nonzero amplitude on vertices of the gate graph that are associated with $\alpha$ (and zero amplitude on vertices associated with other labeled nodes).

The two-particle state $\Sym(|T_{z_{1},a,z_{2},b}^{U}\rangle)$ is a superposition of a term
\[
\Sym\bigg(\frac{1}{\sqrt{2}}|\rho_{z_{1},a}^{1,U}\rangle|\rho_{z_{2},b}^{3,U}\rangle\bigg)
\]
with both particles located on vertices corresponding to input nodes and a term 
\[
\Sym\Bigg(\frac{1}{\sqrt{2}}\sum_{x_{1},x_{2}\in\{0,1\}}U(a)_{x_{1}x_{2},z_{1}z_{2}}|\rho_{x_{1},a}^{2,U}\rangle|\rho_{x_{2},b}^{4,U}\rangle\Bigg)
\]
with both particles on vertices corresponding to output nodes. The two-qubit gate $U(a)$ is applied as the particles move from input nodes to output nodes.

\subsection{The boundary gadget}
\label{sec:Other-gate-graph}

\begin{figure}
\centering
\begin{tikzpicture}[yscale=0.92] 

\path[use as bounding box](-5.5,-3) rectangle (10,10.5);

\foreach \xshift / \yshift /\xscale / \lab / \unitary in{3.12/0.5/1/4/1,3.12/3.5/1/2/1,-3.62/0.5/1/3/1,-3.62/3.5/1/1/1, -3.62/-2.5/1/7/1,3.12/-2.5/1/8/1, 3.12/7/1/6/1, -3.62/7/1/5/1}
{ 
\begin{scope}[shift={(\xshift,\yshift)},xscale=\xscale]
  \draw[rounded corners=0.75mm,thick] (0,0) rectangle (2 cm, 2cm);
  \node at (1,1) {\huge$\unitary$};   
\node at (1,2.4) {\large\lab};
  \foreach \y in {.33,.66,1.33,1.66}
	{   
		\foreach \x /\color in {0/black,2/gray}
			{    
				\draw[fill=\color,draw=\color] (\x cm, \y cm) circle (.66mm);  
			}
	} 
\end{scope}
}

\foreach \yshift/\from in {0.25/11,1.75/10,3.25/01,4.75/00}
{ 
\begin{scope}[yshift=\yshift cm] 
 \draw[rounded corners=0.75mm,thick] (0,0) rectangle (1.5cm,.93cm); 
 \node at (.75,.465) {$W$};   
\node at (.75,1.2) {$\from$};
  \foreach \x/\color in {0/black,1.5/gray}
	{   
		\foreach \y in {0.33,.6}
		{   
			 \draw[fill=\color,draw=\color] (\x,\y) circle (.66mm); 
		 }
	}
\end{scope}
}

\foreach \l/\r in {   5.35/5.16,5.08/2.16,   .85/3.83,.57/1.83,   3.85/4.83,3.57/1.16,   2.35/4.16,2.08/.83} 
{   
\node (a) at (1.5,\l) {}; 
 \node (b) at (3.12,\r)   {}; 
 \draw[looseness=.66,line width=4pt,color=white] (a) to [out=0,in=180] (b);
  \draw[looseness=.66] (a) to [out=0,in=180] (b); 
}

\foreach \l/\r in {5.16/5.35,2.16/5.08,   3.83/.85,.83/.57,   4.83/3.85,1.83/2.08,   4.16/2.35,1.16/3.57}
{   
\node (a) at (-1.618,\l) {}; 
 \node (b) at (0,\r)   {};
  \draw[looseness=.66,line width=4pt,color=white] (a) to [out=0,in=180] (b);
  \draw[looseness=.66] (a) to [out=0,in=180] (b); 
}

\node (c) at (-3.62,1.83){};
\node (d) at (-3.62,-1.17){};

 \draw[looseness=.66,line width=4pt,color=white] (c) to [out=180,in=180] (d);
 \draw[looseness=.66] (c) to [out=180,in=180] (d); 

\node (c) at (-3.62,0.83){};
\node (d) at (-3.62,-2.17){};

 \draw[looseness=.66,line width=4pt,color=white] (c) to [out=180,in=180] (d);
 \draw[looseness=.66] (c) to [out=180,in=180] (d);

\node (c) at (5.12,1.83){};
\node (d) at (5.12,-1.17){};

 \draw[looseness=.66,line width=4pt,color=white] (c) to [out=0,in=0] (d);
 \draw[looseness=.66] (c) to [out=0,in=0] (d); 

\node (c) at (5.12,0.83){};
\node (d) at (5.12,-2.17){};

 \draw[looseness=.66,line width=4pt,color=white] (c) to [out=0,in=0] (d);
 \draw[looseness=.66] (c) to [out=0,in=0] (d);

\node (c) at (-3.62,8.33){};
\node (d) at (-3.62,4.83){};

 \draw[looseness=.66,line width=4pt,color=white] (c) to [out=180,in=180] (d);
 \draw[looseness=.66] (c) to [out=180,in=180] (d); 

\node (c) at (-1.62,7.33){};
\node (d) at (5.12,3.83){};

 \draw[looseness=1.5,line width=4pt,color=white] (c) to [out=0,in=10] (d);
 \draw[looseness=1.5] (c) to [out=0,in=10] (d);

\node (c) at (5.12,8.33){};
\node (d) at (5.12,4.83){};

 \draw[looseness=.66,line width=4pt,color=white] (c) to [out=0,in=0] (d);
 \draw[looseness=.66] (c) to [out=0,in=0] (d); 

\node (c) at (3.12,7.33){};
\node (d) at (-3.62,3.83){};

 \draw[looseness=1.5,line width=4pt,color=white] (c) to [out=180,in=170] (d);
 \draw[looseness=1.5] (c) to [out=180,in=170] (d);

\draw[looseness=150] (-3.7,5.19) to [out=150,in=210] (-3.7,5.18) ;   
\draw[looseness=150] (-3.7,4.19) to [out=150,in=210] (-3.7,4.18) ;   

\draw[looseness=150] (-3.7,2.19) to [out=150,in=210] (-3.7,2.18) ;   
\draw[looseness=150] (-3.7,1.19) to [out=150,in=210] (-3.7,1.18) ;   

\draw[looseness=150] (5.2,5.19) to [out=30,in=-30] (5.2,5.18) ;   
\draw[looseness=150] (5.2,4.19) to [out=30,in=-30] (5.2,4.18) ;   

\node at (5.4,2.2) {$\alpha$};
\node at (5.4,1.2) {$\beta$};
\node at (5.4,-0.8) {$\gamma$};
\node at (5.4,-1.8) {$\delta$};

  \node at (6.5cm,3cm) {\huge $=$};
\begin{scope}[xshift = 7.5cm,yshift = 3.81cm]       
\draw[rounded corners = 2mm,thick] (0,0) -- (.6,0) -- (.6,-.5) -- (1.4,-.5) -- (1.4,-1.618) -- (0,-1.618) -- cycle;                    \draw[fill=black] (1,-.5) circle (.66mm);             
\draw[fill=black] (1.4,-.81) circle (.66mm);       
\draw[fill=black] (1.4,-1.41) circle (.66mm);       
\draw[fill=black] (1,-1.618) circle (.66mm);              
\node at (1,-.25) {$\alpha$};       
\node at (1.6,-.76) {$\gamma$};       
\node at (1.6,-1.46) {$\beta$};       
\node at (1,-1.868) {$\delta$};              
\node at (.7,-1.06) {\Large Bnd};          
\end{scope}
\end{tikzpicture}
\caption{The gate diagram for the boundary gadget is obtained from \fig{GVucnot} by setting $\tilde U=1$ and adding 6 self-loops.}\label{fig:GVbdy}
\end{figure}
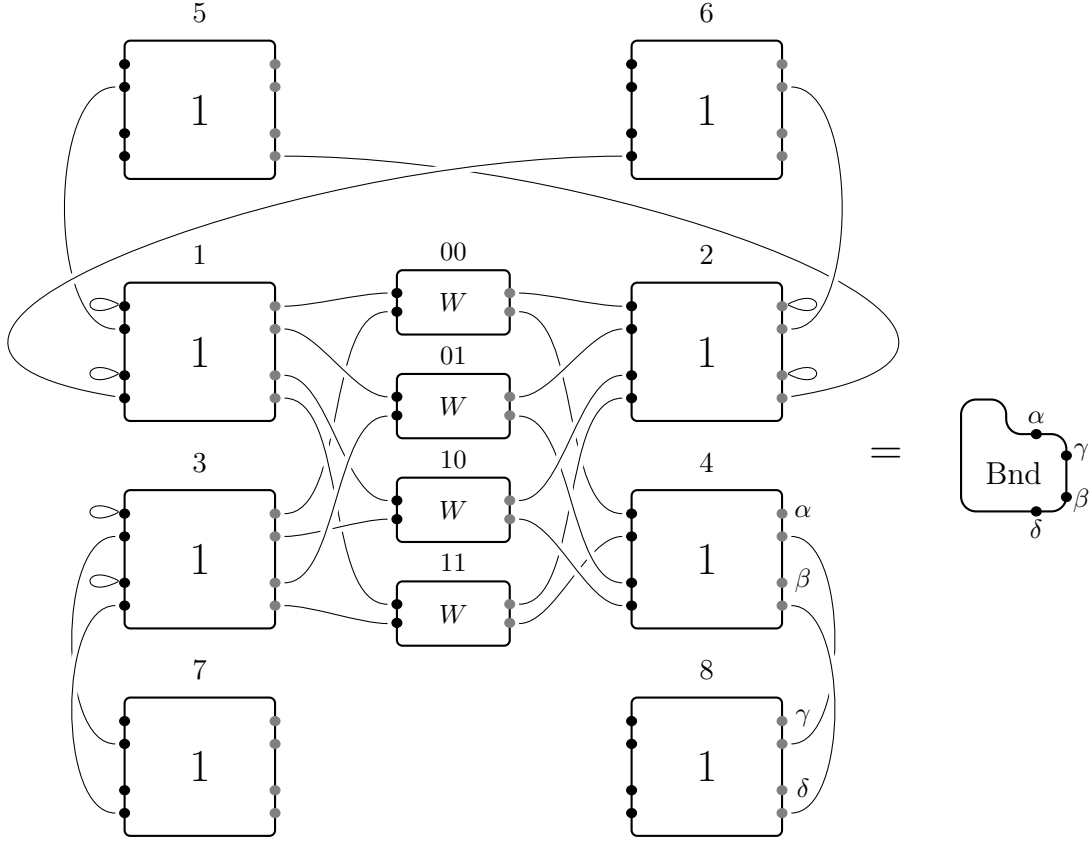

The \emph{boundary gadget} is shown in \fig{GVbdy}. This gate diagram is obtained from \fig{GVucnot} (with $\tilde U=\id$) by adding self-loops. The adjacency matrix is
\[
  A(G_{\bnd})=A(G_{\CNOT_{12}})+h_{\mathcal{S}}
\]
where 
\[
  h_{\mathcal{S}}
  =\sum_{z=0}^{1}(
    |1,z,1\rangle\langle1,z,1|\otimes\id_{j}
   +|2,z,5\rangle\langle2,z,5|\otimes\id_{j}
   +|3,z,1\rangle\langle3,z,1|\otimes\id_{j}
  ).
\]
The single-particle ground states (with energy $e_{1}$) are superpositions of the states $|\rho_{z,a}^{i,U}\rangle$ from \lem{2qub_gate} that are in the nullspace of $h_{\mathcal{S}}$. Note that 
\[
\langle\rho_{x,b}^{j,U}|h_{\mathcal{S}}|\rho_{z,a}^{i,U}\rangle=\delta_{a,b}\delta_{x,z}\left(\delta_{i,1}\delta_{j,1}+\delta_{i,2}\delta_{j,2}+\delta_{i,3}\delta_{j,3}\right)\frac{1}{8}\cdot\frac{1}{8}
\]
(one factor of $\frac{1}{8}$ comes from the normalization in equations \eq{rho1_1}--\eq{rho2_1} and the other factor comes from the normalization in equation \eq{psi0m}), so the only single-particle ground states are 
\[
|\rho_{z,a}^{\bnd}\rangle = |\rho_{z,a}^{4,U}\rangle
\]
with $z,a\in\{0,1\}$. Thus there are no two- (or more) particle frustration-free states, because no superposition of the states \eq{twopartstate_1} lies in the subspace 
\[
\spn\{ \Sym(|\rho_{z,a}^{4,U}\rangle|\rho_{x,b}^{4,U}\rangle)\colon z,a,x,b\in\{0,1\}\} 
\]
of states with single-particle reduced density matrices in the ground space of $A(G_{\bnd})$.  We summarize these results as follows.

\begin{lemma}\label{lem:boundary_lemma}
The smallest eigenvalue of $A(G_{\bnd})$ is $e_{1}$, with corresponding eigenvectors 
\begin{equation}
|\rho_{z,a}^{\bnd}\rangle=\frac{1}{\sqrt{8}}|\psi_{z,a}^{4}\rangle-\frac{1}{\sqrt{8}}|\psi_{z,a}^{8}\rangle-\sqrt{\frac{3}{8}}\sum_{x=0,1}|\chi_{4,a}^{x\left(z\oplus x\right)}\rangle\label{eq:rho_bnd}
\end{equation}
for $z,a \in \{0,1\}$. There are no frustration-free states with two or more particles, i.e., $\lambda_{N}^{1}(G_{\bnd})>0$ for $N\geq2$.
\end{lemma}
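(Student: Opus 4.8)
The plan is to derive both parts of the Lemma directly from \lem{2qub_gate} (applied with $U=\CNOT_{12}$, $\tilde U=\id$) together with the matrix elements of the self-loop term $h_{\mathcal{S}}$ that were displayed just before the statement. No perturbative or spectral-gap arguments are needed here; everything follows from operator positivity.

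First I would pin down the single-particle ground space. Since $A(G_{\bnd})=A(G_{\CNOT_{12}})+h_{\mathcal{S}}$ with $h_{\mathcal{S}}$ positive semidefinite and, by \lem{2qub_gate}, $A(G_{\CNOT_{12}})\geq e_{1}$, we get $A(G_{\bnd})\geq e_{1}$. Conversely, if $A(G_{\bnd})|\Gamma\rangle=e_{1}|\Gamma\rangle$ for a normalized $|\Gamma\rangle$, then $\langle\Gamma|A(G_{\CNOT_{12}})|\Gamma\rangle+\langle\Gamma|h_{\mathcal{S}}|\Gamma\rangle=e_{1}$ forces both $\langle\Gamma|A(G_{\CNOT_{12}})|\Gamma\rangle=e_{1}$ (since $e_1$ is the least eigenvalue of $A(G_{\CNOT_{12}})$) and $h_{\mathcal{S}}|\Gamma\rangle=0$ (since $h_{\mathcal{S}}\geq0$). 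So the $e_1$-eigenspace of $A(G_{\bnd})$ is exactly the nullspace of $h_{\mathcal{S}}$ inside the $16$-dimensional ground space $\spn\{|\rho_{z,a}^{i,U}\rangle\}$ of $A(G_{\CNOT_{12}})$ from \eq{rho1_1}--\eq{rho2_1}. The quoted matrix elements show that $h_{\mathcal{S}}$ restricted to this ground space is diagonal in the basis $\{|\rho_{z,a}^{i,U}\rangle\}$, with eigenvalue $\frac{1}{64}$ on the vectors with $i\in\{1,2,3\}$ and eigenvalue $0$ on those with $i=4$. Hence that nullspace is $\spn\{|\rho_{z,a}^{4,U}\rangle\colon z,a\in\{0,1\}\}$, which is nonempty, so $\mu(G_{\bnd})=e_1$ (i.e.\ $G_{\bnd}$ is an $e_1$-gate graph) and its ground states are the $|\rho_{z,a}^{\bnd}\rangle=|\rho_{z,a}^{4,U}\rangle$ displayed in \eq{rho_bnd}.

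Second, for the absence of multiparticle frustration-free states I would use $\mu(G_{\CNOT_{12}})=\mu(G_{\bnd})=e_1$ to write $H(G_{\bnd},N)=H(G_{\CNOT_{12}},N)+\sum_{w=1}^{N}h_{\mathcal{S}}^{(w)}$, a sum of two positive semidefinite operators. Therefore any $N$-particle frustration-free state $|\Gamma\rangle$ of $G_{\bnd}$ is in particular a frustration-free state of $G_{\CNOT_{12}}$. For $N\geq3$ there are none by \lem{2qub_gate}, so it remains to treat $N=2$. There \lem{2qub_gate} says $|\Gamma\rangle$ is a linear combination of the states $\Sym(|T_{z_1,a,z_2,b}^{U}\rangle)$ of \eq{twopartstate_1}, whose expansions (see \eq{twopartstate_2}) only involve product vectors whose ordered type-index pair $(i_1,i_2)$ lies in $\{(1,3),(3,1),(2,4),(4,2)\}$, never $(4,4)$. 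On the other hand, applying \lem{FF_characterization} to the $e_1$-gate graph $G_{\bnd}$ forces each particle of $|\Gamma\rangle$ into the single-particle ground space of $A(G_{\bnd})$, so $|\Gamma\rangle\in\spn\{\Sym(|\rho_{z,a}^{4,U}\rangle|\rho_{x,b}^{4,U}\rangle)\}$. Since the $|\rho_{z,a}^{i,U}\rangle$ form a basis of the ground space of $A(G_{\CNOT_{12}})$, the corresponding product vectors are a basis of the two-particle ground space, expansions are unique, and the two coefficient conditions are incompatible unless $|\Gamma\rangle=0$ — a contradiction. Hence there are no two-particle frustration-free states, and with \lem{increase_part_number} this yields $\lambda_N^1(G_{\bnd})>0$ for all $N\geq2$.

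The only mildly delicate point is the last step: one must invoke \lem{FF_characterization} for $G_{\bnd}$ itself (not merely for $G_{\CNOT_{12}}$) to conclude that the single-particle marginals lie in the smaller ground space $\spn\{|\rho_{z,a}^{4,U}\rangle\}$, and then read off from \eq{twopartstate_2} that no $|T^{U}_{z_1,a,z_2,b}\rangle$ ever places both particles in a type-$4$ single-particle state. Everything else reduces to operator positivity and the matrix-element computation already recorded before the statement.
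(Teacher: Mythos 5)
Your argument is correct and follows the paper's own reasoning almost step for step: you identify the single-particle ground space as the nullspace of $h_{\mathcal S}$ inside the ground space of $A(G_{\CNOT_{12}})$ via positivity and the displayed matrix elements, and then rule out two-particle frustration-free states by observing that the $\Sym(|T^U_{z_1,a,z_2,b}\rangle)$ states of \lem{2qub_gate} have no component of type $(4,4)$, whereas \lem{FF_characterization} applied to $G_\bnd$ forces any such state into $\spn\{\Sym(|\rho^{4,U}_{z,a}\rangle|\rho^{4,U}_{x,b}\rangle)\}$. The only cosmetic difference is that you treat $N\geq 3$ separately via \lem{2qub_gate} and then invoke \lem{increase_part_number} at the end, both of which are redundant once the $N=2$ case is settled, but neither detracts from the correctness.
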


\section{Bose-Hubbard Hamiltonian is QMA-hard}
\label{sec:From-circuits-to}

Since Frustration-Free Bose-Hubbard Hamiltonian is a special case of Bose-Hubbard Hamiltonian, to prove the latter is QMA-hard it suffices to prove the former is QMA-hard. To achieve this, we efficiently map any sufficiently large instance of any problem in QMA to an equivalent instance of Frustration-Free Bose-Hubbard Hamiltonian.

For any instance $X$ of a problem in QMA there is a verification circuit $\mathcal{C}_{X}$. We require the verification circuit to be of a certain form described in \sec{Universality}. Since the class of circuits we consider is universal, this choice is without loss of generality. We also assume without loss of generality \cite{KSV02,MW05} that $\mathcal{C}_{X}$ satisfies a stronger version of \defn{QMA} where the completeness threshold $\frac{2}{3}$ is replaced by $1-\frac{1}{2^{|X|}}$, i.e.,
\begin{itemize}
\item If $X\in L_{\mathrm{yes}}$ then there exists a state $|\psi_{\wit}\rangle$ with $\AP(\mathcal{C}_{X},|\psi_{\wit}\rangle)\geq1-\frac{1}{2^{|X|}}$.
\item If $X\in L_{\mathrm{no}}$ then $\AP(\mathcal{C}_{X},|\phi\rangle)\leq\frac{1}{3}$ for all $|\phi\rangle$.
\end{itemize}
In this Section we exhibit an efficiently computable mapping from the $n$-qubit, $M$-gate verification circuit $\mathcal{C}_{X}$ to an $e_{1}$-gate graph $\gx$ with $R=32(M+2n-2)$ diagram elements and an occupancy constraints graph $\gxoc$ on $R$ vertices. We consider the Hamiltonian $H(\gx,\gxoc,n)$ and its smallest eigenvalue $\lambda_n^1(\gx,\gxoc)$. In \sec{Proof-of-Theorem} we prove the following.

\begin{theorem}
\label{thm:main_thm_with_occ_constraints}
If there exists a state $|\psi_{\wit}\rangle$ with $\AP(\mathcal{C}_{X},|\psi_{\wit}\rangle)\geq1-\frac{1}{2^{|X|}}$, then
\begin{equation}
\lambda_{n}^{1}(\gx,\gxoc)\leq\frac{1}{2^{|X|}} \label{eq:cond1}
\end{equation}
On the other hand, if $\AP(\mathcal{C}_{X},|\phi\rangle)\leq\frac{1}{3}$ for all $|\phi\rangle$, then
\begin{equation}
\lambda_{n}^{1}(\gx,\gxoc)\geq\frac{\mathcal{K}}{n^{4}M^{4}}\label{eq:cond2}
\end{equation}
where $\mathcal{K}\in (0,1]$ is an absolute constant.
\end{theorem}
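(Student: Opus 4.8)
I would prove both bounds by constructing and analyzing the \emph{history states} of $H(\gx,\gxoc,n)$, obtaining the soundness bound \eq{cond2} from the \npl rather than perturbation theory. The starting point is a splitting of the Hamiltonian. Since $\mathcal I(\gx,\gxoc,n)\subseteq\mathcal I(\gx,n)$, on $\mathcal I(\gx,\gxoc,n)$ the interaction term of $H(\gx,n)$ vanishes and, as in the proof of \lem{FF_characterization}, each $(\id\otimes(A(g_0)-e_1))^{(w)}$ annihilates all of $\mathcal I(\gx,n)$. Hence $H(\gx,\gxoc,n)=H_A+H_B$ where $H_A=P\big(\sum_{w=1}^{n}h_{\mathcal E}^{(w)}\big)P$ is built from the edges of $\gx$, $H_B=P\big(\sum_{w=1}^{n}h_{\mathcal S}^{(w)}\big)P$ is built from the self-loops of the boundary gadgets, $P$ is the orthogonal projector onto $\mathcal I(\gx,\gxoc,n)$, and $H_A,H_B\ge 0$ with $\|H_B\|=O(n)$.

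\textbf{History states and completeness \eq{cond1}.} Working gadget by gadget with \lem{Wgadget_lemma}, \lem{2qub_gate}, and \lem{boundary_lemma}, and using the occupancy constraints $\gxoc$ to exclude the ``out of sync'' configurations discussed after \fig{circuitstruct}, I would show that $\ker H_A$ within $\mathcal I(\gx,\gxoc,n)$ equals the span of the history states $|\eta_{\phi,\vec a}\rangle=\tfrac1{\sqrt L}\sum_{t=1}^{L}(\text{snapshot }t)$, indexed by an $n$-qubit initial state $\phi$ and conjugation bits $\vec a$ (reflecting that $A(\gx)$ is real, so a gate-wise complex conjugate of $\mathcal C_X$ is also encoded; by the form of $\mathcal C_X$ fixed in \sec{Universality} these variants have the same acceptance behavior), where snapshot $t$ is the synchronized $n$-particle product state in $\mathcal I(\gx,\gxoc,n)$ encoding $U_t(\vec a)\cdots U_1(\vec a)|\phi\rangle$ and $L=\Theta(M)$ is the number of clock steps of $\gx$. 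That the edge terms annihilate the $|\eta_{\phi,\vec a}\rangle$, and that nothing else is annihilated, is exactly the content of the gadget lemmas combined with the occupancy constraints. For \eq{cond1}, take $\phi=|\psi_{\wit}\rangle|0\rangle^{\otimes(n-n_{\inn})}$ and $\vec a=\vec 0$: then $H_A|\eta_{\phi,\vec 0}\rangle=0$, the input self-loops contribute $0$ because the ancillas are in $|0\rangle$, and the output self-loops contribute at most (the clock normalization $1/L\le 1$ only helping) the rejection probability $1-\AP(\mathcal C_X,|\psi_{\wit}\rangle)\le 2^{-|X|}$, so $\lambda_n^1(\gx,\gxoc)\le 2^{-|X|}$.

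\textbf{Soundness \eq{cond2}.} Under the no-instance hypothesis, $H_A$ and $H_B$ have no common null vector in $\mathcal I(\gx,\gxoc,n)$ (it would be a frustration-free state encoding an accepting run), so $\lambda_n^1(\gx,\gxoc)=\lambda_{\min}^{\neq0}(H_A+H_B)$ and the \npl applies with the following two inputs. First, a propagation gap: $H_A$ has no eigenvalue in $(0,\gamma_A)$ with $\gamma_A\ge 1/\poly(n,M)$, which I would obtain by showing that after compression to $\mathcal I(\gx,\gxoc,n)$ the operator $\sum_w h_{\mathcal E}^{(w)}$ acts, within each data/conjugation sector, as a quantum walk on an effective path of length $L=\Theta(M)$ with $\Theta(1)$ weights (built up, if needed, by a nested application of the \npl over the move-together, inter-gadget, and $\{5,6,7,8\}$-elimination edge terms), whose gap is $\Omega(1/L^2)$. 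Second, the effective penalty: since each $|\eta_{\phi,\vec a}\rangle$ is uniform over $L$ snapshots while the self-loops act only on the first and last, $H_B|_{\ker H_A}=\tfrac1L\,G$ on the data space with $G=\Pi_{\inn}+U_{\mathcal C_X}(\vec a)^{\dagger}|0\rangle\langle0|_{\out}U_{\mathcal C_X}(\vec a)$, whose smallest eigenvalue is $\ge 1-\sqrt{1/3}$ by Kitaev's two-projector (geometric) estimate and the no-instance bound applied in every conjugation sector, so $\lambda_{\min}^{\neq0}(H_B|_{\ker H_A})\ge(1-\sqrt{1/3})/L=\Omega(1/M)$. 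The \npl then yields
\[
\lambda_n^1(\gx,\gxoc)\ \ge\ \frac{\gamma_A\,\lambda_{\min}^{\neq0}\!\big(H_B|_{\ker H_A}\big)}{\gamma_A+\lambda_{\min}^{\neq0}\!\big(H_B|_{\ker H_A}\big)+\|H_B\|}\ =\ \Omega\!\big(1/\poly(n,M)\big),
\]
and a careful but deliberately non-tight accounting of $\gamma_A$, $L$, and $\|H_B\|$ gives the stated bound $\mathcal K/(n^4M^4)$.

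\textbf{Main obstacle.} The delicate step is the propagation gap $\gamma_A\ge 1/\poly(n,M)$: one must prove that restricting the edge terms to $\mathcal I(\gx,\gxoc,n)$ genuinely reduces them to a bounded-weight quantum walk on a path of length $\Theta(M)$ — in particular that the move-together gadgets together with the occupancy constraints $\gxoc$ prevent the $n$ particles from drifting out of synchronization in intermediate (non-null) configurations, so that the gap does not collapse as $n$ grows. Establishing this requires combining the gadget lemmas with the precise control over surviving configurations afforded by the definition of $\mathcal I(G,\goc,N)$ and the \ocl, followed by an explicit spectral estimate for the resulting weighted path (and a further \npl reduction if its underlying graph is not literally a path). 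By contrast, once the history states are in hand the completeness direction is routine.
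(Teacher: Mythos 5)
Your broad strategy matches the paper's: build history states, use the \npl in place of perturbation theory, and locate the hard part in a propagation gap. But your decomposition $H_A = P\big(\sum_w h_{\mathcal E}^{(w)}\big)P$, $H_B = P\big(\sum_w h_{\mathcal S}^{(w)}\big)P$ is not what the paper uses, and the difference introduces a gap.

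The paper works through a sequence $H(G_1,\gxoc,n) \to H(G_2,\gxoc,n) \to H(G_3,\gxoc,n) \to H(G_4,\gxoc,n) \to H(\gx,\gxoc,n)$, where $G_1$ is the disjoint union of the two-qubit gate gadgets and the boundary gadgets with all their internal edges \emph{and self-loops}, and the later graphs are obtained by adding the horizontal row edges, then the ancilla self-loops, then the output self-loop. Crucially, $H(G_1,\gxoc,n)$ already contains the six structural self-loops of each boundary gadget (the term $h_{\mathcal S}$ in $A(G_\bnd)=A(G_{\CNOT_{12}})+h_{\mathcal S}$, supported on diagram elements $1$, $2$, $3$), and \lem{boundary_lemma} shows those self-loops are exactly what cut the boundary gadget's single-particle ground space down from sixteen states to the four $|\rho^{\bnd}_{z,a}\rangle$. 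Your split lumps those structural self-loops into $H_B$ together with the ancilla and output self-loops, so $H_A$ contains only edge terms. The consequence is that $\ker H_A \cap \mathcal I(\gx,\gxoc,n)$ is strictly larger than the span of history states: a particle sitting on a boundary gadget in, say, $|\rho^{1,U}_{z,a}\rangle$ is annihilated by every internal edge (it is a ground state of $A(G_{\CNOT_{12}})$) and is also annihilated by the horizontal row edges, because those edges only touch nodes on diagram elements $4$ and $8$ where $|\rho^{1,U}_{z,a}\rangle$ has zero amplitude. The occupancy constraints $\gxoc$ don't help either, since $E(\gxoc)$ only involves diagram elements in $\mathcal L$, which contains just one diagram element per boundary gadget (element $4$). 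So your claim that $\ker H_A$ equals the span of history states is false, and the subsequent reduction $H_B|_{\ker H_A}=\tfrac1L\big(\Pi_\inn + U_{\mathcal C_X}(\vec a)^\dagger|0\rangle\langle0|_\out U_{\mathcal C_X}(\vec a)\big)$ is not valid on the actual $\ker H_A$. Your completeness direction survives (the witness history state is still in $\ker H_A$), but the soundness argument, which depends on a complete characterization of $\ker H_A$, does not.

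It is worth noting a second, softer difference: even granting a clean $\ker H_A$, you invoke Kitaev's two-projector geometric estimate to handle the interplay between $\Pi_\inn$ and the output penalty in a single step. The paper deliberately avoids that lemma by absorbing the ancilla self-loops first (obtaining $S_4$, the history states with ancillas initialized to $|0\rangle$), and then bounding $\gamma(H_\out|_{S_4})$ directly from the soundness bound $\AP\leq\frac13$. That modularity is exactly what makes each intermediate nullspace simple enough to compute explicitly. If you repair the decomposition by moving the structural boundary self-loops into $H_A$ and splitting the boundary conditions into two \npl steps, you arrive, in effect, at the paper's proof.
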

The number $\mathcal{K}$ appearing in this Theorem can in principle be computed (see \sec{Proof-of-Theorem}) but we will not need to know its value.

Note that if $X$ is a yes instance then \eq{cond1} is satisfied and if $X$ is a no instance then \eq{cond2} is satisfied. By applying the \ocl (\lem{oc}), with gate graph $G_X$ and occupancy constraints graph $\gxoc$, and using the fact that $R=32\left(M+2n-2\right)\leq 64(M+n)$, we get an efficiently computable $e_{1}$-gate graph $G_{X}^{\square}$ with at most $7 \cdot 64^2 (n+M)^2$ diagram elements such that the following holds.

\begin{theorem}
\label{thm:QMA-hardness}
If $X$ is a yes instance, then
\[
\lambda_{n}^{1}(G_{X}^{\square})\leq\frac{1}{2^{|X|}}.
\]
On the other hand, if X is a no instance, then
\[
\lambda_{n}^{1}(G_{X}^{\square})\geq\frac{\mathcal{K}}{n^{4}M^{4}832^9\left(M+n\right)^{9}}
\] 
where $\mathcal{K}\in(0,1]$ is the absolute constant from \thm{main_thm_with_occ_constraints}.
\end{theorem}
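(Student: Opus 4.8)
The plan is to derive \thm{QMA-hardness} directly from \thm{main_thm_with_occ_constraints} and the \ocl (\lem{oc}); no new spectral analysis is needed, only careful tracking of the polynomial parameters.

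First I would recall the output of \sec{From-circuits-to}: for every sufficiently large instance $X$ of a problem in QMA, with $n$-qubit, $M$-gate verification circuit $\mathcal{C}_X$ of the standard universal form, there is an efficiently computable $e_1$-gate graph $\gx$ with $R = 32(M+2n-2)$ diagram elements and an occupancy constraints graph $\gxoc$ on $R$ vertices, with $\mathcal{I}(\gx,\gxoc,n)$ nonempty so that $\lambda_n^1(\gx,\gxoc)$ is well defined and $n\le R$. Since $M\ge 1$ one checks $2\le R = 32(M+2n-2)\le 64(M+n)$. I would then apply \lem{oc} with this gate graph, this occupancy constraints graph, and $N=n$, obtaining an $e_1$-gate graph $G_X^\square$ with at most $7R^2\le 7\cdot 64^2(n+M)^2$ diagram elements. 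Because $\gx$ and $\gxoc$ are computed from $\mathcal{C}_X$ in deterministic classical polynomial time, $\mathcal{C}_X$ is computed from $X$ in deterministic classical polynomial time, and the construction in \lem{oc} is efficient, the graph $G_X^\square$ is computable from $X$ in deterministic classical polynomial time.

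It then remains to combine the two cases of \thm{main_thm_with_occ_constraints} with the two conclusions of \lem{oc}. For a yes instance, the strong-completeness form of \defn{QMA} (which we may assume by \cite{KSV02,MW05}) guarantees a witness $|\psi_\wit\rangle$ with $\AP(\mathcal{C}_X,|\psi_\wit\rangle)\ge 1-2^{-|X|}$, so \eq{cond1} gives $\lambda_n^1(\gx,\gxoc)\le 2^{-|X|}$, and part (1) of \lem{oc} with $a=2^{-|X|}$ yields $\lambda_n^1(G_X^\square)\le a/R\le 2^{-|X|}$. For a no instance, $\AP(\mathcal{C}_X,|\phi\rangle)\le \tfrac13$ for all $|\phi\rangle$, so \eq{cond2} gives $\lambda_n^1(\gx,\gxoc)\ge b$ with $b=\mathcal{K}/(n^4M^4)$; since $\mathcal{K}\in(0,1]$ and $n,M\ge 1$ we have $b\in[0,1]$, so part (2) of \lem{oc} applies and yields $\lambda_n^1(G_X^\square)\ge b/(13R)^9$. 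Using $13R\le 13\cdot 64(M+n)=832(M+n)$, this is at least $\mathcal{K}/\bigl(n^4M^4\,832^9(M+n)^9\bigr)$, as claimed.

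I do not expect a genuine obstacle in this argument; its entire substance lies in \thm{main_thm_with_occ_constraints}, whose proof is carried out in \sec{Proof-of-Theorem}, and in \lem{oc}. The only points requiring care are verifying the hypothesis $b\in[0,1]$ so that part (2) of \lem{oc} is applicable, propagating the extra factor of $R$ (harmless since $R$ is polynomially bounded in $|X|$), and noting that handling only ``sufficiently large'' $X$ suffices for a QMA-hardness reduction, since the finitely many small instances can be decided in constant time and hard-coded into the reduction.
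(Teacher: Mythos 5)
Your proposal is correct and follows the paper's argument essentially verbatim: apply \lem{oc} to $G_X$ and $\gxoc$ with $N=n$, invoke \thm{main_thm_with_occ_constraints} for the two cases, and substitute $R \le 64(M+n)$ to obtain the stated bounds. The only additional content you supply is the explicit verification of the hypotheses of \lem{oc} (namely $R\ge 2$, $n\le R$, nonemptiness of $\mathcal{I}(\gx,\gxoc,n)$, and $b\in[0,1]$), which the paper leaves implicit but which is indeed good practice to check.
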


This Theorem is sufficient to prove that Frustration-Free Bose-Hubbard Hamiltonian is QMA-hard. To see this, first let $Q=\lceil{\frac{1}{\mathcal{K}}}\rceil$ and define the precision parameter
\[
\epsilon=\frac{1}{2Q}\frac{1}{n^{4}M^{4}832^9\left(M+n\right)^{9}}.
\]
Note that $\frac{1}{\epsilon}$ is at least four times the number of vertices in the graph (i.e., at least $4 \cdot 128 \cdot 7 \cdot 64^2 (n+M)^2$, since each diagram element corresponds to $128$ vertices); this condition  is required in the definition of Frustration-Free Bose-Hubbard Hamiltonian. We now show that solving the instance of Frustration-Free Bose-Hubbard Hamiltonian with graph $G_{X}^{\square}$, number of particles $n$, and precision parameter $\epsilon$ is sufficient to solve the original instance $X$ (for $|X|$ at least some constant).  

If $X$ is a no instance then the Theorem states that $\lambda_{n}^{1}(G_{X}^{\square})\geq 2\epsilon\geq \epsilon+\epsilon^3$, so the corresponding instance of Frustration-Free Bose-Hubbard Hamiltonian is also a no instance.  Furthermore, since $M$ and $n$ are upper bounded by a polynomial function of the instance size $|X|$, there exists a constant instance size above which 
\[
\frac{1}{2^{|X|}}\leq \epsilon^3,
\]
so when $X$ is a sufficiently large yes instance, the Theorem says $\lambda_{n}^{1}(G_{X}^{\square})\leq \epsilon^3$, i.e., the corresponding instance of Frustration-Free Bose-Hubbard Hamiltonian is also a yes instance. This proves that Frustration-Free Bose-Hubbard Hamiltonian is QMA-hard, which implies that Bose-Hubbard Hamiltonian is QMA-hard.

The remainder of this section describes the ingredients used to prove \thm{main_thm_with_occ_constraints}. In \sec{Universality} we describe the verification circuit $\mathcal{C}_{X}$, in \sec{The-gate-graph} we define the gate graph $\gx$, and in \sec{The-occupancy-constraints} we define the occupancy constraints graph $\gxoc$. We prove \thm{main_thm_with_occ_constraints} in \sec{Proof-of-Theorem}.

\subsection{The verification circuit}
\label{sec:Universality}

We take the verification circuit $\mathcal{C}_{X}$ to be from the following universal circuit family. Write $n$ for the number of qubits and $M$ for the number of gates in the circuit
\[
U_{\mathcal{C}_{X}}=U_{M}U_{M-1}\ldots U_{1}.
\]
The qubits labeled $1,\ldots,n_{\inn}$ hold the input state. The remaining $n-n_{\inn}$ ancilla qubits are each initialized to $|0\rangle$.  We require each qubit to be involved in at least one gate. The qubit labeled $2$ is the output qubit that contains the result of the computation after the circuit is applied.
The qubit labeled $1$ mediates two-qubit gates: each gate $U_j$ is a two-qubit gate acting nontrivially on the qubit labeled $1$ and another qubit $s(j)\in\{2,\ldots,n\}$. Each unitary $U_j$ is chosen from the set
\begin{equation}
\{\CNOT_{1s(j)},\CNOT_{s(j)1},\CNOT_{1s(j)}\left(H\otimes\id\right),\CNOT_{1s(j)}\left(HT\otimes\id\right)\},\label{eq:gate_set_Vr}
\end{equation}
but no two consecutive gates ($U_j$ and $U_{j+1}$) act between the same two qubits, i.e., 
\begin{equation}
s(j)\neq s(j+1)\qquad j\in[M-1].\label{eq:condition_ij}
\end{equation}
As discussed in \sec{Definitions-and-Results}, the acceptance probability for the circuit acting on input state $|\psi_{\inn}\rangle\in(\C^{2})^{\otimes n_{\inn}}$ is the probability that a final measurement of the output qubit in the computational basis gives the value $1$: 
\[
\AP\left(\mathcal{C}_{X},|\psi_{\inn}\rangle\right)=\left\Vert |1\rangle\langle1|^{(2)}U_{\mathcal{C}_{X}}|\psi_{\inn}\rangle|0\rangle^{\otimes n-n_{\inn}}\right\Vert ^{2}.
\]

We now establish that circuits of this form are universal. We show that any quantum circuit (with $n\geq 4$ qubits) expressed using the universal gate set
\begin{equation}
\{\CNOT,H,HT\}\label{eq:standard_gate_ste}
\end{equation}
can be efficiently rewritten in the prescribed form without increasing the number of qubits and with at most a constant factor increase in the number of gates.

First we map a circuit from the gate set \eq{standard_gate_ste} to the gate set \eq{gate_set_Vr} (without necessarily satisfying condition \eq{condition_ij}). A $\SWAP$ gate between qubits $1$ and $k$ can be performed using the identity
\begin{align*}
  \SWAP_{1k} & = \CNOT_{1k}\CNOT_{k1}\CNOT_{1k}.
\end{align*}
To perform a $\CNOT_{ik}$ gate between two qubits $i,k$ (neither of which is qubit $1$), we swap qubits $1$ and $i$, apply $\CNOT_{1k}$, and then swap back. Similarly, we can apply a single-qubit gate $U\in\{H,HT\}$ to some qubit $k\neq1$ using the sequence of gates
\[
  \SWAP_{1k} \CNOT_{12} \left(\CNOT_{12} U\otimes\id\right) \SWAP_{1k}.
\]
Applying these replacement rules, we obtain a circuit over the gate set \eq{gate_set_Vr}. However, the resulting circuit will not in general satisfy equation \eq{condition_ij}. To enforce this condition, we insert a sequence of four gates equal to the identity, namely
\[
  \id=\CNOT_{1a}\CNOT_{1b}\CNOT_{1a}\CNOT_{1b},
\]
between any two consecutive gates $U_j$ and $U_{j+1}$ with $s(j)=s(j+1)$, where $a\neq b \neq s(j)$. For example, 
\[
\CNOT_{15}\CNOT_{51}\quad\longrightarrow\quad\CNOT_{15}\CNOT_{12}\CNOT_{13}\CNOT_{12}\CNOT_{13}\CNOT_{51}.
\]
Thus we map a circuit over the gate set \eq{standard_gate_ste} into the prescribed form. Note that $n\geq 4$ is needed to ensure that any quantum circuit can be efficiently rewritten so that $s(j) \neq s(j+1)$.  (There do exist circuits of the desired of the form with $n=3$, such as in the example shown in \fig{step-by-step}.)

\subsection{The gate graph}
\label{sec:The-gate-graph}

For any $n$-qubit, $M$-gate verification circuit $\mathcal{C}_{X}$ of the form described above, we associate a gate graph $\gx$. The gate diagram for $\gx$ is built using the gadgets described in \sec{Gadgets}; specifically, we use $M$ two-qubit gadgets and $2(n-1)$ boundary gadgets. Since each two-qubit gadget and each boundary gadget contains $32$ diagram elements, the total number of diagram elements in $\gx$ is $R=32(M+2n-2)$.

We now present the construction of the gate diagram for $\gx$.  We also describe some gate graphs obtained as intermediate steps that are used in our analysis in \sec{Proof-of-Theorem}. The reader may find this description easier to follow by looking ahead to \fig{step-by-step}, which illustrates the construction for a specific $3$-qubit circuit.

\begin{enumerate}
\item \textbf{Draw a grid} with columns labeled $j=0,1,\ldots,M+1$ and rows labeled $i=1,\ldots,n$ (this grid is only used to help describe the diagram).
\item \textbf{Place gadgets in the grid to mimic the quantum circuit.}
For each $j=1,\ldots,M$, place a gadget for the two-qubit gate $U_j$ between rows $1$ and $s(j)$ in the $j$th column. Place boundary gadgets in rows $i=2,\ldots,n$ of column $0$ and in the same rows of column $M+1$. Write $G_{1}$ for the gate graph associated with the resulting diagram.
\item \textbf{Connect the nodes within each row.}
First add edges connecting the nodes in rows $i=2,\ldots,n$; call the resulting gate graph $G_{2}$. Then add edges connecting the nodes in row $1$; call the resulting gate graph $G_{3}$.
\item \textbf{Add self-loops to the boundary gadgets.}
In this step we add self-loops to enforce initialization of ancillas (at the beginning) and the proper output of the circuit (at the end). For each row $k=n_{\inn}+1,\ldots,n$, add a self-loop to node $\delta$ (as shown in  \fig{GVbdy}) of the corresponding boundary gadget in column $r=0$, giving the gate diagram for $G_{4}$. Finally, add a self-loop to node $\alpha$ of the boundary gadget (as in \fig{GVbdy}) in row $2$ and column $M+1$, giving the gate diagram for $\gx$.
\end{enumerate}

\fig{step-by-step} illustrates the step-by-step construction of $G_X$ using a simple $3$-qubit circuit with four gates 
\[
\CNOT_{12}\left(\CNOT_{13}HT\otimes\id\right)\CNOT_{21}\CNOT_{13}.
\]
In this example, two of the qubits are input qubits (so $n_{\inn}=2$), while the third qubit is an ancilla initialized to $|0\rangle$. Following the convention described in \sec{Universality}, we take qubit $2$ to be the output qubit. (In this example the circuit is not meant to compute anything interesting; its only purpose is to illustrate our method of constructing a gate graph).

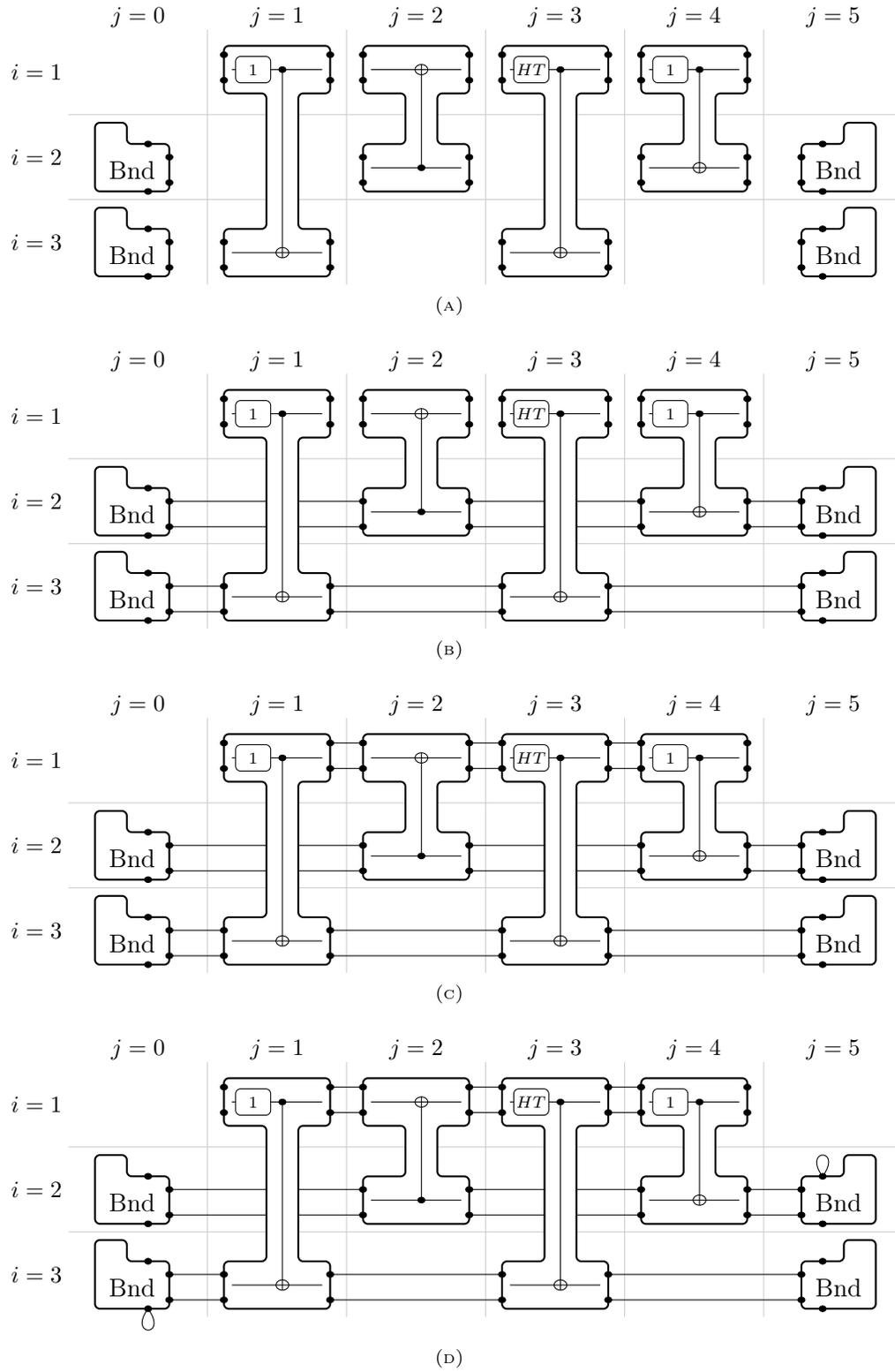
\begin{figure}
\centering \subfloat[][]{ \label{fig:G1}
\begin{tikzpicture}[scale=.8,yscale=.8]

\foreach \y in {2,4}
{   
\draw[draw=black!20] (0,\y cm) -- (15.7 cm, \y cm); 
} 
\foreach \x in {2.618,5.236,7.854,10.472,13.09}
{   
\draw[draw=black!20] (\x cm, 0cm) -- (\x cm, 6cm); 
}
\setcounter{mycount}{1}; 
\foreach \y in {5,3,1}
{   
\node at (-.6,\y) {$i = \arabic{mycount}$\addtocounter{mycount}{1}}; 
}
\foreach \x in {0, ...,5}
{   
\node at ( 2.618*\x + 1.31,6.3) {$j = \x$}; 
}

\foreach \xscope /\scale in {0.5/1,15.2/-1}
{ 
	\foreach \yscope in {1.81,3.81}
	{    
		\begin{scope}[yshift = \yscope cm,xshift = \xscope cm,xscale=\scale]     
		\draw[rounded corners = .75mm,thick] (0,0) -- (.6,0) -- (.6,-.5) -- (1.4,-.5) -- (1.4,-1.618) -- (0,-1.618) -- cycle;                  \draw[fill=black] (1,-.5) circle (.66mm);             
		\draw[fill=black] (1.4,-.81) circle (.66mm);       
		\draw[fill=black] (1.4,-1.41) circle (.66mm);       
		\draw[fill=black] (1,-1.618) circle (.66mm);
        \node at (.7,-1.11) {\large Bnd};   
		\end{scope}
	}
}  
   
\setcounter{mycount}{1}   \foreach \ybottom/ \control /\unitary in {.19/1/1,2.19/0/1,.19/1/HT,2.19/1/1}
{   
\begin{scope}[xshift = \value{mycount}*2.618 cm + .31cm]     
\begin{scope}[xshift=2cm,xscale=-1]     
\draw[rounded corners = .75mm,thick,fill=white] (0cm,\ybottom cm) -- (2cm, \ybottom cm)-- (2cm,\ybottom cm + 1.118cm) -- (1.2cm, \ybottom cm + 1.118cm) -- (1.2cm, 4.5cm)-- (2cm, 4.5cm) -- (2cm, 5.62 cm) -- (0cm, 5.62cm) -- (0cm, 4.5cm) -- (.6cm, 4.5cm)            -- (.6cm, \ybottom cm + 1.118cm) -- (0cm, \ybottom cm + 1.118cm) -- cycle; 

\draw (.15cm, 5.06cm) -- (1.85cm, 5.06cm);     
\draw (.15cm, \ybottom cm + .56 cm) -- (1.85cm, \ybottom cm + .56 cm);     
\draw (.9cm, 5.06cm) -- (.9cm,\ybottom cm + .56 cm);          
\foreach \x in {0,2}
{     
	\foreach \y in {5.41,4.81,\ybottom + .81,\ybottom + .21}
	{       
		\draw[fill=black] (\x,\y) circle (.66mm);     
	}
}          
\if\control1
{       
\draw[rounded corners=.75mm,fill=white] (1.12cm,4.75cm) rectangle (1.78cm, 5.37cm);       
\node at (1.45,5.06) {\scriptsize $\unitary$};       
\draw[fill=black] (.9, 5.06) circle (.66mm);       
\begin{scope}[yshift = \ybottom cm]         
\draw[fill=white] (.9,.56) circle (1.2mm);         
\draw (.78,.56) -- (1.02,.56);         
\draw (.9,.44) -- (.9,.68);       
\end{scope}     
}
\else
{       
\draw[fill=black] (.9,\ybottom+.56) circle (.66mm);       
\draw[fill=white] (.9,5.06) circle (1.2mm);       
\draw (.78,5.06) -- (1.02,5.06);       
\draw (.9,4.94) -- (.9,5.18);     
}\fi     
\end{scope}   
\end{scope}   
\addtocounter{mycount}{1}; 
}
\end{tikzpicture}}

\subfloat[][]{ \label{fig:G2}
\begin{tikzpicture}[scale=.8,yscale=.8]
  
\foreach \y in {2,4}
{   
\draw[draw=black!20] (0,\y cm) -- (15.7 cm, \y cm); 
} 
\foreach \x in {2.618,5.236,7.854,10.472,13.09}
{   
\draw[draw=black!20] (\x cm, 0cm) -- (\x cm, 6cm); 
}
\setcounter{mycount}{1}; 
\foreach \y in {5,3,1}
{   
\node at (-.6,\y) {$i = \arabic{mycount}$\addtocounter{mycount}{1}}; 
}
\foreach \x in {0, ...,5}
{   
\node at ( 2.618*\x + 1.31,6.3) {$j = \x$}; 
}
  
\foreach \y in {.4,1,2.4,3}
{   
\draw (1.9,\y) -- (13.8,\y); 
}

\foreach \xscope /\scale in {0.5/1,15.2/-1}
{ 
	\foreach \yscope in {1.81,3.81}
	{    
		\begin{scope}[yshift = \yscope cm,xshift = \xscope cm,xscale=\scale]     
		\draw[rounded corners = .75mm,thick] (0,0) -- (.6,0) -- (.6,-.5) -- (1.4,-.5) -- (1.4,-1.618) -- (0,-1.618) -- cycle; 
        \draw[fill=black] (1,-.5) circle (.66mm);             
		\draw[fill=black] (1.4,-.81) circle (.66mm);       
		\draw[fill=black] (1.4,-1.41) circle (.66mm);       
		\draw[fill=black] (1,-1.618) circle (.66mm);
        \node at (.7,-1.11) {\large Bnd};   
		\end{scope}
	}
}     

\setcounter{mycount}{1}   \foreach \ybottom/ \control /\unitary in {.19/1/1,2.19/0/1,.19/1/HT,2.19/1/1}
{   
\begin{scope}[xshift = \value{mycount}*2.618 cm + .31cm]     
\begin{scope}[xshift=2cm,xscale=-1]     
\draw[rounded corners = .75mm,thick,fill=white] (0cm,\ybottom cm) -- (2cm, \ybottom cm)-- (2cm,\ybottom cm + 1.118cm) -- (1.2cm, \ybottom cm + 1.118cm) -- (1.2cm, 4.5cm)-- (2cm, 4.5cm) -- (2cm, 5.62 cm) -- (0cm, 5.62cm) -- (0cm, 4.5cm) -- (.6cm, 4.5cm)            -- (.6cm, \ybottom cm + 1.118cm) -- (0cm, \ybottom cm + 1.118cm) -- cycle;

 \draw (.15cm, 5.06cm) -- (1.85cm, 5.06cm);     
\draw (.15cm, \ybottom cm + .56 cm) -- (1.85cm, \ybottom cm + .56 cm);     
\draw (.9cm, 5.06cm) -- (.9cm,\ybottom cm + .56 cm);          
\foreach \x in {0,2}
{     
	\foreach \y in {5.41,4.81,\ybottom + .81,\ybottom + .21}
	{       
		\draw[fill=black] (\x,\y) circle (.66mm);     
	}
}          
\if\control1
{       
\draw[rounded corners=.75mm,fill=white] (1.12cm,4.75cm) rectangle (1.78cm, 5.37cm);       
\node at (1.45,5.06) {\scriptsize $\unitary$};       
\draw[fill=black] (.9, 5.06) circle (.66mm);       
\begin{scope}[yshift = \ybottom cm]         
\draw[fill=white] (.9,.56) circle (1.2mm);         
\draw (.78,.56) -- (1.02,.56);         
\draw (.9,.44) -- (.9,.68);       
\end{scope}     
}
\else
{       
\draw[fill=black] (.9,\ybottom+.56) circle (.66mm);       
\draw[fill=white] (.9,5.06) circle (1.2mm);       
\draw (.78,5.06) -- (1.02,5.06);       
\draw (.9,4.94) -- (.9,5.18);     
}\fi     
\end{scope}   
\end{scope}   
\addtocounter{mycount}{1}; 
}
\end{tikzpicture}}

\subfloat[][]{ \label{fig:G3}
\begin{tikzpicture}[scale=.8,yscale=.8]

\foreach \y in {2,4}
{   
\draw[draw=black!20] (0,\y cm) -- (15.7 cm, \y cm); 
} 
\foreach \x in {2.618,5.236,7.854,10.472,13.09}
{   
\draw[draw=black!20] (\x cm, 0cm) -- (\x cm, 6cm); 
} 
\setcounter{mycount}{1}; 
\foreach \y in {5,3,1}
{   
\node at (-.6,\y) {$i = \arabic{mycount}$\addtocounter{mycount}{1}}; 
}
\foreach \x in {0, ...,5}
{   
\node at ( 2.618*\x + 1.31,6.3) {$j = \x$}; 
}
  
\foreach \y in {.4,1,2.4,3}
{   
\draw (1.9,\y) -- (13.8,\y); 
}
  
\foreach \y in {5.41,4.81}
{   
\draw (4.93,\y) -- (10.78,\y); 
}
  
\foreach \xscope /\scale in {0.5/1,15.2/-1}
{ 
	\foreach \yscope in {1.81,3.81}
	{    
		\begin{scope}[yshift = \yscope cm,xshift = \xscope cm,xscale=\scale]     
		\draw[rounded corners = .75mm,thick] (0,0) -- (.6,0) -- (.6,-.5) --(1.4,-.5) -- (1.4,-1.618) -- (0,-1.618) -- cycle;
        \draw[fill=black] (1,-.5) circle (.66mm);    
		\draw[fill=black] (1.4,-.81) circle (.66mm);       
		\draw[fill=black] (1.4,-1.41) circle (.66mm);       
		\draw[fill=black] (1,-1.618) circle (.66mm);
        \node at (.7,-1.11) {\large Bnd};   
		\end{scope}
	}
}    

 \setcounter{mycount}{1}   \foreach \ybottom/ \control /\unitary in {.19/1/1,2.19/0/1,.19/1/HT,2.19/1/1}
{   
\begin{scope}[xshift = \value{mycount}*2.618 cm + .31cm]     
\begin{scope}[xshift=2cm,xscale=-1]     
\draw[rounded corners = .75mm,thick,fill=white] (0cm,\ybottom cm) -- (2cm, \ybottom cm)-- (2cm,\ybottom cm + 1.118cm) -- (1.2cm, \ybottom cm + 1.118cm) -- (1.2cm, 4.5cm)-- (2cm, 4.5cm) -- (2cm, 5.62 cm) -- (0cm, 5.62cm) -- (0cm, 4.5cm) -- (.6cm, 4.5cm)            -- (.6cm, \ybottom cm + 1.118cm) -- (0cm, \ybottom cm + 1.118cm) -- cycle;
           
 \draw (.15cm, 5.06cm) -- (1.85cm, 5.06cm);     
\draw (.15cm, \ybottom cm + .56 cm) -- (1.85cm, \ybottom cm + .56 cm);     
\draw (.9cm, 5.06cm) -- (.9cm,\ybottom cm + .56 cm);          
\foreach \x in {0,2}
{     
	\foreach \y in {5.41,4.81,\ybottom + .81,\ybottom + .21}
	{       
		\draw[fill=black] (\x,\y) circle (.66mm);     
	}
}          
\if\control1
{       
\draw[rounded corners=.75mm,fill=white] (1.12cm,4.75cm) rectangle (1.78cm, 5.37cm);       
\node at (1.45,5.06) {\scriptsize $\unitary$};       
\draw[fill=black] (.9, 5.06) circle (.66mm);       
\begin{scope}[yshift = \ybottom cm]         
\draw[fill=white] (.9,.56) circle (1.2mm);         
\draw (.78,.56) -- (1.02,.56);         
\draw (.9,.44) -- (.9,.68);       
\end{scope}     
}
\else
{       
\draw[fill=black] (.9,\ybottom+.56) circle (.66mm);       
\draw[fill=white] (.9,5.06) circle (1.2mm);       
\draw (.78,5.06) -- (1.02,5.06);    
\draw (.9,4.94) -- (.9,5.18);     
}\fi     
\end{scope}   
\end{scope}   
\addtocounter{mycount}{1}; 
}
\end{tikzpicture}}

\subfloat[][]{ \label{fig:GX}
\begin{tikzpicture}[scale=.8,yscale=.8]

\foreach \y in {2,4}
{   
\draw[draw=black!20] (0,\y cm) -- (15.7 cm, \y cm); 
} 
\foreach \x in {2.618,5.236,7.854,10.472,13.09}
{   
\draw[draw=black!20] (\x cm, 0cm) -- (\x cm, 6cm); 
}
\setcounter{mycount}{1}; 
\foreach \y in {5,3,1}
{   
\node at (-.6,\y) {$i = \arabic{mycount}$\addtocounter{mycount}{1}}; 
}
\foreach \x in {0, ...,5}
{   
\node at ( 2.618*\x + 1.31,6.3) {$j = \x$}; 
}
  
\foreach \y in {.4,1,2.4,3}
{   
\draw (1.9,\y) -- (13.8,\y); 
}
  
\foreach \y in {5.41,4.81}
{   
\draw (4.93,\y) -- (10.78,\y); 
}

\foreach \xscope /\scale in {0.5/1,15.2/-1}
{ 
	\foreach \yscope in {1.81,3.81}
	{    
		\begin{scope}[yshift = \yscope cm,xshift = \xscope cm,xscale=\scale]     
		\draw[rounded corners = .75mm,thick] (0,0) -- (.6,0) -- (.6,-.5) --(1.4,-.5) -- (1.4,-1.618) -- (0,-1.618) -- cycle; 
        \draw[fill=black] (1,-.5) circle (.66mm);             
		\draw[fill=black] (1.4,-.81) circle (.66mm);       
		\draw[fill=black] (1.4,-1.41) circle (.66mm);       
		\draw[fill=black] (1,-1.618) circle (.66mm);
        \node at (.7,-1.11) {\large Bnd};   
		\end{scope}
	}
}   
  
 \setcounter{mycount}{1}   \foreach \ybottom/ \control /\unitary in {.19/1/1,2.19/0/1,.19/1/HT,2.19/1/1}
{   
\begin{scope}[xshift = \value{mycount}*2.618 cm + .31cm]     
\begin{scope}[xshift=2cm,xscale=-1]     
\draw[rounded corners = .75mm,thick,fill=white] (0cm,\ybottom cm) -- (2cm, \ybottom cm)-- (2cm,\ybottom cm + 1.118cm) -- (1.2cm, \ybottom cm + 1.118cm) -- (1.2cm, 4.5cm)-- (2cm, 4.5cm) -- (2cm, 5.62 cm) -- (0cm, 5.62cm) -- (0cm, 4.5cm) -- (.6cm, 4.5cm)-- (.6cm, \ybottom cm + 1.118cm) -- (0cm, \ybottom cm + 1.118cm) -- cycle;
 
\draw (.15cm, 5.06cm) -- (1.85cm, 5.06cm);     
\draw (.15cm, \ybottom cm + .56 cm) -- (1.85cm, \ybottom cm + .56 cm);     
\draw (.9cm, 5.06cm) -- (.9cm,\ybottom cm + .56 cm);          
\foreach \x in {0,2}
{     
	\foreach \y in {5.41,4.81,\ybottom + .81,\ybottom + .21}
	{       
		\draw[fill=black] (\x,\y) circle (.66mm);     
	}
}          
\if\control1
{       
\draw[rounded corners=.75mm,fill=white] (1.12cm,4.75cm) rectangle (1.78cm, 5.37cm);       
\node at (1.45,5.06) {\scriptsize $\unitary$};       
\draw[fill=black] (.9, 5.06) circle (.66mm);       
\begin{scope}[yshift = \ybottom cm]         
\draw[fill=white] (.9,.56) circle (1.2mm);         
\draw (.78,.56) -- (1.02,.56);         
\draw (.9,.44) -- (.9,.68);       
\end{scope}     
}
\else
{       
\draw[fill=black] (.9,\ybottom+.56) circle (.66mm);       
\draw[fill=white] (.9,5.06) circle (1.2mm);       
\draw (.78,5.06) -- (1.02,5.06);       
\draw (.9,4.94) -- (.9,5.18);     
}\fi     
\end{scope}   
\end{scope}   
\addtocounter{mycount}{1}; 
}

\draw[looseness=200] (1.5,.19) to [out=-60,in=-120] (1.5,.2);   
\draw[looseness=200] (14.2,3.31) to [out=60,in=120] (14.2,3.30);
\end{tikzpicture}}

\caption{Step-by-step construction of the gate diagram for $\gx$ for the three-qubit example circuit described in the text. 
\subfig{G1} The gate diagram for $G_{1}$.
\subfig{G2} Add edges in all rows except the first to obtain the gate diagram for $G_{2}$.
\subfig{G3} Add edges in the first row to obtain the gate diagram for $G_{3}$.
\subfig{GX} Add self-loops to the boundary gadgets to obtain the gate diagram for $\gx$ (the diagram for $G_{4}$ in this case differs from \subfig{GX} by removing the self-loop in column $5$; this diagram is not shown).
\label{fig:step-by-step}
}
\end{figure}

We made some choices in designing this circuit-to-gate graph mapping that may seem arbitrary (e.g., we chose to place boundary gadgets in each row except the first). We have tried to achieve a balance between simplicity of description and ease of analysis, but we expect that other choices could be made to work.

\subsubsection{Notation for $\gx$}

We now introduce some notation that allows us to easily refer to a subset $\mathcal{L}$ of the diagram elements in the gate diagram for $\gx$.

Recall from \sec{Gadgets} that each two-qubit gate gadget and each boundary gadget is composed of $32$ diagram elements. This can be seen by looking at \fig{GVucnot} and \fig{GVbdy} and noting (from \fig{W_gadget}) that each move-together gadget comprises 6 diagram elements.

For each of the two-qubit gate gadgets in the gate diagram for $\gx$, we focus our attention on the four diagram elements labeled $1$--$4$ in \fig{GVucnot}. In total there are $4M$ such diagram elements in the gate diagram for $\gx$: in each column $j\in\{1,\ldots,M\}$ there are two in row $1$ and two in row $s(j)$. When $U_j\in\{\CNOT_{1s(j)},\CNOT_{1s(j)}\left(H\otimes\id\right),\CNOT_{1s(j)}\left(HT\otimes\id\right)\}$ the diagram elements labeled $1,2$ are in row $1$ and those labeled $3,4$ are in row $s(j)$; when $U_j=\CNOT_{s(j) 1}$ those labeled $1,2$ are in row $s(j)$ and those labeled $3,4$ are in row $1$. We denote these diagram elements by triples $(i,j,d)$. Here $i$ and $j$ indicate (respectively) the row and column of the grid in which the diagram element is found, and $d$ indicates whether it is the leftmost ($d=0$) or rightmost ($d=1$) diagram element in this row and column. We define 
\begin{equation}
\mathcal{L}_{\mathrm{gates}}=\left\{ \left(i,j,d\right)\colon i\in\{1,s(j)\},\; j\in[M],\; d\in\{0,1\}\right\} \label{eq:L_gates}
\end{equation}
to be the set of all such diagram elements.

For example, in \fig{step-by-step} the first gate is 
\[
  U_1=\CNOT_{13},
\]
so the gadget from \fig{GVucnot} (with $\tilde U=1$) appears between rows $1$ and $3$ in the first column. The diagram elements labeled $1,2,3,4$ from \fig{GVucnot} are denoted by $(1,1,0), (1,1,1), (3,1,0), (3,1,1)$, respectively. The second gate in \fig{step-by-step} is $U_2=\CNOT_{21}$, so the gadget from \fig{GVcnot} (with $\tilde U=1$) appears between rows $2$ and $1$; in this case the diagram elements labeled $1,2,3,4$ in \fig{GVucnot} are denoted by $(2,2,0),(2,2,1),(1,2,0),(1,2,1)$, respectively.

We also define notation for the boundary gadgets in $\gx$. For each boundary gadget. we focus on a single diagram element, labeled $4$ in \fig{GVbdy}. For the left hand-side and right-hand side boundary gadgets, respectively, we denote these diagram elements as
\begin{align}
 & \mathcal{L}_{\inn}=\left\{ (i,0,1)\colon i\in\{2,\ldots,n\}\right\} \label{eq:L_in}\\
 & \mathcal{L}_{\out}=\left\{ (i,M+1,0)\colon i\in\{2,\ldots,n\}\right\} .\label{eq:L_out}
\end{align}
 
\begin{definition}
\label{defn:scriptL}Let $\mathcal{L}$ be the set of diagram
elements
\[
\mathcal{L}=\mathcal{L}_{\inn}\cup\mathcal{L}_{\mathrm{gates}}\cup\mathcal{L}_{\out}
\]
where $\mathcal{L}_{\inn}$, $\mathcal{L}_{\mathrm{gates}}$, and $\mathcal{L}_{\out}$ are given by equations \eq{L_in}, \eq{L_gates}, and \eq{L_out}, respectively.
\end{definition}

Finally, it is convenient to define a function $F$ that describes horizontal movement within the rows of the gate diagram for $\gx$. The function $F$ takes as input a two-qubit gate $j\in[M]$, a qubit $i\in\{2,\ldots,n\}$, and a single bit and outputs a diagram element from the set $\mathcal{L}$. If the bit is $0$ then $F$ outputs the diagram element in row $i$ that appears in a column $0\leq k<j$ with $k$ maximal (i.e., the closest diagram element in row $i$ to the left of column $j$):
\begin{equation}
F(i,j,0)=\begin{cases}
(i,k,1) & \text{ where }1\leq k<j\text{ is the largest }k\text{ such that }s(k)=i\text{, if it exists}\\
(i,0,1) & \text{ otherwise.}
\end{cases}\label{eq:F_bit0}
\end{equation}
On the other hand, if the bit is $1$, then $F$ outputs the diagram element in row $i$ that appears in a column $j<k\leq M+1$ with $k$ minimal (i.e., the closest diagram element in row $i$ to the right of column $j$). 
\begin{equation}
F(i,j,1)=\begin{cases}
(i,k,0) & \text{ where }j<k\leq M\text{ is the smallest }k\text{ such that }s(k)=j\text{, if it exists}\\
(i,M+1,0) & \text{ otherwise}.
\end{cases}\label{eq:F_bit1}
\end{equation}

\subsection{The occupancy constraints graph} \label{sec:The-occupancy-constraints}

In this Section we define an occupancy constraints graph $\gxoc$. Along with $\gx$ and the number of particles $n$, this determines a subspace $\mathcal{I}(\gx,\gxoc,n)\subset\mathcal{Z}_{n}(\gx)$ through equation \eq{occup_space_defn}. We will see in \sec{Proof-of-Theorem} how low-energy states of the Bose-Hubbard model that live entirely within this subspace encode computations corresponding to the quantum circuit $\mathcal{C}_{X}$. This fact is used in the proof of \thm{main_thm_with_occ_constraints}, which shows that the smallest eigenvalue $\lambda_{n}^{1}(\gx,\gxoc)$ of
\[
H(\gx,\gxoc,n)=H(\gx,n)\big|_{\mathcal{I}(\gx,\gxoc,n)}
\]
is related to the maximum acceptance probability of the circuit.

We encode quantum data in the locations of $n$ particles in the graph $\gx$ as follows. Each particle encodes one qubit and is located in one row of the graph $\gx$. Since all two-qubit gates in $\mathcal{C}_{X}$ involve the first qubit, the location of the particle in the first row determines how far along the computation has proceeded. We design the occupancy constraints graph to ensure that low-energy states of $H(\gx,\gxoc,n)$ have exactly one particle in each row (since there are $n$ particles and $n$ rows), and so that the particles in rows $2,\ldots,n$ are not too far behind or ahead of the particle in the first row. To avoid confusion, we emphasize that not \emph{all} states in the subspace $\mathcal{I}(\gx,\gxoc,n)$ have the desired properties---for example, there are states in this subspace with more than one particle in a given row. We see in the next Section that states with low energy for $H(\gx,n)$ that also satisfy the occupancy constraints (i.e., low-energy states of $H(\gx,\gxoc,n)$) have the desired properties.

We now define $\gxoc$, which is a simple graph with a vertex for each diagram element in $\gx$. Each edge in $\gxoc$ places a constraint on the locations of particles in $\gx$. The graph $\gxoc$ only has edges between diagram elements in the set $\mathcal{L}$ from \defn{scriptL}; we define the edge set $E(\gxoc)$ by specifying pairs of diagram elements $L_{1},L_{2}\in\mathcal{L}$. We also indicate (in bold) the reason for choosing the constraints.
\begin{enumerate}
\item \textbf{No two particles in the same row.} For each $i\in[n]$ we add constraints between diagram elements $(i,j,c)\in\mathcal{L}$ and $(i,k,d)\in\mathcal{L}$ in row $i$ but in different columns, i.e.,
\begin{equation}
\{\left(i,j,c\right),\left(i,k,d\right)\}\in E(\gxoc)\text{ whenever }j\neq k.\label{eq:occ_constraints_type1}
\end{equation}
\item \textbf{Synchronization with the particle in the first row.} For each $j\in[M]$ we add constraints between row $1$ and row $s(j)$:
\[
\{(1,j,c),(s(j),k,d)\}\in E(\gxoc)\text{ whenever }k\neq j\text{ and }(s(j),k,d)\neq F(s(j),j,c).
\]
For each $j\in[M]$ we also add constraints between row $1$ and rows $i\in[n]\setminus\{1,s(j)\}$:
\[
\{(1,j,c),(i,k,d)\}\in E(\gxoc)\text{ whenever }(i,k,d)\notin\{F(i,j,0),F(i,j,1)\}.
\]
\end{enumerate}

\section{\texorpdfstring{Proof of \thm{main_thm_with_occ_constraints}}{Proof of Theorem~\ref{thm:main_thm_with_occ_constraints}}}
\label{sec:Proof-of-Theorem}

We write $\gamma(H)$ for the smallest nonzero eigenvalue of a positive semidefinite matrix $H$.

\subsection{Strategy and outline of the proof}

\thm{main_thm_with_occ_constraints} bounds the smallest eigenvalue $\lambda_n^{1}(\gx,\gxoc)$ of $H(\gx,\gxoc,n)$. To prove the Theorem, we investigate a sequence of Hamiltonians starting with $H(G_{1},n)$ and $H(G_{1},\gxoc,n)$ and then work our way up to the Hamiltonian $H(\gx,\gxoc,n)$ by adding positive semidefinite terms.

For each Hamiltonian we consider, we solve for the nullspace and the smallest nonzero eigenvalue. To go from one Hamiltonian to the next, we use the following ``\npl,'' which was used (implicitly) in reference \cite{MLM99}. The Lemma bounds the smallest nonzero eigenvalue $\gamma(H_A+H_B)$ of a sum of positive semidefinite Hamiltonians $H_A$ and $H_B$ using knowledge of the smallest nonzero eigenvalue $\gamma(H_A)$ of $H_A$ and the smallest nonzero eigenvalue $\gamma(H_B|_S)$ of the restriction of $H_B$ to the nullspace $S$ of $H_A$.

\begin{restatable}[\textbf{Nullspace Projection Lemma }\cite{MLM99}]{lemma}{NPL}
\label{lem:npl}Let $H_A, H_B\geq0$. Suppose the nullspace $S$ of $H_A$ is non-empty and 
\[
\gamma(H_{B}|_{S})\geq c > 0\qquad\text{and}\qquad\gamma(H_{A})\geq d >0.
\]
Then 
\begin{equation}
\gamma(H_{A}+H_{B})\geq\frac{cd}{c+d+\left\Vert H_{B}\right\Vert }.\label{eq:lemma_lower_bnd}
\end{equation}
\end{restatable}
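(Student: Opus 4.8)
The plan is to prove the bound from the variational characterization $\gamma(H)=\min\{\langle\psi|H|\psi\rangle:\|\psi\|=1,\ \psi\perp\ker H\}$. First I would record that, since $H_A,H_B\ge 0$, the identity $\langle\psi|(H_A+H_B)|\psi\rangle=0$ holds iff $H_A|\psi\rangle=H_B|\psi\rangle=0$, so $\ker(H_A+H_B)=S\cap\ker H_B$. Thus it suffices to show $\langle\psi|(H_A+H_B)|\psi\rangle\ge \frac{cd}{c+d+\|H_B\|}$ for every unit vector $|\psi\rangle$ orthogonal to $S\cap\ker H_B$. Fix such a $|\psi\rangle$ and write $|\psi\rangle=|u\rangle+|v\rangle$ with $|u\rangle=P_S|\psi\rangle\in S$, $|v\rangle=(\id-P_S)|\psi\rangle\in S^\perp$, and set $p=\|u\|^2$, $q=\|v\|^2$, so $p+q=1$.

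Next I would collect three elementary estimates. (i) Since $H_A$ preserves both $S$ and $S^\perp$ and annihilates $S$, the cross terms vanish and $\langle\psi|H_A|\psi\rangle=\langle v|H_A|v\rangle\ge d q$, using $\gamma(H_A)\ge d$. (ii) Here the "nullspace projection" structure enters: $|u\rangle$ is orthogonal to $S\cap\ker H_B=\ker(H_B|_S)$, because for any $|w\rangle\in S\cap\ker H_B$ one has $0=\langle\psi|w\rangle=\langle u|w\rangle+\langle v|w\rangle=\langle u|w\rangle$ (as $|v\rangle\perp S$); consequently $\langle u|H_B|u\rangle=\langle u|(H_B|_S)|u\rangle\ge c\,p$ by $\gamma(H_B|_S)\ge c$. (iii) Writing $a=\|\sqrt{H_B}|u\rangle\|\ge\sqrt{cp}$ and $b=\|\sqrt{H_B}|v\rangle\|\le\sqrt{\|H_B\|\,q}$, the reverse triangle inequality gives $\langle\psi|H_B|\psi\rangle=\|\sqrt{H_B}(|u\rangle+|v\rangle)\|^2\ge(a-b)^2$, and hence $\langle\psi|H_B|\psi\rangle\ge(\sqrt{cp}-\sqrt{\|H_B\|\,q})^2$ whenever $\sqrt{cp}\ge\sqrt{\|H_B\|\,q}$ (otherwise I would just use $\langle\psi|H_B|\psi\rangle\ge0$).

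This reduces the lemma to a short optimization in $p,q$. Put $\beta=\|H_B\|$ and $\sigma=c+d+\beta$. If $\sqrt{cp}\le\sqrt{\beta q}$, equivalently $q\ge c/(c+\beta)$, then $\langle\psi|(H_A+H_B)|\psi\rangle\ge dq\ge\frac{cd}{c+\beta}\ge\frac{cd}{\sigma}$. Otherwise $\langle\psi|(H_A+H_B)|\psi\rangle\ge dq+(\sqrt{cp}-\sqrt{\beta q})^2$, and substituting $s=\sqrt q$, $t=\sqrt p$ (so $s^2+t^2=1$) the right-hand side equals the quadratic form $(d+\beta)s^2-2\sqrt{c\beta}\,st+ct^2$. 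Its minimum over the unit circle is the least eigenvalue of $\left(\begin{smallmatrix} d+\beta & -\sqrt{c\beta}\\ -\sqrt{c\beta} & c\end{smallmatrix}\right)$, which has trace $\sigma$ and determinant $(d+\beta)c-c\beta=cd$, hence equals $\tfrac12\big(\sigma-\sqrt{\sigma^2-4cd}\big)=\frac{2cd}{\sigma+\sqrt{\sigma^2-4cd}}\ge\frac{cd}{\sigma}$, using $\sqrt{\sigma^2-4cd}\le\sigma$ (and noting $\sigma^2-4cd\ge(c+d)^2-4cd=(c-d)^2\ge0$). Combining the two cases yields \eq{lemma_lower_bnd}.

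Degenerate situations — $S$ being the whole space, or $S\subseteq\ker H_B$, or $|u\rangle=0$, or $|v\rangle=0$ — collapse immediately to one of the two cases above and need only a remark. I expect the one genuinely delicate point to be controlling the sign-indefinite coupling of $H_B$ between $|u\rangle$ and $|v\rangle$; the reverse triangle inequality for $\sqrt{H_B}$ together with the case split on $\sqrt{cp}$ versus $\sqrt{\beta q}$ is exactly what handles it, and the remainder is bookkeeping and a $2\times2$ eigenvalue computation.
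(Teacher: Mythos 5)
Your proof is correct, and while it ultimately reduces to the same $2\times 2$ eigenvalue computation as the paper's proof (a matrix with trace $\sigma=c+d+\|H_B\|$ and determinant $cd$), the route is genuinely different. The paper fixes a state $|\psi\rangle$ achieving the minimum, writes $|\psi\rangle=\alpha|a\rangle+\beta|a^\perp\rangle$, and then directly computes the $2\times 2$ matrix of $H_A+H_B$ restricted to $\spn\{|a\rangle,|a^\perp\rangle\}$, controlling the off-diagonal entry $v=\langle a^\perp|H_B|a\rangle$ via the positive-semidefiniteness of the principal minor ($|v|^2\le wz$). You instead bound $\langle\psi|H_A|\psi\rangle$ and $\langle\psi|H_B|\psi\rangle$ \emph{separately}, using a reverse-triangle-inequality estimate $\langle\psi|H_B|\psi\rangle\ge(\|\sqrt{H_B}u\|-\|\sqrt{H_B}v\|)^2$ for the cross term, and then minimize an explicit quadratic form in $(\sqrt{p},\sqrt{q})$. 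These are the same underlying Cauchy--Schwarz inequality for the $H_B$-inner product, just packaged differently; your approach feels more elementary but buys that at the cost of a case split on $\sqrt{cp}$ versus $\sqrt{\|H_B\|q}$ (where the paper splits only on $\Pi_S|\psi\rangle=0$ or not). One small presentational point: when you minimize the quadratic form $(d+\beta)s^2-2\sqrt{c\beta}st+ct^2$ over the unit circle, the variables $s,t$ are constrained to be nonnegative, but since the off-diagonal entry is nonpositive the minimizing eigenvector has components of equal sign, so the minimum over the quarter circle agrees with the full eigenvalue; this is worth a sentence, though even without it the full-circle minimum is a valid lower bound for what you need.
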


We prove the Lemma in \sec{Proof-of-Lemma_MLM}. When we apply this Lemma, we are usually interested in an asymptotic limit where $c,d\ll\left\Vert H_{B}\right\Vert $ and the right-hand side of \eq{lemma_lower_bnd} is $\Omega(\frac{cd}{\Vert H_{B}\Vert })$.

Our proof strategy, using repeated applications of the \npl, is analogous to that of reference \cite{KKR04}, where the so-called Projection Lemma was used similarly. Our technique has the advantage of not requiring the terms we add to our Hamiltonian to have ``unphysical'' problem-size dependent coefficients (it also has this advantage over the method of perturbative gadgets \cite{KKR04,JF08}). This allows us to prove results about the ``physically realistic'' Bose-Hubbard Hamiltonian. A similar technique based on Kitaev's Geometric Lemma was used in reference \cite{GN13} (however, that method is slightly more computation intensive, requiring a lower bound on $\gamma(H_{B})$ as well as bounds on $\gamma(H_{A})$ and $\gamma(H_{B}|_{S})$).

\subsubsection{Adjacency matrices of the gate graphs}
\label{sec:The-adjacency-matrix}

We begin by discussing the graphs
\[
G_{1},G_{2},G_{3},G_{4},\gx
\]
(as defined in \sec{From-circuits-to}; see \fig{step-by-step}) in more detail and deriving some properties of their adjacency matrices.

The graph $G_{1}$ has a component for each of the two-qubit gates $j \in [M]$, for each of the boundary gadgets $i=2,\ldots,n$ in column $0$, and for each of the boundary gadgets $i=2,\ldots,n$ in column $M+1$. In other words 
\begin{equation}
G_{1}=\underbrace{\left(\bigcup_{i=2}^{n}G_{\bnd}\right)}_{\text{left boundary}}\cup\underbrace{\left(\bigcup_{j=1}^{M}G_{U_j}\right)}_{\text{two-qubit gates}}\cup\underbrace{\left(\bigcup_{i=2}^{n}G_{\bnd}\right)}_{\text{right boundary}}.\label{eq:G_alpha}
\end{equation}
We use our knowledge of the adjacency matrices of the components $G_\bnd$ and $G_{U_j}$ to understand the ground space of $A(G_1)$.  Recall (from \sec{Gadgets}) that the smallest eigenvalue of $A(G_{U_j})$ is 
\[
e_{1}=-1-3\sqrt{2}
\]
(with degeneracy $16)$ which is also the smallest eigenvalue of $A(G_{\bnd})$ (with degeneracy $4$). For each diagram element $L\in\mathcal{L}$ and pair of bits $z,a\in\{0,1\}$ there is an eigenstate $|\rho_{z,a}^{L}\rangle$ of $A(G_{1})$ with this minimal eigenvalue $e_{1}$. In total we get sixteen eigenstates
\[
|\rho_{z,a}^{(1,j,0)}\rangle,|\rho_{z,a}^{(1,j,1)}\rangle,|\rho_{z,a}^{(s(j),j,0)}\rangle,|\rho_{z,a}^{(s(j),j,1)}\rangle,\quad z,a\in\{0,1\}
\]
for each two-qubit gate $j\in[M]$, four eigenstates 
\[
|\rho_{z,a}^{(i,0,1)}\rangle,\quad z,a\in\{0,1\}
\]
for each boundary gadget $i\in\{2,\ldots,n\}$ in column $0$, and four eigenstates 
\[
|\rho_{z,a}^{(i,M+1,0)}\rangle,\quad z,a\in\{0,1\}
\]
for each boundary gadget $i\in\{2,\ldots,n\}$ in column $M+1$. The set 
\[
\left\{ |\rho_{z,a}^{L}\rangle\colon z,a\in\{0,1\},\; L\in\mathcal{L}\right\} 
\]
is an orthonormal basis for the ground space of $A(G_{1})$. 

We write the adjacency matrices of $G_{2}$, $G_{3}$, $G_{4}$, and $\gx$ as 
\begin{align*}
  A(G_{2}) &= A(G_{1})+h_{1} & 
  A(G_{4}) &= A(G_{3})+\sum_{i=n_{\inn}+1}^{n}h_{\inn,i} \\
  A(G_{3}) &= A(G_{2})+h_{2} &
  A(\gx)   &= A(G_{4})+h_{\out}.
\end{align*}
From step 3 of the construction of the gate diagram in \sec{The-gate-graph}, we see that $h_{1}$ and $h_{2}$ are both sums of terms of the form 
\[
  \left(|q,z,t\rangle+|q^{\prime},z,t^{\prime}\rangle\right)
  \left(\langle q,z,t|+\langle q^{\prime},z,t^{\prime}|\right)\otimes\id_{j},
\]
where $h_1$ contains a term for each edge in rows $2,\ldots, n$ and $h_2$ contains a term for each of the $2(M-1)$ edges in the first row. The operators
\begin{align}
  h_{\mathrm{in},i} &= |(i,0,1),1,7\rangle\langle(i,0,1),1,7|\otimes\id &
  h_{\out} &= |(2,M+1,0),0,5\rangle\langle(2,M+1,0),0,5|\otimes\id
\label{eq:hin_i}
\end{align}
correspond to the self-loops added in the gate diagram in step 4 of \sec{The-gate-graph}. 

We prove that $G_1$, $G_2$, $G_3$, $G_4$, and $\gx$ are $e_1$-gate graphs.

\begin{lemma}
\label{lem:The-smallest-eigenvalues}The smallest eigenvalues of $G_{1},G_{2},G_{3},G_{4}$
and $\gx$ are 
\[
\mu(G_{1})=\mu(G_{2})=\mu(G_{3})=\mu(G_{4})=\mu(\gx)=e_{1}.
\]
\end{lemma}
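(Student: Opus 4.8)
The plan is to prove $\mu(G)\ge e_1$ and $\mu(G)\le e_1$ for each $G\in\{G_1,G_2,G_3,G_4,\gx\}$. The lower bound is immediate from \eq{adj_gate_graph}: every one of these is a gate graph, so $A(G)=\id_q\otimes A(g_0)+h_{\mathcal{S}}+h_{\mathcal{E}}$ with $h_{\mathcal{S}},h_{\mathcal{E}}\ge 0$, and $A(g_0)$ has smallest eigenvalue $e_1$. For the upper bound I would exhibit a single-particle $e_1$-eigenvector, and the key structural point is that it suffices to do this once, for $\gx$. Indeed, $\gx$ is obtained from the disjoint union $G_1$ of the gadget gate graphs by adding the row edges collected in $h_1,h_2$ and the self-loops $h_{\inn,i},h_{\out}$ (see the step-by-step construction in \sec{The-gate-graph}), so the self-loop and edge terms of each of $G_1,\dots,G_4$ form a sub-collection of those of $\gx$. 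Hence any state $|\Gamma\rangle$ that is an $e_1$-eigenvector of $\id_q\otimes A(g_0)$ and is annihilated by every individual self-loop term and every individual edge term of $\gx$ — equivalently, satisfies \eq{Gamma_disc}, \eq{h_s_0}, \eq{h_e_0} for $\gx$ — is automatically annihilated by the corresponding (fewer) terms of each $G_i$, and is therefore an $e_1$-eigenvector of all five adjacency matrices.

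Next I would construct such a $|\Gamma\rangle$ as a ``threaded'' single-particle state running along one row of the grid used to lay out $\gx$. Fixing a row $i^{\ast}\in\{2,\dots,n\}$ and a bit $z_0\in\{0,1\}$ (to be chosen below), I let $|\Gamma\rangle$ be supported on the chain of gadgets occupying row $i^{\ast}$ — the column-$0$ boundary gadget, each gate gadget $G_{U_j}$ with $s(j)=i^{\ast}$, and the column-$(M+1)$ boundary gadget — and on each of these I take a scalar multiple of the $e_1$-eigenvector carrying encoded value $z_0$ in the channel that lies in row $i^{\ast}$: the state $|\rho^{\bnd}_{z_0,0}\rangle$ of \lem{boundary_lemma} on the boundary gadgets and the appropriate $|\rho^{i,U_j}_{z_0,0}\rangle$ of \lem{2qub_gate} on the gate gadgets (with $i\in\{3,4\}$ or $i\in\{1,2\}$ according to whether row $i^{\ast}$ carries the target or the control qubit of $U_j$); $|\Gamma\rangle$ vanishes on every other gadget. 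Since $G_1$ is a disjoint union of these $e_1$-gate graphs, $|\Gamma\rangle$ automatically satisfies \eq{Gamma_disc}, and I would fix the scalar multipliers along the chain, up to an overall constant, by requiring that the row-$i^{\ast}$ edges in $h_1$ annihilate $|\Gamma\rangle$. Exactly as in \ex{As-an-example}, this is a one-dimensional linear recursion: it amounts to matching, up to a sign, the amplitudes of consecutive states $|\rho^{\cdot}_{z_0,0}\rangle$ on their shared output/input nodes, which works because in the single-particle sector each gadget merely transmits the encoded qubit along its channel (the two-qubit unitary only enters the two-particle states \eq{twopartstate_2}). Because $|\Gamma\rangle$ is supported only in row $i^{\ast}$, it is trivially annihilated by the remaining edges of $h_1$ and by all of $h_2$ (which lives in row $1$), and the internal self-loops of the boundary gadgets are already respected by the $|\rho^{\bnd}\rangle$ states by \lem{boundary_lemma}.

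The final step is to choose $i^{\ast}$ and $z_0$ so that the self-loops added in step 4 of \sec{The-gate-graph} also annihilate $|\Gamma\rangle$. From \eq{hin_i}, each loop $h_{\inn,i}$ sits on a vertex of its column-$0$ boundary gadget on which $|\rho^{\bnd}_{1,a}\rangle$ has support but $|\rho^{\bnd}_{0,a}\rangle$ does not, while $h_{\out}$ sits on a vertex of the column-$(M+1)$ boundary gadget of row $2$ on which $|\rho^{\bnd}_{0,a}\rangle$ has support but $|\rho^{\bnd}_{1,a}\rangle$ does not. So if $n_{\inn}\ge 2$ I would take $i^{\ast}=2$ (an input row, hence with no $h_{\inn,\cdot}$ loop at its column-$0$ boundary gadget) together with $z_0=1$, making $h_{\out}$ annihilate $|\Gamma\rangle$; if $n_{\inn}=1$, then $n\ge 4$ supplies an ancilla row $i^{\ast}=3\ne 2$, whose column-$(M+1)$ boundary gadget carries no self-loop, and I would take $z_0=0$, making $h_{\inn,3}$ annihilate $|\Gamma\rangle$. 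In either case $|\Gamma\rangle$ is annihilated by $h_{\mathcal{S}}$ and by $h_{\mathcal{E}}$, so $A(\gx)|\Gamma\rangle=e_1|\Gamma\rangle$; combined with the first paragraph this gives $\mu(G_1)=\mu(G_2)=\mu(G_3)=\mu(G_4)=\mu(\gx)=e_1$.

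I expect the only real work to be the two verifications buried in the last two paragraphs: that the recursion for the multipliers along the row is consistent (so a nonzero threaded state actually exists), and that some choice of $(i^{\ast},z_0)$ deactivates every step-4 self-loop. Neither introduces an idea beyond those already used to show the individual gadgets are $e_1$-gate graphs in \sec{Gadgets}, but both require carefully tracking which vertices the $|\rho^{\cdot}\rangle$ states are supported on.
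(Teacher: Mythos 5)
Your overall strategy — establish $\mu(G)\ge e_1$ from the gate-graph decomposition and then exhibit a single-particle $e_1$-eigenvector annihilated by every added edge and self-loop — coincides with the paper's, and your lower-bound argument is identical. But for the upper bound you construct something far more elaborate than what the paper actually uses, and along the way you rely on a picture of the single-particle states that is a bit off. The paper's proof is essentially one line: it takes $|\varrho\rangle=|\rho_{0,0}^{(1,1,0)}\rangle$, the $e_1$-eigenstate of $A(G_1)$ localized on the input-side diagram element of the first two-qubit gadget in row $1$, column $1$. Because there is no boundary gadget in row $1$, that diagram element's nodes receive no row-connecting edge, and the state trivially has zero overlap with every vertex touched by $h_1$, $h_2$, $h_{\inn,i}$, or $h_{\out}$. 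No recursion, no case analysis on $(i^\ast,z_0)$, no boundary gadgets at all.

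Your choice to thread along a row $i^\ast\ge 2$ forces you into exactly the complications the paper avoids: both ends of your chain are boundary gadgets, so you must track the step-4 self-loops, which is the sole reason you need the $(i^\ast,z_0)$ case analysis. Moreover, the statement that ``in the single-particle sector each gadget merely transmits the encoded qubit along its channel'' is not right as stated: by \lem{2qub_gate}, each $|\rho^{i,U_j}_{z,a}\rangle$ is supported entirely on either the input-side region ($i\in\{1,3\}$) or the output-side region ($i\in\{2,4\}$) of the gadget, never both. So ``a scalar multiple of the $e_1$-eigenvector'' on a gate gadget cannot carry your thread from that gadget's input to its output; you would need a genuine superposition $\alpha_j|\rho^{\mathrm{in}}_{z_0,0}\rangle+\beta_j|\rho^{\mathrm{out}}_{z_0,0}\rangle$ on each gate gadget. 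With that correction the recursion you invoke does close (each interface contributes one nontrivial constraint in $z=z_0$, leaving a nontrivial solution space), but this is exactly the ``real work'' you flag as unverified. You can also shortcut it entirely: a state supported only on $\mathrm{Bnd}_L$ and the first gate gadget in row $i^\ast$ — namely $c_0|\rho^{\bnd}_{z_0,0}\rangle + \alpha_1|\rho^{\mathrm{in},U_{j_1}}_{z_0,0}\rangle$ with $c_0,\alpha_1$ fixed by the one live edge at that interface and all other gadgets carrying zero amplitude — already satisfies every edge and self-loop constraint, and then $h_{\out}$ is moot because the state never reaches the right boundary. Even so, the paper's observation that row $1$ has no boundary gadget at all is the genuinely slick step, and it is the one thing your construction misses.
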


\begin{proof}
We showed in the above discussion that $\mu(G_{1})=e_{1}$. The adjacency matrices of $G_{2}$, $G_{3}$, $G_{4}$, and $\gx$ are obtained from that of $G_{1}$ by adding positive semidefinite terms ($h_{1}$, $h_{2}$, $h_{\inn,i}$, and $h_{\out}$ are all positive semidefinite). It therefore suffices to exhibit an eigenstate $|\varrho\rangle$ of $A(G_{1})$ with
\[
  h_{1}|\varrho\rangle
  =h_{2}|\varrho\rangle
  =h_{\inn,i}|\varrho\rangle
  =h_{\out}|\varrho\rangle
  =0
\]
(for each $i\in\{n_{\inn}+1,\ldots,n\}$). There are many states $|\varrho\rangle$ satisfying these conditions; one example is
\[
|\varrho\rangle = |\rho_{0,0}^{(1,1,0)}\rangle
\]
which is supported on vertices where $h_{1}$, $h_{2}$, $h_{\inn,i}$, and $h_{\out}$ have no support.
\end{proof}

\subsubsection{Building up the Hamiltonian}
\label{sec:Building-up-the}

We now outline the sequence of Hamiltonians considered in the following Sections and describe the relationships between them. As a first step, in \sec{Configurations-and-the} we exhibit a basis $\mathcal{B}_{n}$ for the nullspace of $H(G_{1},n)$ and we prove that its smallest nonzero eigenvalue is lower bounded by a positive constant. We then discuss the restriction
\begin{equation}
  H(G_{1},\gxoc,n)
  = H(G_{1},n)\big|_{\mathcal{I}\left(G_{1},\gxoc,n\right)}
\label{eq:restriction}
\end{equation}
in \sec{Legal-configurations-and}, where we prove that a subset $\mathcal{B}_{\legal}\subset\mathcal{B}_{n}$ is a basis for the nullspace of \eq{restriction}, and that its smallest nonzero eigenvalue is also lower bounded by a positive constant.

For the remainder of the proof we use the \npl (\lem{npl}) four times, using the decompositions 
\begin{align}
H(G_{2},\gxoc,n) & =H(G_{1},\gxoc,n)+H_{1}\big|_{\mathcal{I}(G_{2},\gxoc,n)}\label{eq:H_G2}\\
H(G_{3},\gxoc,n) & =H(G_{2},\gxoc,n)+H_{2}\big|_{\mathcal{I}(G_{3},\gxoc,n)}\label{eq:H_G3}\\
H(G_{4},\gxoc,n) & =H(G_{3},\gxoc,n)+\sum_{i=n_{\inn}+1}^{n}H_{\inn,i}\big|_{\mathcal{I}(G_{4},\gxoc,n)}\label{eq:H_G4}\\
H(\gx,\gxoc,n) & =H(G_{4},\gxoc,n)+H_{\out}\big|_{\mathcal{I}(\gx,\gxoc,n)}\label{eq:H_GC}
\end{align}
where 
\begin{equation*}
H_{1}=\sum_{w=1}^{n}h_{1}^{(w)} \qquad H_{\inn,i}=\sum_{w=1}^{n}h_{\inn,i}^{(w)} \qquad 
H_{2}=\sum_{w=1}^{n}h_{2}^{(w)} \qquad  H_{\out}=\sum_{w=1}^{n}h_{\out}^{(w)}
\end{equation*}
are all positive semidefinite, with $h_{1},h_{2},h_{\inn,i},h_{\out}$ as defined in \sec{The-adjacency-matrix}. Note that in writing equations \eq{H_G2}, \eq{H_G3}, \eq{H_G4}, and \eq{H_GC}, we have used the fact (from \lem{The-smallest-eigenvalues}) that the adjacency matrices of the graphs we consider all have the same smallest eigenvalue $e_{1}$. Also note that
\[ \mathcal{\mathcal{I}}\left(G_{i},\gxoc,n\right)=\mathcal{\mathcal{I}}\left(\gx,\gxoc,n\right) \] for $i\in[4]$ since the gate diagrams for each of the graphs $G_{1},G_{2},G_{3},G_{4}$ and $\gx$ have the same set of diagram elements.

Let $S_{k}$ be the nullspace of $H(G_{k},\gxoc,n)$ for $k=1,2,3,4$. Since these positive semidefinite Hamiltonians are related by adding positive semidefinite terms, their nullspaces satisfy 
\[
S_{4}\subseteq S_{3}\subseteq S_{2}\subseteq S_{1}\subseteq\mathcal{\mathcal{I}}\left(\gx,\gxoc,n\right).
\]
We solve for $S_1=\spn(\mathcal{B}_{\legal})$ in \sec{Legal-configurations-and} and we characterize the spaces $S_2,S_3$, and $S_4$ in \sec{Frustration-free-states} in the course of applying our strategy. 

For example, to use the \npl to lower bound the smallest nonzero eigenvalue of $H(G_{2},\gxoc,n)$, we consider the restriction 
\begin{equation}
  \Big(H_{1}\big|_{\mathcal{I}(G_{2},\gxoc,n)}\Big)\Big|_{S_{1}}
  =H_{1}\big|_{S_{1}}.\label{eq:H1_restriction}
\end{equation}
We also solve for $S_2$, which is equal to the nullspace of \eq{H1_restriction}. To obtain the corresponding lower bounds on the smallest nonzero eigenvalues of $H(G_{k},\gxoc,n)$ for $k=2,3,4$ and $H(\gx,\gxoc,n)$, we consider restrictions 
\[
  H_{2}\big|_{S_{2}},
  \sum_{i=n_{\inn}+1}^{n}H_{\inn,i}\big|_{S_{3}},\quad\text{and}\quad
  H_{\out}\big|_{S_{4}}.
\]
Analyzing these restrictions involves extensive computation of matrix elements. To simplify and organize these computations, we first compute the restrictions of each of these operators to the space $S_{1}$. We present the results of this computation in \sec{Matrix-elements-in}; details of the calculation can be found in \sec{matrix_els_details}. In \sec{Frustration-free-states} we proceed with the remaining computations and apply the \npl three times using equations \eq{H_G2}, \eq{H_G3}, and \eq{H_G4}. Finally, in \sec{Completeness-and-Soundness} we apply the Lemma again using equation \eq{H_GC} and we prove \thm{main_thm_with_occ_constraints}.

\subsection{Configurations}
\label{sec:Configurations-and-the}

In this Section we use \lem{BH_disconnected_graphs} to solve for the nullspace of $H(G_{1},n)$, i.e., the $n$-particle frustration-free states on $G_{1}$. \lem{BH_disconnected_graphs} describes how frustration-free states for $G_1$ are built out of frustration-free states for its components.

To see how this works, consider the example from \fig{G1}. In this example, with $n=3$, we construct a basis for the nullspace of $H(G_{1},3)$ by considering two types of eigenstates. First, there are frustration-free states
\begin{equation}
\Sym(|\rho_{z_{1},a_{1}}^{L_{1}}\rangle|\rho_{z_{2},a_{2}}^{L_{2}}\rangle|\rho_{z_{3},a_{3}}^{L_{3}}\rangle)
\label{eq:twopart_states1}
\end{equation}
where $L_{k}=(i_{k},j_{k},d_{k})\in\mathcal{L}$ belong to different components of $G_{1}$. That is to say, $j_{w}\neq j_{t}$ unless $j_{w}=j_{t}\in\{0,5\}$, in which case $i_{w}\neq i_{t}$ (in this case the particles are located either at the left or right boundary, in different rows of $G_{1}$). There are also frustration-free states where two of the three particles are located in the same two-qubit gadget $J\in[M]$ and one of the particles is located in a diagram element $L_{1}$ from a different component of the graph. These states have the form
\begin{equation}
\Sym(|T_{z_{1},a_{1},z_{2},a_{2}}^{J}\rangle|\rho_{z_{3},a_{3}}^{L_{1}}\rangle)
\label{eq:twopart_states2}
\end{equation}
where 
\begin{equation}
|T_{z_{1},a_{1},z_{2},a_{2}}^{J}\rangle=\frac{1}{\sqrt{2}}|\rho_{z_{1},a_{1}}^{(1,J,0)}\rangle|\rho_{z_{2},a_{2}}^{(s(J),J,0)}\rangle+\frac{1}{\sqrt{2}}\sum_{x_{1},x_{2}\in\{0,1\}}U_{J}(a_{1})_{x_{1}x_{2},z_{1}z_{2}}|\rho_{x_{1},a_{1}}^{(1,J,1)}\rangle|\rho_{x_{2},a_{2}}^{(s(J),J,1)}\rangle\label{eq:T_state}
\end{equation}
and $L_{1}=(i,j,k)\in\mathcal{L}$ satisfies $j\neq J$ (using \lem{2qub_gate} and also the fact that the controlled-not gate is real for the case $U_J = \CNOT_{s(J)1}$).

Each of the states \eq{twopart_states1} and \eq{twopart_states2} is specified by $6$ ``data'' bits $z_{1},z_{2},z_{3},a_{1},a_{2},a_{3}\in\{0,1\}$ and a ``configuration'' indicating where the particles are located in the graph. The configuration is specified either by three diagram elements $L_{1},L_{2},L_{3}\in\mathcal{L}$ from different components of $G_{1}$ or by a two-qubit gate $J\in [M]$ along with a diagram element $L_{1}\in\mathcal{L}$ from a different component of the graph.

We now define the notion of a configuration for general $n$. Informally, we can think of an $n$-particle configuration as a way of placing $n$ particles in the graph $G_{1}$ subject to the following restrictions. We first place each of the $n$ particles in a component of the graph, with the restriction that no boundary gadget may contain more than one particle and no two-qubit gadget may contain more than two particles. For each particle on its own in a component (i.e., in a component with no other particles), we assign one of the diagram elements $L\in\mathcal{L}$ associated to that component. We therefore specify a configuration by a set of two-qubit gadgets $J_{1},\ldots,J_{Y}$ that contain two particles, along with a set of diagram elements $L_{k}\in\mathcal{L}$ that give the locations of the remaining $n-2Y$ particles. We choose to order the $J$s and the $L$s so that each configuration is specified by a unique tuple $(J_{1},\ldots,J_{Y},L_{1},\ldots,L_{n-2Y})$. For concreteness, we use the lexicographic order on diagram elements in the set $\mathcal{L}$: $L_{A}=(i_{A},j_{A},d_{A})$ and $L_{B}=(i_{B},j_{B},d_{B})$ satisfy $L_{A}<L_{B}$ iff either $i_{A}<i_{B}$, or $i_{A}=i_{B}$ and $j_{A}<j_{B}$, or $(i_{A},j_{A})=(i_{B},j_{B})$ and $d_{A}<d_{B}$.

\begin{definition}
[\textbf{Configuration}]\label{defn:configuration}An $n$-particle configuration on the gate graph $G_{1}$ is a tuple 
\[
(J_{1},\ldots,J_{Y},L_{1},\ldots,L_{n-2Y})
\]
with $Y\in\{0,\ldots,\left\lfloor \frac{n}{2}\right\rfloor \}$, ordered integers
\[
1\leq J_{1}<J_{2}<\cdots<J_{Y}\leq M,
\]
and lexicographically ordered diagram elements 
\[
L_{1}<L_{2}<\cdots<L_{n-2Y},\qquad L_{k}=\left(i_{k},j_{k},d_{k}\right)\in\mathcal{L}.
\]
We further require that each $L_{k}$ is from a different component of $G_{1}$, i.e., 
\[
j_{w}=j_{t}\quad\Longrightarrow\quad j_{w}\in\{0,M+1\}\text{ and }i_{w}\neq i_{t},
\]
and we require that $j_{u}\neq J_{v}$ for all $u\in[n-2Y]$ and $v\in[Y]$.
\end{definition}

\newcommand{\configex}[1]{
\raisebox{-0.5\height}{
\begin{tikzpicture}[scale=.4,yscale=.8, every node/.style={scale=0.4}]

\foreach \y in {2,4}{
  \draw[draw=black!20] (0,\y cm) -- (15.7 cm, \y cm);
}
\foreach \x in {2.618,5.236,7.854,10.472,13.09}{
  \draw[draw=black!20] (\x cm, 0cm) -- (\x cm, 6cm);
}
\setcounter{mycount}{1};
\foreach \y in {5,3,1}{
  \node at (-.6,\y) {$i = \arabic{mycount}$\addtocounter{mycount}{1}};
}
\foreach \x in {0, ...,5}{
  \node at ( 2.618*\x + 1.31,6.3) {$j = \x$};
}

\foreach \xscope /\scale in {0.5/1,15.2/-1}{ 
  \foreach \yscope in {1.81,3.81}{
    \begin{scope}[yshift = \yscope cm,xshift = \xscope cm,xscale=\scale]
      \draw[rounded corners = .75mm] (0,0) -- (.6,0) -- (.6,-.5) --
        (1.4,-.5) -- (1.4,-1.618) -- (0,-1.618) -- cycle;
         \end{scope}
  }
}

\setcounter{mycount}{1}
\foreach \ybottom/ \control /\unitary in {.19/1/1,2.19/0/1,.19/1/HT,2.19/1/1}{
  \begin{scope}[xshift = \value{mycount}*2.618 cm + .31cm]
    \begin{scope}[xshift=2cm,xscale=-1]
      \draw[rounded corners = .75mm,fill=white] 
        (0cm,\ybottom cm) -- (2cm, \ybottom cm) -- (2cm,\ybottom cm + 1.118cm) 
          -- (1.2cm, \ybottom cm + 1.118cm) -- (1.2cm, 4.5cm) -- (2cm, 4.5cm) 
          -- (2cm, 5.62 cm) -- (0cm, 5.62cm) -- (0cm, 4.5cm) -- (.6cm, 4.5cm)
          -- (.6cm, \ybottom cm + 1.118cm) -- (0cm, \ybottom cm + 1.118cm)
          -- cycle; 
    \end{scope}
  \end{scope}
  \addtocounter{mycount}{1};
}

#1

\end{tikzpicture}
}}

\begin{figure}
\subfloat[][]{\label{fig:configA}\configex{
\node at (2.618+.31+1.7,5.06) {\huge $\times$};
\node at (2*2.618+.31+.3,2.75) {\huge $\times$};
\node at (3*2.618+.31+.3,.75) {\huge $\times$};
}}
\subfloat[][]{\label{fig:configB}\configex{
\node at (2.618+.31+1.7,.75) {\huge $\times$};
\foreach \yoffset in {.2,-.2}{
  \node at (2*2.618+.31+1.1,4+\yoffset) {\large $\times$};
}
\draw[line width=.1mm] (2*2.618+.31+1.1,4) ellipse (.25 cm and .45 cm);
}}

\subfloat[][]{\label{fig:configC}\configex{
\node at (2.618+.31+1.7,5.06) {\huge $\times$};
\node at (.5+1.1,2.75) {\huge $\times$};
\node at (3*2.618+.31+.3,.75) {\huge $\times$};
}}
\subfloat[][]{\label{fig:configD}\configex{
\node at (.5+1.1,2.75) {\huge $\times$};
\foreach \yoffset in {.2,-.2}{
  \node at (3*2.618+.31+1.1,3+\yoffset) {\large $\times$};
}
\draw[line width=.1mm] (3*2.618+.31+1.1,3) ellipse (.25 cm and .45 cm);
}}

\subfloat[][]{\label{fig:configE}\configex{
\node at (3*2.618+.31+.3,5.06) {\huge $\times$};
\node at (2*2.618+.31+1.7,2.75) {\huge $\times$};
\node at (4*2.618+.31+.3,2.75) {\huge $\times$};
}}
\subfloat[][]{\label{fig:configF}\configex{
\node at (2.618+.31+1.7,5.06) {\huge $\times$};
\node at (2*2.618+.31+.3,2.75) {\huge $\times$};
\node at (5*2.618+.718+.3,.75) {\huge $\times$};
}}
\caption{Diagrammatic depictions of configurations for the example where $G_1$ is the gate graph from \fig{G1}. The Figures show the locations of each of the three particles in the gate graph.  The symbol \protect\tikz[scale=.4, yscale=.8, every node/.style={scale=0.4}]{
\protect \node at (0,0) {\protect \huge $\times$};} indicates a single-particle state and the symbol \protect\tikz[scale=.4, yscale=.8, every node/.style={scale=0.4}]{
\protect \node at (0,.2) {\protect \large $\times$};
\protect \node at (0,-.2) {\protect \large $\times$};
\protect \draw (0,0) ellipse (.25 cm and .45 cm);}
indicates a two-particle state.
\subfig{configA} $((1,1,1),(2,2,0),(3,3,0))$.
\subfig{configB} $(2,(3,1,1))$.
\subfig{configC} $((1,1,1),(2,0,1),(3,3,0)$.
\subfig{configD} $(3,(2,0,1))$.
\subfig{configE} $((1,3,0),(2,2,1),(2,4,0))$.
\subfig{configF} $((1,1,1),(2,2,0),(3,5,0))$.
\label{fig:configs}}
\end{figure}

In \fig{configs} we give some examples of configurations (for the example from \fig{G1} with $n=3$) and we introduce a diagrammatic notation for them.

For any configuration and $n$-bit strings $\vec{z}$ and $\vec{a}$, there is a state in the nullspace of $H(G_{1},n)$, given by
\begin{equation}
\Sym(|T_{z_{1},a_{1},z_{2},a_{2}}^{J_{1}}\rangle\ldots|T_{z_{2Y-1},a_{2Y-1},z_{2Y},a_{2Y}}^{J_{Y}}\rangle|\rho_{z_{2Y+1},a_{2Y+1}}^{L_{1}}\rangle\ldots|\rho_{z_{n},a_{n}}^{L_{n-2Y}}\rangle).\label{eq:config_data_state}
\end{equation}
The ordering in the definition of a configuration ensures that each distinct choice of configuration and $n$-bit strings $\vec{z},\vec{a}$ gives a different state.

\begin{definition}
Let $\mathcal{B}_{n}$ be the set of all states of the form \eq{config_data_state}, where $(J_{1},\ldots,J_{Y},L_{1},\ldots,L_{n-2Y})$ is a configuration and $\vec{z},\vec{a}\in\{0,1\}^{n}$.
\end{definition}

\begin{lemma}
\label{lem:gs_g_alpha}The set $\mathcal{B}_{n}$ is an orthonormal
basis for the nullspace of $H(G_{1},n)$. Furthermore,
\begin{equation}
  \gamma(H(G_{1},n)) \geq \mathcal{K}_0
\label{eq:first_lowerbnd}
\end{equation}
where $\mathcal{K}_0\in (0,1]$ is an absolute constant.
\end{lemma}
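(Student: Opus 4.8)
The plan is to read off both claims from \lem{BH_disconnected_graphs} applied to the decomposition \eq{G_alpha}, combined with the component-level analysis already carried out in \lem{boundary_lemma} and \lem{2qub_gate}. By those two lemmas every component of $G_{1}$ (a copy of $G_{\bnd}$ or one of the gate graphs $G_{U_j}$) has smallest adjacency eigenvalue $e_{1}$, so the hypothesis of \lem{BH_disconnected_graphs} is met and the eigenvalues of $H(G_{1},n)$ are exactly the sums $\sum_{i:N_i\neq0}\lambda^{y_i}_{N_i}(G_i)$ over distributions $\sum_i N_i=n$ of the particles among the components, with corresponding eigenvectors $\Sym\bigl(\bigotimes_{i:N_i\neq0}|\lambda^{y_i}_{N_i}(G_i)\rangle\bigr)$.

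For the first claim I would observe that such an eigenvalue is $0$ precisely when every occupied component carries a frustration-free state. By \lem{boundary_lemma} a boundary component can do so only with $0$ or $1$ particle, its one-particle frustration-free states being the span of the four vectors $|\rho^L_{z,a}\rangle$ attached to its single diagram element $L\in\mathcal{L}$; by \lem{2qub_gate} a two-qubit component can do so only with $0$, $1$, or $2$ particles, the one-particle frustration-free states being spanned by the sixteen $|\rho^L_{z,a}\rangle$ with $L$ ranging over its four elements of $\mathcal{L}$, and the two-particle ones by the $\Sym(|T^{U_i}_{z_1,a_1,z_2,a_2}\rangle)$. Substituting these component states into the eigenvector formula of \lem{BH_disconnected_graphs} and comparing with \defn{configuration} and \eq{config_data_state} shows that the nullspace eigenvectors it produces are exactly the elements of $\mathcal{B}_n$, the ordering conventions in \defn{configuration} ensuring no repetition. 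Orthonormality is a short computation: the component states in play are unit vectors (for $|T^U\rangle$ this uses unitarity of $U$) supported on pairwise disjoint vertex sets, so $\Sym$ of a tensor product of them is again a unit vector, and vectors coming from different configurations or different data strings $\vec z,\vec a$ are orthogonal. This establishes that $\mathcal{B}_n$ is an orthonormal basis for the nullspace of $H(G_{1},n)$.

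For the gap bound, the point is that a \emph{nonzero} eigenvalue $\sum_i\lambda^{y_i}_{N_i}(G_i)$ has at least one strictly positive summand, and every strictly positive summand is at least $\gamma(H(G_i,N_i))$, the smallest nonzero eigenvalue of the corresponding component Hamiltonian. Although $G_{1}$ itself depends on the circuit $\mathcal{C}_X$, only finitely many graphs ever occur as components: the boundary gadget $G_{\bnd}$ and the at most three distinct two-qubit gate graphs $G_{\CNOT_{12}}$, $G_{\CNOT_{12}(H\otimes\id)}$, $G_{\CNOT_{12}(HT\otimes\id)}$ (recall $G_{\CNOT_{21}}=G_{\CNOT_{12}}$). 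For a fixed such graph $G'$ and each small particle number ($N\le1$ for $G_{\bnd}$, $N\le2$ for the $G_U$) the operator $H(G',N)$ is a fixed finite positive semidefinite matrix whose nullspace is a proper subspace, so $\gamma(H(G',N))$ is a fixed positive number; for all larger $N$ the nullspace is empty by \lem{boundary_lemma} and \lem{2qub_gate}, and \lem{increase_part_number} gives $\gamma(H(G',N))=\lambda^1_N(G')\ge\lambda^1_{N_0(G')}(G')>0$ where $N_0(G_{\bnd})=2$ and $N_0(G_U)=3$. Setting $\mathcal{K}_0$ equal to the minimum of $1$ and of this finite list of positive numbers yields a circuit-independent constant in $(0,1]$ with $\gamma(H(G_{1},n))\ge\mathcal{K}_0$, which is \eq{first_lowerbnd}. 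The only real work is the bookkeeping in the previous paragraph — matching the combinatorics of \defn{configuration} to the frustration-free spaces of the components — together with the easy but essential observation that the bound is uniform in $n$ and in the circuit, because the list of component graphs is finite and \lem{increase_part_number} disposes of all large particle numbers at once.
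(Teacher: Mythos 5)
Your argument follows the same route as the paper: decompose $G_1$ into its components via \lem{BH_disconnected_graphs}, identify the component-level frustration-free states using \lem{boundary_lemma} and \lem{2qub_gate} to recover $\mathcal{B}_n$, and then bound the gap by a case analysis on the occupation numbers, handling large particle numbers with \lem{increase_part_number} and obtaining an absolute constant from the finiteness of the gate set. The only cosmetic difference is that you spell out the orthonormality check and phrase the gap bound as ``any nonzero eigenvalue contains a strictly positive summand,'' whereas the paper lists the relevant quantities $\lambda_3^1(G_U)$, $\lambda_2^1(G_\bnd)$, $\gamma(H(G_U,1))$, $\gamma(H(G_U,2))$, $\gamma(H(G_\bnd,1))$ directly; the substance is identical.
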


\begin{proof}
Each component of $G_{1}$ is either a two-qubit gadget or a boundary gadget (see equation \eq{G_alpha}). The single-particle states of $A(G_{1})$ with energy $e_{1}$ are the states $|\rho_{z,a}^{L}\rangle$ for $L\in\mathcal{L}$ and $z,a\in\{0,1\}$, as discussed in \sec{The-adjacency-matrix}. Each of these states has support on only one component of $G_{1}$. In addition, $G_{1}$ has a two-particle frustration-free state for each two-qubit gadget $J\in[M]$ and bits $z,a,x,b$, namely $\Sym(|T_{z,a,x,b}^{J}\rangle)$. Furthermore, no component of $G_{1}$ has any three- (or more) particle frustration-free states. Using these facts and applying \lem{BH_disconnected_graphs}, we see that $\mathcal{B}_{n}$ spans the nullspace of $H(G_{1},n)$.

\lem{BH_disconnected_graphs} also expresses each eigenvalue of $H(G_{1},n)$ as a sum of eigenvalues for its components. We use this fact to obtain the desired lower bound on the smallest nonzero eigenvalue. Our analysis proceeds on a case-by-case basis, depending on the occupation numbers for each component of $G_{1}$ (the values $N_{1},\ldots,N_{k}$ in \lem{BH_disconnected_graphs}).

First consider any set of occupation numbers where some two-qubit gate gadget $J\in[M]$ contains 3 or more particles. By \lem{increase_part_number} and \lem{BH_disconnected_graphs}, any such eigenvalue is at least $\lambda_{3}^{1}(G_{U_J})$, which is a positive constant by \lem{2qub_gate}. Next consider a case where some boundary gadget contains more than one particle. The corresponding eigenvalues are similarly lower bounded by $\lambda_{2}^{1}(G_{\bnd})$, which is also a positive constant by \lem{boundary_lemma}. Finally, consider a set of occupation numbers where each two-qubit gadget contains at most two particles and each boundary gadget contains at most one particle. The smallest eigenvalue with such a set of occupation numbers is zero. The smallest nonzero eigenvalue is either
\[
\gamma(H(G_{U_J},1)),\,\gamma(H(G_{U_J},2))\text{ for some $J\in[M]$, or }\gamma(H(G_{\bnd},1)).
\]
However, these quantities are at least some positive constant since $H(G_{U_J},1)$, $H(G_{U_J},2)$, and $H(G_{\bnd},1)$ are nonzero constant-sized positive semidefinite matrices.

Now combining the lower bounds discussed above and using the fact that, for each $J\in [M]$, the two-qubit gate $U_J$ is chosen from a fixed finite gate set (given in \eq{gate_set_Vr}), we see that $\gamma (H(G_1,n))$ is lower bounded by the positive constant 
\begin{equation}
  \min\{\lambda_{3}^{1}(G_U),
        \lambda_{2}^{1}(G_{\bnd}),
        \gamma(H(G_U,1)),
        \gamma(H(G_U,2)),
        \gamma(H(G_{\bnd},1))\colon
        \text{$U$ is from the gate set \eq{gate_set_Vr}}\}.
\label{eq:minterms}
\end{equation}
The condition $\mathcal{K}_0\leq1$ can be ensured by setting $\mathcal{K}_0$ to be the minimum of $1$ and \eq{minterms}.
\end{proof}

Note that the constant $\mathcal{K}_0$ can in principle be computed using \eq{minterms}: each quantity on the right-hand side can be evaluated by diagonalizing a specific finite-dimensional matrix.

\subsection{Legal configurations}
\label{sec:Legal-configurations-and}

In this section we define a subset of the $n$-particle configurations that we call legal configurations, and we prove that the subset of the basis vectors in $\mathcal{B}_{n}$ that have legal configurations spans the nullspace of $H(G_{1},\gxoc,n).$

We begin by specifying the set of legal configurations. Every legal configuration 
\[
(J_{1},\ldots,J_{Y},L_{1},\ldots,L_{n-2Y})
\]
has $Y\in\{0,1\}$. The legal configurations with $Y=0$ are 
\begin{equation}
((1,j,d_{1}),F(2,j,d_{2}),F(3,j,d_{3}),\ldots,F(n,j,d_{n}))\label{eq:config_c01}
\end{equation}
where $j\in[M]$ and where $\vec{d}=\left(d_{1},\ldots,d_{n}\right)$ satisfies $d_{i}\in\{0,1\}$ and $d_{1}=d_{s(j)}$. (Recall that the function $F$, defined in equations \eq{F_bit0} and \eq{F_bit1}, describes horizontal movement of particles.) The legal configurations with $Y=1$ are 
\begin{equation}
(j,F(2,j,d_{2}),\ldots,F(s(j)-1,j,d_{s(j)-1}),F(s(j)+1,j,d_{s(j)+1}),\ldots,F(n,j,d_{n}))\label{eq:config_2}
\end{equation}
where $j\in\{1,\ldots,M\}$ and $d_{i}\in\{0,1\}$ for $i\in[n]\setminus\{1,s(j)\}$. Although the values $d_{1}$ and $d_{s(j)}$ are not used in equation \eq{config_2}, we choose to set them to 
\[
d_{1}=d_{s(j)}=2
\]
for any legal configuration with $Y=1$. In this way we identify the set of legal configurations with the set of pairs $j,\vec{d}$ with $j\in[M]$ and 
\[
\vec{d}=(d_{1},d_{2},d_{3},\ldots,d_{n})
\]
satisfying 
\begin{align*}
d_{1} & =d_{s(j)}\in\{0,1,2\} \quad \text{and} \quad d_{i}\in\{0,1\} \text{ for } i\notin\{1,s(j)\}.
\end{align*}
The legal configuration is given by equation \eq{config_c01} if $d_{1}=d_{s(j)}\in\{0,1\}$ and equation \eq{config_2} if $d_{1}=d_{s(j)}=2$. 

The examples in \figs{configA}, \subfig{configB}, and \subfig{configC} show legal configurations whereas the examples in \figs{configD}, \subfig{configE}, and \subfig{configF} are illegal.  The legal examples correspond to $j=1$, $\vec d=(1,1,1)$; $j=2$, $\vec d=(2,2,0)$; and $j=1$, $\vec d=(1,0,1)$, respectively.  We now explain why the other examples are illegal. Looking at \eq{config_2}, we see that the configuration $(3,(2,0,1))$ depicted in \fig{configD} is illegal since $F(2,3,0)=(2,2,1) \neq (2,0,1)$ and $F(2,3,1)=(2,4,0) \neq (2,0,1)$. The configuration in \fig{configE} is illegal since there are two particles in the same row. Looking at equation \eq{config_c01}, we see that the configuration $((1,1,1),(2,2,0),(3,5,0))$ in \fig{configF} is illegal since $(3,5,0)\notin \{F(3,1,0),F(3,1,1)\} = \{(3,0,1),(3,3,0)\}$.

We now identify the subset of basis vectors $\mathcal{B}_{\legal}\subset\mathcal{B}_{n}$ that have legal configurations. We write each such basis vector as
\begin{equation}
|j,\vec{d},\vec{z},\vec{a}\rangle=\begin{cases}
\Sym\Big(|\rho_{z_{1},a_{1}}^{(1,j,d_{1})}\rangle\displaystyle\bigotimes_{i=2}^{n}|\rho_{z_{i},a_{i}}^{F(i,j,d_{i})}\rangle\Big) & d_{1}=d_{s(j)}\in\{0,1\}\\
\Sym\bigg(|T_{z_{1},a_{1},z_{s(j)},a_{s(j)}}^{j}\rangle\displaystyle\bigotimes_{\substack{i=2\\i\neq s(j)}}^n |\rho_{z_{i},a_{i}}^{F(i,j,d_{i})}\rangle\bigg) & d_{1}=d_{s(j)}=2
\end{cases}\label{eq:legal_states}
\end{equation}
where $j,\vec{d}$ specifies the legal configuration and $\vec{z,}\vec{a}\in\{0,1\}^{n}$. (Note that the bits in $\vec{z}$ and $\vec{a}$ are ordered slightly differently than in equation \eq{config_data_state}; here the labeling reflects the indices of the encoded qubits). 

\begin{definition}
Let 
\[
\mathcal{B}_{\legal}=\big\{ |j,\vec{d},\vec{z},\vec{a}\rangle\colon j\in [M],\; d_{1}=d_{s(j)}\in\{0,1,2\}\text{ and }\; d_{i}\in\{0,1\}\text{ for }i\notin\{1,s(j)\},\;\vec{z},\vec{a}\in\{0,1\}^{n}\big\} 
\]
 and $\mathcal{B}_{\illegal}=\mathcal{B}_{n}\setminus\mathcal{B}_{\illegal}.$ 
\end{definition}

The basis $\mathcal{B}_{n}=\mathcal{B}_{\legal}\cup\mathcal{B}_{\illegal}$ is convenient when considering the restriction to the subspace $\mathcal{I}(G_{1},\gxoc,n)$. Letting $\Pi_{0}$ be the projector onto $\mathcal{I}(G_{1},\gxoc,n)$, the following Lemma (proven in \sec{Proof-of-Lemma Pi0_restriction}) shows that the restriction 
\begin{equation}
\Pi_{0}\big|_{\spn(\mathcal{B}_{n})}\label{eq:restrictionBn}
\end{equation}
is diagonal in the basis $\mathcal{B}_n$. The Lemma also bounds the diagonal entries.

\begin{restatable}{lemma}{restrictionlemma}
\label{lem:Pi_0_restriction}
Let $\Pi_{0}$ be the projector onto $\mathcal{I}(G_{1},\gxoc,n)$. For any $|j,\vec{d},\vec{z},\vec{a}\rangle\in\mathcal{B}_{\legal}$, we have
\begin{align}
\Pi_{0}|j,\vec{d},\vec{z},\vec{a}\rangle & =|j,\vec{d},\vec{z},\vec{a}\rangle\label{eq:subspace_eqn0}.
\end{align}
Furthermore, for any two distinct basis vectors $|\phi\rangle,|\psi\rangle\in\mathcal{B}_{\illegal}$, we have 
\begin{align}
\langle\phi|\Pi_{0}|\phi\rangle & \leq\frac{255}{256}\label{eq:subspace_eqn1}\\
\langle\phi|\Pi_{0}|\psi\rangle & =0.\label{eq:subspace_eqn2}
\end{align}
\end{restatable}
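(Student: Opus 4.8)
### Proof proposal for \lem{Pi_0_restriction}

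\textbf{Setup and overall approach.}
The plan is to work directly with the explicit basis vectors of $\mathcal{B}_n$ given by \eq{config_data_state}, and to exploit the structure of $\mathcal{I}(G_1,\gxoc,n)$ as given by \eq{occup_space_defn}. Recall that $\mathcal{I}(G_1,\gxoc,n)\subseteq\mathcal{I}(G_1,n)$ is spanned by symmetrized products of single-particle ground states $|\psi_{z,a}^{q}\rangle$ in which the diagram-element labels $q_1,\ldots,q_n$ are pairwise distinct \emph{and} no pair $\{q_i,q_j\}$ is an edge of $\gxoc$. Since each basis vector in $\mathcal{B}_n$ is built from the states $|\rho_{z,a}^{L}\rangle$ (single particle, supported on one diagram element $L\in\mathcal{L}$) and $|T^{J}_{\cdots}\rangle$ (two particles, supported on the four diagram elements $(1,J,0),(1,J,1),(s(J),J,0),(s(J),J,1)$ of a two-qubit gadget), the key observation is that the support of such a basis vector, expressed in the $|\psi_{z,a}^{q}\rangle$ basis, is confined to a \emph{fixed} set of diagram elements determined by the configuration. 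Therefore $\Pi_0$ acts on $\mathrm{span}(\mathcal{B}_n)$ by, roughly, projecting onto those configurations whose occupied diagram elements avoid all edges of $\gxoc$. The bulk of the proof is to (i) check that this is exactly the projector onto $\mathcal{I}(G_1,\gxoc,n)\cap\mathrm{span}(\mathcal{B}_n)$, (ii) identify that intersection with $\mathrm{span}(\mathcal{B}_{\legal})$, and (iii) bound the diagonal entries on $\mathcal{B}_{\illegal}$.

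\textbf{Step 1: Orthogonality of the projected images ($\langle\phi|\Pi_0|\psi\rangle=0$ for distinct $\phi,\psi\in\mathcal{B}_{\illegal}$, and more generally block structure).}
First I would show that $\Pi_0$ is block-diagonal with respect to the partition of $\mathcal{B}_n$ by configuration. Fix two basis vectors $|\phi\rangle,|\psi\rangle$ with distinct configurations. Expand each in the orthonormal basis of $\mathcal{Z}_n(G_1)$ consisting of symmetrized products $\Sym(|\psi_{z_1,a_1}^{q_1}\rangle\cdots|\psi_{z_n,a_n}^{q_n}\rangle)$ with distinct $q$'s (this is an orthonormal set by \lem{FF_characterization}-type reasoning, since distinct diagram elements are disjoint). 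The set of multisets $\{q_1,\ldots,q_n\}$ appearing with nonzero coefficient in $|\phi\rangle$ is determined by its configuration: for a single-particle part $|\rho^{L}\rangle$ the only diagram element is $L$ itself (from \eq{rho1_1}--\eq{rho2_1}, $|\rho_{z,a}^{i,U}\rangle$ is supported on the diagram elements $i$ and $5{+}z$ or $6{-}z$ and the move-together gadget elements — wait, this needs care). Here is the subtlety I must address: $|\rho_{z,a}^{L}\rangle$ is \emph{not} supported on the single diagram element $L$; it has support on several diagram elements of the gadget it belongs to. However, these extra diagram elements are \emph{internal} to a gadget and are not in $\mathcal{L}$, hence (by construction, step in \sec{The-occupancy-constraints}) $\gxoc$ has no edges touching them and no two distinct gadgets share such internal elements. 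So the relevant combinatorial fact is: two basis vectors with distinct configurations have disjoint sets of occupied-$\mathcal{L}$-elements patterns in a way that forces their $|\psi_{z,a}^q\rangle$-supports to be either identical-as-sets-of-$q$'s only when the configurations agree, or to differ. I would make this precise by arguing that the configuration is recoverable from the set of occupied diagram elements. Once that is established, $\Pi_0$ — being diagonal in the $|\psi^q\rangle$-product basis (it is a projector onto a coordinate subspace spanned by a subset of these products) — cannot connect two basis vectors supported on disjoint coordinate subspaces, giving \eq{subspace_eqn2}. I would also need the off-diagonal statement to cover $|\phi\rangle\in\mathcal{B}_{\legal}$ vs $|\psi\rangle\in\mathcal{B}_{\illegal}$; the same argument applies.

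\textbf{Step 2: Legal configurations are fixed ($\Pi_0|j,\vec d,\vec z,\vec a\rangle=|j,\vec d,\vec z,\vec a\rangle$).}
For a legal configuration, I would verify that every diagram element occupied by the state \eq{legal_states} — including the non-$\mathcal{L}$ internal gadget elements appearing in the $|\rho^{L}\rangle$ and $|T^j\rangle$ expansions — forms a set of pairwise-distinct labels containing no edge of $\gxoc$. The distinctness "no two particles in the same row" part of $\gxoc$ (equation \eq{occ_constraints_type1}) is satisfied because in \eq{config_c01}/\eq{config_2} the occupied $\mathcal{L}$-elements lie in distinct rows $1,2,\ldots,n$ by construction of $F$. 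The "synchronization" edges are avoided precisely because the configuration is defined via $F(i,j,d_i)$, which is exactly what the synchronization constraints in \sec{The-occupancy-constraints} permit (the constraints forbid $(i,k,d)\notin\{F(i,j,0),F(i,j,1)\}$, and $F(i,j,d_i)$ is always in that allowed set). The $Y=1$ case with a $|T^j\rangle$ block needs the observation that the two particles in the gadget for gate $j$ occupy rows $1$ and $s(j)$, matching the "$k=j$" exceptions in the synchronization constraints. Hence the whole state lies in $\mathcal{I}(G_1,\gxoc,n)$, so $\Pi_0$ fixes it. This gives \eq{subspace_eqn0}.

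\textbf{Step 3: Diagonal bound on illegal configurations ($\langle\phi|\Pi_0|\phi\rangle\le 255/256$).}
This is the step I expect to be the main obstacle, as it requires a quantitative argument rather than a structural one. For $|\phi\rangle\in\mathcal{B}_{\illegal}$ I must show that $\|\Pi_0|\phi\rangle\|^2\le 255/256$, equivalently $\|(\id-\Pi_0)|\phi\rangle\|^2\ge 1/256$. Since $\id-\Pi_0$ projects onto the span of $|\psi^q\rangle$-products that violate some constraint of $\gxoc$, it suffices to exhibit, for each illegal configuration, at least one product basis vector $\Sym(|\psi_{z_1,a_1}^{q_1}\rangle\cdots)$ that (a) appears in the expansion of $|\phi\rangle$ with squared amplitude at least $1/256$, and (b) violates a $\gxoc$-constraint (so lies outside $\mathcal{I}(G_1,\gxoc,n)$). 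The amplitude bound should come from the explicit coefficients in \eq{psi0m}, \eq{chi_alpha}--\eq{chi_delta}, \eq{rho1_1}--\eq{rho2_1}, and \eq{twopartstate_2}: each $|\rho^L\rangle$ has a component $\tfrac{1}{\sqrt 8}|\psi^{\text{(the }\mathcal{L}\text{-element)}}\rangle$, and each $|T^J\rangle$ a component with amplitude $\ge\tfrac{1}{\sqrt 2}\cdot\tfrac{1}{\sqrt 8}\cdot\tfrac{1}{\sqrt 8}$ on a product of two $|\psi^{(\cdot,J,\cdot)}\rangle$ states; multiplying the relevant factors over all $n$ particles and being careful about the symmetrization normalization (a factor that is at least $1$ on these distinct-label products) yields a constant lower bound, and $\tfrac{1}{256}=(\tfrac14)^4$ suggests the intended accounting keeps it above $\tfrac1{256}$. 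For (b): an illegal configuration violates one of the defining conditions — either two particles in the same row (directly an edge of type \eq{occ_constraints_type1}), or some $L_k\ne F(s(k)\text{-type thing})$ giving a synchronization edge, or $Y\ge2$ which forces two gadget-pairs and hence particles in rows that violate a synchronization constraint. In each case the "principal" product basis vector identified above inherits the violating pair of diagram-element labels, so it is annihilated by $\Pi_0$. I would organize this as a short case analysis on \emph{why} the configuration is illegal, exhibiting the witness product vector in each case. Carrying the amplitude bookkeeping through cleanly — especially tracking the $\Sym$ normalization and making sure the witness vector's coefficient really is $\ge 1/16$ in modulus — is the delicate part; everything else is structural and follows from the explicit forms already established in \sec{Gadgets} and \lem{FF_characterization}.
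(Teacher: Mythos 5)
Your Step 2 (showing $\Pi_0$ fixes $\mathcal{B}_{\legal}$) is sound and matches the paper's argument: one checks that each legal state is annihilated by the operators $|\psi_{s,t}^{q}\rangle\langle\psi_{s,t}^{q}|\otimes|\psi_{u,v}^{r}\rangle\langle\psi_{u,v}^{r}|\otimes\id^{\otimes n-2}$ for $\{q,r\}\in E(\gxoc)$, using the fact that the occupied $\mathcal{L}$-elements $F(i,j,d_i)$ are precisely the ones $\gxoc$ permits. But your other two steps have genuine gaps.

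For \eq{subspace_eqn2}, the ``disjoint coordinate supports'' argument does not work. First, it says nothing about the case of two distinct basis vectors with the \emph{same} configuration but different data $(\vec z,\vec a)$: their supports in the $\Sym(|\psi^{q_1}_{z_1,a_1}\rangle\cdots)$ basis overlap heavily, and orthogonality of their $\Pi_0$-projections does not follow from disjointness. Second, even for distinct configurations the supports can overlap: two $\rho$-states living in the same two-qubit gadget, e.g.\ $|\rho^{(1,j,0)}_{z,a}\rangle$ and $|\rho^{(1,j,1)}_{z',a'}\rangle$, share support on several internal move-together diagram elements (the states $|\chi_{1,a}\rangle$ and $|\chi_{2,a}\rangle$ both have nonzero amplitude on elements $1,5,6$ of each move-together gadget), so ``configuration is recoverable from the set of occupied $q$'s'' is not enough to conclude the projected images are orthogonal. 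The paper's proof avoids this by decomposing each $|\rho^L_{z,a}\rangle$ using the complementary projectors $P_{\mathcal{L}}^{0},P_{\mathcal{L}}^{1}$ and expanding the basis vectors into $\Sym(P_{\mathcal{L}}^{\alpha_1}\otimes\cdots\otimes P_{\mathcal{L}}^{\alpha_n}|O\rangle)$ terms whose pairwise orthogonality comes from $\langle\rho^{\tilde L}_{x,b}|P_{\mathcal{L}}^{\alpha}|\rho^{L}_{z,a}\rangle\propto\delta_{L,\tilde L}\delta_{x,z}\delta_{a,b}$ --- orthogonality of the $\rho$-states themselves, not of their $\psi$-supports --- combined with the observation that each such term is an eigenvector of the occupancy-constraint projector. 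That is a different and necessary mechanism.

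For \eq{subspace_eqn1}, a single witness \emph{product basis vector} does not achieve the constant $\tfrac{1}{256}$. Consider an illegal configuration with two $T$-blocks $J_1,J_2$ whose row-1 diagram elements collide (case (i) of the paper's auxiliary lemma on illegal configurations): the squared amplitude of any single product state $\Sym(|\psi^{Q_1}\rangle|\psi^{Q_2}\rangle\cdots)$ in $|\phi\rangle$ is at most $\bigl(\tfrac{1}{2}\cdot\tfrac{1}{8}\cdot\tfrac{1}{8}\bigr)^2\cdot 8^{-(n-4)}=\tfrac{1}{16384}\cdot 8^{-(n-4)}$, already far below $\tfrac{1}{256}$ for $n=4$ and worse for larger $n$. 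The paper instead lower-bounds $\langle\phi|P_\phi|\phi\rangle$ where $P_\phi$ is the projector onto the entire subspace with one particle at $Q_1$ and one at $Q_2$ and all other particles and all data bits unconstrained; summing over those free degrees of freedom restores the norm factors and yields $\langle T^{J_k}|\Pi_{Q_k}\otimes\id|T^{J_k}\rangle=\tfrac{1}{16}$, hence $\langle\phi|P_\phi|\phi\rangle\ge\tfrac{1}{256}$ with $\Pi_0 P_\phi=0$. Replacing your single-vector witness with this projector (and carrying out the three-case analysis according to how the configuration is illegal) repairs Step~3; Step~1 needs the paper's more delicate orthogonality argument.
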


We use this Lemma to characterize the nullspace of $H(G_{1},\gxoc,n)$ and bound its smallest nonzero eigenvalue.

\begin{lemma}
\label{lem:H_Galpha_Gtilde}The nullspace $S_1$ of $H(G_{1},\gxoc,n)$ is spanned by the orthonormal basis $\mathcal{B}_{\legal}$. Its smallest nonzero eigenvalue is
\begin{equation}
  \gamma(H(G_{1},\gxoc,n)) \geq \frac{\mathcal{K}_0}{256} \label{eq:G_alpha_lowerbnd}
\end{equation}
where $\mathcal{K}_0\in (0,1]$ is the absolute constant from \lem{gs_g_alpha}.
\end{lemma}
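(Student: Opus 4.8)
The plan is to combine \lem{gs_g_alpha} with \lem{Pi_0_restriction} via a direct spectral argument, since the restriction of $\Pi_0$ to $\spn(\mathcal{B}_n)$ is diagonal in the basis $\mathcal{B}_n$. First I would observe that the nullspace of $H(G_1,\gxoc,n) = H(G_1,n)\big|_{\mathcal{I}(G_1,\gxoc,n)}$ must be contained in $\spn(\mathcal{B}_n)$: indeed any state $|\Gamma\rangle$ annihilated by this restriction is in particular annihilated by $H(G_1,n)$ (after including it back into $\mathcal{Z}_n(G_1)$, using that $H(G_1,n)\ge 0$ and $\mathcal{I}(G_1,\gxoc,n)\subseteq\mathcal{Z}_n(G_1)$), hence lies in $\spn(\mathcal{B}_n)$ by \lem{gs_g_alpha}, and it also lies in $\mathcal{I}(G_1,\gxoc,n)$. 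Conversely, \eq{subspace_eqn0} shows $\mathcal{B}_{\legal}\subseteq\mathcal{I}(G_1,\gxoc,n)$, and since each $|j,\vec{d},\vec{z},\vec{a}\rangle\in\mathcal{B}_n$ it is annihilated by $H(G_1,n)$; therefore each vector of $\mathcal{B}_{\legal}$ is in the nullspace of $H(G_1,\gxoc,n)$. Since $\mathcal{B}_n$ is orthonormal (by \lem{gs_g_alpha}), so is $\mathcal{B}_{\legal}$, and it remains to show that no vector of $\mathcal{I}(G_1,\gxoc,n)\cap\spn(\mathcal{B}_n)$ outside $\spn(\mathcal{B}_{\legal})$ is annihilated, which will follow from the eigenvalue bound below.

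For the eigenvalue bound, I would work entirely inside the invariant subspace $\spn(\mathcal{B}_n)$. Write $\Pi_0$ for the projector onto $\mathcal{I}(G_1,\gxoc,n)$ and let $\Pi_{\legal}$ and $\Pi_{\illegal}$ be the orthogonal projectors onto $\spn(\mathcal{B}_{\legal})$ and $\spn(\mathcal{B}_{\illegal})$ respectively, so $\Pi_{\legal}+\Pi_{\illegal}$ is the projector onto $\spn(\mathcal{B}_n)$. By \lem{Pi_0_restriction}, on $\spn(\mathcal{B}_n)$ the operator $\Pi_0$ is diagonal in $\mathcal{B}_n$: it equals $\Pi_{\legal}$ plus a diagonal operator on $\spn(\mathcal{B}_{\illegal})$ with entries in $[0,\tfrac{255}{256}]$. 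The Hamiltonian we care about is $H(G_1,\gxoc,n) = \Pi_0 H(G_1,n) \Pi_0$ acting on $\mathcal{I}(G_1,\gxoc,n)$, and any nonzero eigenvalue is witnessed by an eigenvector orthogonal to $\mathcal{B}_{\legal}$ (the nullspace). I would show that for any $|\psi\rangle\in\mathcal{I}(G_1,\gxoc,n)$ with $\langle\psi|\mathcal{B}_{\legal}\rangle=0$ (for all elements of $\mathcal{B}_{\legal}$), one has $\langle\psi|H(G_1,n)|\psi\rangle\ge\tfrac{\mathcal{K}_0}{256}\langle\psi|\psi\rangle$. The key step: decompose $|\psi\rangle = |\psi_0\rangle + |\psi_\perp\rangle$ with $|\psi_0\rangle\in\spn(\mathcal{B}_n)$ and $|\psi_\perp\rangle$ orthogonal to $\spn(\mathcal{B}_n)$; since $|\psi\rangle\in\mathcal{I}(G_1,\gxoc,n)$ and $\Pi_0$ fixes $|\psi\rangle$, applying $\Pi_0$ to the decomposition and using that $\Pi_0$ maps $\spn(\mathcal{B}_n)$ into itself shows $\Pi_0|\psi_\perp\rangle = |\psi_\perp\rangle - (\text{component in }\spn(\mathcal{B}_n))$; more carefully, $|\psi_0\rangle = \Pi_0|\psi_0\rangle + (\text{something})$, and because the diagonal entries of $\Pi_0$ on $\spn(\mathcal{B}_{\illegal})$ are at most $\tfrac{255}{256}$ while $|\psi\rangle$ has no $\mathcal{B}_{\legal}$-component, one deduces $\| |\psi_0\rangle\| ^2 \le \tfrac{255}{256}\| |\psi_0\rangle\| ^2 + \| |\psi_\perp\rangle\| \cdot(\dots)$, forcing $\| |\psi_\perp\rangle\| ^2 \ge \tfrac{1}{256}\| |\psi\rangle\| ^2$.

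Once that norm lower bound on the component of $|\psi\rangle$ outside $\spn(\mathcal{B}_n) = \ker H(G_1,n)$ is established, the bound \eq{G_alpha_lowerbnd} follows immediately: $\langle\psi|H(G_1,n)|\psi\rangle = \langle\psi_\perp|H(G_1,n)|\psi_\perp\rangle \ge \gamma(H(G_1,n))\| |\psi_\perp\rangle\| ^2 \ge \mathcal{K}_0\cdot\tfrac{1}{256}\| |\psi\rangle\| ^2$ using \lem{gs_g_alpha} (and $\mathcal{B}_n$ being exactly the nullspace, so $|\psi_\perp\rangle$ lies in the orthogonal complement where $H(G_1,n)\ge\gamma(H(G_1,n))$). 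This also finishes the nullspace characterization, since it shows any eigenvector of $H(G_1,\gxoc,n)$ orthogonal to $\mathcal{B}_{\legal}$ has strictly positive eigenvalue, so $\mathcal{B}_{\legal}$ spans the full nullspace. I expect the main obstacle to be the bookkeeping in the norm estimate of the previous paragraph—carefully tracking how $\Pi_0$ acts block-wise (the $\mathcal{B}_{\legal}$ block where it is the identity, the $\mathcal{B}_{\illegal}$ block where it is diagonal with small entries, and the orthogonal complement of $\spn(\mathcal{B}_n)$ where it can do anything) and extracting the factor $\tfrac{1}{256}$ cleanly; the spectral gap transfer via the diagonal structure is the conceptual heart, and everything else is a short consequence of \lem{gs_g_alpha} and \lem{Pi_0_restriction}.
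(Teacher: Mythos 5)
Your proof takes essentially the same route as the paper's: both identify the nullspace via the diagonal structure of $\Pi_0$ on $\spn(\mathcal{B}_n)$ from \lem{Pi_0_restriction}, and both extract the gap by combining $\gamma(H(G_1,n))\geq\mathcal{K}_0$ (\lem{gs_g_alpha}) with the $\frac{255}{256}$ bound. Where the paper packages the gap argument as the operator inequality $\Pi_0 H(G_1,n)\Pi_0\geq\gamma(H(G_1,n))\cdot\Pi_0(1-\Pi_{\legal}-\Pi_{\illegal})\Pi_0$ and then computes $\gamma$ of the second factor, you argue variationally by decomposing $|\psi\rangle=|\psi_0\rangle+|\psi_\perp\rangle$; these are the same computation dressed differently.

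Two corrections to your sketch of the norm estimate, which you yourself flagged as the likely obstacle. First, $\Pi_0$ does not in general map $\spn(\mathcal{B}_n)$ into itself; \lem{Pi_0_restriction} only controls the compression $(\Pi_{\legal}+\Pi_{\illegal})\Pi_0(\Pi_{\legal}+\Pi_{\illegal})$. Fortunately your argument does not truly need that invariance. Second, and more consequentially, the inline inequality $\||\psi_0\rangle\|^2\leq\frac{255}{256}\||\psi_0\rangle\|^2+\||\psi_\perp\rangle\|\cdot(\cdots)$ as written only yields $\||\psi_0\rangle\|\leq 256\,\||\psi_\perp\rangle\|$ and hence $\||\psi_\perp\rangle\|^2\geq\||\psi\rangle\|^2/(256^2+1)$, a constant too weak for \eq{G_alpha_lowerbnd}. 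The clean route, equivalent to the paper's inequality \eq{fg_eqn}, is to write $\||\psi_0\rangle\|^2=\langle\psi_0|\psi\rangle=\langle\psi_0|\Pi_0|\psi\rangle\leq\|\Pi_0|\psi_0\rangle\|\,\||\psi\rangle\|$ and then use the sharper fact $\|\Pi_0|\psi_0\rangle\|=\sqrt{\langle\psi_0|\Pi_0|\psi_0\rangle}\leq\sqrt{\tfrac{255}{256}}\,\||\psi_0\rangle\|$, which holds because $|\psi_0\rangle\in\spn(\mathcal{B}_{\illegal})$ and the compression of $\Pi_0$ is diagonal there with entries at most $\frac{255}{256}$. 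This yields $\||\psi_0\rangle\|^2\leq\frac{255}{256}\||\psi\rangle\|^2$ and hence $\||\psi_\perp\rangle\|^2\geq\frac{1}{256}\||\psi\rangle\|^2$, exactly the constant needed.
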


\begin{proof}
Recall from \sec{The-occupancy-constraints} that 
\[
H(G_{1},\gxoc,n)=H(G_{1},n)|_{\mathcal{I}(G_{1},\gxoc,n)}.
\]
Its nullspace is the space of states $|\kappa\rangle$ satisfying
\[
\Pi_{0}|\kappa\rangle=|\kappa\rangle\quad\text{and}\quad H(G_{1},n)|\kappa\rangle=0
\]
(recall that $\Pi_{0}$ is the projector onto $\mathcal{I}(G_{1},\gxoc,n)$, the states satisfying the occupancy constraints). Since $\mathcal{B}_{n}$ is a basis for the nullspace of $H(G_{1},n)$, to solve for the nullspace of $H(G_{1},\gxoc,n)$ we consider the restriction \eq{restrictionBn} and solve for the eigenspace with eigenvalue $1$. This calculation is simple because \eq{restrictionBn} is diagonal in the basis $\mathcal{B}_{n}$, according to \lem{Pi_0_restriction}. We see immediately from the Lemma that $\mathcal{\mathcal{B}_{\legal}}$ spans the nullspace of $H(G_{1},\gxoc,n)$; we now show that \lem{Pi_0_restriction} also implies the lower bound \eq{G_alpha_lowerbnd}. Note that
\[
\gamma(H(G_{1},\gxoc,n))=\gamma(\Pi_{0}H(G_{1},n)\Pi_{0}).
\]
Let $\Pi_{\legal}$ and $\Pi_{\illegal}$ project onto the spaces spanned by $\mathcal{B}_{\legal}$ and $\mathcal{B}_{\illegal}$ respectively, so $\Pi_{\legal}+\Pi_{\illegal}$ projects onto the nullspace of $H(G_{1},n)$. The operator inequality
\[
H(G_{1},n)\geq\gamma(H(G_{1},n))\cdot\left(1-\Pi_{\legal}-\Pi_{\illegal}\right)
\]
implies 
\[
\Pi_{0}H(G_{1},n)\Pi_{0}\geq\gamma(H(G_{1},n))\cdot\Pi_{0}(1-\Pi_{\legal}-\Pi_{\illegal})\Pi_{0}.
\]
Since the operators on both sides of this inequality are positive semidefinite and have the same nullspace, their smallest nonzero eigenvalues are bounded as 
\[
\gamma(\Pi_{0}H(G_{1},n)\Pi_{0})\geq\gamma(H(G_{1},n))\cdot\gamma(\Pi_{0}(1-\Pi_{\legal}-\Pi_{\illegal})\Pi_{0}).
\]
Hence 
\begin{equation}
\gamma(H(G_{1},\gxoc,n))=\gamma(\Pi_{0}H(G_{1},n)\Pi_{0}) \geq \mathcal{K}_0\cdot\gamma(\Pi_{0}(1-\Pi_{\legal}-\Pi_{\illegal})\Pi_{0})
\label{eq:gamma_bnd1}
\end{equation}
where we used \lem{gs_g_alpha}. From equations \eq{subspace_eqn1} and \eq{subspace_eqn2} we see that 
\begin{equation}
\Pi_{0}|g\rangle=|g\rangle\quad\text{and}\quad\Pi_{\illegal}|f\rangle=|f\rangle\qquad\Longrightarrow\qquad\langle f|g\rangle\langle g|f\rangle\leq\frac{255}{256}.\label{eq:fg_eqn}
\end{equation}
The nullspace of 
\begin{equation}
\Pi_{0}\left(1-\Pi_{\legal}-\Pi_{\illegal}\right)\Pi_{0}\label{eq:projector_conjugated}
\end{equation}
is spanned by 
\[
\mathcal{B}_{\legal}\cup\left\{ |\tau\rangle\colon\Pi_{0}|\tau\rangle=0\right\} .
\]
To see this, note that \eq{projector_conjugated} commutes with $\Pi_{0}$, and the space of $+1$ eigenvectors of $\Pi_{0}$ that are annihilated by \eq{projector_conjugated} is spanned by $\mathcal{B}_{\legal}$ (by \lem{Pi_0_restriction}). Any eigenvector $|g_{1}\rangle$ corresponding to the smallest nonzero eigenvalue of this operator therefore satisfies $\Pi_{0}|g_{1}\rangle=|g_{1}\rangle$ and $\Pi_{\legal}|g_{1}\rangle=0$, so 
\begin{align*}
\gamma(\Pi_{0}(1-\Pi_{\legal}-\Pi_{\illegal})\Pi_{0}) 
&= 1-\langle g_{1}|\Pi_{\mathcal{\illegal}}|g_{1}\rangle\geq\frac{1}{256}
\end{align*}
using equation \eq{fg_eqn}. Plugging this into equation \eq{gamma_bnd1} gives the lower bound \eq{G_alpha_lowerbnd}. 
\end{proof}

\subsection{Matrix elements between states with legal configurations}
\label{sec:Matrix-elements-in}

We now consider 
\begin{equation}
H_{1}|_{S_{1}},H_{2}|_{S_{1}},H_{\inn,i}|_{S_{1}},H_{\out}|_{S_{1}}\label{eq:ops_restriction_S1}
\end{equation}
where these operators are defined in \sec{Building-up-the} and 
\[
S_{1}=\spn(\mathcal{B}_{\legal})
\]
is the nullspace of $H(G_1,\gxoc,n)$.

We specify the operators \eq{ops_restriction_S1} by their matrix elements in an orthonormal basis for $S_{1}$. Although the basis $\mathcal{B}_{\legal}$ was convenient in \sec{Legal-configurations-and}, here we use a different basis in which the matrix elements of $H_{1}$ and $H_{2}$ are simpler. We define
\begin{equation}
  |j,\vec{d},\In(\vec{z}),\vec{a}\rangle
  =\sum_{\vec{x}\in\{0,1\}^{n}} \big(\langle\vec{x} 
   |\bar{U}_{j,d_1}(a_{1}) 
   |\vec{z}\rangle\big)|j,\vec{d},\vec{x},\vec{a}\rangle 
  \label{eq:phi_init_basis}
\end{equation}
where
\begin{equation}
  \bar{U}_{j,d_1}(a_1) = \begin{cases}
    U_{j-1}(a_1) U_{j-2}(a_1) \ldots U_1(a_1) & \text{if $d_1 \in \{0,2\}$} \\
    U_j(a_1) U_{j-1}(a_1) \ldots U_1(a_1) & \text{if $d_1=1$}.
  \end{cases}
  \label{eq:ubar}
\end{equation}
In each of these states the quantum data (represented by the $\vec{x}$ register on the right-hand side) encodes the computation in which the unitary $\bar{U}_{j,d_1}(a_1)$ is applied to the initial $n$-qubit state $|\vec{z}\rangle$ (the notation $\In(\vec{z})$ indicates that $\vec{z}$ is the input). The vector $\vec{a}$ is only relevant insofar as its first bit $a_{1}$ determines whether or not each two-qubit unitary is complex conjugated; the other bits of $\vec{a}$ go along for the ride. Letting $\vec{z},\vec{a}\in\{0,1\}^{n}$ , $j\in[M]$, and \[ \vec{d}=(d_{1},\ldots,d_{n})\quad\text{with}\quad d_{1}=d_{s(j)}\in\{0,1,2\}\quad\text{and}\quad d_{i}\in\{0,1\},\; i\notin\{1,s(j)\}, \] we see that the states \eq{phi_init_basis} form an orthonormal basis for $S_{1}$. In \sec{matrix_els_details} we compute the matrix elements of the operators \eq{ops_restriction_S1} in this basis, which are reproduced below.

Roughly speaking, the nonzero off-diagonal matrix elements of the operator $H_{1}$ in the basis \eq{phi_init_basis} occur between states $|j,\vec{d},\In(\vec{z}),\vec{a}\rangle$ and $|j,\vec{c},\In(\vec{z}),\vec{a}\rangle$ where the legal configurations $j,\vec{d}$ and $j,\vec{c}$ are related by horizontal motion of a particle in one of the rows $i\in\{2,\ldots,n\}$. 

\begin{mdframed}[frametitle=Matrix elements of $H_{1}$]
\begin{equation}
\langle k,\vec{c},\In(\vec{x}),\vec{b}|H_{1}|j,\vec{d},\In(\vec{z}),\vec{a}\rangle=\delta_{k,j}\delta_{\vec{a},\vec{b}}\delta_{\vec{z},\vec{x}}\cdot\begin{cases}
\frac{n-1}{64} & \vec{c}=\vec{d}\\
\frac{1}{64}\displaystyle\prod_{\substack{r=1\\r\neq i}}^n
\delta_{d_{r},c_{r}} & d_{i}\neq c_{i}\;\text{for some}\; i\in[n]\setminus\{1,s(j)\}\\
\frac{1}{64\sqrt{2}} \displaystyle\prod_{\substack{r=2\\r\neq s(j)}}^n \delta_{d_{r},c_{r}} & (c_{1},d_{1})\in\{(2,0),(0,2),(1,2),(2,1)\}\\
0 & \text{otherwise.}
\end{cases}\label{eq:H1_goodbasis-1}
\end{equation}
\end{mdframed}

\newcommand{\configtrans}[2]{
\configex{#1}
$~\longrightarrow~$
\configex{#2}
}
\begin{figure}
\subfloat[][$\langle j,\vec{c},\In(\vec{z}),\vec{a}|H_1|j,\vec{d},\In(\vec{z}),\vec{a}\rangle
    =\frac{1}{64}$; here $j=2$ and $\vec{d}=(2,2,0) \rightarrow \vec{c}=(2,2,1)$.]{\label{fig:H1_InbasisA}
\configtrans{
\node at (2.618+.31+1.7,.75) {\huge $\times$};
\foreach \yoffset in {.2,-.2}{
  \node at (2*2.618+.31+1.1,4+\yoffset) {\large $\times$};
}
\draw[line width=.1mm] (2*2.618+.31+1.1,4) ellipse (.25 cm and .45 cm);
}{
\foreach \yoffset in {.2,-.2}{
  \node at (2*2.618+.31+1.1,4+\yoffset) {\large $\times$};
}
\draw[line width=.1mm] (2*2.618+.31+1.1,4) ellipse (.25 cm and .45 cm);
\node at (3*2.618+.31+.3,.75) {\huge $\times$};
}}

\subfloat[][$\langle j,\vec{c},\In(\vec{z}),\vec{a}|H_1|j,\vec{d},\In(\vec{z}),\vec{a}\rangle
   =\frac{1}{64\sqrt{2}}$; here $j=3$ and $\vec{d}=(0,0,0) \rightarrow \vec{c}=(2,0,2)$. ]{\label{fig:H1_InbasisB}
\configtrans{
\node at (2.618+.31+1.7,.75) {\huge $\times$};
\node at (2*2.618+.31+1.7,2.75) {\huge $\times$};
\node at (3*2.618+.31+.3,5.06) {\huge $\times$};
}{
\node at (2*2.618+.31+1.7,2.75) {\huge $\times$};
\foreach \yoffset in {.2,-.2}{
  \node at (3*2.618+.31+1.1,3+\yoffset) {\large $\times$};
}
\draw[line width=.1mm] (3*2.618+.31+1.1,3) ellipse (.25 cm and .45 cm);
}}

\subfloat[][$\langle j,\vec{c},\In(\vec{z}),\vec{a}|H_1|j,\vec{d},\In(\vec{z}),\vec{a}\rangle
  =\frac{1}{64\sqrt{2}}$; here $j=1$ and $\vec{d}=(1,1,1) \rightarrow \vec{c}=(2,1,2)$.]{\label{fig:H1_InbasisC}
\configtrans{
\node at (2.618+.31+1.7,5.06) {\huge $\times$};
\node at (2*2.618+.31+.3,2.75) {\huge $\times$};
\node at (3*2.618+.31+.3,.75) {\huge $\times$};
}{
\foreach \yoffset in {.2,-.2}{
  \node at (2.618+.31+1.1,3+\yoffset) {\large $\times$};
}
\draw[line width=.1mm] (2.618+.31+1.1,3) ellipse (.25 cm and .45 cm);
\node at (2*2.618+.31+.3,2.75) {\huge $\times$};
}}

\caption{
Examples of matrix elements of $H_1$ in the basis \eq{phi_init_basis} of $S_1$. The relevant matrix elements (as indicated above) are computed using \subfig{H1_InbasisA} the second case and \subfig{H1_InbasisB}, \subfig{H1_InbasisC} the third case in equation \eq{H1_goodbasis-1}. \label{fig:H1_Inbasis}
}
\end{figure}

From this expression we see that $H_{1}|_{S_1}$ is block diagonal in the basis \eq{phi_init_basis}, with a block for each $\vec{z},\vec{a}\in\{0,1\}^{n}$ and $j\in[M]$. Moreover, the submatrix for each block is the same. In \fig{H1_Inbasis} we illustrate some of the off-diagonal matrix elements of $H_{1}|_{S_{1}}$ for the example from \fig{step-by-step}.

Next, we present the matrix elements of $H_2$.

\begin{mdframed}[frametitle=Matrix elements of $H_{2}$]
\begin{align}
\langle k,\vec{c},\In(\vec{x}),\vec{b}|H_{2}|j,\vec{d},\In(\vec{z}),\vec{a}\rangle & =f_{\mathrm{diag}(\vec{d},j)\cdot\delta_{j,k}\delta_{\vec{a},\vec{b}}\delta_{\vec{z},\vec{x}}\delta_{\vec{c},\vec{d}}}\label{eq:H2_formula_with_f}\\
 & \quad +\big(f_{\textrm{off-diag}}(\vec{c},\vec{d},j)\cdot\delta_{k,j-1}+f_{\textrm{off-diag}}(\vec{d},\vec{c},k)\cdot\delta_{k-1,j}\big)\delta_{\vec{a},\vec{b}}\delta_{\vec{z},\vec{x}}\nonumber 
\end{align}
where
\begin{equation}
f_{\mathrm{diag}}(\vec{d},j)=\begin{cases}
0 & d_{1}=0\text{ and }j=1,\text{ or }d_{1}=1\text{ and }j=M\\
\frac{1}{128} & d_{1}=2\text{ and }j\in\{1,M\}\\
\frac{1}{64} & \text{otherwise}
\end{cases}\label{eq:H2_diag_goodbasis-1}
\end{equation}
and 
\begin{equation}
f_{\textrm{off-diag}}(\vec{c},\vec{d},j)=\left(
\prod_{\substack{r=2\\r\notin\{s(j-1),s(j)\}}}^n \mkern-36mu\delta_{d_{r},c_{r}}\right)\cdot\begin{cases}
\frac{1}{64\sqrt{2}} & (c_{1},c_{s(j)},d_{1},d_{s(j-1)})\in\{(2,0,0,0),(1,1,2,1)\}\\
\frac{1}{64} & (c_{1},c_{s(j)},d_{1},d_{s(j-1)})=(1,0,0,1)\\
\frac{1}{128} & (c_{1},c_{s(j)},d_{1},d_{s(j-1)})=(2,1,2,0)\\
0 & \text{otherwise.}
\end{cases}\label{eq:H2_offdiag_goodbasis-1}
\end{equation}
\end{mdframed}

\begin{figure}
\subfloat[][$\langle j-1,\vec{c},\In(\vec{z}),\vec{a}|H_2|j,\vec{d},\In(\vec{z}),\vec{a}\rangle
  =\frac{1}{64\sqrt{2}}$; here $j=4$ and $\vec{d}=(0,0,0)\rightarrow\vec{c}=(2,0,2)$]{\label{fig:H2_InbasisA}
\configtrans{
\node at (2*2.618+.31+1.7,2.75) {\huge $\times$};
\node at (3*2.618+.31+1.7,.75) {\huge $\times$};
\node at (4*2.618+.31+.3,5.06) {\huge $\times$};
}{
\node at (2*2.618+.31+1.7,2.75) {\huge $\times$};
\foreach \yoffset in {.2,-.2}{
  \node at (3*2.618+.31+1.1,3+\yoffset) {\large $\times$};
}
\draw[line width=.1mm] (3*2.618+.31+1.1,3) ellipse (.25 cm and .45 cm);
}}

\subfloat[][
  $\langle j-1,\vec{c},\In(\vec{z}),\vec{a}|H_2|j,\vec{d},\In(\vec{z}),\vec{a}\rangle
  =\frac{1}{64 \sqrt{2}}$; here $j=3$ and $\vec{d}=(2,1,2)\rightarrow\vec{c}=(1,1,1)$]{\label{fig:H2_InbasisB}
\configtrans{
\foreach \yoffset in {.2,-.2}{
  \node at (3*2.618+.31+1.1,3+\yoffset) {\large $\times$};
}
\draw[line width=.1mm] (3*2.618+.31+1.1,3) ellipse (.25 cm and .45 cm);
\node at (4*2.618+.31+.3,2.75) {\huge $\times$};
}{
\node at (2*2.618+.31+1.7,5.06) {\huge $\times$};
\node at (3*2.618+.31+.3,.75) {\huge $\times$};
\node at (4*2.618+.31+.3,2.75) {\huge $\times$};
}}

\subfloat[][
  $\langle j-1,\vec{c},\In(\vec{z}),\vec{a}|H_2|j,\vec{d},\In(\vec{z}),\vec{a}\rangle
  =\frac{1}{64}$; here $j=3$ and $\vec{d}=(0,1,0)\rightarrow\vec{c}=(1,1,0)$]{
\label{fig:H2_InbasisC}
\configtrans{
\node at (2.618+.31+1.7,.75) {\huge $\times$};
\node at (3*2.618+.31+.3,5.06) {\huge $\times$};
\node at (4*2.618+.31+.3,2.75) {\huge $\times$};
}{
\node at (2.618+.31+1.7,.75) {\huge $\times$};
\node at (2*2.618+.31+1.7,5.06) {\huge $\times$};
\node at (4*2.618+.31+.3,2.75) {\huge $\times$};
}}

\subfloat[][
  $\langle j-1,\vec{c},\In(\vec{z}),\vec{a}|H_2|j,\vec{d},\In(\vec{z}),\vec{a}\rangle
  =\frac{1}{128}$; here $j=2$ and $\vec{d}=(2,2,0)\rightarrow\vec{c}=(2,1,2)$]{
\label{fig:H2_InbasisD}
\configtrans{
\node at (2.618+.31+1.7,.75) {\huge $\times$};
\foreach \yoffset in {.2,-.2}{
  \node at (2*2.618+.31+1.1,4+\yoffset) {\large $\times$};
}
\draw[line width=.1mm] (2*2.618+.31+1.1,4) ellipse (.25 cm and .45 cm);
}{
\foreach \yoffset in {.2,-.2}{
  \node at (2.618+.31+1.1,4+\yoffset) {\large $\times$};
}
\draw[line width=.1mm] (2.618+.31+1.1,4) ellipse (.25 cm and .45 cm);
\node at (2*2.618+.31+.3,2.75) {\huge $\times$};
}}
\caption{
Examples of matrix elements of $H_2$ in the basis \eq{phi_init_basis} of $S_1$. The relevant matrix elements (as indicated above) are computed using the \subfig{H2_InbasisA}, \subfig{H2_InbasisB} first, \subfig{H2_InbasisC} second, and \subfig{H2_InbasisD} third cases in equation \eq{H2_offdiag_goodbasis-1}. \label{fig:H2_Inbasis}}
\end{figure}

This shows that $H_{2}|_{S_{1}}$ is block diagonal in the basis \eq{phi_init_basis}, with a block for
each $\vec{z},\vec{a}\in\{0,1\}^{n}$. Also note that (in contrast
with $H_{1}$) $H_{2}$ connects states with different values of $j$.
In \fig{H2_Inbasis} we illustrate some of the off-diagonal
matrix elements of $H_{2}|_{S_{1}}$, for the example from \fig{step-by-step}.

Finally, we present the matrix elements of $H_{\inn,i}$ (for
$i\in\{n_{\inn}+1,\ldots,n\}$) and $H_{\out}$:

\begin{mdframed}[frametitle=Matrix elements of $H_{\inn,i}$]
For each ancilla qubit $i\in\{n_{\inn}+1,\ldots,n\}$, define  $j_{\min,i}=\min\left\{ j\in[M]\colon s(j)=i\right\}$ to be the index of the first gate in the circuit that involves this qubit (recall from \sec{From-circuits-to} that we consider circuits where each ancilla qubit is involved in at least one gate). The operator $H_{\inn,i}$ is diagonal in the basis \eq{phi_init_basis}, with entries
\begin{equation}
\langle j,\vec{d},\In(\vec{z}),\vec{a}|H_{\inn,i}|j,\vec{d},\In(\vec{z}),\vec{a}\rangle=\begin{cases}
\frac{1}{64} & j\leq j_{\min,i},\text{ }z_{i}=1,\text{ and }d_{i}=0\\
0 & \text{otherwise.}
\end{cases}\label{eq:Hin_mat_els}
\end{equation}
\end{mdframed}

\begin{mdframed}[frametitle=Matrix elements of $H_{\out}$]
Let $j_{\max}=\max\{j\in[M]\colon s(j)=2\}$ be the index of the last gate $U_{j_{\max}}$ in the circuit that acts between qubits $1$ and $2$ (the output qubit). Then
\begin{equation}
\langle k,\vec{c},\In(\vec{x}),\vec{b}|H_{\out}|j,\vec{d},\In(\vec{z}),\vec{a}\rangle=\delta_{j,k}\delta_{\vec{c},\vec{d}}\delta_{\vec{a},\vec{b}}\begin{cases}
\langle\vec{x}|U_{\mathcal{C}_X}^{\dagger}(a_{1})|0\rangle\langle0|_{2}U_{\mathcal{C}_X}(a_{1})|\vec{z}\rangle\frac{1}{64} & j\geq j_{\max}\text{ and }d_{2}=1\\
0 & \text{otherwise}.
\end{cases}\label{eq:hout_matels-1}
\end{equation}
\end{mdframed}

\subsection{From $H(G_{2},\gxoc,n)$ to $H(G_{4},\gxoc,n)$}
\label{sec:Frustration-free-states}

Define the $(n-2)$-dimensional hypercubes
\[
\mathcal{D}_{k}^{j}=\left\{ (d_{1},\ldots,d_{n})\colon d_{1}=d_{s(j)}=k,\, d_{i}\in\{0,1\}\text{ for }i\in[n]\setminus\{1,s(j)\}\right\} 
\]
for $j\in\{1,\ldots,M\}$ and $k\in\{0,1,2\}$, and the superpositions
\[
|{\Cube_{k}(j,\vec{z},\vec{a})}\rangle=\frac{1}{\sqrt{2^{n-2}}}\sum_{\vec{d}\in\mathcal{D}_{k}^{j}}\left(-1\right)^{\sum_{i=1}^{n}d_{i}}|j,\vec{d},\In(\vec{z}),\vec{a}\rangle
\]
for $k\in\{0,1,2\}$, $j\in[M]$, and $\vec{z},\vec{a}\in\{0,1\}^{n}.$
For each $j\in[M]$ and $\vec{z},\vec{a}\in\{0,1\}^{n}$, let
\begin{equation}
|C(j,\vec{z},\vec{a})\rangle=\frac{1}{2}|{\Cube_{0}(j,\vec{z},\vec{a})}\rangle+\frac{1}{2}|{\Cube_{1}(j,\vec{z},\vec{a})}\rangle-\frac{1}{\sqrt{2}}|{\Cube_{2}(j,\vec{z},\vec{a})}\rangle.\label{eq:cubestates-1}
\end{equation}
We prove

\begin{lemma}
\label{lem:beta_bound}The Hamiltonian $H(G_{2},\gxoc,n)$ has nullspace $S_2$ spanned by the states
\[
|C(j,\vec{z},\vec{a})\rangle
\]
for $j\in[M]$ and $\vec{z},\vec{a}\in\{0,1\}^{n}$. Its smallest nonzero eigenvalue is 
\[
  \gamma(H(G_{2},\gxoc,n)) \geq \frac{\mathcal{K}_0}{35000n}
\]
where $\mathcal{K}_0\in (0,1]$ is the absolute constant from \lem{gs_g_alpha}.
\end{lemma}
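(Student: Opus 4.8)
The plan is to apply the \npl\ to the decomposition \eq{H_G2}, writing $H(G_{2},\gxoc,n)=H_{A}+H_{B}$ with $H_{A}=H(G_{1},\gxoc,n)$ and $H_{B}=H_{1}\big|_{\mathcal{I}(G_{2},\gxoc,n)}$. By \lem{H_Galpha_Gtilde} the nullspace of $H_{A}$ is $S_{1}=\spn(\mathcal{B}_{\legal})$ and $\gamma(H_{A})\geq\mathcal{K}_{0}/256$. Since $S_{1}\subseteq\mathcal{I}(G_{1},\gxoc,n)=\mathcal{I}(G_{2},\gxoc,n)$, the restriction of $H_{B}$ to $S_{1}$ is just $H_{1}\big|_{S_{1}}$, whose matrix elements in the orthonormal basis $\{|j,\vec{d},\In(\vec{z}),\vec{a}\rangle\}$ of $S_{1}$ are given by \eq{H1_goodbasis-1}. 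It then remains to (i) diagonalize $H_{1}\big|_{S_{1}}$, which simultaneously identifies $S_{2}$ and lower bounds $\gamma(H_{B}\big|_{S_{1}})$, and (ii) upper bound $\|H_{B}\|$. For the identification of $S_{2}$: a null vector $|\psi\rangle$ of $H(G_{2},\gxoc,n)$ is annihilated by both positive semidefinite terms, so $|\psi\rangle\in S_{1}$ and $\langle\psi|H_{1}|\psi\rangle=\langle\psi|H_{B}|\psi\rangle=0$, whence $H_{1}|\psi\rangle=0$; thus $S_{2}$ is exactly the nullspace of $H_{1}\big|_{S_{1}}$, and conversely the stated cube states will be seen to lie in it.

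For step (i): the factor $\delta_{k,j}\delta_{\vec{a},\vec{b}}\delta_{\vec{z},\vec{x}}$ in \eq{H1_goodbasis-1} shows that $H_{1}\big|_{S_{1}}$ is block diagonal, with one identical block for each triple $(j,\vec{z},\vec{a})$. Within a block the index $\vec{d}$ ranges over $d_{1}=d_{s(j)}\in\{0,1,2\}$ and $d_{i}\in\{0,1\}$ for $i\notin\{1,s(j)\}$; separating the ``special'' coordinate $p:=d_{1}=d_{s(j)}$ from the $n-2$ ``free'' coordinates, the three cases of \eq{H1_goodbasis-1} identify the block with
\[
\frac{1}{64}\Big((n-1)\,\id_{3\cdot2^{n-2}}+A(Q_{n-2})\otimes\id_{3}+\id_{2^{n-2}}\otimes\tfrac{1}{\sqrt{2}}A(P_{3})\Big),
\]
where $A(Q_{n-2})$ is the adjacency matrix of the Boolean hypercube acting on the free coordinates (second case) and $A(P_{3})$ is the adjacency matrix of the path $0-2-1$ acting on $p$ (third case: the transitions $0\leftrightarrow2$ and $1\leftrightarrow2$ only). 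Since $A(Q_{n-2})$ has eigenvalues $n-2-2k$ ($0\leq k\leq n-2$) and $\tfrac{1}{\sqrt{2}}A(P_{3})$ has eigenvalues $\{-1,0,1\}$, the block has eigenvalues $\tfrac{1}{64}\big((n-1)+(n-2-2k)+\nu\big)$. The minimum is $0$, attained only for $k=n-2$ (the alternating-sign eigenvector of $Q_{n-2}$) and $\nu=-1$ (eigenvector $\tfrac{1}{2}|0\rangle+\tfrac{1}{2}|1\rangle-\tfrac{1}{\sqrt{2}}|2\rangle$ of $\tfrac{1}{\sqrt{2}}A(P_{3})$); this product eigenvector is precisely $|C(j,\vec{z},\vec{a})\rangle$ of \eq{cubestates-1}, once one notes that the sign $(-1)^{\sum_{i}d_{i}}$ reduces to $(-1)^{\sum_{\mathrm{free}}d_{i}}$ because $d_{1}=d_{s(j)}$ contributes an even exponent. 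The smallest positive eigenvalue of the block is $\tfrac{1}{64}$ (take $k=n-2$, $\nu=0$). Hence $S_{2}=\spn\{|C(j,\vec{z},\vec{a})\rangle\colon j\in[M],\ \vec{z},\vec{a}\in\{0,1\}^{n}\}$ and $\gamma(H_{B}\big|_{S_{1}})=\gamma(H_{1}\big|_{S_{1}})=\tfrac{1}{64}$.

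For step (ii): $h_{1}$ is a sum of mutually orthogonal terms of the form \eq{h_edges} (one per edge in rows $2,\ldots,n$, and each node carries at most one edge), so $\|h_{1}\|\leq2$ and $\|H_{B}\|\leq\|H_{1}\|=\big\|\sum_{w=1}^{n}h_{1}^{(w)}\big\|\leq2n$. Applying \lem{npl} with $c=\tfrac{1}{64}$, $d=\tfrac{\mathcal{K}_{0}}{256}$, and $\|H_{B}\|\leq2n$ gives $\gamma(H(G_{2},\gxoc,n))\geq\frac{cd}{c+d+\|H_{B}\|}$; using $\mathcal{K}_{0}\leq1$ the denominator is at most $\tfrac{5}{256}+2n\leq(2+\tfrac{5}{256})n$ for $n\geq1$, which yields the stated bound $\gamma(H(G_{2},\gxoc,n))\geq\frac{\mathcal{K}_{0}}{35000\,n}$.

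The main obstacle is the exact diagonalization in step (i): one must correctly read off from \eq{H1_goodbasis-1} the ``$(n-1)\,\id+$ hypercube $+$ weighted path'' tensor structure of a block and verify that its smallest \emph{nonzero} eigenvalue is the absolute constant $\tfrac{1}{64}$ rather than something of order $1/n$ --- this $n$-independence is exactly what limits the loss in the \npl\ to a single factor of $n$. The accompanying checks (that the $|C(j,\vec{z},\vec{a})\rangle$ are the bottom eigenvectors, and the harmless role of the $(-1)^{d_{1}}$ factor) are short but demand care with the labelling conventions of \eq{phi_init_basis} and \eq{cubestates-1}.
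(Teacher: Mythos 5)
Your proof is correct and follows essentially the same approach as the paper: block-diagonalize $H_1\big|_{S_1}$ by the $(j,\vec z,\vec a)$ labels, identify each block as $\tfrac{1}{64}\bigl((n-1)\id + (\text{hypercube}) + (\text{weighted }0\text{--}2\text{--}1\text{ path})\bigr)$, read off the unique zero eigenvector $|C(j,\vec z,\vec a)\rangle$ and the gap $\tfrac{1}{64}$, then apply the \npl\ with $\gamma(H_A)\geq\mathcal{K}_0/256$ and $\|H_B\|\leq 2n$. The only difference from the paper's presentation is that you phrase the commuting operators as graph adjacency matrices ($A(Q_{n-2})$ and $\tfrac{1}{\sqrt2}A(P_3)$), whereas the paper names them $H_{\mathrm{flip},i}$ and $H_{\mathrm{special},j}$; the underlying computation and constants are the same.
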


\begin{proof}
Recall from the previous section that $H_{1}|_{S_1}$ is block diagonal in the basis \eq{phi_init_basis}, with a block for each $j\in[M]$ and $\vec{z},\vec{a}\in\{0,1\}^{n}$. That is to say, $\langle k,\vec{c},\In(\vec{x}),\vec{b}|H_{1}|j,\vec{d},\In(\vec{z}),\vec{a}\rangle$ is zero unless $\vec{a}=\vec{b}$, $k=j$, and $\vec{z}=\vec{x}$. Equation \eq{H1_goodbasis-1} gives the nonzero matrix elements within a given block, which we use to compute the frustration-free ground states of $H_{1}|_{S_{1}}$.

Looking at equation \eq{H1_goodbasis-1}, we see that the matrix for each block can be written as a sum of $n$ commuting matrices: $\frac{n-1}{64}$ times the identity matrix (case 1 in equation \eq{H1_goodbasis-1}), $n-2$ terms that each flip a single bit $i\notin\{1,s(j)\}$ of $\vec{d}$ (case 2), and a term that changes the value of the ``special'' components $d_{1}=d_{s(j)}\in\{0,1,2\}$ (case 3). Thus
\[
\langle j,\vec{c},\In(\vec{z}),\vec{a}|H_{1}|j,\vec{d},\In(\vec{z}),\vec{a}\rangle=\langle j,\vec{c},\In(\vec{z}),\vec{a}|\frac{1}{64}(n-1)+\frac{1}{64}\sum_{i\in[n]\setminus\{1,s(j)\}}H_{\mathrm{flip},i}+\frac{1}{64}H_{\mathrm{special},j}|j,\vec{d},\In(\vec{z}),\vec{a}\rangle
\]
where
\begin{align*}
\langle j,\vec{c},\In(\vec{z}),\vec{a}|H_{\mathrm{flip},i}|j,\vec{d},\In(\vec{z}),\vec{a}\rangle &= \delta_{c_{i},d_{i}\oplus1} \prod_{r\in[n]\setminus\{i\}}\delta_{c_{r},d_{r}}
\end{align*}
and 
\[
\langle j,\vec{c},\In(\vec{z}),\vec{a}|H_{\mathrm{special},j}|j,\vec{d},\In(\vec{z}),\vec{a}\rangle=\begin{cases}
\frac{1}{\sqrt{2}} & \begin{aligned}
  &(c_{1},d_{1})\in\{(2,0),(0,2),(1,2),(2,1)\}\\ 
  &\text{and}\; d_{r}=c_{r}\;\text{for}\; r\in [n]\setminus\{1,s(j)\}
\end{aligned}\\
0 & \text{otherwise.}
\end{cases}
\]
Note that these $n$ matrices are mutually commuting, each eigenvalue of $H_{\mathrm{flip},i}$ is $\pm1$, and each eigenvalue of $H_{\mathrm{special},j}$ is equal to one of the eigenvalues of the matrix 
\[
\frac{1}{\sqrt{2}}\begin{pmatrix}
0 & 0 & 1\\
0 & 0 & 1\\
1 & 1 & 0
\end{pmatrix},
\]
which are $\{-1,0,1\}$. Thus we see that the eigenvalues of $H_1|_{S_{1}}$ within a block for some $j\in [M]$ and $\vec{z},\vec{a}\in\{0,1\}^{n}$ are
\begin{equation}
  \frac{1}{64}\bigg(n-1+\sum_{i\notin\{1,s(j)\}}y_{i}+w\bigg)
  \label{eq:block_eig}
\end{equation}
where $y_{i}\in\pm1$ for each $i\in[n]\setminus\{1,s(j)\}$ and $w\in\{-1,0,1\}$. In particular, the smallest eigenvalue within the block is zero (corresponding to $y_i=w=-1$). The corresponding eigenspace is spanned by the simultaneous $-1$ eigenvectors of each $H_{\mathrm{flip},i}$ for $i\in [n]\setminus \{1,s(j)\}$ and $H_{\mathrm{special},j}$. The space of simultaneous $-1$ eigenvectors of $H_{\mathrm{flip},i}$ for $i\in [n]\setminus\{1,s(j)\}$ within the block is spanned by $\{|{\Cube_{0}(j,\vec{z},\vec{a})}\rangle, |{\Cube_{1}(j,\vec{z},\vec{a})}\rangle, |{\Cube_{2}(j,\vec{z},\vec{a})}\rangle\}$. The state $|C(j,\vec{z},\vec{a})\rangle$  is the unique superposition of these states that is a $-1$ eigenvector of $H_{\mathrm{special},j}$. Hence, for each block we obtain a unique state $|C(j,\vec{z},\vec{a})\rangle$ in the space $S_2$. Ranging over all blocks $j\in [M]$ and $\vec{z},\vec{a}\in \{0,1\}^n$, we get the basis described in the Lemma.

The smallest nonzero eigenvalue within each block is $\frac{1}{64}$ (corresponding to $y_{i}=-1$ and $w=0$ in equation \eq{block_eig}), so 
\begin{equation}
\gamma(H_{1}|_{S_{1}})=\frac{1}{64}.\label{eq:lowerbnd_H1S_alph}
\end{equation}

To get the stated lower bound, we use \lem{npl} with $H(G_{2},\gxoc,n) = H_{A}+H_{B}$ where 
\[
H_{A}=H(G_{1},\gxoc,n)\qquad H_{B}=H_{1}|_{\mathcal{I}(G_{2},\gxoc,n)}
\]
(as in equation \eq{H_G2}), along with the bounds 
\[
\gamma(H_{A}) \geq \frac{\mathcal{K}_0}{256}\qquad\gamma(H_{B}|_{S_{1}})=\gamma(H_{1}|_{S_{1}})=\frac{1}{64}\qquad\left\Vert H_{B}\right\Vert \leq\left\Vert H_{1}\right\Vert \leq n\left\Vert h_{1}\right\Vert =2n
\]
from \lem{H_Galpha_Gtilde}, equations \eq{H1_restriction} and \eq{lowerbnd_H1S_alph}, and the fact that $\left\Vert h_{1}\right\Vert =2$ from \eq{h_E_bound}. This gives
\[
  \gamma(H(G_2,\gxoc,n))
  \geq \frac{\mathcal{K}_0}{64\mathcal{K}_0+256+2n\cdot64\cdot256}
  \geq \frac{\mathcal{K}_0}{35000n}
\]
where we used the facts that $\mathcal{K}_0\leq 1$ and $n \ge 1$.
\end{proof}
For each $\vec{z},\vec{a}\in\{0,1\}^{n}$ define the uniform superposition
\[
  |\mathcal{H}(\vec{z},\vec{a})\rangle
  =\frac{1}{\sqrt{M}}\sum_{j=1}^{M}|C\left(j,\vec{z},\vec{a}\right)\rangle
\]
that encodes (somewhat elaborately) the history of the computation that consists of applying either $U_{\mathcal{C}_X}$ or $U_{\mathcal{C}_X}^{*}$ to the state $|\vec{z}\rangle$. The first bit of $\vec{a}$ determines whether the circuit $\mathcal{C}_{X}$ or its complex conjugate is applied. 

\begin{lemma}
\label{lem:HG3}
The Hamiltonian $H(G_{3},\gxoc,n)$ has nullspace $S_3$ spanned by the states
\[
  |\mathcal{H}(\vec{z},\vec{a})\rangle
\]
for $\vec{z},\vec{a}\in\{0,1\}^{n}$. Its smallest nonzero eigenvalue is 
\[
  \gamma(H(G_{3},\gxoc,n)) \geq \frac{\mathcal{K}_0}{10^7 n^{2}M^{2}}
\]
where $\mathcal{K}_0\in (0,1]$ is the absolute constant from \lem{gs_g_alpha}.
\end{lemma}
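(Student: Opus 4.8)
The plan is to apply the \npl with $H_A = H(G_{2},\gxoc,n)$ and $H_B = H_{2}|_{\mathcal{I}(G_{3},\gxoc,n)}$, using the decomposition \eq{H_G3}. By \lem{beta_bound} the nullspace $S_{2}$ of $H_A$ is spanned by the states $|C(j,\vec{z},\vec{a})\rangle$ ($j\in[M]$, $\vec{z},\vec{a}\in\{0,1\}^{n}$) and $\gamma(H_A)\geq\frac{\mathcal{K}_{0}}{35000n}$. Since $\mathcal{I}(G_{2},\gxoc,n)=\mathcal{I}(G_{3},\gxoc,n)$ we have $S_{2}\subseteq\mathcal{I}(G_{3},\gxoc,n)$, hence $H_B|_{S_{2}}=H_{2}|_{S_{2}}$; moreover $\|H_B\|\leq\|H_{2}\|\leq n\|h_{2}\|=2n$ by \eq{h_E_bound}. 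The only missing ingredient is therefore $\gamma(H_{2}|_{S_{2}})$ together with the nullspace of $H_{2}|_{S_{2}}$, which is exactly $S_{3}$: for positive semidefinite operators the kernel of $H_A+H_B$ equals $\ker H_A\cap\ker H_B=\ker(H_B|_{S_{2}})$.

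The heart of the proof is to compute the matrix of $H_{2}|_{S_{2}}$ in the basis $\{|C(j,\vec{z},\vec{a})\rangle\}$, expanding each $|C(j,\vec{z},\vec{a})\rangle$ into the $\Cube$ states and using \eq{H2_formula_with_f} together with \eq{H2_diag_goodbasis-1} and \eq{H2_offdiag_goodbasis-1}. The factors $\delta_{\vec{z},\vec{x}}\delta_{\vec{a},\vec{b}}$ make $H_{2}|_{S_{2}}$ block diagonal with one $M\times M$ block per $(\vec{z},\vec{a})$, and all blocks are equal. I expect to show that, within a block, $\langle C(k,\vec{z},\vec{a})|H_{2}|C(j,\vec{z},\vec{a})\rangle$ vanishes unless $|j-k|\leq1$ (so the block is tridiagonal); that its diagonal entries are $\frac{1}{128}$ for $j\in\{1,M\}$ and $\frac{1}{64}$ for $1<j<M$ — immediate from $f_{\mathrm{diag}}$ and the weights $\tfrac14,\tfrac14,\tfrac12$ of $\Cube_{0},\Cube_{1},\Cube_{2}$ in $|C(j,\vec{z},\vec{a})\rangle$; and that its off-diagonal entries all equal $-\frac{1}{128}$, the sign coming from the $(-1)^{\sum_{i}d_{i}}$ phases in the $\Cube$ states. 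Thus each block is exactly $\frac{1}{128}$ times the Laplacian of the path $P_{M}$, whose kernel is the one-dimensional span of the uniform vector. Hence the kernel of $H_{2}|_{S_{2}}$ is spanned by $\frac{1}{\sqrt{M}}\sum_{j=1}^{M}|C(j,\vec{z},\vec{a})\rangle=|\mathcal{H}(\vec{z},\vec{a})\rangle$, giving the claimed basis for $S_{3}$; and its spectral gap is $\frac{1}{128}\bigl(2-2\cos\frac{\pi}{M}\bigr)\geq\frac{1}{32M^{2}}$, so $\gamma(H_{2}|_{S_{2}})\geq\frac{1}{32M^{2}}$.

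Finally I would substitute $\gamma(H_A)\geq\frac{\mathcal{K}_{0}}{35000n}$, $\gamma(H_B|_{S_{2}})\geq\frac{1}{32M^{2}}$, and $\|H_B\|\leq2n$ into \eq{lemma_lower_bnd}; since both $\gamma(H_A)$ and $\gamma(H_B|_{S_{2}})$ are at most $1$, the denominator is at most $2+2n\leq4n$, yielding $\gamma(H(G_{3},\gxoc,n))\geq\frac{1}{4n}\cdot\frac{\mathcal{K}_{0}}{35000n}\cdot\frac{1}{32M^{2}}\geq\frac{\mathcal{K}_{0}}{10^{7}n^{2}M^{2}}$. The main obstacle is the bookkeeping in the matrix-element computation: because $s(j)\neq s(j-1)$ by \eq{condition_ij}, the ``special'' coordinates $d_{1},d_{s(j)}$ of a configuration in column $j$ differ from those in column $j-1$, so one must carefully match the hypercubes $\mathcal{D}_{k}^{j}$ against $\mathcal{D}_{k'}^{j-1}$ when evaluating off-diagonal blocks and verify that the $\Cube$-state sign patterns combine to reproduce precisely the path Laplacian (equivalently, the Feynman–Kitaev clock Hamiltonian on $M$ steps). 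This is laborious rather than conceptually deep — the content is simply that, once restricted to $S_{2}$, adding the first-row edges turns $H_{2}$ into a clock Hamiltonian whose history state is $|\mathcal{H}(\vec{z},\vec{a})\rangle$.
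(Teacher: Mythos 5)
Your proposal matches the paper's proof: you use the same decomposition \eq{H_G3}, the same application of the \npl with $H_A=H(G_2,\gxoc,n)$ and $H_B=H_2|_{\mathcal{I}(G_3,\gxoc,n)}$, and correctly identify that the heart of the argument is showing $H_2|_{S_2}$ is block diagonal with each block equal to $\tfrac{1}{128}$ times the path Laplacian (diagonal entries $\tfrac{1}{128},\tfrac{1}{64},\ldots,\tfrac{1}{64},\tfrac{1}{128}$, off-diagonals $-\tfrac{1}{128}$), giving $\gamma(H_2|_{S_2})\geq\tfrac{1}{32M^2}$ and the uniform superposition $|\mathcal{H}(\vec{z},\vec{a})\rangle$ as the kernel. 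Your final numerical simplification (bounding the NPL denominator by $4n$) is slightly different from the paper's but yields the same $\tfrac{\mathcal{K}_0}{10^7 n^2 M^2}$ bound.
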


\begin{proof}
Recall that
\[
H(G_{3},\gxoc,n)=H(G_{2},\gxoc,n)+H_{2}|_{\mathcal{I}(G_{3},\gxoc,n)}
\]
with both terms on the right-hand side positive semidefinite. To solve for the nullspace of $H(G_{3},\gxoc,n)$, it suffices to restrict our attention to the space
\begin{equation}
  S_{2}=\spn\{ |C(j,\vec{z},\vec{a})\rangle\colon
    j\in[M],\;\vec{z},\vec{a}\in\{0,1\}^{n}\}
  \label{eq:S_beta_basis}
\end{equation}
of states in the nullspace of $H(G_{2},\gxoc,n)$. We begin by computing the matrix elements of $H_{2}$ in the basis for $S_{2}$ given above. We use equations \eq{H2_formula_with_f} and \eq{cubestates-1} to compute the diagonal matrix elements: 
\begin{align}
\langle C\left(j,\vec{z},\vec{a}\right)|H_{2}|C\left(j,\vec{z},\vec{a}\right)\rangle= & \frac{1}{4}\langle{\Cube_{0}(j,\vec{z},\vec{a})}|H_{2}|{\Cube_{0}(j,\vec{z},\vec{a})}\rangle+\frac{1}{4}\langle{\Cube_{1}(j,\vec{z},\vec{a})}\rangle|H_{2}|{\Cube_{1}(j,\vec{z},\vec{a})}\rangle\\
 & +\frac{1}{2}\langle{\Cube_{2}(j,\vec{z},\vec{a})}|H_{2}|{\Cube_{2}(j,\vec{z},\vec{a})}\rangle\\
= & \begin{cases}
0+\frac{1}{256}+\frac{1}{256} & j=1\\
\frac{1}{256}+\frac{1}{256}+\frac{1}{128} & j\in\{2,\ldots,M-1\}\\
\frac{1}{256}+0+\frac{1}{256} & j=M
\end{cases}\\
= & \begin{cases}
\frac{1}{128} & j\in\{1,M\} \\
\frac{1}{64} & j\in\{2,\ldots,M-1\}.
\end{cases}\label{eq:diag_C_mat_els}
\end{align}
In the second line we used equation \eq{H2_diag_goodbasis-1}. Looking at equation \eq{H2_formula_with_f}, we see that the only nonzero off-diagonal matrix elements of $H_{2}$ in this basis are of the form 
\[
\langle C(j-1,\vec{z},\vec{a})|H_{2}|C(j,\vec{z},\vec{a})\rangle\quad
\text{or}\quad\langle C(j,\vec{z},\vec{a})|H_{2}|C(j-1,\vec{z},\vec{a})\rangle
=\langle C(j-1,\vec{z},\vec{a})|H_{2}|C(j,\vec{z},\vec{a})\rangle^{*}
\]
for $j\in\{2,\ldots,M\}$, $\vec{z},\vec{a}\in\{0,1\}^{n}$. To compute these matrix elements we first use equation \eq{H2_offdiag_goodbasis-1} to evaluate 
\[
\langle{\Cube_{w}(j-1,\vec{z},\vec{a})}|H_{2}|{\Cube_{v}(j,\vec{z},\vec{a})}\rangle
\]
for $v,w\in\{0,1,2\}$ and $j\in\{2,\ldots,M\}$. For example, using the second case of equation \eq{H2_offdiag_goodbasis-1}, we get
\begin{align*}
\langle{\Cube_{1}\left(j-1,\vec{z},\vec{a}\right)}|H_{2}|{\Cube_{0}\left(j,\vec{z},\vec{a}\right)}\rangle & = 
 \frac{1}{2^{n-2}}\sum_{\vec{d}\in\mathcal{D}_{0}^{j}}\; \sum_{\vec{c}\in\mathcal{D}_{1}^{j-1}} \!\!(-1)^{\sum_{i\in[n]} (c_i +d_i)}\langle j-1,\vec{c},\text{In}(\vec{z}),\vec{a}|H_2|j,\vec{d},\text{In}(\vec{z}),\vec{a}\rangle \\
& = \frac{1}{2^{n-2}}\sum_{\vec{d}\in\mathcal{D}_{0}^{j}:d_{s(j-1)}=1}\!\!(-1)\cdot\frac{1}{64}=-\frac{1}{128}.
\end{align*}
To go from the first to the second line we used the fact that, for each $\vec{d}\in\mathcal{D}_{0}^{j}$ with $d_{s(j-1)}=1$, there is one $\vec{c}\in\mathcal{D}_{1}^{j-1}$ for which $\langle j-1,\vec{c},\text{In}(\vec{z}),\vec{a}|H_2|j,\vec{d},\text{In}(\vec{z}),\vec{a}\rangle=\frac{1}{64}$ (with all other such matrix elements equal to zero). This $\vec{c}$ satisfies $c_1=c_{s(j-1)}=1$ and $c_{s(j)}=0$, with all other bits equal to those of $\vec{d}$, so 
\[
(-1)^{\sum_{i=1}^{n}\left(c_{i}+d_{i}\right)}=(-1)^{c_{1}+c_{s(j)}+c_{s(j-1)}+d_{1}+d_{s(j)}+d_{s(j-1)}}=-1
\]
for each nonzero term in the sum.

We perform a similar calculation using cases 1, 3, and 4 in equation \eq{H2_offdiag_goodbasis-1} to obtain
\[
\langle{\Cube_{w}(j-1,\vec{z},\vec{a})}|H_{2}|{\Cube_{v}(j,\vec{z},\vec{a})}\rangle=\begin{cases}
-\frac{1}{128} & (w,v)=(1,0)\\
\frac{1}{128\sqrt{2}} & (w,v)\in\{(2,0),(1,2)\}\\
-\frac{1}{256} & (w,v)=(2,2)\\
0 & \text{otherwise.}
\end{cases}
\]
Hence 
\begin{align*}
 & \langle C\left(j-1,\vec{z},\vec{a}\right)|H_{2}|C\left(j,\vec{z},\vec{a}\right)\rangle\\
 &\quad =\frac{1}{4}\langle{\Cube_{1}\left(j-1,\vec{z},\vec{a}\right)}|H_{2}|{\Cube_{0}\left(j,\vec{z},\vec{a}\right)}\rangle-\frac{1}{2\sqrt{2}}\langle{\Cube_{2}\left(j-1,\vec{z},\vec{a}\right)}|H_{2}|{\Cube_{0}\left(j,\vec{z},\vec{a}\right)}\rangle\\
 &\qquad +\frac{1}{2}\langle{\Cube_{2}\left(j-1,\vec{z},\vec{a}\right)}|H_{2}|{\Cube_{2}\left(j,\vec{z},\vec{a}\right)}\rangle-\frac{1}{2\sqrt{2}}\langle{\Cube_{1}\left(j-1,\vec{z},\vec{a}\right)}|H_{2}|{\Cube_{2}\left(j,\vec{z},\vec{a}\right)}\rangle\\
 &\quad =-\frac{1}{128}.
\end{align*}
Combining this with equation \eq{diag_C_mat_els}, we see that $H_{2}|_{S_{2}}$ is block diagonal in the basis \eq{S_beta_basis}, with a block for each pair of $n$-bit strings $\vec{z},\vec{a}\in\{0,1\}^{n}$. Each of the $2^{2n}$ blocks is equal to the $M\times M$ matrix
\[
  \frac{1}{128}
  \begin{pmatrix}
    1 & -1 & 0 & 0 & \cdots & 0 \mystrut\\
    -1 & 2 & -1 & 0 & \cdots & 0 \mystrut\\
    0 & -1 & 2 & -1 & \ddots & \vdots \\
    0 & 0 & -1 & \ddots & \ddots & 0 \\
    \vdots & \vdots & \ddots & \ddots & 2 & -1 \\
    0 & 0 & \cdots & 0 & -1 & 1 \mystrut
  \end{pmatrix}.
\]
This matrix is just $\frac{1}{128}$ times the Laplacian of a path of length $M$, whose spectrum is well known. In particular, it has a unique eigenvector with eigenvalue zero (the all-ones vector) and its eigenvalue gap is $2(1-\cos (\frac{\pi}{M}))\geq \frac{4}{M^2}$. Thus for each of the $2^{2n}$ blocks there is an eigenvector of $H_{2}|_{S_{2}}$ with eigenvalue $0$, equal to the uniform superposition $|\mathcal{H}(\vec{z},\vec{a})\rangle$ over the $M$ states in the block. Furthermore, the smallest nonzero eigenvalue within each block is at least $\frac{1}{32M^{2}}$. Hence
\begin{equation}
  \gamma(H_{2}|_{S_{2}}) \geq \frac{1}{32M^{2}}.\label{eq:gamma_HB}
\end{equation}
To get the stated lower bound on $\gamma(H(G_{3},\gxoc,n))$, we apply \lem{npl} with 
\begin{align*}
  H_{A} &= H(G_{2},\gxoc,n) \qquad H_{B}=H_{2}|_{\mathcal{I}(G_{3},\gxoc,n)}
\end{align*}
and 
\begin{equation}
  \gamma(H_{A}) \geq \frac{\mathcal{K}_0}{35000 n} \qquad
  \gamma(H_{B}|_{S_{2}}) = \gamma(H_{2}|_{S_{2}}) 
  \geq \frac{1}{32 M^{2}} \qquad
  \|H_{B}\| \leq \|H_{2}\| \leq n \|h_{2}\| = 2n
\label{eq:gamma_HA}
\end{equation}
from \lem{beta_bound}, equation \eq{gamma_HB}, and the fact that $\left\Vert h_{2}\right\Vert =2$ from \eq{h_E_bound}. This gives 
\begin{align*}
  \gamma(H(G_3,\gxoc,n)) 
  &\geq \frac{\mathcal{K}_0}{32M^2\mathcal{K}_0+35000n +2n(35000n)(32M^2)} \\
  &\geq \frac{\mathcal{K}_0}{M^2 n^2(32+35000+70000\cdot32)}
   \geq \frac{\mathcal{K}_0}{10^7 M^2 n^2}. \qedhere
\end{align*}
\end{proof}

\begin{lemma}
\label{lem:G_IV}The nullspace $S_4$ of $H(G_{4},\gxoc,n)$ is spanned by
the states 
\begin{equation}
|\mathcal{H}\left(\vec{z},\vec{a}\right)\rangle\quad\text{where}\quad\vec{z}=z_{1}z_{2}\ldots z_{n_{\inn}}\underbrace{00\ldots0}_{n-n_{\inn}}\label{eq:history_states_zero_init}
\end{equation}
for $\vec{a}\in\{0,1\}^{n}$ and $z_{1},\ldots,z_{n_{\inn}}\in\{0,1\}$.
Its smallest nonzero eigenvalue satisfies
\[
\gamma(H(G_{4},\gxoc,n)) \geq \frac{\mathcal{K}_0}{10^{10} M^3n^3}
\]
where $\mathcal{K}_0\in (0,1]$ is the absolute constant from \lem{gs_g_alpha}.
\end{lemma}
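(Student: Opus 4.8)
The plan is to mimic the proofs of \lem{beta_bound} and \lem{HG3} and apply the \npl (\lem{npl}) to the decomposition \eq{H_G4}, writing
\[
H(G_{4},\gxoc,n)=H_{A}+H_{B},\qquad H_{A}=H(G_{3},\gxoc,n),\qquad H_{B}=\sum_{i=n_{\inn}+1}^{n}H_{\inn,i}\big|_{\mathcal{I}(G_{4},\gxoc,n)},
\]
both summands being positive semidefinite. By \lem{HG3} the nullspace $S_{3}$ of $H_{A}$ is spanned by the orthonormal states $|\mathcal{H}(\vec{z},\vec{a})\rangle$, and $\gamma(H_{A})\geq \mathcal{K}_0/(10^{7}n^{2}M^{2})$. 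So the two things I still need are a description of $H_{B}|_{S_{3}}$ (which identifies $S_{4}$ and gives $\gamma(H_{B}|_{S_{3}})$) and an upper bound on $\|H_{B}\|$.

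First I would compute $H_{B}|_{S_{3}}$ in the basis $\{|\mathcal{H}(\vec{z},\vec{a})\rangle\}$. Each $H_{\inn,i}$ is diagonal in the basis \eq{phi_init_basis} with entries \eq{Hin_mat_els}, and each $|\mathcal{H}(\vec{z},\vec{a})\rangle$ is, through \eq{cubestates-1} and the definitions of $|C(j,\vec{z},\vec{a})\rangle$ and $|{\Cube_{k}(j,\vec{z},\vec{a})}\rangle$, a superposition of basis vectors $|j,\vec{d},\In(\vec{z}),\vec{a}\rangle$ all carrying the same labels $\vec{z},\vec{a}$. Hence $H_{B}|_{S_{3}}$ is diagonal in $\{|\mathcal{H}(\vec{z},\vec{a})\rangle\}$, with
\[
\langle\mathcal{H}(\vec{z},\vec{a})|H_{B}|\mathcal{H}(\vec{z},\vec{a})\rangle=\sum_{i=n_{\inn}+1}^{n}\frac{1}{M}\sum_{j=1}^{M}\langle C(j,\vec{z},\vec{a})|H_{\inn,i}|C(j,\vec{z},\vec{a})\rangle .
\]
Expanding $|C(j,\vec z,\vec a)\rangle$ into the $|{\Cube_{k}}\rangle$ with weights $\tfrac14,\tfrac14,\tfrac12$, and using that $H_{\inn,i}$ only sees the bit $d_{i}$, one evaluates each inner product by counting $\vec d\in\mathcal{D}_{k}^{j}$ with $d_{i}=0$. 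Two cases arise: if $i\neq s(j)$ then $d_{i}$ is free, and one gets $\tfrac1{128}$ when $j\le j_{\min,i}$ and $z_{i}=1$ (and $0$ otherwise); if $i=s(j)$ then $d_{i}$ is pinned to $k$, killing the $\Cube_{1},\Cube_{2}$ contributions and leaving $\tfrac1{256}$ when $z_{i}=1$ (which forces $j=j_{\min,i}$). Summing over $j$ gives $\langle\mathcal{H}(\vec z,\vec a)|H_{\inn,i}|\mathcal{H}(\vec z,\vec a)\rangle=\tfrac{2j_{\min,i}-1}{256M}$ if $z_{i}=1$ and $0$ if $z_{i}=0$. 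Summing over the ancilla indices, the diagonal entry is zero exactly when $z_{i}=0$ for all $i\in\{n_{\inn}+1,\dots,n\}$, which gives the claimed basis \eq{history_states_zero_init} for $S_{4}$; and whenever it is nonzero it is at least $\tfrac1{256M}$, so $\gamma(H_{B}|_{S_{3}})\geq \tfrac1{256M}$.

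Second, for the norm bound: the $h_{\inn,i}$ (defined in \eq{hin_i}) are mutually orthogonal projectors since they act on disjoint vertex sets, so $\sum_{i}h_{\inn,i}$ is a projector and $\sum_{i}H_{\inn,i}=\sum_{w=1}^{n}\big(\sum_{i}h_{\inn,i}\big)^{(w)}$ is a sum of $n$ projectors; hence $\|H_{B}\|\leq \|\sum_{i}H_{\inn,i}\|\leq n$. Plugging $c=\gamma(H_{B}|_{S_{3}})\ge\tfrac1{256M}$, $d=\gamma(H_{A})\ge\tfrac{\mathcal{K}_0}{10^{7}n^{2}M^{2}}$, and $\|H_{B}\|\le n$ into \eq{lemma_lower_bnd}, and using $\mathcal{K}_0\le1$ and $n,M\ge1$ to bound the $c+d$ terms in the denominator by $n$, gives $\gamma(H(G_{4},\gxoc,n))\ge \tfrac{\mathcal{K}_0}{10^{10}M^{3}n^{3}}$. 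The only genuinely fiddly step is the per-gate case analysis in the second paragraph; in particular one must be careful that when $i=s(j)$ the special coordinate $d_{i}=k$ removes the $\Cube_{1},\Cube_{2}$ terms, which changes the per-gate weight from $\tfrac1{128}$ to $\tfrac1{256}$ — but this does not affect the conclusion, only the constant. Everything else is mechanical and parallels \lem{HG3}.
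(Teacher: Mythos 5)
Your proposal is correct and follows essentially the same route as the paper: the same NPL decomposition $H(G_4,\gxoc,n)=H(G_3,\gxoc,n)+\sum_i H_{\inn,i}|_{\mathcal{I}(G_4,\gxoc,n)}$, the same diagonal computation of $\sum_i H_{\inn,i}$ on $S_3$ yielding the entry $\sum_i \frac{2j_{\min,i}-1}{256M}\delta_{z_i,1}$, and the same plug-in of $c\geq\frac{1}{256M}$, $d\geq\frac{\mathcal{K}_0}{10^7 n^2 M^2}$, $\|H_B\|\leq n$. The only cosmetic difference is that you organize the per-gate case analysis by whether $i=s(j)$ rather than by whether $j$ is less than, equal to, or greater than $j_{\min,i}$ as the paper does, but the two bookkeepings are equivalent and give the same totals.
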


\begin{proof}
Using equation \eq{Hin_mat_els}, we find
\begin{align*}
\langle C(k,\vec{x},\vec{b})|H_{\inn,i}|C(j,\vec{z},\vec{a})\rangle
&=\delta_{k,j} \delta_{\vec{x},\vec{z}}\delta_{\vec{a},\vec{b}}\bigg(\frac{1}{4}\langle {\Cube_0(j,\vec{z},\vec{a})}|H_{\inn,i}|{\Cube_0(j,\vec{z},\vec{a})}\rangle\\
&\quad+\frac{1}{4}\langle {\Cube_1(j,\vec{z},\vec{a})}|H_{\inn,i}| {\Cube_1(j,\vec{z},\vec{a})}\rangle \\
&\quad+\frac{1}{2}\langle {\Cube_2(j,\vec{z},\vec{a})}|H_{\inn,i}| {\Cube_2(j,\vec{z},\vec{a})}\rangle\bigg)\\
&=\delta_{k,j} \delta_{\vec{x},\vec{z}}\delta_{\vec{a},\vec{b}}\left(\frac{1}{64}\delta_{z_i,1}\right)\begin{cases}
\frac{1}{4}\cdot\frac{1}{2}+\frac{1}{4}\cdot\frac{1}{2}+\frac{1}{2}\cdot\frac{1}{2} & j<j_{\min,i}\\
\frac{1}{4}+0+0 & j=j_{\min,i}\\
0+0+0 & j>j_{\min,i}
\end{cases}
\end{align*}
for each $i\in\{n_{\inn}+1,\ldots,n\}$. Hence 
\[
\langle\mathcal{H}(\vec{x},\vec{b})|\sum_{i=n_{\inn}+1}^{n}H_{\inn,i}|\mathcal{H}(\vec{z},\vec{a})\rangle=\frac{1}{M}\delta_{\vec{x},\vec{z}}\delta_{\vec{a},\vec{b}}\sum_{i=n_{\inn}+1}^{n}\frac{1}{64}\left(\frac{j_{\min,i}-1}{2}+\frac{1}{4}\right)\delta_{z_{i},1}.
\]
Therefore
\[
\sum_{i=n_{\inn}+1}^{n}H_{\inn,i}\big|_{S_{3}}
\]
is diagonal in the basis $\{|\mathcal{H}(\vec{z},\vec{a})\rangle\colon\vec{z},\vec{a}\in\{0,1\}^{n}\}$. The zero eigenvectors are given by equation \eq{history_states_zero_init}, and the smallest nonzero eigenvalue is
\begin{equation}
\gamma\left(\sum_{i=n_{\inn}+1}^{n}H_{\inn,i}\big|_{S_{3}}\right)\geq\frac{1}{256M}\label{eq:lbnd_Hinj}
\end{equation}
since $j_{\min,i}\geq1$. To get the stated lower bound we now apply \lem{npl} with 
\begin{align*}
H_{A} & =H(G_{3},\gxoc,n)\qquad H_{B}=\sum_{i=n_{\inn}+1}^{n}H_{\inn,i}\big|_{\mathcal{I}(G_{4},\gxoc,n)}
\end{align*}
and
\[
  \gamma(H_{A}) \geq \frac{\mathcal{K}_0}{10^7 M^{2}n^{2}} \qquad
  \gamma(H_{B}|_{S_{3}})\geq\frac{1}{256M} \qquad 
  \left\Vert H_{B}\right\Vert 
    \leq n\left\Vert \sum_{i=n_{\inn}+1}^{n}h_{\inn,i}\right\Vert = n
\]
where we used \lem{HG3}, equation \eq{lbnd_Hinj}, and the fact that $\left\Vert \sum_{i=n_{\inn}+1}^{n}h_{\inn,i}\right\Vert =1$ (from equation \eq{h_S_bound}. This gives
\begin{align*}
\gamma \left(H(G_4,\gxoc,n)\right)
&\geq \frac{\mathcal{K}_0}{256M\mathcal{K}_0+10^7n^2M^2+n(256M)(10^7n^2M^2)} \\
&\geq \frac{\mathcal{K}_0}{(M^3n^3)(256+10^7+256\cdot 10^7)}
 \geq \frac{\mathcal{K}_0}{10^{10} M^3n^3}. \qedhere
\end{align*}
\end{proof}

\subsection{Completeness and soundness}
\label{sec:Completeness-and-Soundness}

We now finish the proof of \thm{main_thm_with_occ_constraints}. Using equation \eq{hout_matels-1} we get
\begin{align*}
\langle C(k,\vec{x},\vec{b})|H_{\out}|C(j,\vec{z},\vec{a})\rangle &=\delta_{k,j}\delta_{\vec{a},\vec{b}}\bigg(\frac{1}{4}\langle \Cube_0(j,\vec{x},\vec{a})|H_{\out}| \Cube_0(j,\vec{z},\vec{a})\rangle\\
&\quad +\frac{1}{4}\langle \Cube_1(j,\vec{x},\vec{a})|H_{\out}| \Cube_1(j,\vec{z},\vec{a})\rangle \\
&\quad+\frac{1}{2}\langle \Cube_2(j,\vec{x},\vec{a})|H_{\out}| \Cube_2(j,\vec{z},\vec{a})\rangle\bigg)\\
&= \frac{\delta_{k,j}\delta_{\vec{a},\vec{b}}}{64}\langle\vec{x}|U_{\mathcal{C}_X}^{\dagger}(a_{1})\left(|0\rangle\langle0|_{2}\right)U_{\mathcal{C}_X}(a_{1})|\vec{z}\rangle \begin{cases}
\frac{1}{4}\cdot\frac{1}{2}+\frac{1}{4}\cdot\frac{1}{2}+\frac{1}{2}\cdot\frac{1}{2} & j>j_{\max}\\
0+\frac{1}{4}+0 & j=j_{\max}\\
0+0+0 & j<j_{\max}
\end{cases}
\end{align*}
and
\begin{equation}
\langle\mathcal{H}(\vec{x},\vec{b})|H_{\out}|\mathcal{H}(\vec{z},\vec{a})\rangle
=\delta_{\vec{a},\vec{b}} \frac{1}{128M}(M-j_{\max}+\tfrac{1}{2})\langle\vec{x}|U_{\mathcal{C}}^{\dagger}(a_{1})\left(|0\rangle\langle0|_{2}\right)U_{\mathcal{C}}(a_{1})|\vec{z}\rangle.\label{eq:history_hout_matels}
\end{equation}
For any $n_{\inn}$-qubit state $|\phi\rangle$, define 
\begin{equation}
|\mathcal{\widehat{H}}(\phi,\vec{a})\rangle=\sum_{\vec{z}\in\{0,1\}^{n}}\left(\langle\vec{z}|\phi\rangle|0\rangle^{\otimes n-n_{\inn}}\right)|\mathcal{H}(\vec{z},\vec{a})\rangle\label{eq:H_hat}.
\end{equation}
Note (from \lem{G_IV}) that $|\mathcal{\widehat{H}}(\phi,\vec{a})\rangle$ is in the nullspace of $H(G_{4},\gxoc,n)$.

\subsubsection{Completeness}

Suppose there exists an $n_{\inn}$-qubit state $|\psi_{\wit}\rangle$ such that $\AP(\mathcal{C}_{X},|\psi_{\wit}\rangle)\geq1-\frac{1}{2^{|X|}}$, i.e., 
\begin{equation}
\langle\psi_{\wit}|\langle0|^{n-n_{\inn}}U_{\mathcal{C}_{X}}^{\dagger}|0\rangle\langle0|_{2}U_{\mathcal{C}_{X}}|\psi_{\wit}\rangle|0\rangle^{n-n_{\inn}}\leq\frac{1}{2^{|X|}}.\label{eq:completeness_case}
\end{equation}
Then (letting $\vec{0}$ denote the all-zeros vector) 
\begin{align*}
\langle\widehat{\mathcal{H}}(\psi_{\wit},\vec{0})|H(\gx,\gxoc,n)|\widehat{\mathcal{H}}(\psi_{\wit},\vec{0})\rangle & = \langle\widehat{\mathcal{H}}(\psi_{\wit},\vec{0})|H(G_{4},\gxoc,n)+H_{\out}|\widehat{\mathcal{H}}(\psi_{\wit},\vec{0})\rangle\\
 & = \langle\widehat{\mathcal{H}}(\psi_{\wit},\vec{0})|H_{\out}|\widehat{\mathcal{H}}(\psi_{\wit},\vec{0})\rangle\\
 & = \frac{1}{128M}(M-j_{\max}+\tfrac{1}{2})\langle\psi_{\wit}|\langle0|^{n-n_{\inn}}U_{\mathcal{C}_{X}}^{\dagger}|0\rangle\langle0|_{2}U_{\mathcal{C}_{X}}|\psi_{\wit}\rangle|0\rangle^{n-n_{\inn}}\\
 & \leq \frac{1}{2^{|X|}}
\end{align*}
(using equations \eq{history_hout_matels} and \eq{H_hat} to go from the second to the third line, and equation \eq{completeness_case} in the last line). Hence $\lambda_{n}^{1}(\gx,\gxoc)\leq\frac{1}{2^{|X|}}$, establishing equation \eq{cond1}.

\subsubsection{Soundness}

Now suppose $\AP(\mathcal{C}_{X},|\phi\rangle)\leq\frac{1}{3}$ for all normalized $n_{\inn}$-qubit states $|\phi\rangle$, i.e.,
\begin{equation}
\langle\phi|\langle0|^{n-n_{\inn}}U_{\mathcal{C}_{X}}^{\dagger}|0\rangle\langle0|_{2}U_{\mathcal{C}_{X}}|\phi\rangle|0\rangle^{n-n_{\inn}}\geq\frac{2}{3}\quad\text{for all normalized }|\phi\rangle\in(\C^{2})^{\otimes n_{\inn}}.\label{eq:soundness_a0}
\end{equation}
Complex-conjugating this equation gives
\[
\langle\phi|^{*}\langle0|^{n-n_{\inn}}U_{\mathcal{C}_{X}}^{\dagger*}|0\rangle\langle0|_{2}U_{\mathcal{C}_{X}}^{*}|\phi\rangle^{*}|0\rangle^{n-n_{\inn}}\geq\frac{2}{3}\quad\text{for all normalized }|\phi\rangle\in(\C^{2})^{\otimes n_{\inn}},
\]
or equivalently (replacing $|\phi\rangle$ with its complex conjugate),
\begin{equation}
\langle\phi|\langle0|^{n-n_{\inn}}U_{\mathcal{C}_{X}}^{\dagger*}|0\rangle\langle0|_{2}U_{\mathcal{C}_{X}}^{*}|\phi\rangle|0\rangle^{n-n_{\inn}}\geq\frac{2}{3}\quad\text{for all normalized }|\phi\rangle\in(\C^{2})^{\otimes n_{\inn}}.\label{eq:soundness_a1}
\end{equation}
Recall that $S_{4}$ is the nullspace of $H(G_{4},\gxoc,n)$ and consider the restriction 
\begin{equation}
H_{\out}\big|_{S_{4}}.\label{eq:res_S4}
\end{equation}
We now show that the smallest eigenvalue of \eq{res_S4} is strictly positive. This implies that the nullspace of $H(\gx,\gxoc,n)$ is empty, which can be seen from \eq{H_GC} since both terms in this equation are positive semidefinite and $S_4$ is the nullspace of the first term. We then use \npl to lower bound the smallest eigenvalue $\lambda_n^{1}(\gx,\gxoc)$ of $H(\gx,\gxoc,n)$.

By \lem{G_IV}, the states 
\[
|\widehat{\mathcal{H}}(\vec{z},\vec{a})\rangle = |\mathcal{H}(z_1 z_2\ldots z_{n_\inn}\underbrace{00\ldots0}_{n-n_\inn},\vec{a})\rangle,
\quad\vec{a}\in\{0,1\}^{n},\,\vec{z}\in\{0,1\}^{n_{\inn}}
\]
are a basis for $S_{4}$ and in this basis $H_{\out}$ is block diagonal with a block for each $\vec{a}\in\{0,1\}^{n}$, as can be seen using equation \eq{history_hout_matels}: 
\[
\langle\widehat{\mathcal{H}}(\vec{x},\vec{b})|H_{\out}|\widehat{\mathcal{H}}(\vec{z},\vec{a})\rangle=\delta_{\vec{a},\vec{b}} \frac{1}{128M}(M-j_{\max}+\tfrac{1}{2})\langle\vec{x}|\langle0|^{n-n_{\inn}}U_{\mathcal{C}_{X}}^{\dagger}(a_{1})|0\rangle\langle0|_{2}U_{\mathcal{C}_{X}}(a_{1})|\vec{z}\rangle|0\rangle^{n-n_{\inn}}.
\]
From equation \eq{H_hat} we can see that any normalized state that lives in the block corresponding to some fixed $\vec{a}\in\{0,1\}^{n}$ can be written as $|\widehat{\mathcal{H}}(\phi,\vec{a})\rangle$ for some normalized $n_{\inn}$-qubit state $|\phi\rangle$. The smallest eigenvalue of $H_{\out}|_{S_{4}}$ is therefore
\[
\langle\widehat{\mathcal{H}}(\theta,\vec{\alpha})|H_{\out}|\widehat{\mathcal{H}}(\theta,\vec{\alpha})\rangle
\]
for some normalized $n_{\inn}$-qubit state $|\theta\rangle$ and some $\vec{\alpha}\in\{0,1\}^{n}$. Now 
\begin{align}
\langle\widehat{\mathcal{H}}(\theta,\vec{\alpha})|H_{\out}|\widehat{\mathcal{H}}(\theta,\vec{\alpha})\rangle & = \frac{1}{128M}(M-j_{\max}+\tfrac{1}{2})\langle\theta|\langle0|^{n-n_{\inn}}U_{\mathcal{C}_{X}}^{\dagger}(\alpha_{1})|0\rangle\langle0|_{2}U_{\mathcal{C}_{X}}(\alpha_{1})|\theta\rangle|0\rangle^{n-n_{\inn}}\nonumber \\
 & \geq \frac{1}{256M}\cdot\frac{2}{3}\label{eq:h_phi_a_expectation}
\end{align}
using equation \eq{soundness_a0} if $\alpha_{1}=0$ and equation \eq{soundness_a1} if $\alpha_{1}=1$. Since \eq{h_phi_a_expectation} is a lower bound on the smallest eigenvalue within each block, the nullspace of $H_{\out}|_{S_{4}}$ is empty and 
\begin{equation}
\gamma(H_{\out}|_{S_{4}})\geq\frac{1}{384M}.\label{eq:bound_lowesteigvalue}
\end{equation}
As noted above, the fact that this matrix has strictly positive eigenvalues implies that so does
$H(\gx,\gxoc,n)$, i.e.,
\[
\lambda_{n}^{1}\left(\gx,\gxoc\right)=\gamma(H(\gx,\gxoc,n)).
\]
To lower bound this quantity we apply \lem{npl} with
\begin{align*}
H_{A} &= H(G_{4},\gxoc,n)\qquad H_{B}=H_{\out}\big|_{\mathcal{I}(\gx,\gxoc,n)}
\end{align*}
and we use the bound \eq{bound_lowesteigvalue} as well as
\[
\gamma(H_{A}) \geq \frac{\mathcal{K}_0}{10^{10} n^{3}M^{3}}
\]
(from \lem{G_IV}) and $\left\Vert H_{B}\right\Vert \leq\left\Vert H_{\out}\right\Vert \leq n\left\Vert h_{\out}\right\Vert=n$ (using equation \eq{hin_i}). Applying the Lemma gives
\begin{align*}
\lambda_{n}^{1}\left(\gx,\gxoc\right)&=\gamma(H(\gx,\gxoc,n))\\
&\geq \frac{\mathcal{K}_0}{10^{10}n^3M^3+384M\mathcal{K}_0+n(10^{10}n^3M^3)(384M)}\\
&\geq \frac{\mathcal{K}_0}{n^4M^4(10^{10}+384+10^{10} \cdot 384)}\\
&\geq \frac{\mathcal{K}_0}{10^{13}n^4M^4}.
\end{align*}
Now choosing $\mathcal{K}$ (the constant in the statement of \thm{main_thm_with_occ_constraints}) to be equal to $\frac{\mathcal{K}_0}{10^{13}}$ (recall $\mathcal{K}_0\in (0,1]$ is the absolute constant from \lem{gs_g_alpha}) proves equation \eq{cond2}. This completes the proof of \thm{main_thm_with_occ_constraints}.

\section*{Acknowledgments}

This work was supported in part by NSERC; the Ontario Ministry of Research and Innovation; the Ontario Ministry of Training, Colleges, and Universities; and the US ARO.


\providecommand{\bysame}{\leavevmode\hbox to3em{\hrulefill}\thinspace}

\appendix

\section{Complexity of computing the smallest eigenvalue of a graph}
\label{app:complexity_smallest_graph_eig}

How hard is it to compute the smallest eigenvalue of a symmetric $0$-$1$ matrix (i.e., the adjacency matrix of a graph)? If the matrix is given explicitly then this can be done in time polynomial in the input size. Instead, we consider adjacency matrices $A$ specified by a (classical) circuit that takes as input the index $r$ of a row and outputs the indices $\{j\colon A_{rj}=1\}$ of the nonzero entries in that row. Note that the circuit describing a $D\times D$ matrix $A$ must have size $\Omega(\log(D))$ since it takes as input a row index $r\in[D]$. Efficiently row-computable matrices are those described by circuits of size polynomial in $\log D$ \cite{AT03}. We consider the following promise problem.

\begin{mdframed}
\begin{problem}
[\textbf{Minimum Graph Eigenvalue}] We are given a $D\times D$ symmetric $0$-$1$ matrix $A$, specified by a classical deterministic circuit that takes as input a row $r\in[D]$ and computes the locations of the nonzero entries in that row. We are also given a real number $a$ and a precision parameter $\epsilon=\frac{1}{N}$, where $N\in\mathbb{N}$ is specified in unary. We are promised that either the smallest eigenvalue of $A$ is at most $a$ (yes instance), or else it is at least $a+\epsilon$ (no instance), and asked to decide which is the case.
\end{problem}
\end{mdframed}

The input size of this problem is the size of the circuit that describes
$A$, plus the description size of $a$, plus $N$ bits. The interesting
set of instances are those for which $A$ is efficiently row-computable,
so that the input size is $\poly(\log D,\frac{1}{\epsilon})$.

\begin{theorem}
Minimum Graph Eigenvalue is QMA-complete. 
\end{theorem}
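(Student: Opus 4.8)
The plan is to establish containment in QMA and QMA-hardness separately. Containment mirrors the verification protocol of \sec{containment_in_QMA}. Given the row-computing circuit for the $D\times D$ matrix $A$, the threshold $a$, and $\epsilon=1/N$: since the circuit outputs the list of nonzero entries of each row, $A$ is $s$-sparse with $s=\poly(\log D)$, so $\|A\|\le s$. Merlin sends a purported ground vector $|\psi\rangle$; Arthur uses sparse Hamiltonian simulation \cite{AT03} to implement $e^{-iAt}$ and runs phase estimation to estimate $\langle\psi|A|\psi\rangle$ to additive error $\epsilon/4$ with probability at least $\tfrac23$, accepting iff the estimate is at most $a+\epsilon/2$. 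Completeness holds because on a yes instance the true ground vector has energy at most $a$; soundness holds because on a no instance every vector has energy at least $a+\epsilon$, so the estimate exceeds $a+\epsilon/2$ with probability at least $\tfrac23$. This runs in time $\poly(\log D,1/\epsilon)$, i.e.\ polynomial in the input size.

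For QMA-hardness I would reduce from an arbitrary QMA promise problem. Given an instance $X$, amplify its verification circuit (by \cite{KSV02,MW05}) to completeness $1-2^{-|X|}$ and soundness $\tfrac13$; call it $\mathcal{C}_X$, with $n$ qubits and $M$ gates over $\{\CNOT,H,HT\}$, and pad it with its own inverse circuit so that the composed $2M$-gate circuit equals the identity (this allows a periodic clock, as for $g_0$). Build a graph $g_{\mathcal{C}}$ by the recipe of \sec{Encoding-a-Computation} generalized from one qubit to $n$: apply the Feynman--Kitaev map to obtain the propagation Hamiltonian $-\sqrt2\sum_t(U_t^\dagger\otimes|t\rangle\langle t+1|+U_t\otimes|t+1\rangle\langle t|)$ on $(\C^2)^{\otimes n}\otimes\C^{2M}$ with periodic clock, replace each nonzero matrix element $\omega^k$ (with $\omega=e^{i\pi/4}$) of the propagation terms by the power $S^k$ of the shift operator on an adjoined $8$-level ancilla, and add the penalty $\id\otimes\id\otimes(S^3+S^4+S^5)$ to isolate the $\omega$ and $\omega^*$ sectors. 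To this add self-loops---diagonal $0$-$1$ terms, so that $A(g_{\mathcal{C}})$ remains a symmetric $0$-$1$ matrix---penalizing a $|1\rangle$ on each ancilla qubit at the initial clock time and a $|0\rangle$ on the output qubit at the clock time just after the forward computation. The vertex set has size $D=2^n\cdot 2M\cdot 8=\exp(\poly(|X|))$, but each vertex carries a label of $\poly(|X|)$ bits, $g_{\mathcal{C}}$ has bounded degree, and the neighbours of a vertex follow from a local rule together with knowledge of which gate $U_t$ acts at the current clock time (computable from the uniform description of $\mathcal{C}_X$); hence $A(g_{\mathcal{C}})$ is efficiently row-computable and the reduction is polynomial time.

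The heart of the argument is the spectral analysis of $A(g_{\mathcal{C}})=H_{\prop}+H_{\text{penalty}}+H_{\inn}+H_{\out}$, showing its smallest eigenvalue $\mu(g_{\mathcal{C}})$ separates yes from no instances by an inverse-polynomial margin. I would do this with repeated applications of the \npl (\lem{npl}), in the spirit of \sec{Proof-of-Theorem}, so as never to attach a large coefficient to $H_{\prop}$---which here is not merely unphysical but would destroy the $0$-$1$ property. As for $g_0$, $H_{\prop}+H_{\text{penalty}}$ has smallest eigenvalue $e_1=-1-3\sqrt2$, with nullspace (after subtracting $e_1$) spanned by the history states of $\mathcal{C}_X$ and of its elementwise complex conjugate $\mathcal{C}_X^*$, one for each $n$-qubit input, and with spectral gap $\Omega(1/M^2)$ from the clock cycle together with an $\Omega(1)$ contribution splitting off the six unwanted $S$-sectors. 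Restricting the positive semidefinite $H_{\inn}$ to this nullspace selects the history states with correctly initialized ancillas, and restricting $H_{\out}$ further has expectation $\tfrac{c}{M}\bigl(1-\AP(\mathcal{C}_X,|\phi\rangle)\bigr)$ on the history state for input $|\phi\rangle$, for an absolute constant $c>0$. Hence if $X$ is a yes instance there is a history state of energy at most $e_1+2^{-|X|}$, so $\mu(g_{\mathcal{C}})\le e_1+2^{-|X|}$; if $X$ is a no instance, the \npl applied along the chain $H_{\prop}+H_{\text{penalty}}\to{}+H_{\inn}\to{}+H_{\out}$, with $\|H_{\inn}\|,\|H_{\out}\|=O(\poly(|X|))$ and each intermediate gap $\Omega(1/\poly(|X|))$, yields $\mu(g_{\mathcal{C}})\ge e_1+g$ with $g=\Omega(1/\poly(|X|))$; the spurious conjugate sector causes no trouble since $\mathcal{C}_X^*$ is as good a QMA verifier as $\mathcal{C}_X$---completeness and soundness are preserved under complex conjugation, exactly as in \sec{Completeness-and-Soundness}. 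Finally take $\epsilon$ an inverse polynomial below $g$ and let $a$ be a rational approximation---computable to polynomial precision since $\sqrt2$ and $g$ are---of $e_1+g/2$; then yes instances satisfy $\mu(g_{\mathcal{C}})\le a$ and no instances satisfy $\mu(g_{\mathcal{C}})\ge a+\epsilon$ once $|X|$ exceeds a constant, completing the reduction. The main obstacle is precisely this gap bound: with $A(g_{\mathcal{C}})$ forced to be literally $0$-$1$, Kitaev's device of boosting the propagation term is unavailable, and essentially all the work goes into tracking the smallest nonzero eigenvalue through the restrictions via the \npl.
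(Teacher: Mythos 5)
Your plan tracks the paper's Appendix~\ref{app:complexity_smallest_graph_eig} very closely: sparse Hamiltonian simulation plus phase estimation for containment in QMA; for hardness, pad $\mathcal{C}_X$ to $U_{\mathcal{C}_X}^{\dagger}U_{\mathcal{C}_X}$ so the composed circuit is the identity and the clock can be periodic, apply the Feynman--Kitaev propagation Hamiltonian, replace each $\omega^k$ by $S^k$ on an adjoined $8$-level ancilla, add the penalty $S^3+S^4+S^5$ to isolate the $\omega$ and $\omega^*$ sectors, add self-loop terms enforcing ancilla initialization and the output condition, and control the smallest nonzero eigenvalue through the chain $H_{\prop}+H_{\text{penalty}}\to{}+H_{\mathrm{input}}\to{}+H_{\mathrm{output}}$ with the \npl. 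The $\Omega(1/M^2)$ clock gap, the complex-conjugate-sector observation, and the final choice of $a$ and $\epsilon$ all match.

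There is one genuine gap, and it is in the gate set. You take the verification circuit over $\{\CNOT,H,HT\}$ (and inverses, after padding). But the substitution $\omega^k\mapsto S^k$ is performed on the nonzero entries of $-\sqrt{2}\,U_t$, and for it to yield a symmetric $0$-$1$ matrix these entries must lie in $\{\omega^0,\ldots,\omega^7\}$, in particular have modulus $1$. For $U_t=H$, $HT$, or $(HT)^{\dagger}$ this holds, since the nonzero entries of $U_t$ have modulus $1/\sqrt{2}$. For $U_t=\CNOT$ it fails: the nonzero entries of $\CNOT$ have modulus $1$, so $-\sqrt{2}\,\CNOT$ has entries $\pm\sqrt{2}$, which are not powers of $\omega$ and the replacement is undefined. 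The paper's construction avoids this by choosing the gate set $\mathcal{G}=\{H,\ HT,\ (HT)^{\dagger},\ (H\otimes\id)\CNOT\}$: the two-qubit gate $(H\otimes\id)\CNOT$ has each row with exactly two nonzero entries of modulus $1/\sqrt{2}$, so $-\sqrt{2}\,(H\otimes\id)\CNOT$ has entries in $\{0,\pm1\}$ as needed. Your reduction needs the same adjustment (for instance, replace each $\CNOT$ by $(H\otimes\id)\CNOT$ followed by $H\otimes\id$ to compensate). Without it, $A(g_{\mathcal{C}})$ is not a $0$-$1$ matrix and the Minimum Graph Eigenvalue instance is not even well formed. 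With this fix, the remainder of your argument goes through as written and coincides with the paper's proof.
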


The fact that Minimum Graph Eigenvalue is contained in QMA follows from standard techniques (applying phase estimation to the adjacency matrix) since a Hamiltonian that is a symmetric $0$-$1$ matrix can be simulated efficiently as a function of the size of the circuit computing its rows \cite{AT03}.

To prove that Minimum Graph Eigenvalue is QMA-hard, we show that an instance of any problem in QMA can be converted (in deterministic polynomial time on a classical computer) into an equivalent instance of Minimum Graph Eigenvalue.

Recall that an instance $x$ of a problem in QMA has a verification circuit $\mathcal{C}_{x}$ with $n$ qubits and $M$ gates, both upper bounded by a polynomial function of $|x|$. Here we assume that the verification circuit $\mathcal{C}_{x}$ satisfies a modified version of \defn{QMA} where the completeness parameter $\frac{2}{3}$ is replaced by $1-\frac{1}{2^{|x|}}$ (modifying the definition in this way still gives the same class QMA \cite{KSV02,MW05}). In \sec{circuit-to-graph} we map $\mathcal{C}_{x}$ to a row-sparse and efficiently row-computable symmetric $0$-$1$ matrix $A_{x}$. The mapping we exhibit has the following two properties (shown in \sec{mappingyes} and \sec{mappingno}, respectively):
\begin{enumerate}
\item If $x$ is a yes instance then the smallest eigenvalue of $A_{x}$
is at most $-1-3\sqrt{2}+\frac{1}{2M}\frac{1}{2^{|x|}}$.
\item If $x$ is a no instance then the smallest eigenvalue of $A_{x}$
is at least $-1-3\sqrt{2}+\frac{1}{120M^{4}n}$. 
\end{enumerate}
Note that if $|x|$ is larger than some constant, then $\frac{1}{2M}\frac{1}{2^{|x|}}\leq\frac{1}{240M^{4}n}$, since $M$ and $n$ are both upper bounded by a polynomial in $|x|$. We assume without loss of generality that this holds. For any such instance $x$, we associate an instance of Minimum Graph Eigenvalue defined by the matrix $A_{x}$ along with the number $a=-1-3\sqrt{2}+\frac{1}{240M^{4}n}$ and precision parameter
\[
\epsilon=\frac{1}{240M^{4}n}.
\]
Using the two claimed properties of the circuit-to-graph mapping, we see that if $x$ is a yes instance then the smallest eigenvalue of $A_{x}$ is at most $a$, and if it is a no instance then this eigenvalue is at least $a+\epsilon$. This shows that Minimum Graph Eigenvalue is QMA-hard.

\subsection{Circuit-to-graph mapping}
\label{sec:circuit-to-graph}

The circuit-to-graph mapping we use generalizes the one used in \sec{Encoding-a-Computation}; we repeat the details here so this Section can be read independently. We suppose $\mathcal{C}_{x}$ implements a unitary 
\begin{equation}
U_{\mathcal{C}_{x}}=U_{M}\ldots U_{2}U_{1}\label{eq:single_qubit_circuit}
\end{equation}
 where each $U_{i}$ is from the universal gate set 
\[
\mathcal{G}=\{H,HT,\left(HT\right)^{\dagger},\left(H\otimes\id\right)\CNOT\}
\]
with 
\[
H=\frac{1}{\sqrt{2}}\begin{pmatrix}
1 & 1\\
1 & -1
\end{pmatrix} \quad T=\begin{pmatrix}
1 & 0\\
0 & e^{i\frac{\pi}{4}}
\end{pmatrix}\quad\CNOT=\begin{pmatrix}
1 & 0 & 0 & 0\\
0 & 1 & 0 & 0\\
0 & 0 & 0 & 1\\
0 & 0 & 1 & 0
\end{pmatrix}.
\]
The verification circuit $\mathcal{C}_{x}$ has an $n_{\inn}$-qubit input register and $n-n_{\inn}$ ancilla qubits initialized to $|0\rangle$ at the beginning of the computation. One of these $n$ qubits serves as an output qubit.

It will be convenient to consider 
\[
  U_{\mathcal{C}_{x}}^{\dagger}U_{\mathcal{C}_{x}}=W_{2M}\ldots W_{2}W_{1}
\]
where
\[
W_{t}=\begin{cases}
U_{t} & 1\leq t\leq M\\
U_{2M+1-t}^{\dagger} & M+1 \le t \le 2M.
\end{cases}
\]
As in \sec{Encoding-a-Computation} we start with a version of the Feynman-Kitaev Hamiltonian \cite{Fey85,KSV02}
\begin{equation}
H_{x}=-\sqrt{2}\sum_{t=1}^{2M}\left(W_{t}^{\dagger}\otimes|t\rangle\langle t+1|+W_{t}\otimes|t+1\rangle\langle t|\right)\label{eq:H_x}
\end{equation}
acting on the Hilbert space $\mathcal{H}_{\text{comp}} \otimes \mathcal{H}_{\clock}$ where $\mathcal{H}_{\text{comp}} = \left(\C^{2}\right)^{\otimes n}$ is an $n$-qubit computational register and $\mathcal{H}_{\clock}=\C^{2M}$ is a $2M$-level register with periodic boundary conditions (i.e., we let $|2M+1\rangle=|1\rangle$). Note that 
\begin{equation}
V^{\dagger}H_{x}V=-\sqrt{2}\sum_{t=1}^{2M}\left(\id\otimes|t\rangle\langle t+1|+\id\otimes|t+1\rangle\langle t|\right)\label{eq:V_H_V}
\end{equation}
where 
\[
V=\sum_{t=1}^{2M}\bigg(\prod_{j=t-1}^{1}W_{j}\bigg)\otimes|t\rangle\langle t|
\]
and $W_{0}=1$. Since $V$ is unitary, the eigenvalues of $H_{x}$ are the same as the eigenvalues of \eq{V_H_V}, namely 
\[
-2\sqrt{2}\cos\left(\frac{\pi \ell}{M}\right)
\]
for $\ell=0,\ldots,2M-1$. The ground energy of \eq{V_H_V} is $-2\sqrt{2}$ and its ground space is spanned by 
\[
  |\phi\rangle\frac{1}{\sqrt{2M}}\sum_{j=1}^{2M}|t\rangle,
  \quad|\phi\rangle\in\Lambda
\]
where $\Lambda$ is any orthonormal basis for $\mathcal{H}_{\mathrm{comp}}$. A basis for the ground space of $H_{x}$ is therefore 
\[
V\bigg(|\phi\rangle\frac{1}{\sqrt{2M}}\sum_{j=1}^{2M}|t\rangle\bigg)=\frac{1}{\sqrt{2M}}\sum_{t=1}^{2M}W_{t-1}W_{t-2}\ldots W_{1}|\phi\rangle|t\rangle
\]
 where $|\phi\rangle\in\Lambda$. The first excited energy of $H_{x}$ is 
\[
\eta=-2\sqrt{2}\cos\left(\frac{\pi}{M}\right)
\]
and the gap between ground and first excited energies is lower bounded as 
\begin{equation}
\eta+2\sqrt{2}\geq\sqrt{2}\frac{\pi^{2}}{M^{2}}\label{eq:bound_on_eta}
\end{equation}
(using the fact that $1-\cos(x)\leq\frac{x^{2}}{2}$).

The universal set $\mathcal{G}$ is chosen so that each gate has nonzero entries that are integer powers of $\omega=e^{i\frac{\pi}{4}}$.  Correspondingly, the nonzero standard basis matrix elements of $H_{x}$ are also integer powers of $\omega=e^{i\frac{\pi}{4}}$. We consider the $8\times8$ shift operator
\[
S=\sum_{j=0}^{7}|j+1\text{ mod 8}\rangle\langle j|
\]
and note that $\omega$ is an eigenvalue of $S$ with eigenvector
\[
|\omega\rangle=\frac{1}{\sqrt{8}}\sum_{j=0}^{7}\omega^{-j}|j\rangle.
\]
We modify $H_{x}$ as follows. For each operator $-\sqrt{2}H$, $-\sqrt{2}HT$, $-\sqrt{2}(HT)^{\dagger}$, or $-\sqrt{2}\left(H\otimes\id\right)\CNOT$ appearing in equation \eq{H_x}, define another operator that acts on $\C^{2}\otimes\C^{8}$ or $\C^{4}\otimes\C^{8}$ (as appropriate) by replacing nonzero matrix elements with powers of the operator $S$:
\begin{align*}
\omega^{k} &\mapsto S^{k}.
\end{align*}
Matrix elements that are zero are mapped to the $8\times8$ all-zeroes matrix. Write $B(W)$ for the operators obtained by making this replacement, e.g., 
\begin{align*}
-\sqrt{2}HT & = \begin{pmatrix}
\omega^{4} & \omega^{5}\\
\omega^{4} & \omega
\end{pmatrix} \mapsto B(HT)=\begin{pmatrix}
S^{4} & S^{5}\\
S^{4} & S
\end{pmatrix}.
\end{align*}
Adjoining an 8-level ancilla as a third register and making this replacement in equation \eq{single_qubit_ham} gives 
\begin{align}
H_{\mathcal{\prop}} & =\sum_{t=1}^{2M}\left(B(W_{t})^{\dagger}_{13}\otimes|t\rangle\langle t+1|_2+B(W_{t})_{13}\otimes|t+1\rangle\langle t|_2\right)\label{eq:Hprop-1}
\end{align}
which is a symmetric $0$-1 matrix (the subscripts indicate which registers the operators act on). Note that $H_{\prop}$ commutes with $S$ (acting on the $8$-level ancilla) and therefore is block diagonal with eight sectors. In the sector where $S$ has eigenvalue $\omega$, $H_{\prop}$ is identical to the Hamiltonian $H_x$ that we started with (see equation \eq{H_x}). There is also a sector (where $S$ has eigenvalue $\omega^*$) where the Hamiltonian is the complex conjugate of $H_x$. We will add a term to $H_{\prop}$ that introduces an energy penalty for states in any of the other six sectors, ensuring that none of these states lie in the ground space.

To see what kind of energy penalty is needed, we lower bound the eigenvalues of $H_{\prop}$. Note that for each $W\in\mathcal{G}$, $B(W)$ contains at most 2 ones in each row or column. Looking at equation \eq{Hprop-1} and using this fact, we see that each row and each column of $H_{\prop}$ contains at most four ones (with the remaining entries all zero). Therefore $\|H_{\prop}\| \leq4$, so every eigenvalue of $H_{\prop}$ is at least $-4$.

The matrix $A_{x}$ associated with the circuit $\mathcal{C}_{x}$ acts on the Hilbert space 
\[
\mathcal{H}_{\text{comp}}\otimes\mathcal{H}_{\clock}\otimes\mathcal{H}_{\anc}
\]
where $\mathcal{H}_{\anc}=\C^{8}$ holds the $8$-level ancilla.  We define 
\begin{equation}
A_{x}=H_{\prop}+H_{\text{penalty}}+H_{\text{input}}+H_{\text{output}}\label{eq:A_x_eqn}
\end{equation}
 where 
\[
H_{\text{penalty}}=\id\otimes\id\otimes\left(S^{3}+S^{4}+S^{5}\right)
\]
is the penalty ensuring that the ancilla register holds either
$|\omega\rangle$ or $|\omega^*\rangle$ and the terms
\begin{align*}
H_{\text{input}} & =\sum_{j=n_{\inn}+1}^{n}|1\rangle\langle1|_{j}\otimes|1\rangle\langle1|\otimes\id\\
H_{\text{output}} & =|0\rangle\langle0|_{\out}\otimes|M+1\rangle\langle M+1|\otimes\id
\end{align*}
ensure that the ancilla qubits are initialized in the state $|0\rangle$ when $t=1$ and that the output qubit is in the state $|1\rangle\langle1|$ when the circuit $\mathcal{C}_{x}$ has been applied (i.e., at time $t=M+1$). Observe that $A_{x}$ is a symmetric $0$-$1$ matrix. 

Now consider the ground space of the first two terms $H_{\prop}+H_{\text{penalty}}$ in \eq{A_x_eqn}. Note that $[H_{\prop},H_{\text{penalty}}]=0$, so these operators can be simultaneously diagonalized. Furthermore, $H_{\mathrm{penalty}}$ has smallest eigenvalue $-1-\sqrt{2}$, with eigenspace spanned by $|\omega\rangle$ and $|\omega^*\rangle$. One can also easily confirm that the first excited energy of $H_{\mathrm{penalty}}$ is $-1$.

The ground space of $H_{\prop}+H_{\text{penalty}}$ lives in the sector where $H_{\text{penalty}}$ has minimal eigenvalue $-1-\sqrt{2}$. To see this, note that within this sector $H_{\prop}$ has the same eigenvalues as $H_{x}$, and therefore has lowest eigenvalue $-2\sqrt{2}$. The minimum eigenvalue $e_{1}$ of $H_{\prop}+H_{\text{penalty}}$ in this sector is 
\begin{equation}
e_{1}=-2\sqrt{2}+\left(-1-\sqrt{2}\right)=-1-3\sqrt{2}=-5.24\ldots,\label{eq:e_1_definition-1}
\end{equation}
whereas in any other sector $H_{\text{penalty}}$ has eigenvalue at least $-1$ and (using the fact that $H_{\prop}\geq-4$) the minimum eigenvalue of $H_{\prop}+H_{\text{penalty}}$ is at least $-5.$ Thus, an orthonormal basis for the ground space of $H_{\prop}+H_{\mathrm{penalty}}$ is furnished by the states 
\begin{align}
 & \frac{1}{\sqrt{2M}}\sum_{t=1}^{2M}W_{t-1}W_{t-2}\ldots W_{1}|\phi\rangle|t\rangle|\omega\rangle\label{eq:gs_Hprop_Hpen1}\\
 & \frac{1}{\sqrt{2M}}\sum_{t=1}^{2M}(W_{t-1}W_{t-2}\ldots W_{1})^{*}|\phi^{*}\rangle|t\rangle|\omega^{*}\rangle\label{eq:gs_Hprop_Hpen2}
\end{align}
where $|\phi\rangle$ ranges over the basis $\Lambda$ for $\mathcal{H}_{\text{comp}}$ and $*$ denotes (elementwise) complex conjugation.

\subsection{Upper bound on the smallest eigenvalue for yes instances}
\label{sec:mappingyes}

Suppose $x$ is a yes instance; then there exists some $n_{\inn}$-qubit state $|\psi_{\inn}\rangle$ satisfying $\AP\left(\mathcal{C}_{x},|\psi_{\inn}\rangle\right)\geq1-\frac{1}{2^{|x|}}$. Let
\[
|\wit\rangle=\frac{1}{\sqrt{2M}}\sum_{t=1}^{2M}W_{t-1}W_{t-2}\ldots W_{1}\left(|\psi_{\inn}\rangle|0\rangle^{\otimes n-n_{\inn}}\right)|t\rangle|\omega\rangle
\]
and note that this state is in the $e_{1}$-energy ground space of
$H_{\prop}+H_{\text{penalty}}$ (since it has the form \eq{gs_Hprop_Hpen1}).
One can also directly verify that $|\wit\rangle$ has zero energy for $H_{\text{input}}$. Thus
\begin{align*}
\langle\wit|A_{x}|\wit\rangle & =e_{1}+\langle\wit|H_{\text{output}}|\wit\rangle\\
 & =e_{1}+\frac{1}{2M}\langle\psi_{\inn}|\langle0|^{\otimes n-n_{\inn}}U_{\mathcal{C}_{x}}^{\dagger}|0\rangle\langle0|_{\out}U_{\mathcal{C}_{x}}|\psi_{\inn}\rangle|0\rangle^{\otimes n-n_{\inn}}\\
 & =e_{1}+\frac{1}{2M}\left(1-\AP(\mathcal{C}_{x},|\psi_{\inn}\rangle)\right)\\
 & \leq e_{1}+\frac{1}{2M}\frac{1}{2^{|x|}}.
\end{align*}

\subsection{Lower bound on the smallest eigenvalue for no instances}
\label{sec:mappingno}

Now suppose $x$ is a no instance. Then the verification circuit $\mathcal{C}_{x}$ has acceptance probability $\AP\left(\mathcal{C}_{x},|\psi\rangle\right)\leq\frac{1}{3}$ for all $n_{\inn}$-qubit input states $|\psi\rangle$.

We backtrack slightly to obtain bounds on the eigenvalue gaps of the Hamiltonians $H_{\prop}+H_{\text{penalty}}$ and $H_{\prop}+H_{\text{penalty}}+H_{\text{input}}$. We begin by showing that the energy gap of $H_{\prop}+H_{\text{penalty}}$ is at least an inverse polynomial function of $M$. Subtracting a constant equal to the ground energy times the identity matrix sets the smallest eigenvalue to zero, and the smallest nonzero eigenvalue satisfies 
\begin{equation}
\gamma(H_{\prop}+H_{\text{penalty}}-e_{1}\cdot\id)\geq\min\left\{ \sqrt{2}\frac{\pi^{2}}{M^{2}},-5-e_{1}\right\} \geq\frac{1}{5M^{2}}.\label{eq:gamma_Ha}
\end{equation}
since $-5-e_{1}\approx0.24\ldots>\frac{1}{5}$. The first inequality above follows from the fact that every eigenvalue of $H_{\prop}$ in the range $[e_{1},-5)$ is also an eigenvalue of $H_{x}$ (as discussed above) and the bound \eq{bound_on_eta} on the energy gap of $H_{x}$.

Now use the \npl (\lem{npl}) with 
\[
H_{A}=H_{\prop}+H_{\text{penalty}}-e_{1}\cdot\id\qquad H_{B}=H_{\text{input}}.
\]
Note that $H_{A}$ and $H_{B}$ are positive semidefinite. Let $S_{A}$ be the ground space of $H_{A}$ and consider the restriction $H_{B}|_{S_{A}}$. Here it is convenient to use the basis for $S_{A}$ given by \eq{gs_Hprop_Hpen1} and \eq{gs_Hprop_Hpen2} with $|\phi\rangle$ ranging over the computational basis states of $n$ qubits. In this basis, $H_{B}|_{S_{A}}$ is diagonal with all diagonal entries equal to $\frac{1}{2M}$ times an integer, so $\gamma(H_{B}|_{S_{A}})\geq\frac{1}{2M}$. We also have $\gamma(H_{A})\geq\frac{1}{5M^{2}}$ from equation \eq{gamma_Ha}, and clearly $\left\Vert H_{B}\right\Vert \leq n$. Thus \lem{npl} gives
\begin{equation}
\gamma(H_{\prop}+H_{\text{penalty}}+H_{\text{input}}-e_{1}\cdot\id)\geq\frac{\left(\frac{1}{5M^{2}}\right)\left(\frac{1}{2M}\right)}{\frac{1}{5M^{2}}+\frac{1}{2M}+n}\geq\frac{1}{10M^{3}\left(1+n\right)}\geq\frac{1}{20M^{3}n}.\label{eq:lowerbound_uptoinput}
\end{equation}

Now consider adding the final term $H_{\mathrm{output}}$. We use \lem{npl} again, now setting 
\[
H_{A}=H_{\prop}+H_{\mathrm{penalty}}+H_{\mathrm{input}}-e_{1}\cdot\id\qquad H_{B}=H_{\mathrm{output}}.
\]
Let $S_{A}$ be the ground space of $H_{A}$. Note that it is spanned by states of the form \eq{gs_Hprop_Hpen1} and \eq{gs_Hprop_Hpen2} where $|\phi\rangle=|\psi\rangle|0\rangle^{\otimes n-n_{\inn}}$ and $|\psi\rangle$ ranges over any orthonormal basis of the $n_{\inn}$-qubit input register. The restriction $H_{B}|_{S_{A}}$ is block diagonal, with one block for states of the form 
\begin{equation}
\frac{1}{\sqrt{2M}}\sum_{t=1}^{2M}W_{t-1}W_{t-2}\ldots W_{1}\left(|\psi\rangle|0\rangle^{\otimes n-n_{\inn}}\right)|t\rangle|\omega\rangle\label{eq:block1}
\end{equation}
and another block for states of the form 
\begin{equation}
\frac{1}{\sqrt{2M}}\sum_{t=1}^{2M}\left(W_{t-1}W_{t-2}\ldots W_{1}\right)^{*}\left(|\psi\rangle^{*}|0\rangle^{\otimes n-n_{\inn}}\right)|t\rangle|\omega^{*}\rangle.\label{eq:block2}
\end{equation}
We now show that the minimum eigenvalue of $H_{B}|_{S_{A}}$ is nonzero, and we lower bound it. We consider the two blocks separately. By linearity, every state in the first block can be written in the form \eq{block1} for some state $|\psi\rangle$. Thus the minimum eigenvalue within this block is the minimum expectation of $H_{\text{output}}$ in a state \eq{block1}, where the minimum is taken over all $n_{\inn}$-qubit states $|\psi\rangle$. This is equal to 
\[
\min_{|\psi\rangle}\frac{1}{2M}\left(1-\AP(\mathcal{C}_{x},|\psi\rangle)\right)\geq\frac{1}{3M}
\]
where we used the fact that $\AP\left(\mathcal{C}_{x},|\psi\rangle\right)\leq\frac{1}{3}$ for all $|\psi\rangle$. Likewise, every state in the second block can be written as \eq{block2} for some state $|\psi\rangle$, and the minimum eigenvalue within this block is 
\[
\min_{|\psi\rangle}\frac{1}{2M}\left(1-\AP(\mathcal{C}_{x},|\psi\rangle)^{*}\right)\geq\frac{1}{3M}
\]
(since $\AP(\mathcal{C}_{x},|\psi\rangle)^{*}=\AP(\mathcal{C}_{x},|\psi\rangle)\leq\frac{1}{3}$). Thus we see that $H_{B}|_{S_{A}}$ has an empty nullspace, so its smallest eigenvalue is equal to its smallest nonzero eigenvalue, namely 
\[
\gamma(H_{B}|_{S_{A}})\geq\frac{1}{3M}.
\]
Now applying \lem{npl} using this bound, the fact that $\left\Vert H_{B}\right\Vert =1$, and the fact that $\gamma(H_{A})\geq\frac{1}{20M^{3}n}$ (from equation \eq{lowerbound_uptoinput}), we get 
\[
\gamma(A_{x}-e_{1}\cdot\id)\geq\frac{\frac{1}{60M^{4}n}}{\frac{1}{20M^{3}n}+\frac{1}{3M}+1}\geq\frac{1}{120M^{4}n}.
\]
Since $H_{B}|_{S_{A}}$ has an empty nullspace, $A_{x}-e_{1}\cdot\id$ has an empty nullspace, and this is a lower bound on its smallest eigenvalue.

\section{XY Hamiltonian is QMA-complete}
\label{app:XY}

In this Appendix we prove \thm{XY}, showing that XY Hamiltonian is QMA-complete.

\begin{proof}
An instance of XY Hamiltonian can be verified by the standard QMA verification protocol for the Local Hamiltonian problem \cite{KSV02} with one slight modification: before running the protocol Arthur measures the magnetization of the witness and rejects unless it is equal to $N$.  Thus the problem is contained in QMA.
	
To prove QMA-hardness, we show that the solution (yes or no) of an instance of Frustration-Free Bose-Hubbard Hamiltonian with input $G$, $N$, $\epsilon$ is equal to the solution of the instance of XY Hamiltonian with the same graph $G$ and integer $N$, with precision parameter $\frac{\epsilon}{4}$ and $c=N\mu(G)+\frac{\epsilon}{4}$. 

We separately consider yes instances and no instances of Frustration-Free Bose-Hubbard Hamiltonian and show that the corresponding instance of XY Hamiltonian has the same solution in both cases.

\subsubsection*{Case 1: no instances}

First consider a no instance of Frustration-Free Bose-Hubbard Hamiltonian, for which $\lambda_N^1 (G)\geq \epsilon+\epsilon^3$.
We have
\begin{align}
\lambda_N^1 (G) & = \min_{\substack{|\phi\rangle\in \mathcal{Z}_N(G)\\ \langle \phi|\phi\rangle=1}} \langle \phi| H_G^N-N\mu(G)|\phi\rangle 
\label{eq:lambda_theta1}\\
& \leq \min_{\substack{|\phi\rangle\in \mathcal{W}_N(G)\\ \langle \phi|\phi\rangle=1}} \langle \phi| H_G^N-N\mu(G)|\phi\rangle
\label{eq:lambda_theta2}\\ & =\min_{\substack{|\phi\rangle\in \mathrm{Wt}_N(G)\\ \langle \phi|\phi\rangle=1}} \langle \phi| O_G-N\mu(G)|\phi\rangle
\label{eq:lambda_theta3} \\
&= \theta_N(G)-N\mu(G)
\label{eq:lambda_theta4}
\end{align}
where in the inequality we used the fact that $\mathcal{W}_N(G)\subset \mathcal{Z}_N(G)$. Hence 
\[
\theta_N(G)\geq N\mu(G)+\lambda_N^1 (G) \geq N\mu(G)+\epsilon+\epsilon^3\geq N\mu(G)+\frac{\epsilon}{2},
\]
so the corresponding instance of XY Hamiltonian is a no instance.

\subsubsection*{Case 2: yes instances}

Now consider a yes instance of Frustration-Free Bose-Hubbard Hamiltonian,  so $0\leq \lambda_N^1 (G)\leq \epsilon^3$. 

We consider the case $\lambda_N^1 (G)=0$ separately from the case where it is strictly positive. If $\lambda_N^1 (G)=0$ then any state $|\psi\rangle$ in the ground space of $H_G^N$ satisfies 
\[
\langle \phi|\sum_{w=1}^N \left(A(G)-\mu(G)\right)^{(w)}+\sum_{k\in V} \widehat{n}_k(\widehat{n}_k-1)|\phi\rangle=0.
\]
Since both terms are positive semidefinite, the state $|\phi\rangle$ has zero energy for each of them. In particular, it has zero energy for the second term, or equivalently, $|\phi\rangle\in \mathcal{W}_N(G)$. Therefore
\begin{align*}
\lambda_N^1 (G) = \min_{\substack{|\phi\rangle\in \mathcal{W}_N(G)\\ \langle \phi|\phi\rangle=1}} \langle \phi| H_G^N-N\mu(G)|\phi\rangle =\min_{\substack{|\phi\rangle\in \mathrm{Wt}_N(G)\\ \langle \phi|\phi\rangle=1}} \langle \phi| O_G-N\mu(G)|\phi\rangle= \theta_N(G)-N\mu(G),
\end{align*}
so $\theta_N(G)= N\mu(G)$, and the corresponding instance of XY Hamiltonian is a yes instance.

Finally, suppose $0 < \lambda_N^1 (G)\leq \epsilon^3$. Then $\lambda_N^1(G)$ is also the smallest \emph{nonzero} eigenvalue of $H(G,N)$, which we denote by $\gamma(H(G,N))$. (Here and throughout this paper we write $\gamma(M)$ for the smallest nonzero eigenvalue of a positive semidefinite matrix $M$.) Note that $ \lambda_N^1 (G)>0$ also implies (by the inequalities \eq{lambda_theta1}--\eq{lambda_theta4}) that $\theta_N(G)-N\mu(G)>0$, so
\[
\theta_N(G)-N\mu(G)=\gamma\left((O_G -N\mu(G))\big|_{\text{Wt}_N(G)}\right).
\]
To upper bound $\theta_N(G)$ we use the \npl (\lem{npl}). We apply this Lemma using the decomposition $H(G,N)=H_A+H_B$ where
\[
H_A=\sum_{k\in V} \widehat{n}_k(\widehat{n}_k-1)\big|_{\mathcal{Z}_N(G)} \qquad H_B=\sum_{w=1}^N \left(A(G)-\mu(G)\right)^{(w)}\big|_{\mathcal{Z}_N(G)}.
\]
Note that $H_A$ and $H_B$ are both positive semidefinite, and that the nullspace $S$ of $H_A$ is equal the space $\mathcal{W}_N(G)$ of hard-core bosons. To apply the Lemma we compute bounds on $\gamma(H_A)$, $\|H_B\|$, and $\gamma(H_B|_S)$. We use the bounds $\gamma(H_A)=2$ (since the operators $\{\widehat{n}_k\colon k\in V\}$ commute and have nonnegative integer eigenvalues),
\[
\|H_B\|\leq N\|A(G)-\mu(G)\|\leq N(\|A(G)\|+\mu(G)) \leq 2N\|A(G)\|\leq 2KN\leq 2K^2
\]
(where we used the fact that $\|A(G)\|$ is at most the maximum degree of $G$, which is at most the number of vertices $K$), and
\begin{align*}
\gamma (H_B|_S)& =\gamma \left(\sum_{w=1}^N \left(A(G)-\mu(G)\right)^{(w)}\big|_{\mathcal{W}_N(G)}\right)\\
& =\gamma\left((O_G -N\mu(G))\big|_{\text{Wt}_N(G)}\right)\\
&= \theta_N(G)-N\mu(G).
\end{align*}
Now applying the Lemma, we get 
\[
\lambda_N^1 (G)=\gamma(H(G,N))\geq \frac{2(\theta_N(G)-N\mu(G))}{2+(\theta_N(G)-N\mu(G))+2K^2}.
\]
Rearranging this inequality gives
\[
\theta_N(G)-N\mu(G) \leq \lambda_N^1(G)\frac{2(K^2+1)}{2-\lambda_N^1(G)}\leq 4K^2 \lambda_N^1(G)\leq 4K^2 \epsilon^3 
\]
where in going from the second to the third inequality we used the fact that $1\leq K^2$ in the numerator and $\lambda_N^1(G)\leq \epsilon^3<1$ in the denominator.  Now using the fact (from the definition of Frustration-Free Bose-Hubbard Hamiltonian) that $\epsilon\leq \frac{1}{4K}$, we get 
\[
\theta_N(G)\leq N\mu(G)+\frac{\epsilon}{4}, 
\]
i.e., the corresponding instance of XY Hamiltonian is a yes instance.
\end{proof}

\section{Proof of the Occupancy Constraints Lemma}
\label{app:Occupancy-Constraints-Lemma}

In this Appendix we prove the \ocl.

\occup*

\subsection{Definitions and notation}
\label{sec:Definitions-and-Notation_G_square}

In this Section we establish notation and we describe how the gate graph $G^{\square}$ is constructed from $G$ and $\goc$. We also define two related gate graphs $G^{\triangle}$ and $G^{\diamondsuit}$ that we use in our analysis.

Let us first fix notation for the gate graph $G$ and the occupancy constraints graph $\goc$. Write the adjacency matrix of $G$ as (see equation \eq{adj_gate_graph}) 
\[
A(G)=\sum_{q=1}^{R}|q\rangle\langle q|\otimes A(g_{0})+h_{\mathcal{E}^{G}}+h_{\mathcal{S}^{G}}
\]
where $h_{\mathcal{E}^{G}}$ and $h_{\mathcal{S}^{G}}$ are determined (through equations \eq{h_edges} and \eq{h_loops}) by the sets $\mathcal{E}^{G}$ and $\mathcal{S}^{G}$ of edges and self-loops in the gate diagram for $G$, and where $g_0$ is the 128-vertex graph from \fig{g_0}. Recall that the occupancy constraints graph $\goc$ is a simple graph with vertices labeled $q\in[R]$, one for each diagram element in $G$. We write $E(\goc)\subseteq\binom{[R]}{2} = \{\{x,y\}\colon x,y\in[R],\, x\neq y\}$ for the edge set of $\goc$. 

\subsubsection*{Definition of $G^{\square}$}

To ensure that the ground space has the appropriate form, the construction of $G^{\square}$ is slightly different depending on whether $R$ is even or odd. The following description handles both cases.

\begin{enumerate}
\item Replace each diagram element $q\in[R]$ in the gate diagram for $G$ as shown in \fig{replace_gate_diagram}, with diagram elements labeled $q_{\inn},q_{\out}$ and $d(q,s)$ where $q,s\in[R]$ and $q\neq s$ if $R$ is even. Each node $(q,z,t)$ in the gate diagram for $G$ is mapped to a new node $\new(q,z,t)$ as shown by the black and grey arrows, i.e., 
\begin{equation}
\new(q,z,t)=\begin{cases}
(q_{\mathrm{in}},z,t) & \text{ if }(q,z,t)\text{ is an input node}\\
(q_{\mathrm{out}},z,t) & \text{ if }(q,z,t)\text{ is an output node}.
\end{cases}\label{eq:node_mapping_G_G_square}
\end{equation}
Edges and self-loops in the gate diagram for $G$ are replaced by edges and self-loops between the corresponding nodes in the modified diagram.
\item For each edge $\{q_{1},q_{2}\}\in E(\goc)$ in the occupancy constraints graph we add four diagram elements of the type shown in \fig{diagram_element1} (i.e., diagram elements corresponding to the identity). We refer to these diagram elements by labels $e_{ij}(q_{1},q_{2})$ with $i,j\in\{0,1\}$. For these diagram elements the labeling function is symmetric, i.e., $e_{ij}(q_{1},q_{2})=e_{ji}(q_{2},q_{1})$ whenever $\{q_{1},q_{2}\}\in E(\goc)$.
\item For each non-edge $\{q_{1},q_{2}\}\notin E(\goc)$ with $q_{1},q_{2}\in[R]$ and $q_{1}\neq q_{2}$ we add $8$ diagram elements of the type shown in \fig{diagram_element1}. We refer to these diagram elements as $e_{ij}(q_{1},q_{2})$ and $e_{ij}(q_{2},q_{1})$ with $i,j\in\{0,1\}$; when $\{q_1,q_2\}\notin E(\goc)$ the labeling function is not symmetric, i.e., $e_{ij}(q_{1},q_{2})\neq e_{ji}(q_{2},q_{1})$. If $R$ is odd we also add $4R$ diagram elements labeled $e_{ij}(q,q)$ with $i,j\in\{0,1\}$ and $q\in[R]$.
\item Finally, we add edges and self-loops to the gate diagram as shown in \fig{add_edges}. This gives the gate diagram for $G^{\square}$.
\end{enumerate}

\begin{figure}
\centering
\begin{tikzpicture}[scale=0.5]
\begin{scope}[xshift=-10 cm]       
\draw[rounded corners=2mm,thick] (0,0) rectangle (2.43cm,1.5 cm);  
  \foreach \x /\color in {0/black,2.43/gray}
{     \foreach \y in {1.2,.95,.55,.3}
{       \draw[fill=\color,draw=\color] (\x cm, \y cm) circle (.66mm);
    }} 
\node at (1.22cm, .75cm) {\large{$\tilde U$}};   
 \node at (1.22cm, 1.85cm){\large{$q$}};

\node at (6cm, 0.75cm) {\Large{$\longrightarrow$}};
\draw [->, thick, color=black] (0,2.5) --(0,1.6);
\draw [->, thick, color=gray] (2.43,2.5) --(2.43,1.6);
\end{scope}
\draw [decorate,decoration={brace,amplitude=10pt}] (3.43,2.5) -- (14.43,2.5) node [black,midway,yshift=17]  {$d(q,q)$ is omitted if $R$ is even};
\draw [->, thick, color=black] (0,2.5) --(0,1.6);
\draw [->, thick, color=gray] (17.86,2.5) --(17.86,1.6);

 \foreach \y in {0.925,.55}{  
  \draw[thick] (2.43,\y) -- (3.43,\y);
	\draw[thick] (5.86,\y) -- (6.86,\y);	
	\draw[thick] (14.43,\y) -- (15.43,\y);
	\draw[thick] (9.29,\y) -- (9.79,\y);
	\draw[thick] (11.5,\y) -- (12,\y);
	\node at (10.6,\y){\large{$\ldots$}};
}

 \foreach \offset/\unitary/\label in {0/1/q_{\mathrm{in}},3.43/1/{d(q,1)},6.86/1/{d(q,2)},12/1/{d(q,R)},15.43/\tilde U/q_{\mathrm{out}}}
{   
\begin{scope}[xshift=\offset cm]       
\draw[rounded corners=2mm,thick] (0,0) rectangle (2.43cm,1.5 cm);  
  \foreach \x /\color in {0/black,2.43/gray}
{     \foreach \y in {1.2,.95,.55,.3}
{       \draw[fill=\color,draw=\color] (\x cm, \y cm) circle (.66mm);
    }} 
  \node at (1.22cm, .75cm) {\large{$\unitary$}};   
 \node at (1.22cm, 2cm){\large{$\label$}};   
\end{scope}}   
\end{tikzpicture}

\caption{The first step in constructing the gate diagram of $G^{\square}$
from that of $G$ is to replace each diagram element as shown. The
four input nodes (black arrow) and four output nodes (grey arrow)
on the left-hand side are identified with nodes on the right-hand
side as shown.\label{fig:replace_gate_diagram}}
\end{figure}
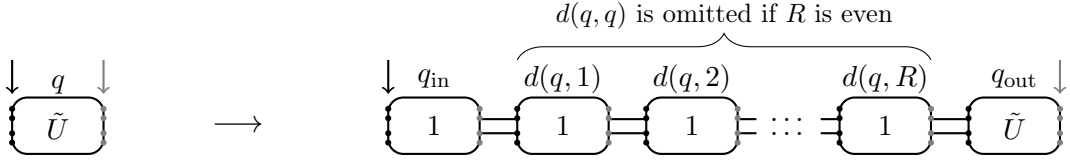

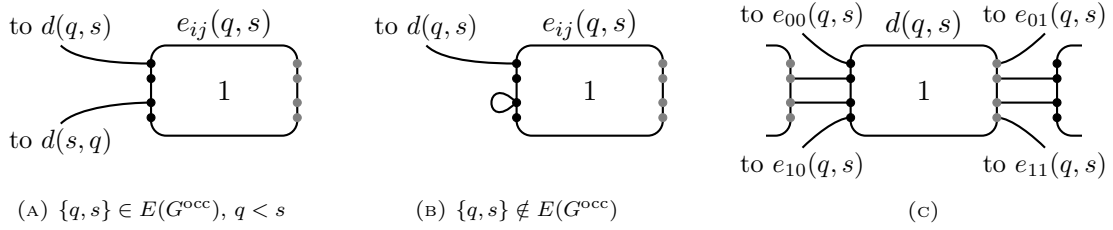
\begin{figure}
\centering
\subfloat[b][$\{q,s\}\in E(\goc),\,q<s$]{
\label{fig:add_edgesA}
\begin{tikzpicture}[baseline=(X.base),scale=0.8]
\node at (0,0) (X){};
\begin{scope}[xshift=0cm]       
\draw[rounded corners=2mm,thick] (0,0) rectangle (2.43cm,1.5 cm);  
  \foreach \x /\color in {0/black,2.43/gray}
{     \foreach \y in {1.2,.95,.55,.3}
{       \draw[fill=\color,draw=\color] (\x cm, \y cm) circle (.66mm);
    }} 
\node at (1.22cm, .75cm) {\large{$1$}};   
 \node at (1.22cm, 1.85cm){\large{$e_{ij}(q,s)$}};
\draw[thick,looseness=.66] (0cm,1.2cm) to [out=180,in=-45] (-1.5cm,1.5cm);
\node at (-1.5,1.8){to $d(q,s)$};
\draw[thick,looseness=.66] (0cm,0.55cm) to [out=180,in=45] (-1.5cm,0.2cm);
\node at (-1.5,-0.1){to $d(s,q)$};
\end{scope}
\node at (0,-0.7){};
\end{tikzpicture}
}
\hspace{.5cm}
\subfloat[b][$\{q,s\}\notin E(\goc)$]{
\label{fig:add_edgesB}
\begin{tikzpicture}[baseline=(X.base),scale=0.8]
\node at (0,0) (X){};

\begin{scope}[xshift=0cm]       
\draw[rounded corners=2mm,thick] (0,0) rectangle (2.43cm,1.5 cm);  
  \foreach \x /\color in {0/black,2.43/gray}
{     \foreach \y in {1.2,.95,.55,.3}
{       \draw[fill=\color,draw=\color] (\x cm, \y cm) circle (.66mm);
    }} 
\node at (1.22cm, .75cm) {\large{$1$}};   
 \node at (1.22cm, 1.85cm){\large{$e_{ij}(q,s)$}};
\draw[thick,looseness=.66] (0cm,1.2cm) to [out=180,in=-45] (-1.5cm,1.5cm);
\node at (-1.5,1.8){to $d(q,s)$};
\draw[thick,looseness = 200] (0,0.55) to [out = 135, in = 225] (-.01,0.55);
\end{scope}
\node at (0,-0.7){};
\end{tikzpicture}
}
\hspace{.5cm}
\subfloat[b][]{
\label{fig:add_edgesC}
\begin{tikzpicture}[baseline=(X.base),scale=0.8]
\node at (0,0) (X){};
\draw[thick,looseness=.66] (0cm,1.2cm) to [out=180,in=-45] (-0.8cm,1.7cm);
\node at (-0.8cm,2cm){to $e_{00}(q,s)$};
\draw[thick,looseness=.66] (0cm,0.3cm) to [out=180,in=45] (-0.8cm,-0.2cm);
\node at (-0.8cm,-0.5cm){to $e_{10}(q,s)$};

\draw[thick,looseness=.66] (2.43cm,1.2cm) to [out=0,in=225] (3.23cm,1.7cm);
\node at (3.23cm,2cm){to $e_{01}(q,s)$};
\draw[thick,looseness=.66] (2.43cm,0.3cm) to [out=0,in=135] (3.23cm,-0.2cm);
\node at (3.23cm,-0.5cm){to $e_{11}(q,s)$};
\draw[thick] (-1,0.95)--(0,0.95);
\draw[thick] (-1,0.55)--(0,0.55);
\draw[thick] (2.43,0.95)--(3.43,0.95);
\draw[thick] (2.43,0.55)--(3.43,0.55);
\node at (0,-0.7){};
\begin{scope}
\clip (-1.4,-0.1) rectangle (3.83,2.1);
     
\draw[rounded corners=2mm,thick] (0,0) rectangle (2.43cm,1.5 cm);  
  \foreach \x /\color in {0/black,2.43/gray}
{     \foreach \y in {1.2,.95,.55,.3}
{       \draw[fill=\color,draw=\color] (\x cm, \y cm) circle (.66mm);
    }} 
\node at (1.22cm, .75cm) {\large{$1$}};   
 \node at (1.22cm, 1.85cm){\large{$d(q,s)$}};
\begin{scope}[xshift=3.43cm]
\draw[rounded corners=2mm,thick] (0,0) rectangle (2.43cm,1.5 cm);  
  \foreach \x /\color in {0/black,2.43/gray}
{     \foreach \y in {1.2,.95,.55,.3}
{       \draw[fill=\color,draw=\color] (\x cm, \y cm) circle (.66mm);
    }} 
\node at (1.22cm, .75cm) {\large{$1$}};   
 \node at (1.22cm, 1.85cm){\large{$d(q,\widehat{q})$}};
\end{scope}
\begin{scope}[xshift=-3.43cm]
\draw[rounded corners=2mm,thick] (0,0) rectangle (2.43cm,1.5 cm);  
  \foreach \x /\color in {0/black,2.43/gray}
{     \foreach \y in {1.2,.95,.55,.3}
{       \draw[fill=\color,draw=\color] (\x cm, \y cm) circle (.66mm);
    }} 
\node at (1.22cm, .75cm) {\large{$1$}};   
\node at (1.22cm, 1.85cm){\large{$d(q,s)$}};
\end{scope}
\end{scope}
\end{tikzpicture}
}
\caption{Edges and self-loops added in step 4 of the construction of the gate diagram of $G^{\square}$. When $\{q,s\}\in E(\goc)$ with $q<s$, we add two outgoing edges to $e_{ij}(q,s)$ as shown in \subfig{add_edgesA}. Note that if $q>s$ and $\{q,s\}\in E(\goc)$ then $e_{ij}(q,s)=e_{ji}(s,q)$. When $\{q,s\}\notin E(\goc)$ we add a self-loop and a single outgoing edge from $e_{ij}(q,s)$ as shown in \subfig{add_edgesB}. Each diagram element $d(q,s)$ has eight outgoing edges (four of which are added in step 4), as shown in \subfig{add_edgesC}.\label{fig:add_edges}}
\end{figure}

The set of diagram elements in the gate graph for $G^{\square}$ is indexed by
\begin{equation}
L^{\square}=Q_{\inn}\cup D\cup E_{\text{edges}}\cup E_{\text{non-edges}}\cup Q_{\out}\label{eq:L_square}
\end{equation}
where
\begin{align}
Q_{\inn} & =\left\{ q_{\mathrm{in}}:q\in[R]\right\} \label{eq:Q_in}\\
D & =\left\{ d(q,s):\, q,s\in[R]\text{ and }q\neq s\text{ if }R\text{ is even}\right\} \label{eq:defn_of D}\\
E_{\text{edges}} & =\left\{ e_{ij}(q,s):\, i,j\in\{0,1\},\,\{q,s\}\in E(\goc)\text{ and }q<s\right\} \nonumber \\
E_{\text{non-edges}} & =\left\{ e_{ij}(q,s):\, i,j\in\{0,1\},\,\{q,s\}\notin E(\goc)\text{ and }q\neq s\text{ if }R\text{ is even}\right\} \nonumber \\
Q_{\out} & =\left\{ q_{\mathrm{out}}:q\in[R]\right\} .\label{eq:Q_out}
\end{align}
The total number of diagram elements in $G^{\square}$ is 
\begin{align*}
|L^{\square}| & =|Q_{\inn}|+|D|+|E_{\text{edges}}|+|E_{\text{non-edges}}|+|Q_{\out}|\\
 & =\begin{cases}
R+R^{2}+4|E(\goc)|+4\left(R^{2}-2|E(\goc)|\right)+R & R\text{ odd}\\
R+R\left(R-1\right)+4|E(\goc)|+4\left(R(R-1)-2|E(\goc)|\right)+R & R\text{ even}
\end{cases}\\
 & =\begin{cases}
5R^{2}+2R-4|E(\goc)| & R\text{ odd}\\
5R^{2}-3R-4|E(\goc)| & R\text{ even}.
\end{cases}
\end{align*}
In both cases this is upper bounded by $7R^2$ as claimed in the statement of the Lemma. Write 
\begin{equation}
A(G^{\square})=\sum_{l\in L^{\square}}|l\rangle\langle l|\otimes A(g_{0})+h_{\mathcal{S}^{\square}}+h_{\mathcal{E}^{\square}}\label{eq:A_G_squAre}
\end{equation}
where $\mathcal{S}^{\square}$ and $\mathcal{E}^{\square}$ are the sets of self-loops and edges in the gate diagram for $G^{\square}$. 

We now focus on the input nodes of diagram elements in $Q_{\inn}$ and the output nodes of the diagram elements in $Q_{\out}$. These are the nodes indicated by the black and grey arrows in \fig{replace_gate_diagram}. Write $\mathcal{E}^{0}\subset\mathcal{E}^{\square}$ and $\mathcal{S}^{0}\subset\mathcal{S^{\square}}$ for the sets of edges and self-loops that are incident on these nodes in the gate diagram for $G^{\square}$. Note that the sets $\mathcal{E}^{0}$ and $\mathcal{S}^{0}$ are in one-to-one correspondence with (respectively) the sets $\mathcal{E}^{G}$ and $\mathcal{S}^{G}$ of edges and self-loops in the gate diagram for $G$. The other edges and self-loops in $G^{\square}$ do not depend on the sets of edges and self-loops in $G$. Writing
\[
\mathcal{S}^{\triangle}=\mathcal{S}^{\square}\setminus S^{0}\qquad\mathcal{E}^{\triangle}=\mathcal{E}^{\square}\setminus\mathcal{E}^{0},
\]
we have 
\begin{equation}
h_{\mathcal{S}^{\square}}=h_{\mathcal{S}^{0}}+h_{\mathcal{S}^{\triangle}}\qquad h_{\mathcal{E}^{\square}}=h_{\mathcal{E}^{0}}+h_{\mathcal{E}^{\triangle}}.\label{eq:h_se_square}
\end{equation}

\subsubsection*{Definition of $G^{\triangle}$ }

The gate diagram for $G^{\triangle}$ is obtained from that of $G^{\square}$ by removing all edges and self-loops attached to the input nodes of the diagram elements in $Q_{\inn}$ and the output nodes of the diagram elements in $Q_{\out}$. Its adjacency matrix is
\begin{equation}
A(G^{\triangle})=\sum_{l\in L^{\square}}|l\rangle\langle l|\otimes A(g_{0})+h_{\mathcal{S}^{\triangle}}+h_{\mathcal{E}^{\triangle}}.\label{eq:A_g_triangle}
\end{equation}
Note that $G^{\triangle}=G^{\square}$ whenever the gate diagram for $G$ contains no edges or self-loops. 

\subsubsection*{Definition of $G^{\diamondsuit}$}

We also define a gate graph $G^{\diamondsuit}$ with gate diagram obtained from that of $G^{\triangle}$ by removing all edges (but leaving the self-loops). Note that $G^{\diamondsuit}$ has a component for each diagram element $l\in L^{\square}$. The components corresponding to diagram elements without a self-loop (those with $l\in L^{\square}\setminus E_{\text{non-edges}}$) have adjacency matrix $A(g_{0})$; those with a self-loop ($l\in E_{\text{non-edges}}$) have adjacency matrix $A(g_{0})+|1,1\rangle\langle1,1|\otimes\id$, so
\begin{align}
A(G^{\diamondsuit}) & =\sum_{l\in L^{\square}}|l\rangle\langle l|\otimes A(g_{0})+h_{\mathcal{S}^{\triangle}}\label{eq:A_G_diamond_defn_line1}\\
 & =\sum_{l\in L^{\square}\setminus E_{\text{non-edges}}}|l\rangle\langle l|\otimes A(g_{0})+\sum_{l\in E_{\text{non-edges}}}|l\rangle\langle l|\otimes\left(A(g_{0})+|1,1\rangle\langle1,1|\otimes\id\right).\label{eq:A_G_diamond_defn}
\end{align}

\subsubsection*{Example}

We provide an example of this construction in \fig{example_G_Gtilde} (which shows a gate graph and an occupancy constraints graph) and \fig{big_example_G_square} (which describes the derived gate graphs $G^{\square}$, $G^{\triangle}$, and $G^{\diamondsuit}$).

\begin{figure}
\subfloat[][]{
\label{fig:example_G_GtildeA}
\begin{tikzpicture}[scale=0.5]

\draw[thick,looseness=1.3] (0,-2.2cm) to [out=180,in=180] (0,-3.8cm);
\draw[thick,looseness = 200] (2.43,-4.7) to [out = 45, in = -45] (2.42,-4.7);  
 \foreach \offset/\unitary/\label in {0/H/1,-2.5/HT/2,-5/1/3}
{   
\begin{scope}[yshift=\offset cm]       
\draw[rounded corners=2mm,thick] (0,0) rectangle (2.43cm,1.5 cm);  
  \foreach \x /\color in {0/black,2.43/gray}
{     \foreach \y in {1.2,.95,.55,.3}
{       \draw[fill=\color,draw=\color] (\x cm, \y cm) circle (.66mm);
    }} 
  \node at (1.22cm, .75cm) {\large{$\unitary$}};   
 \node at (1.22cm, 2cm){\large{$\label$}};   
\end{scope}}  
\end{tikzpicture}
}
\hspace{4cm}
\subfloat[][]{
\label{fig:example_G_GtildeB}
\begin{tikzpicture}[scale=0.5]

 \draw[fill=black,draw=black] (1,0) circle (.66mm);
\node at (1.5,0) {1};
 \draw[fill=black,draw=black] (1,-2.5) circle (.66mm);
\node at (1.5,-2.5) {2};
 \draw[fill=black,draw=black] (1,-5) circle (.66mm);
\node at (1.5,-5) {3};
\draw[thick] (1,0) -- (1,-2.5);
\end{tikzpicture}
}

\caption{An example \subfig{example_G_GtildeA} Gate diagram for a gate graph $G$ and \subfig{example_G_GtildeB} Occupancy constraints graph $\goc$. In the text we describe how these two ingredients are mapped to a gate graph $G^{\square}$; the gate diagram for $G^{\square}$ is shown in \fig{big_example_G_square}.\label{fig:example_G_Gtilde}}
\end{figure}
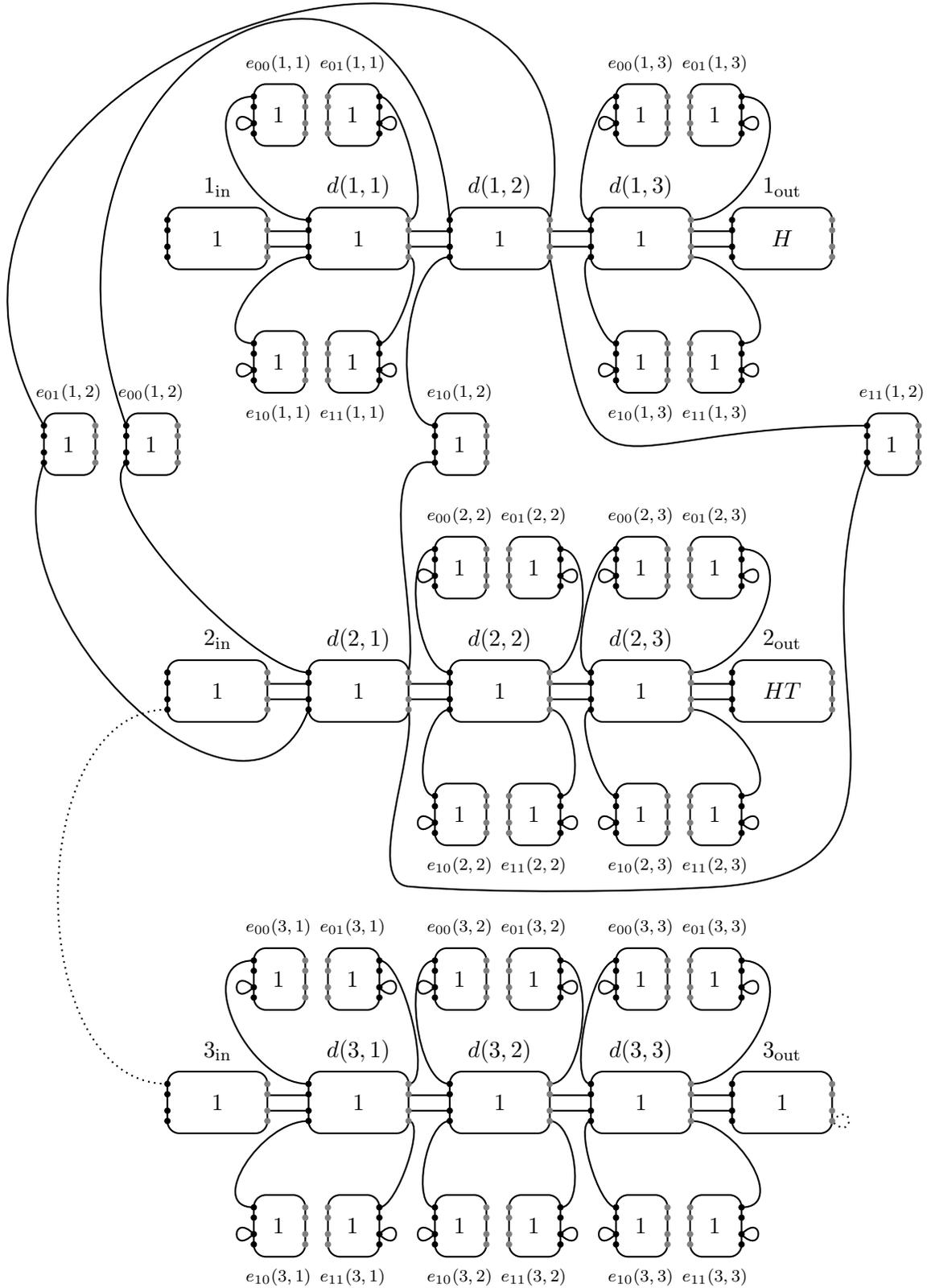
\begin{figure}

\begin{tikzpicture}[scale=0.68]
\path[use as bounding box](-4,-25) rectangle (18.5,7);
\begin{scope}[yshift=-5cm]
\foreach \offset/\unitary/\label in
{-1/1/{e_{00}(1,2)}}
{   
\begin{scope}[xshift=\offset cm]       
\draw[rounded corners=2mm,thick] (0,0) rectangle (1.25cm,1.5 cm);  
  \foreach \x /\color in {0/black,1.25/gray}
{     \foreach \y in {1.2,.95,.55,.3}
{       \draw[fill=\color,draw=\color] (\x cm, \y cm) circle (.66mm);
    }} 
  \node at (0.6cm, .75cm) {\large{$\unitary$}};   
 \node at (0.6cm, 2cm){\footnotesize{$\label$}};
\end{scope}}
\foreach \offset/\unitary/\label in
{-3/1/{e_{01}(1,2)}}
{   
\begin{scope}[xshift=\offset cm]       
\draw[rounded corners=2mm,thick] (0,0) rectangle (1.25cm,1.5 cm);  
  \foreach \x /\color in {0/black,1.25/gray}
{     \foreach \y in {1.2,.95,.55,.3}
{       \draw[fill=\color,draw=\color] (\x cm, \y cm) circle (.66mm);
    }} 
  \node at (0.6cm, .75cm) {\large{$\unitary$}};   
 \node at (0.6cm, 2cm){\footnotesize{$\label$}}; 
\end{scope}}
\begin{scope}
\foreach \offset/\unitary/\label in
{6.5/1/{e_{10}(1,2)}}
{   
\begin{scope}[xshift=\offset cm]       
\draw[rounded corners=2mm,thick] (0,0) rectangle (1.25cm,1.5 cm);  
  \foreach \x /\color in {0/black,1.25/gray}
{     \foreach \y in {1.2,.95,.55,.3}
{       \draw[fill=\color,draw=\color] (\x cm, \y cm) circle (.66mm);
    }} 
  \node at (0.6cm, .75cm) {\large{$\unitary$}};   
 \node at (0.6cm, 2cm){\footnotesize{$\label$}};
\end{scope}}
\foreach \offset/\unitary/\label in
{17/1/{e_{11}(1,2)}}
{   
\begin{scope}[xshift=\offset cm]       
\draw[rounded corners=2mm,thick] (0,0) rectangle (1.25cm,1.5 cm);  
  \foreach \x /\color in {0/black,1.25/gray}
{     \foreach \y in {1.2,.95,.55,.3}
{       \draw[fill=\color,draw=\color] (\x cm, \y cm) circle (.66mm);
    }} 
  \node at (0.6cm, .75cm) {\large{$\unitary$}};   
 \node at (0.6cm, 2cm){\footnotesize{$\label$}}; 
\end{scope}}
\end{scope}
\draw[thick,looseness = 2.6] (6.86,6.2) to [out = 95, in = 110] (-1,1.2); 
\draw[thick,looseness = 0.5] (3.43,-4.8) to [out = 180, in = 225] (-1,0.3); 
\draw[thick,looseness = 1.2] (3.43,-5.7) to [out = 250, in = 250] (-3,0.3); 
\draw[thick,looseness = 0.7] (6.86,5.3) to [out = 180, in = 180] (6.5,1.2); 
\draw[thick,looseness = 0.8] (5.86,-4.8) to [out = 80, in = 170] (6.5,0.3); 

\draw[thick,looseness = 2] (9.29,5.3) to [out =-80, in = 180] (17,1.2);  
\draw[thick,looseness = 1.3] (13.5,-10) to [out =5, in = 250] (17,0.3);
\draw[thick,looseness = 0.66] (5.86,-10) to [out =-5, in = 184] (13.5,-10);
\draw[thick,looseness = 0.66] (5.86,-5.7) to [out =-80, in = 175] (5.86,-10);
\draw[thick,looseness = 2] (9.29,6.2) to [out =80, in = 120] (-3,1.2); 
\end{scope}

\begin{scope}
\begin{scope}[yshift=3cm]

\foreach \offset/\unitary/\label in
{2.1/1/{e_{00}(1,1)},10.9/1/{e_{00}(1,3)}}
{   
\begin{scope}[xshift=\offset cm]       
\draw[rounded corners=2mm,thick] (0,0) rectangle (1.25cm,1.5 cm);  
  \foreach \x /\color in {0/black,1.25/gray}
{     \foreach \y in {1.2,.95,.55,.3}
{       \draw[fill=\color,draw=\color] (\x cm, \y cm) circle (.66mm);
    }} 
  \node at (0.6cm, .75cm) {\large{$\unitary$}};   
 \node at (0.6cm, 2cm){\footnotesize{$\label$}};
\draw[thick,looseness = 200] (0,0.55) to [out = 135, in = 225] (-.01,0.55);  
\end{scope}}  
\end{scope}
\begin{scope}[yshift=3cm]

\foreach \offset/\unitary/\label in
{3.9/1/{e_{01}(1,1)},12.7/1/{e_{01}(1,3)}}
{   
\begin{scope}[xshift=\offset cm]       
\draw[rounded corners=2mm,thick] (0,0) rectangle (1.25cm,1.5 cm);  
  \foreach \x /\color in {1.25/black,0/gray}
{     \foreach \y in {1.2,.95,.55,.3}
{       \draw[fill=\color,draw=\color] (\x cm, \y cm) circle (.66mm);
    }} 
  \node at (0.6cm, .75cm) {\large{$\unitary$}};   
 \node at (0.6cm, 2cm){\footnotesize{$\label$}};
\draw[thick,looseness = 200] (1.25,0.55) to [out = 45, in = -45] (1.24,0.55);  
\end{scope}}  
\draw[thick,looseness = 1.2] (3.43,-1.8) to [out = 180, in = 180] (2.1,1.2); 
\draw[thick,looseness = 0.5] (5.86,-1.8) to [out = 0, in = 0] (5.15,1.2); 
\draw[thick,looseness = 0.5] (10.29,-1.8) to [out = 180, in = 180] (10.9,1.2); 
\draw[thick,looseness = 1.2] (12.72,-1.8) to [out = 0, in = 0] (13.95,1.2); 
\end{scope}
\begin{scope}[yshift=-3cm]

\foreach \offset/\unitary/\label in
{2.1/1/{e_{10}(1,1)},10.9/1/{e_{10}(1,3)}}
{   
\begin{scope}[xshift=\offset cm]       
\draw[rounded corners=2mm,thick] (0,0) rectangle (1.25cm,1.5 cm);  
  \foreach \x /\color in {0/black,1.25/gray}
{     \foreach \y in {1.2,.95,.55,.3}
{       \draw[fill=\color,draw=\color] (\x cm, \y cm) circle (.66mm);
    }} 
\node at (0.6cm, .75cm) {\large{$\unitary$}};   
 \node at (0.6cm, -0.5cm){\footnotesize{$\label$}};   
\draw[thick,looseness = 200] (0,0.55) to [out = 135, in = 225] (-.01,0.55);  
\end{scope}}  
\end{scope}
\begin{scope}[yshift=-3cm]
\foreach \offset/\unitary/\label in
{3.9/1/{e_{11}(1,1)},12.7/1/{e_{11}(1,3)}}
{   
\begin{scope}[xshift=\offset cm]       
\draw[rounded corners=2mm,thick] (0,0) rectangle (1.25cm,1.5 cm);  
  \foreach \x /\color in {1.25/black,0/gray}
{     \foreach \y in {1.2,.95,.55,.3}
{       \draw[fill=\color,draw=\color] (\x cm, \y cm) circle (.66mm);
    }} 
  \node at (0.6cm, .75cm) {\large{$\unitary$}};   
 \node at (0.6cm, -0.5cm){\footnotesize{$\label$}};   
\draw[thick,looseness = 200] (1.25,0.55) to [out = 45, in = -45] (1.24,0.55);  
\end{scope}}  
\draw[thick,looseness = 1.2] (3.43,3.3) to [out = 180, in = 180] (2.1,1.2); 
\draw[thick,looseness = 0.5] (5.86,3.3) to [out = 0, in = 0] (5.15,1.2); 
\draw[thick,looseness = 0.5] (10.29,3.3) to [out = 180, in = 180] (10.9,1.2); 
\draw[thick,looseness = 1.2] (12.72,3.3) to [out = 0, in = 0] (13.95,1.2); 
\end{scope}
 
 \foreach \y in {0.925,.55}{  
  \draw[thick] (2.43,\y) -- (3.43,\y);
	\draw[thick] (5.86,\y) -- (6.86,\y);	
	\draw[thick] (9.29,\y) -- (10.29,\y);
	\draw[thick] (12.72,\y) -- (13.72,\y);
}
  
 \foreach \offset/\unitary/\label in {0/1/1_{\mathrm{in}},3.43/1/{d(1,1)},6.86/1/{d(1,2)},10.29/1/{d(1,3)},13.72/H/1_{\mathrm{out}}}
{   
\begin{scope}[xshift=\offset cm]       
\draw[rounded corners=2mm,thick] (0,0) rectangle (2.43cm,1.5 cm);  
  \foreach \x /\color in {0/black,2.43/gray}
{     \foreach \y in {1.2,.95,.55,.3}
{       \draw[fill=\color,draw=\color] (\x cm, \y cm) circle (.66mm);
    }} 
  \node at (1.22cm, .75cm) {\large{$\unitary$}};   
 \node at (1.22cm, 2cm){\large{$\label$}};   
\end{scope}}  
\end{scope}

\begin{scope}[yshift=-11cm]
\begin{scope}[yshift=3cm]

\foreach \offset/\unitary/\label in
{6.5/1/{e_{00}(2,2)},10.9/1/{e_{00}(2,3)}}
{   
\begin{scope}[xshift=\offset cm]       
\draw[thick,looseness = 200] (0,0.55) to [out = 135, in = 225] (-.01,0.55);  
\draw[rounded corners=2mm,thick] (0,0) rectangle (1.25cm,1.5 cm);  
  \foreach \x /\color in {0/black,1.25/gray}
{     \foreach \y in {1.2,.95,.55,.3}
{       \draw[fill=\color,draw=\color] (\x cm, \y cm) circle (.66mm);
    }} 
  \node at (0.6cm, .75cm) {\large{$\unitary$}};   
 \node at (0.6cm, 2cm){\footnotesize{$\label$}};
\end{scope}}  
\end{scope}
\begin{scope}[yshift=3cm]

\draw[thick,looseness = 0.7] (6.86,-1.8) to [out = 180, in = 180] (6.5,1.2); 
\draw[thick,looseness = 0.75] (9.29,-1.8) to [out = 0, in = 0] (9.55,1.2); 
\draw[thick,looseness = 0.5] (10.29,-1.8) to [out = 180, in = 180] (10.9,1.2); 
\draw[thick,looseness = 1.2] (12.72,-1.8) to [out = 0, in = 0] (13.95,1.2); 
\foreach \offset/\unitary/\label in
{8.3/1/{e_{01}(2,2)},12.7/1/{e_{01}(2,3)}}
{   
\begin{scope}[xshift=\offset cm]       
\draw[thick,looseness = 200] (1.25,0.55) to [out = 45, in = -45] (1.24,0.55);  
\draw[rounded corners=2mm,thick] (0,0) rectangle (1.25cm,1.5 cm);  
  \foreach \x /\color in {1.25/black,0/gray}
{     \foreach \y in {1.2,.95,.55,.3}
{       \draw[fill=\color,draw=\color] (\x cm, \y cm) circle (.66mm);
    }} 
  \node at (0.6cm, .75cm) {\large{$\unitary$}};   
 \node at (0.6cm, 2cm){\footnotesize{$\label$}};
\end{scope}}  
\end{scope}
\begin{scope}[yshift=-3cm]

\foreach \offset/\unitary/\label in
{6.5/1/{e_{10}(2,2)},10.9/1/{e_{10}(2,3)}}
{   
\begin{scope}[xshift=\offset cm]       
\draw[thick,looseness = 200] (0,0.55) to [out = 135, in = 225] (-.01,0.55);  
\draw[rounded corners=2mm,thick] (0,0) rectangle (1.25cm,1.5 cm);  
  \foreach \x /\color in {0/black,1.25/gray}
{     \foreach \y in {1.2,.95,.55,.3}
{       \draw[fill=\color,draw=\color] (\x cm, \y cm) circle (.66mm);
    }} 
\node at (0.6cm, .75cm) {\large{$\unitary$}};   
 \node at (0.6cm, -0.5cm){\footnotesize{$\label$}};   
\end{scope}}  
\end{scope}
\begin{scope}[yshift=-3cm]

\foreach \offset/\unitary/\label in
{8.3/1/{e_{11}(2,2)},12.7/1/{e_{11}(2,3)}}
{
\begin{scope}[xshift=\offset cm]       
\draw[rounded corners=2mm,thick] (0,0) rectangle (1.25cm,1.5 cm);  
  \foreach \x /\color in {1.25/black,0/gray}
{     \foreach \y in {1.2,.95,.55,.3}
{       \draw[fill=\color,draw=\color] (\x cm, \y cm) circle (.66mm);
    }} 
  \node at (0.6cm, .75cm) {\large{$\unitary$}};   
 \node at (0.6cm, -0.5cm){\footnotesize{$\label$}};   
\draw[thick,looseness = 200] (1.25,0.55) to [out = 45, in = -45] (1.24,0.55);  
\end{scope}}   
\draw[thick,looseness = 0.7] (6.86,3.3) to [out = 180, in = 180] (6.5,1.2); 
\draw[thick,looseness = 0.75] (9.29,3.3) to [out = 0, in = 0] (9.55,1.2); 
\draw[thick,looseness = 0.5] (10.29,3.3) to [out = 180, in = 180] (10.9,1.2); 
\draw[thick,looseness = 1.2] (12.72,3.3) to [out = 0, in = 0] (13.95,1.2); 
\end{scope}

 \foreach \y in {0.925,.55}{  
  \draw[thick] (2.43,\y) -- (3.43,\y);
	\draw[thick] (5.86,\y) -- (6.86,\y);	
	\draw[thick] (9.29,\y) -- (10.29,\y);
	\draw[thick] (12.72,\y) -- (13.72,\y);
}

 \foreach \offset/\unitary/\label in {0/1/2_{\mathrm{in}},3.43/1/{d(2,1)},6.86/1/{d(2,2)},10.29/1/{d(2,3)},13.72/HT/2_{\mathrm{out}}}
{   
\begin{scope}[xshift=\offset cm]       
\draw[rounded corners=2mm,thick] (0,0) rectangle (2.43cm,1.5 cm);  
  \foreach \x /\color in {0/black,2.43/gray}
{     \foreach \y in {1.2,.95,.55,.3}
{       \draw[fill=\color,draw=\color] (\x cm, \y cm) circle (.66mm);
    }} 
  \node at (1.22cm, .75cm) {\large{$\unitary$}};   
 \node at (1.22cm, 2cm){\large{$\label$}};   
\end{scope}}
\end{scope}

\begin{scope}[yshift=-21cm]

\draw[thick,dotted,looseness = 200] (16.15,0.3) to [out = 45, in = -45] (16.14,0.3); 
\draw[thick,dotted,looseness = 1] (0,1.2) to [out =180, in = 180] (0,10.3);

\begin{scope}[yshift=3cm]
\foreach \offset/\unitary/\label in
{2.1/1/{e_{00}(3,1)},6.5/1/{e_{00}(3,2)},10.9/1/{e_{00}(3,3)}}
{   
\begin{scope}[xshift=\offset cm]       
\draw[rounded corners=2mm,thick] (0,0) rectangle (1.25cm,1.5 cm);  
  \foreach \x /\color in {0/black,1.25/gray}
{     \foreach \y in {1.2,.95,.55,.3}
{       \draw[fill=\color,draw=\color] (\x cm, \y cm) circle (.66mm);
    }} 
  \node at (0.6cm, .75cm) {\large{$\unitary$}};   
 \node at (0.6cm, 2cm){\footnotesize{$\label$}};
\draw[thick,looseness = 200] (0,0.55) to [out = 135, in = 225] (-.01,0.55);  
\end{scope}}  
\end{scope}
\begin{scope}[yshift=3cm]

\foreach \offset/\unitary/\label in
{3.9/1/{e_{01}(3,1)},8.3/1/{e_{01}(3,2)},12.7/1/{e_{01}(3,3)}}
{   
\begin{scope}[xshift=\offset cm]       
\draw[rounded corners=2mm,thick] (0,0) rectangle (1.25cm,1.5 cm);  
  \foreach \x /\color in {1.25/black,0/gray}
{     \foreach \y in {1.2,.95,.55,.3}
{       \draw[fill=\color,draw=\color] (\x cm, \y cm) circle (.66mm);
    }} 
  \node at (0.6cm, .75cm) {\large{$\unitary$}};   
 \node at (0.6cm, 2cm){\footnotesize{$\label$}};
\draw[thick,looseness = 200] (1.25,0.55) to [out = 45, in = -45] (1.24,0.55);  
\end{scope}}  
\draw[thick,looseness = 1.2] (3.43,-1.8) to [out = 180, in = 180] (2.1,1.2); 
\draw[thick,looseness = 0.5] (5.86,-1.8) to [out = 0, in = 0] (5.15,1.2); 
\draw[thick,looseness = 0.7] (6.86,-1.8) to [out = 180, in = 180] (6.5,1.2); 
\draw[thick,looseness = 0.75] (9.29,-1.8) to [out = 0, in = 0] (9.55,1.2); 
\draw[thick,looseness = 0.5] (10.29,-1.8) to [out = 180, in = 180] (10.9,1.2); 
\draw[thick,looseness = 1.2] (12.72,-1.8) to [out = 0, in = 0] (13.95,1.2); 
\end{scope}
\begin{scope}[yshift=-3cm]
\foreach \offset/\unitary/\label in
{2.1/1/{e_{10}(3,1)},6.5/1/{e_{10}(3,2)},10.9/1/{e_{10}(3,3)}}
{   
\begin{scope}[xshift=\offset cm]       
\draw[rounded corners=2mm,thick] (0,0) rectangle (1.25cm,1.5 cm);  
  \foreach \x /\color in {0/black,1.25/gray}
{     \foreach \y in {1.2,.95,.55,.3}
{       \draw[fill=\color,draw=\color] (\x cm, \y cm) circle (.66mm);
    }} 
\node at (0.6cm, .75cm) {\large{$\unitary$}};   
 \node at (0.6cm, -0.5cm){\footnotesize{$\label$}};   
\draw[thick,looseness = 200] (0,0.55) to [out = 135, in = 225] (-.01,0.55);  
\end{scope}}  
\end{scope}
\begin{scope}[yshift=-3cm]
\foreach \offset/\unitary/\label in
{3.9/1/{e_{11}(3,1)},8.3/1/{e_{11}(3,2)},12.7/1/{e_{11}(3,3)}}
{   
\begin{scope}[xshift=\offset cm]       
\draw[rounded corners=2mm,thick] (0,0) rectangle (1.25cm,1.5 cm);  
  \foreach \x /\color in {1.25/black,0/gray}
{     \foreach \y in {1.2,.95,.55,.3}
{       \draw[fill=\color,draw=\color] (\x cm, \y cm) circle (.66mm);
    }} 
  \node at (0.6cm, .75cm) {\large{$\unitary$}};   
 \node at (0.6cm, -0.5cm){\footnotesize{$\label$}};   
\draw[thick,looseness = 200] (1.25,0.55) to [out = 45, in = -45] (1.24,0.55);  
\end{scope}}  
\draw[thick,looseness = 1.2] (3.43,3.3) to [out = 180, in = 180] (2.1,1.2); 
\draw[thick,looseness = 0.5] (5.86,3.3) to [out = 0, in = 0] (5.15,1.2); 
\draw[thick,looseness = 0.7] (6.86,3.3) to [out = 180, in = 180] (6.5,1.2); 
\draw[thick,looseness = 0.75] (9.29,3.3) to [out = 0, in = 0] (9.55,1.2); 
\draw[thick,looseness = 0.5] (10.29,3.3) to [out = 180, in = 180] (10.9,1.2); 
\draw[thick,looseness = 1.2] (12.72,3.3) to [out = 0, in = 0] (13.95,1.2); 
\end{scope}
  
 \foreach \y in {0.925,.55}{  
  \draw[thick] (2.43,\y) -- (3.43,\y);
	\draw[thick] (5.86,\y) -- (6.86,\y);	
	\draw[thick] (9.29,\y) -- (10.29,\y);
	\draw[thick] (12.72,\y) -- (13.72,\y);
}

 \foreach \offset/\unitary/\label in {0/1/3_{\mathrm{in}},3.43/1/{d(3,1)},6.86/1/{d(3,2)},10.29/1/{d(3,3)},13.72/1/3_{\mathrm{out}}}
{   
\begin{scope}[xshift=\offset cm]       
\draw[rounded corners=2mm,thick] (0,0) rectangle (2.43cm,1.5 cm);  
  \foreach \x /\color in {0/black,2.43/gray}
{     \foreach \y in {1.2,.95,.55,.3}
{       \draw[fill=\color,draw=\color] (\x cm, \y cm) circle (.66mm);
    }} 
  \node at (1.22cm, .75cm) {\large{$\unitary$}};   
 \node at (1.22cm, 2cm){\large{$\label$}};   
\end{scope}}  
\end{scope}
\end{tikzpicture}

\caption{The gate diagram for $G^{\triangle}$ (only solid lines) and $G^{\square}$ (including dotted lines) derived from the example gate graph $G$ and occupancy constraints graph $\goc$ from \fig{example_G_Gtilde}. The gate diagram for $G^{\diamondsuit}$ is obtained from that of $G^{\triangle}$ by removing all edges (but leaving the self-loops).\label{fig:big_example_G_square}}
\end{figure}

\subsection{The gate graph $G^{\diamondsuit}$}
\label{sec:The-gate-graph_G_DIAMOND}

We now solve for the $e_{1}$-energy ground states of the adjacency matrix $A(G^{\diamondsuit})$. Write $g_{1}$ for the graph with adjacency matrix
\[
A(g_{1})=A(g_{0})+|1,1\rangle\langle1,1|\otimes\id
\]
(i.e., $g_0$ with $8$ self-loops added), so (recalling equation \eq{A_G_diamond_defn}) each component of $G^{\diamondsuit}$ is either $g_{0}$ or $g_{1}$. Recall from \sec{Encoding-a-Computation} that $A(g_{0})$ has four orthonormal $e_{1}$-energy ground states $|\psi_{z,a}\rangle$ with $z,a\in\{0,1\}$. It is also not hard to verify that the $e_{1}$-energy ground space of $A(g_{1})$ is spanned by two of these states $|\psi_{0,a}\rangle$ for $a\in\{0,1\}$. Now letting $|\psi_{z,a}^{l}\rangle=|l\rangle|\psi_{z,a}\rangle$, we choose a basis $\mathcal{W}$ for the $e_{1}$-energy ground space of $A(G^{\diamondsuit})$ where each basis vector is supported on one of the components:
\begin{equation}
\mathcal{W}=\big\{ |\psi_{z,a}^{l}\rangle:\, z,a\in\{0,1\},\, l\in L^{\square}\setminus E_{\text{non-edges}}\big\} \cup \big\{ |\psi_{0,a}^{l}\rangle:\, a\in\{0,1\},\, l\in E_{\text{non-edges}}\big\} .\label{eq:definition_of_D}
\end{equation}
The eigenvalue gap of $A(G^{\diamondsuit})$ is equal to that of either $A(g_{0})$ or $A(g_{1})$. Since $g_{0}$ and $g_{1}$ are specific $128$-vertex graphs we can calculate their eigenvalue gaps using a computer; we get $\gamma(A(g_{0})-e_{1})=0.7785\ldots$ and $\gamma(A(g_{1})-e_{1})=0.0832\ldots$. Hence
\begin{align}
\gamma(A(G^{\diamondsuit})-e_{1}) & \geq 0.0832\ldots > \frac{1}{13}.\label{eq:one_thirteenth_bound}
\end{align}

The ground space of $A(G^{\diamondsuit})$ has dimension 
\begin{align}
|\mathcal{W}|=4\big|L^{\square}\big|  -2\left|E_{\text{non-edges}}\right|
&=\begin{cases}
4\left(5R^{2}+2R-4|E(\goc)|\right)-2\left(4R^{2}-8|E(\goc)|\right) & R\text{ odd}\\
4\left(5R^{2}-3R-4|E(\goc)|\right)-2\left(4R(R-1)-8|E(\goc)|\right) & R\text{ even}
\end{cases}\nonumber \\
&= \begin{cases}
12R^{2}+8R & R\text{ odd}\\
12R^{2}-4R & R\text{ even}.
\end{cases}\label{eq:num_basis_D}
\end{align}

We now consider the $N$-particle Hamiltonian $H(G^{\diamondsuit},N)$ and characterize its nullspace.

\begin{lemma}
\label{lem:The-nullspace-of_Hdiamond}The nullspace of $H(G^{\diamondsuit},N)$
is 
\[
\mathcal{I}_{\diamondsuit}=\spn\{ \Sym(|\psi_{z_{1},a_{1}}^{q_{1}}\rangle|\psi_{z_{2},a_{2}}^{q_{2}}\rangle\ldots|\psi_{z_{N},a_{N}}^{q_{N}}\rangle):|\psi_{z_{i},a_{i}}^{q_{i}}\rangle\in\mathcal{W}\text{ and }q_{i}\neq q_{j}\text{ for all distinct }i,j\in[N]\} 
\]
where $\mathcal{W}$ is given in equation \eq{definition_of_D}. The smallest nonzero eigenvalue satisfies $\gamma(H(G^{\diamondsuit},N)) > \frac{1}{300}$.
\end{lemma}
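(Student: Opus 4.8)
The plan is to exploit the fact that $G^{\diamondsuit}$ is a disjoint union of copies of the two $128$-vertex graphs $g_{0}$ and $g_{1}$, all of which have smallest adjacency eigenvalue $e_{1}$, so that \lem{BH_disconnected_graphs} applies and reduces everything to the $N$-particle spectra of $H(g_{0},N)$ and $H(g_{1},N)$. Concretely, \lem{BH_disconnected_graphs} states that every eigenvalue of $H(G^{\diamondsuit},N)$ is a sum $\sum_{i\colon N_{i}\neq0}\lambda_{N_{i}}^{y_{i}}$ of component eigenvalues, with eigenvectors of the symmetrized-product form \eq{eigvecs_disconnected}. Both claims — the description of the nullspace and the bound $\gamma(H(G^{\diamondsuit},N))>\tfrac{1}{300}$ — then follow once we understand the frustration-free states and the smallest nonzero eigenvalue of $g_{0}$ and of $g_{1}$ separately.

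For the nullspace, I would note that a component contributes a zero term to the sum only when it holds no particle, or holds one particle in the $e_{1}$-energy ground space; any component with two or more particles contributes a strictly positive term. Indeed $g_{0}$ has no two-particle frustration-free state by \lem{2particle}, and $g_{1}$ has none either: the $e_{1}$-ground space of $A(g_{1})$ is $\spn\{|\psi_{0,0}\rangle,|\psi_{0,1}\rangle\}$, contained in the $e_{1}$-ground space of $A(g_{0})$, so a two-particle frustration-free state for $g_{1}$ (two particles in that space, annihilated by the same on-site interaction) would also be one for $g_{0}$, contradicting \lem{2particle}. By \lem{increase_part_number}, neither graph has an $N$-particle frustration-free state for any $N\geq2$. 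Feeding this into \lem{BH_disconnected_graphs} shows the zero eigenspace of $H(G^{\diamondsuit},N)$ is spanned by symmetrizations of products of one single-particle ground state per occupied component, over distinct occupied components — which is precisely $\mathcal{I}_{\diamondsuit}$ built from the basis $\mathcal{W}$ of \eq{definition_of_D}.

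For the gap, \lem{BH_disconnected_graphs} shows a nonzero eigenvalue is a sum in which at least one term is strictly positive, hence is at least
\[
\min\{\gamma(A(g_{0})-e_{1}),\ \gamma(A(g_{1})-e_{1}),\ \lambda_{2}^{1}(g_{0}),\ \lambda_{2}^{1}(g_{1})\},
\]
using $\lambda_{N_{i}}^{1}\geq\lambda_{2}^{1}$ (\lem{increase_part_number}) for $N_{i}\geq3$ and the adjacency gap for a single excited particle. The first two quantities exceed $\tfrac{1}{13}$ by the computed values $\gamma(A(g_{0})-e_{1})=0.7785\ldots$ and $\gamma(A(g_{1})-e_{1})=0.0832\ldots$ recorded above. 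It remains to show $\lambda_{2}^{1}(g_{0})>\tfrac{1}{300}$ and $\lambda_{2}^{1}(g_{1})>\tfrac{1}{300}$. One route is to diagonalize the explicit $8256$-dimensional matrices $H(g_{0},2)$ and $H(g_{1},2)$ numerically, as we already do for the adjacency gaps. A more economical route I would present is to apply the \npl (\lem{npl}) inside $\mathcal{Z}_{2}(g_{0})$ to $H(g_{0},2)=H_{A}+H_{B}$ with $H_{A}=(A(g_{0})-e_{1})\otimes\id+\id\otimes(A(g_{0})-e_{1})$ and $H_{B}=2\sum_{v}|v\rangle\langle v|\otimes|v\rangle\langle v|$: here $\gamma(H_{A})=\gamma(A(g_{0})-e_{1})=0.7785\ldots$, $\|H_{B}\|\leq2$, the nullspace $S$ of $H_{A}$ in $\mathcal{Z}_{2}(g_{0})$ is the $10$-dimensional span of the states $\mathrm{Sym}(|\psi_{z,a}\rangle|\psi_{x,y}\rangle)$, and $\gamma(H_{B}|_{S})$ is the smallest eigenvalue of an explicit positive-definite $10\times10$ matrix — positive-definite exactly because the calculation in the proof of \lem{2particle} shows no nonzero vector of $S$ is annihilated by $H_{B}$. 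The \npl then yields a closed-form lower bound on $\lambda_{2}^{1}(g_{0})$; the same argument with the $3$-dimensional ground space of $g_{1}$ handles $\lambda_{2}^{1}(g_{1})$, and one checks both bounds clear $\tfrac{1}{300}$.

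The main obstacle is precisely this last quantitative step: \lem{2particle} only gives $\lambda_{2}^{1}(g_{0})>0$, not a usable number, so one must carry out a finite eigenvalue computation (either the few-thousand-dimensional diagonalization or, via the reduction above, a $10\times10$ one) and verify it beats $\tfrac{1}{300}$ — leaving essentially no slack should the interaction eigenvalue $\gamma(H_{B}|_{S})$ turn out small. Everything else is routine bookkeeping with \lem{BH_disconnected_graphs} and \lem{increase_part_number}.
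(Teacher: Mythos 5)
Your proposal is correct and takes essentially the same route as the paper: reduce via \lem{BH_disconnected_graphs} to the components $g_0$ and $g_1$, use \lem{2particle} and \lem{increase_part_number} to rule out multi-particle frustration-free states on any single component, and read off that the smallest nonzero eigenvalue is the minimum of $\lambda_2^1(g_0)$, $\lambda_2^1(g_1)$, $\gamma(H(g_0,1))$, and $\gamma(H(g_1,1))$, each computed numerically. Your containment argument that $\lambda_2^1(g_1)>0$ follows from \lem{2particle} (because the $e_1$-ground space of $A(g_1)$ sits inside that of $A(g_0)$) is a nice explicit justification of a step the paper passes over lightly. One cautionary note: your proposed ``more economical'' route of bounding $\lambda_2^1(g_0)$ via the \npl instead of diagonalizing is very likely a dead end --- the paper reports $\lambda_2^1(g_0)=0.0035\ldots$, which exceeds $1/300=0.00333\ldots$ only by about $0.0002$, and the \npl gives a strict lower bound that will almost certainly fall short of this razor-thin window; as you suspected, the direct diagonalization of the small fixed-size matrices is the way to go.
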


\begin{proof}
For the first part of the proof we use the fact that the basis vectors $|\psi_{z,a}^{l}\rangle\in\mathcal{W}$ span the $e_{1}$-eigenspace of the component $G_{l}^{\diamondsuit}$ of $G^{\diamondsuit}$ corresponding to the diagram element $l\in L^{\square},$ i.e., the nullspace of $H(G_{l}^{\diamondsuit},1)$. Furthermore, no component of $G^{\diamondsuit}$ supports a two-particle frustration-free state, i.e., $\lambda_{2}^{1}(g_{0})>0$ and $\lambda_{2}^{1}(g_{1})>0$ (by \lem{2particle}). Now applying \lem{BH_disconnected_graphs} we see that $\mathcal{I}_{\diamondsuit}$ is the nullspace of $H(G^{\diamondsuit},N)$. We also see that the smallest nonzero eigenvalue $\gamma(H(G^{\diamondsuit},N))$ is either $\lambda_{2}^{1}(g_{0})$, $\lambda_{2}^{1}(g_{1})$, $\gamma(H(g_{0},1))$, or $\gamma(H(g_{1},1))$. These constants can be calculated numerically using a computer; they are $\lambda_{2}^{1}(g_{0})=0.0035\ldots$, $\lambda_{2}^{1}(g_{1})=0.0185\ldots$, $\gamma(H(g_{0},1))=0.7785\ldots$, and $\gamma(H(g_{1},1))=0.0832\ldots$. Hence
\[
\gamma(H(G^{\diamondsuit},N))\geq\min\{ \lambda_{2}^{1}(g_{0}),\lambda_{2}^{1}(g_{1}),\gamma(H(g_{0},1)),\gamma(H(g_{1},1))\} > \frac{1}{300}. \qedhere
\]
\end{proof}

\subsection{The adjacency matrix of the gate graph $G^{\triangle}$}
\label{sec:The-gate-graph_G_triangle}

We begin by solving for the $e_{1}$-energy ground space of the adjacency matrix $A(G^{\triangle})$. From equations \eq{A_g_triangle} and \eq{A_G_diamond_defn_line1} we have
\begin{equation}
A(G^{\triangle})=A(G^{\diamondsuit})+h_{\mathcal{E}^{\triangle}}.\label{eq:A_g_diamond_triangle}
\end{equation}
Recall the $e_{1}$-energy ground space of $A(G^{\diamondsuit})$ is spanned by $\mathcal{W}$ from equation \eq{definition_of_D}. Since $h_{\mathcal{E}^{\triangle}}\geq0$ it follows that $A(G^{\triangle})\geq e_{1}$. To solve for the $e_{1}$-energy groundpsace of $A(G^{\triangle})$ we construct superpositions of vectors from $\mathcal{W}$ that are in the nullspace of $h_{\mathcal{E}^{\triangle}}$. To this end we consider the restriction
\begin{equation}
h_{\mathcal{E}^{\triangle}}\big|_{\spn\left(\mathcal{W}\right)}.\label{eq:restriction_h_triangle}
\end{equation}
We now show that it is block diagonal in the basis $\mathcal{W}$ and we compute its matrix elements.

First recall from equation \eq{h_edges} that
\begin{equation}
h_{\mathcal{E}^{\triangle}}=\sum_{\left\{ (l,z,t),(l^{\prime},z^{\prime},t^{\prime})\right\} \in\mathcal{E}^{\triangle}}\left(|l,z,t\rangle+|l^{\prime},z^{\prime},t^{\prime}\rangle\right)\left(\langle l,z,t|+\langle l^{\prime},z^{\prime},t^{\prime}|\right)\otimes\id.\label{eq:equation_for_h_epsilon_triangle}
\end{equation}
The edges $\left\{ (l,z,t),(l^{\prime},z^{\prime},t^{\prime})\right\} \in\mathcal{E}^{\triangle}$ can be read off from \fig{replace_gate_diagram} and \fig{add_edges}, respectively (referring back to \fig{diagram_elements} for our convention regarding the labeling of nodes on a diagram element). The edges from \fig{replace_gate_diagram} are
\begin{equation}
\left\{ (q_{\mathrm{in}},z,t),(d(q,1),z,t^{\prime})\right\} ,\left\{ (d(q,2),z,t),(d(q,3),z,t^{\prime})\right\} ,\ldots,\left\{ (d(q,R),z,t),(q_{\mathrm{out}},z,t^{\prime})\right\} \label{eq:epsilon_triangle_set1}
\end{equation}
with $q\in[R]$ and $\left(z,t,t^{\prime}\right)=(0,7,3)\text{ or }(1,5,1)$, and where $d(q,q)$ does not appear if $R$ is even (i.e., $d(q,q-1)$ is followed by $d(q,q+1)$. The edges from \fig{add_edges} are 
\begin{align}
\left\{ \left(d(q,s),0,1\right),(e_{00}(q,s),\alpha(q,s),1)\right\}  & ,\left\{ \left(d(q,s),1,3\right),(e_{10}(q,s),\alpha(q,s),1)\right\}, \label{eq:epsilon_triangle_set2}\\
\left\{ \left(d(q,s),0,5\right),(e_{01}(q,s),\alpha(q,s),1)\right\}  & ,\{\left(d(q,s),1,7\right),(e_{11}(q,s),\alpha(q,s),1)\}\nonumber 
\end{align}
with $q,s\in[R]$ and $q\neq s$ if $R$ is even, and where 
\[
\alpha(q,s)=\begin{cases}
1 & q>s\text{ and }\{q,s\}\in E(\goc)\\
0 & \text{otherwise}.
\end{cases}
\]
The set $\mathcal{E}^{\triangle}$ consists of all edges \eq{epsilon_triangle_set1} and \eq{epsilon_triangle_set2}.

We claim that \eq{restriction_h_triangle} is block diagonal with a block $\mathcal{W}_{(z,a,q)}\subseteq\mathcal{W}$ of size
\[
\left|\mathcal{W}_{(z,a,q)}\right|=\begin{cases}
3R+2 & R\text{ odd}\\
3R-1 & R\text{ even}
\end{cases}
\]
for each for each triple $(z,a,q)$ with $z,a\in\{0,1\}$ and $q\in[R]$. Using equation \eq{num_basis_D} we confirm that $|\mathcal{W}|=4R\left|\mathcal{W}_{(z,a,q)}\right|$, so this accounts for all basis vectors in $\mathcal{W}$. The subset of basis vectors for a given block is
\begin{align}
\mathcal{W}_{(z,a,q)} 
&=\big\{ |\psi_{z,a}^{q_{\inn}}\rangle,|\psi_{z,a}^{q_{\out}}\rangle\big\} \cup\big\{ |\psi_{z,a}^{d(q,s)}\rangle\colon s\in[R],\, s\neq q\text{ if }R\text{ even}\big\} \nonumber \\
 &\quad \cup \big\{ |\psi_{\alpha(q,s),a}^{e_{zx}(q,s)}\rangle\colon x\in\{0,1\},\, s\in[R],\, s\neq q\text{ if }R\text{ even}\big\} .\label{eq:subset_W}
\end{align}
Using equation \eq{equation_for_h_epsilon_triangle} and the description of $\mathcal{E}^{\triangle}$ from \eq{epsilon_triangle_set1} and \eq{epsilon_triangle_set2}, one can check by direct inspection that \eq{restriction_h_triangle} only has nonzero matrix elements between basis vectors in $\mathcal{W}$ from the same block. We also compute the matrix elements between vectors from the same block. For example, if $R$ is odd or if $R$ is even and $q\neq1$, there are edges $\left\{ (q_{\inn},0,7),(d(q,1),0,3)\right\} ,\left\{ (q_{\inn},1,5),(d(q,1),1,1)\right\} \in\mathcal{E^{\triangle}}$. Using the fact that $|\psi_{z,a}^{l}\rangle=|l\rangle|\psi_{z,a}\rangle$ where $|\psi_{z,a}\rangle$ is given by \eq{psi0m} and \eq{psi1m}, we compute the relevant matrix elements:
\begin{align*}
&\langle\psi_{z,a}^{q_{\inn}}|h_{\mathcal{E}^{\triangle}}|\psi_{z,a}^{d(q,1)}\rangle \\ 
&\quad=\langle\psi_{z,a}^{q_{\inn}}|\Bigg(\sum_{(z',t,t^{\prime})\in\{(0,7,3),(1,5,1)\}}\left(|q_{\inn},z',t\rangle+|d(q,1),z',t^{\prime}\rangle\right)\left(\langle q_{\inn},z',t|+\langle d(q,1),z',t^{\prime}|\right)\otimes\id\Bigg)|\psi_{z,a}^{d(q,1)}\rangle\\
&\quad=\sum_{(z',t,t^{\prime})\in\{(0,7,3),(1,5,1)\}}\langle\psi_{z,a}|\left(|z',t\rangle\langle z',t^{\prime}|\otimes\id\right)|\psi_{z,a}\rangle
=\frac{1}{8}.
\end{align*}
Continuing in this manner, we compute the principal submatrix of \eq{restriction_h_triangle} corresponding to the set $\mathcal{W}_{(z,a,q)}$. This matrix is shown in \fig{mat_els_for_a_blockA}. In the Figure each vertex is associated with a state in the block and the weight on a given edge is the matrix element between the two states associated with vertices joined by that edge. The diagonal matrix elements are described by the weights on the self-loops. The matrix described by \fig{mat_els_for_a_blockA} is the same for each block.

\begin{figure}
\centering
\subfloat[][The matrix $h_{\mathcal{E}}^{\triangle}|_{\spn(\mathcal{W})}$ is block diagonal in the basis $\mathcal{W}$, with a block $\mathcal{W}_{(z,a,q)}$ for each $z,a \in \{0,1\}$ and $q\in\{1,\ldots, R\}$. The states involved in a given block and the matrix elements between them are depicted.]{\label{fig:mat_els_for_a_blockA}
\begin{tikzpicture}[scale=1.6]
\foreach \x in {1,2,4}
{
  \draw[fill=black,draw=black] (\x cm, 0) circle (.5mm) ;

\draw[looseness = 200] (\x,0) to [out = 80, in = 10] (\x-.01,0);
}
\draw[fill=black,draw=black] (0 cm, 0) circle (.5mm) ;
\draw[fill=black,draw=black] (5 cm, 0) circle (.5mm) ;
\draw[looseness = 200] (5,0) to [out = 45, in = -45] (4.99,0);
\draw[looseness = 200] (0,0) to [out = 135, in = 225] (-.01,0);

\draw (0,0)--(1,0)--(2,0)--(2.5,0);
\draw (3.5,0)--(4,0)--(5,0);
\node at (3,0) {\Large{$\ldots$}};
\foreach \x in {1,2,4}
{
  \draw[fill=black,draw=black] (\x cm, 1.5) circle (.5mm) ;
	\draw[fill=black,draw=black] (\x cm, -1.5) circle (.5mm) ;
	\draw (\x,1.5)--(\x,0)--(\x,-1.5);

\draw[looseness = 200] (\x,1.5) to [out = 45, in = 135] (\x-.01,1.5);
\draw[looseness = 200] (\x,-1.5) to [out = -45, in = 225] (\x-.01,-1.5);
}

\node at (1,2.3) {$|\psi_{\alpha(q,1),a}^{e_{z0}(q,1)}\rangle$};
\node at (2,2.3) {$|\psi_{\alpha(q,2),a}^{e_{z0}(q,2)}\rangle$};
\node at (4,2.3) {$|\psi_{\alpha(q,R),a}^{e_{z0}(q,R)}\rangle$};

\node at (-0.8,0){$|\psi_{z,a}^{q_{\inn}}\rangle$};
\node at (1.4,-0.3){$|\psi_{z,a}^{d(q,1)}\rangle$};
\node at (2.4,-0.3){$|\psi_{z,a}^{d(q,2)}\rangle$};
\node at (4.5,-0.3){$|\psi_{z,a}^{d(q,R)}\rangle$};
\node at (5.8,0){$|\psi_{z,a}^{q_{\out}}\rangle$};

\node at (1,-2.3) {$|\psi_{\alpha(q,1),a}^{e_{z1}(q,1)}\rangle$};
\node at (2,-2.3) {$|\psi_{\alpha(q,2),a}^{e_{z1}(q,2)}\rangle$};
\node at (4,-2.3) {$|\psi_{\alpha(q,R),a}^{e_{z1}(q,R)}\rangle$};

\node at (-0.3,0.3){$\nicefrac{1}{8}$};

\node at (0.5,0.15){$\nicefrac{1}{8}$};
\node at (1.6,0.15){$\nicefrac{1}{8}$};

\node at (1.3,0.5){$\nicefrac{1}{2}$};
\node at (2.3,0.5){$\nicefrac{1}{2}$};
\node at (4.3,0.5){$\nicefrac{1}{2}$};

\node at (0.75,2){$\nicefrac{1}{8}$};
\node at (0.85,.85){$\nicefrac{1}{8}$};
\node at (0.85,-.85){$\nicefrac{1}{8}$};
\node at (0.75,-2){$\nicefrac{1}{8}$};

\node at (1.75,2){$\nicefrac{1}{8}$};
\node at (1.85,.85){$\nicefrac{1}{8}$};
\node at (1.85,-.85){$\nicefrac{1}{8}$};
\node at (1.75,-2){$\nicefrac{1}{8}$};

\node at (3.75,2){$\nicefrac{1}{8}$};
\node at (3.85,.85){$\nicefrac{1}{8}$};
\node at (3.85,-.85){$\nicefrac{1}{8}$};
\node at (3.75,-2){$\nicefrac{1}{8}$};

\node at (4.6,0.15){$\nicefrac{1}{8}$};
\node at (5.3,0.3){$\nicefrac{1}{8}$};

\end{tikzpicture}
}
\\
\subfloat[][After multiplying some of the basis vectors by $-1$, the matrix depicted in \subfig{mat_els_for_a_blockA} is transformed into $1/8$ times the Laplacian matrix of this graph.]{\label{fig:mat_els_for_a_blockB}
\begin{tikzpicture}[scale=1.6]
\foreach \x in {0,1,2,4,5}
{
  \draw[fill=black,draw=black] (\x cm, 0) circle (.5mm) ;
}
\draw (0,0)--(1,0)--(2,0)--(2.5,0);
\draw (3.5,0)--(4,0)--(5,0);
\node at (3,0) {\Large{$\ldots$}};
\foreach \x in {1,2,4}
{
  \draw[fill=black,draw=black] (\x cm, 1) circle (.5mm) ;
	\draw[fill=black,draw=black] (\x cm, -1) circle (.5mm) ;
	\draw (\x,1)--(\x,0)--(\x,-1);
}
\draw [decorate,decoration={brace,amplitude=10pt}] (1,1.25) -- (4,1.25) node [black,midway,yshift=17]  {$R$ (for $R$ odd) or $R-1$ (for $R$ even)};
\end{tikzpicture}
}
\caption{\label{fig:mat_els_for_a_block}}
\end{figure}

For each triple $(z,a,q)$ with $z,a\in\{0,1\}$ and $q\in[R]$, define
\begin{equation}
|\phi_{z,a}^{q}\rangle=\begin{cases}
\frac{1}{\sqrt{3R+2}}\left(|\psi_{z,a}^{q_{\mathrm{in}}}\rangle+\sum_{j\in[R]}\left(-1\right)^{j}\left(|\psi_{z,a}^{d(q,j)}\rangle-|\psi_{\alpha(q,j),a}^{e_{z0}(q,j)}\rangle-|\psi_{\alpha(q,j),a}^{e_{z1}(q,j)}\rangle\right)+|\psi_{z,a}^{q_{\mathrm{out}}}\rangle\right) & R\text{ odd}\\
\frac{1}{\sqrt{3R-1}}\left(|\psi_{z,a}^{q_{\mathrm{in}}}\rangle+\left(\sum_{j<q}-\sum_{j>q}\right)\left(-1\right)^{j}\left(|\psi_{z,a}^{d(q,j)}\rangle-|\psi_{\alpha(q,j),a}^{e_{z0}(q,j)}\rangle-|\psi_{\alpha(q,j),a}^{e_{z1}(q,j)}\rangle\right)+|\psi_{z,a}^{q_{\mathrm{out}}}\rangle\right) & R\text{ even.}
\end{cases}\label{eq:phi_z_a_q}
\end{equation}
Next we show that these states span the ground space of $A(G^{\triangle})$.  The choice to omit $d(q,q)$ for $R$ even ensures that $|\psi^{q_{\inn}}_{z,a}\rangle$ and $|\psi^{q_{\out}}_{z,a}\rangle$ have the same sign in these ground states.

\begin{lemma}
\label{lem:A(G_triangle)}
An orthonormal basis for the $e_{1}$-energy ground space of $A(G^{\triangle})$ is given by the states
\[
\left\{ |\phi_{z,a}^{q}\rangle:\, z,a\in\{0,1\},\, q\in[R]\right\} 
\]
defined by equation \eq{phi_z_a_q}. The eigenvalue gap is bounded as 
\begin{equation}
  \gamma(A(G^{\triangle})-e_{1}) > \frac{1}{(30R)^{2}}.
\label{eq:G^triangle_lower_bnd}
\end{equation}
\end{lemma}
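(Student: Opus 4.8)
\textbf{Proof proposal for \lem{A(G_triangle)}.}

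The plan is to exploit the block-diagonal structure of $h_{\mathcal{E}^{\triangle}}|_{\spn(\mathcal{W})}$ established in the discussion preceding the Lemma. Since $A(G^{\triangle}) = A(G^{\diamondsuit}) + h_{\mathcal{E}^{\triangle}}$ with both $h_{\mathcal{E}^{\triangle}}\geq 0$ and $A(G^{\diamondsuit})\geq e_1$, a state has energy $e_1$ for $A(G^{\triangle})$ if and only if it lies in $\spn(\mathcal{W})$ (the $e_1$-eigenspace of $A(G^{\diamondsuit})$) and is annihilated by $h_{\mathcal{E}^{\triangle}}$. So the first step is purely about the finite matrix $h_{\mathcal{E}^{\triangle}}|_{\spn(\mathcal{W})}$: I would verify, as asserted, that it decomposes into identical blocks indexed by $(z,a,q)$, each block acting on $\mathcal{W}_{(z,a,q)}$ and given (after the sign changes indicated in \fig{mat_els_for_a_blockB}) by $\frac{1}{8}$ times the Laplacian of the ``caterpillar'' graph shown there --- a path of length $R$ (or $R-1$ for $R$ even) with two pendant vertices hanging off each internal path vertex. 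The sign-flip trick is the standard device for turning a nonnegative-weight adjacency-plus-diagonal matrix with the right row sums into a graph Laplacian; here one checks that the diagonal entries $\tfrac18$ on the self-loops exactly match the degree of each vertex in the caterpillar (leaves have degree $1$, path-endpoints adjacent to $q_{\inn}/q_{\out}$ have the appropriate degree, etc.), so that after conjugating by the diagonal $\pm1$ matrix the block becomes $\tfrac18$ times a bona fide Laplacian.

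Given that, the nullspace of each block is one-dimensional, spanned by the all-ones vector of the (connected) caterpillar, which after undoing the sign changes is precisely the state $|\phi_{z,a}^q\rangle$ written in equation \eq{phi_z_a_q}; the placement of the signs $(-1)^j$ and the omission of $d(q,q)$ when $R$ is even is exactly what makes $|\psi_{z,a}^{q_{\inn}}\rangle$ and $|\psi_{z,a}^{q_{\out}}\rangle$ enter with the same coefficient, as the Lemma statement remarks. Ranging over all $4R$ blocks gives the claimed orthonormal basis $\{|\phi_{z,a}^q\rangle\}$ for the $e_1$-ground space of $A(G^{\triangle})$ (orthonormality is immediate since distinct blocks live on disjoint subsets of $\mathcal{W}$ and each $|\phi_{z,a}^q\rangle$ is normalized by the prefactor $1/\sqrt{3R+2}$ or $1/\sqrt{3R-1}$).

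For the spectral gap I would combine two bounds. Any eigenvalue of $A(G^{\triangle})-e_1$ is either (i) an eigenvalue of the restriction to $\spn(\mathcal{W})$, i.e.\ $\tfrac18$ times a nonzero Laplacian eigenvalue of the caterpillar on $\Theta(R)$ vertices, or (ii) bounded below by $\gamma(A(G^{\diamondsuit})-e_1) > \tfrac{1}{13}$ for states orthogonal to $\spn(\mathcal{W})$ --- more precisely one writes $A(G^{\triangle}) - e_1 \ge (A(G^{\diamondsuit})-e_1)$ and uses that the nullspace of the right side is $\spn(\mathcal{W})$, so on the orthogonal complement the left side is at least $\tfrac{1}{13}$, while inside $\spn(\mathcal{W})$ it equals the block matrix. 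For (i), the algebraic connectivity of a connected graph on $m$ vertices is at least $\tfrac{1}{m\,\mathrm{diam}}$ (or one can just cite that a path on $m$ vertices has gap $2(1-\cos(\pi/m)) \ge 4/m^2$ and that adding pendant leaves only helps up to a constant, or invoke Fiedler/Mohar-type bounds); with $m = \Theta(R)$ this gives a lower bound of order $1/R^2$, and tracking constants yields at least $\tfrac{1}{(30R)^2}$ after dividing by $8$. Since $\tfrac{1}{13} > \tfrac{1}{(30R)^2}$ for all $R\ge 1$, the caterpillar bound dominates and \eq{G^triangle_lower_bnd} follows. The main obstacle is the bookkeeping in step one: confirming that every edge in $\mathcal{E}^{\triangle}$ (both the ``vertical'' edges \eq{epsilon_triangle_set1} connecting consecutive $d(q,\cdot)$ elements and the ``pendant'' edges \eq{epsilon_triangle_set2} to the $e_{zx}(q,s)$ elements) contributes matrix element exactly $\tfrac18$ between the correct pair of states in $\mathcal{W}_{(z,a,q)}$ and nothing else, using $\langle \psi_{z,a}|\,(|z',t\rangle\langle z',t'|\otimes \id)\,|\psi_{z,a}\rangle = \tfrac18$ for the relevant $(z',t,t')$ --- this is a finite but somewhat intricate case check, and the parity distinction between $R$ even and $R$ odd has to be handled carefully so that the caterpillar stays connected (hence the nullspace stays one-dimensional per block) in both cases.
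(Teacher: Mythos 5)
Your first half — characterizing the $e_1$-ground space by restricting $h_{\mathcal{E}^\triangle}$ to $\spn(\mathcal{W})$, exploiting the block structure, and applying the sign-flip trick to identify each block with $\tfrac18$ times the Laplacian of the caterpillar graph — matches the paper's argument essentially verbatim and is correct, modulo the careful case check you already flag.

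The gap-bound argument, however, has a genuine flaw. You assert a dichotomy: every eigenvalue of $A(G^\triangle)-e_1$ is either (i) $\tfrac18$ times a caterpillar Laplacian eigenvalue (coming from the restriction to $\spn(\mathcal{W})$) or (ii) at least $\tfrac{1}{13}$ (coming from the orthogonal complement), and you then take the minimum. This dichotomy does not hold, because $h_{\mathcal{E}^\triangle}$ --- and hence $A(G^\triangle)$ --- does not commute with the projector onto $\spn(\mathcal{W})$; the edge terms that connect $e_{zx}(q,s)$ and $d(q,s)$ elements mix the $e_1$-eigenspace of $A(G^\diamondsuit)$ with its complement. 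Eigenvectors of $A(G^\triangle)-e_1$ therefore generically straddle both subspaces, and the smallest nonzero eigenvalue of the full operator can be strictly smaller than both the smallest nonzero eigenvalue of the compression $P_{\mathcal{W}}(A(G^\triangle)-e_1)P_{\mathcal{W}}$ and the bottom of the spectrum on $\spn(\mathcal{W})^\perp$. (A $2\times 2$ toy example with $H_A=\mathrm{diag}(0,1)$ and $H_B=\tfrac12(\id-\sigma_x)$ already exhibits this: $\min(\gamma(H_A),\gamma(H_B|_S))=\tfrac12$ while $\gamma(H_A+H_B)=1-\tfrac{1}{\sqrt2}\approx 0.29$.)

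The paper fixes this exactly by invoking the Nullspace Projection Lemma with $H_A=A(G^\diamondsuit)-e_1$, $H_B=h_{\mathcal{E}^\triangle}$, $\gamma(H_A)>\tfrac{1}{13}$, $\gamma(H_B|_S)\ge \tfrac{1}{32R^2}$ (Mohar), and $\|H_B\|=2$; the extra $\|H_B\|$ in the denominator is what degrades the bound from the naive $\min(\tfrac{1}{32R^2},\tfrac{1}{13})$ to the claimed $\tfrac{1}{(30R)^2}$. Your ``tracking constants yields $\tfrac{1}{(30R)^2}$'' cannot come out of the caterpillar gap alone; it is an artifact of the NPL denominator $c+d+\|H_B\|$, so you need that Lemma (or an equivalent control of the cross-terms between $\spn(\mathcal{W})$ and its complement) to close the argument.
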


\begin{proof}
The $e_1$-energy ground space of $A(G^{\triangle})$ is equal to the nullspace of \eq{restriction_h_triangle}. Since this operator is block diagonal in the basis $\mathcal{W}$, we can solve for the eigenvectors in the nullspace of each block. Thus, to prove the first part of the Lemma, we analyze the $|\mathcal{W}_{(z,a,q)}|\times|\mathcal{W}_{(z,a,q)}|$ matrix described by \fig{mat_els_for_a_blockA} and show that \eq{phi_z_a_q} is the unique vector in its nullspace. We first rewrite it in a slightly different basis obtained by multiplying some of the basis vectors by a phase of $-1$. Specifically, we use the basis
\[
\left\{
   |\psi_{z,a}^{q_{\mathrm{in}}}\rangle,
  -|\psi_{z,a}^{d(q,1)}\rangle,
   |\psi_{\alpha(q,1),a}^{e_{z0}(q,1)}\rangle,
   |\psi_{\alpha(q,1),a}^{e_{z1}(q,1)}\rangle,
   |\psi_{z,a}^{d(q,2)}\rangle,
  -|\psi_{\alpha(q,2),a}^{e_{z0}(q,2)}\rangle,
  -|\psi_{\alpha(q,2),a}^{e_{z1}(q,2)}\rangle,
   \ldots,
   |\psi_{z,a}^{q_{\mathrm{out}}}\rangle
\right\} 
\]
where the state associated with each vertex on one side of a bipartition of the graph is multiplied by $-1$; these are the phases appearing in equation \eq{phi_z_a_q}. Changing to this basis replaces the weight $\frac{1}{8}$ on each edge in \fig{mat_els_for_a_blockA} by $-\frac{1}{8}$ and does not change the weights on the self-loops. The resulting matrix is $\frac{1}{8}L_{0}$, where $L_{0}$ is the Laplacian matrix of the graph shown in \fig{mat_els_for_a_blockB}. Now we use the fact that the Laplacian of any connected graph has smallest eigenvalue zero, with a unique eigenvector equal to the all-ones vector. Hence for each block we get an eigenvector in the nullspace of \eq{restriction_h_triangle}) given by \eq{phi_z_a_q}. Ranging over all $z,a\in\{0,1\}$ and $q\in[R]$ gives the claimed basis for the $e_1$-energy ground space of $A(G^{\triangle})$.

To prove the lower bound, we use the \npl (\lem{npl}) with 
\[
H_{A}=A(G^{\diamondsuit})-e_{1}\qquad H_{B}=h_{\mathcal{E}^{\triangle}}
\]
and where $S=\spn(\mathcal{W})$ is the nullspace of $H_{A}$ as shown in \sec{The-gate-graph_G_DIAMOND}. Since it is block diagonal in the basis $\mathcal{W}$, the smallest nonzero eigenvalue of \eq{restriction_h_triangle} is equal to the smallest nonzero eigenvalue of one of its blocks. The matrix for each block is $\frac{1}{8}L_{0}$. Thus we can lower bound the smallest nonzero eigenvalue of $H_B|_S$ using standard bounds on the smallest nonzero eigenvalue of the Laplacian $L$ of a graph $G$. In particular, Theorem 4.2 of reference \cite{Moh91} shows that 
\begin{equation*}
\gamma(L) \ge \frac{4}{|V(G)| \diam(G)} \ge \frac{4}{|V(G)|^2}
\end{equation*}
(where $\diam(G)$ is the diameter of $G$). Since the size of the graph in \fig{mat_els_for_a_blockB} is either $3R-1$ or $3R+2$, we have
\begin{equation*}
\gamma(H_{B}|_{S})=\frac{1}{8}\gamma(L_{0}) \geq \frac{1}{8}\frac{4}{\left(3R+2\right)^{2}}\geq\frac{1}{32R^{2}}
\end{equation*}
since $R\geq2$. Using this bound and the fact that $\gamma(H_{A})>\frac{1}{13}$ (from equation \eq{one_thirteenth_bound}) and $\left\Vert H_{B}\right\Vert =2$ (from equation \eq{h_E_bound}) and plugging into \lem{npl} gives
\[
\gamma(A(G^{\triangle})-e_{1})\geq\frac{\frac{1}{13}\cdot\frac{1}{32R^{2}}}{\frac{1}{13}+\frac{1}{32R^{2}}+2}\geq\frac{1}{(32+13+832)R^{2}}>\frac{1}{(30R)^{2}}. \qedhere
\]
\end{proof}

\subsection{The Hamiltonian $H(G^{\triangle},N)$}

We now consider the $N$-particle Hamiltonian $H(G^{\triangle},N)$ and solve for its nullspace. We use the following fact about the subsets $\mathcal{W}_{(z,a,q)}\subset\mathcal{W}$ defined in equation \eq{subset_W}.

\begin{definition}
\label{defn:overlap_diagram_element}
We say $\mathcal{W}_{(z_{1},a_{1},q_{1})}$ and $\mathcal{W}_{(z_{2},a_{2},q_{2})}$ \emph{overlap on a diagram element} if there exists $l\in L^{\square}$ such that $|\psi_{x_{1},b_{1}}^{l}\rangle\in\mathcal{W}_{(z_{1},a_{1},q_{1})}$ and $|\psi_{x_{2},b_{2}}^{l}\rangle\in\mathcal{W}_{(z_{2},a_{2},q_{2})}$ for some $x_{1},x_{2},b_{1},b_{2}\in\{0,1\}$.
\end{definition}

\begin{fact}
[Key property of $\mathcal{W}_{(z,a,q)}$]\label{fct:block_property}
Sets $\mathcal{W}_{(z_{1},a_{1},q_{1})}$ and $\mathcal{W}_{(z_{2},a_{2},q_{2})}$ overlap on a diagram element if and only if $q_{1}=q_{2}$ or $\{q_{1},q_{2}\}\in E(\goc)$.
\end{fact}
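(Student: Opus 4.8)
The plan is to prove Fact \ref{fct:block_property} by a direct incidence count. For each diagram element $l\in L^{\square}$ I would determine the set of block labels $(z,q)$ for which $l$ labels one of the basis vectors in $\mathcal{W}_{(z,a,q)}$; note from the description of $\mathcal{W}_{(z,a,q)}$ in \eq{subset_W} that this set does not depend on the ``spectator'' index $a$. By \defn{overlap_diagram_element}, $\mathcal{W}_{(z_{1},a_{1},q_{1})}$ and $\mathcal{W}_{(z_{2},a_{2},q_{2})}$ overlap on a diagram element precisely when some $l\in L^{\square}$ lies in both of these index sets, so the Fact will follow by running through the five pieces of the partition $L^{\square}=Q_{\inn}\cup D\cup E_{\text{edges}}\cup E_{\text{non-edges}}\cup Q_{\out}$ from \eq{L_square}.

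First I would record the incidence data, each line an immediate consequence of \eq{subset_W} together with injectivity of the labelings that define $Q_{\inn},D,E_{\text{edges}},E_{\text{non-edges}},Q_{\out}$: (i) $q_{\inn}$ and $q_{\out}$ each occur only in blocks whose third index is $q$, and they occur there for every $z,a$; (ii) $d(q,s)$ occurs only in blocks with third index $q$; (iii) for $\{q,s\}\in E(\goc)$, using the symmetry $e_{ij}(q,s)=e_{ji}(s,q)$ of the labeling of $E_{\text{edges}}$, the element $e_{ij}(q,s)$ occurs in exactly the blocks $\mathcal{W}_{(i,a,q)}$ and $\mathcal{W}_{(j,a,s)}$ (over all $a$); (iv) for $e_{ij}(q,s)\in E_{\text{non-edges}}$ the labeling is \emph{not} symmetric, so $e_{ij}(q,s)$ occurs only in blocks with $(z,q)=(i,q)$, and the diagonal elements $e_{ij}(q,q)$ (present only when $R$ is odd) likewise occur only in the block family $(z,q)=(i,q)$. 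I would also remark that when $R$ is even the omitted elements $d(q,q)$ and $e_{ij}(q,q)$ appear in no block, so the enumeration is complete in both parities.

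Granting (i)--(iv) the two implications are short. For the ``if'' direction: if $q_{1}=q_{2}=q$ then $q_{\inn}$ lies in both blocks by (i); if $\{q_{1},q_{2}\}\in E(\goc)$ (hence $q_{1}\neq q_{2}$), then by (iii) the single diagram element $e_{z_{1}z_{2}}(q_{1},q_{2})=e_{z_{2}z_{1}}(q_{2},q_{1})\in E_{\text{edges}}$ lies in both $\mathcal{W}_{(z_{1},a_{1},q_{1})}$ and $\mathcal{W}_{(z_{2},a_{2},q_{2})}$, so the blocks overlap on it. For the ``only if'' direction: suppose the blocks overlap on some $l\in L^{\square}$; according to which of the five pieces of \eq{L_square} contains $l$, one of statements (i), (ii), (iii), (iv) forces $q_{1}$ and $q_{2}$ either to coincide or to be the two endpoints of an edge of $\goc$.

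I expect the only delicate point to be the deliberate contrast between $E_{\text{edges}}$ and $E_{\text{non-edges}}$: the symmetry $e_{ij}(q,s)=e_{ji}(s,q)$ imposed on $E_{\text{edges}}$ is exactly what creates the shared diagram element responsible for the ``$\{q_{1},q_{2}\}\in E(\goc)$'' case, whereas the asymmetry of the labeling on $E_{\text{non-edges}}$ is what prevents any spurious overlap between blocks whose third indices are a non-edge. Apart from this observation and the routine tracking of the $R$-even caveat, no further ideas are needed; the argument is pure bookkeeping.
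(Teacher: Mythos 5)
Your proposal is correct and takes essentially the same approach as the paper: the paper also confirms the Fact by ``direct inspection of the sets $\mathcal{W}_{(z,a,q)}$'', singling out $l=q_{\mathrm{in}}$ for the case $q_1=q_2$ and $l=e_{z_1 z_2}(q_1,q_2)=e_{z_2 z_1}(q_2,q_1)$ for the case $\{q_1,q_2\}\in E(\goc)$, and noting there is no overlap for a non-edge. Your write-up simply spells out the incidence bookkeeping piece by piece over the partition of $L^{\square}$ rather than asserting it; the content and the role of the symmetric versus asymmetric labeling of $E_{\text{edges}}$ and $E_{\text{non-edges}}$ are the same.
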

This fact can be confirmed by direct inspection of the sets $\mathcal{W}_{(z,a,q)}$. If $q_{1}=q_{2}=q$ the diagram element $l$ on which they overlap can be chosen to be $l=q_{\mathrm{in}}$; if $q_{1}\neq q_{2}$ and $\{q_{1},q_{2}\}\in E(\goc)$ then $l=e_{z_{1}z_{2}}(q_{1},q_{2})=e_{z_{2}z_{1}}(q_{2},q_{1})$.

Conversely, if $\{q_1,q_2\} \notin E(\goc)$ with $q_1 \ne q_2$, then there is no overlap.

We show that the nullspace $\mathcal{I}_{\triangle}$ of $H(G^{\triangle},N)$ is 
\begin{equation}
\mathcal{I}_{\triangle}=\spn\{ \Sym(|\phi_{z_{1},a_{1}}^{q_{1}}\rangle|\phi_{z_{2},a_{2}}^{q_{2}}\rangle\ldots|\phi_{z_{N},a_{N}}^{q_{N}}\rangle)\colon 
\text{$z_{i},a_{i}\in\{0,1\}$, $q_{i}\in[R]$, $q_{i}\neq q_{j}$, and $\{q_{i},q_{j}\}\notin E(\goc)$}\}.
\label{eq:I_triangle}
\end{equation}
Note that $\mathcal{I}_{\triangle}\subset\mathcal{Z}_{N}(G^{\triangle})$ is very similar to $\mathcal{I}(G,\goc,N)\subset\mathcal{Z}_{N}(G)$ (from equation \eq{occup_space_defn}) but with each single-particle state $|\psi_{z,a}^{q}\rangle\in\mathcal{Z}_{N}(G)$ replaced by $|\phi_{z,a}^{q}\rangle\in\mathcal{Z}_{N}(G^{\triangle})$.

\begin{lemma}
\label{lem:The-nullspace-of_H_triangle}The nullspace of $H(G^{\triangle},N)$ is $\mathcal{I}_{\triangle}$ as defined in equation \eq{I_triangle}. Its smallest nonzero eigenvalue is 
\begin{equation}
\gamma(H(G^{\triangle},N)) > \frac{1}{\left(17R\right)^{7}}.\label{eq:lowerbound_HG_triangle}
\end{equation}
\end{lemma}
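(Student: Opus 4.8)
The plan is to apply the \npl one final time to the decomposition
\[
H(G^{\triangle},N)=\Big(A(G^{\diamondsuit})-e_1\Big)^{\text{sum over particles}}\ \text{part}\ +\ \text{hopping part},
\]
but more precisely to mimic the two-level structure already used for $A(G^{\triangle})$ in \lem{A(G_triangle)}. Concretely, I would write $H(G^{\triangle},N)=H_A+H_B$ where $H_A=H(G^{\diamondsuit},N)=\sum_{w=1}^N (A(G^{\diamondsuit})-e_1)^{(w)}\big|_{\mathcal{Z}_N(G^{\triangle})}$ and $H_B=\sum_{w=1}^N h_{\mathcal{E}^{\triangle}}^{(w)}\big|_{\mathcal{Z}_N(G^{\triangle})}$, both positive semidefinite. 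By \lem{The-nullspace-of_Hdiamond} the nullspace $S$ of $H_A$ is $\mathcal{I}_{\diamondsuit}$, and $\gamma(H_A)=\gamma(H(G^{\diamondsuit},N))>\frac{1}{300}$. The norm bound is easy: $\|H_B\|\le N\|h_{\mathcal{E}^{\triangle}}\|=2N\le 2R$ using \eq{h_E_bound} and $N\le R$. So the only real work is to identify the nullspace of $H_B|_S$ (which will be $\mathcal{I}_{\triangle}$) and to lower bound $\gamma(H_B|_S)$.

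\textbf{Analyzing $H_B|_{\mathcal{I}_{\diamondsuit}}$.} The space $\mathcal{I}_{\diamondsuit}$ is spanned by symmetrized products $\Sym(|\psi^{q_1}_{z_1,a_1}\rangle\cdots|\psi^{q_N}_{z_N,a_N}\rangle)$ with all $q_i$ distinct. I would first argue that $H_B|_S$ does \emph{not} connect products supported on different multisets of diagram elements in a harmful way — more carefully, that $H_B|_S$ is block-diagonal according to which diagram elements of $L^{\square}$ are occupied, together with the structure coming from the per-$(z,a,q)$ blocks $\mathcal{W}_{(z,a,q)}$ of \lem{A(G_triangle)}. The key combinatorial input is \fct{block_property}: the single-qubit ground states $|\phi^q_{z,a}\rangle$ of $A(G^{\triangle})$ are supported on the diagram elements of $\mathcal{W}_{(z,a,q)}$, and two such blocks share a diagram element iff $q_1=q_2$ or $\{q_1,q_2\}\in E(\goc)$. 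Hence a product state $\Sym(\bigotimes_i|\phi^{q_i}_{z_i,a_i}\rangle)$ makes sense as a hard-core (no shared diagram element) state exactly when the $q_i$ are distinct \emph{and} pairwise non-adjacent in $\goc$ — this is precisely $\mathcal{I}_{\triangle}$, and such states are manifestly annihilated by $h_{\mathcal{E}^{\triangle}}$ on every copy since each $|\phi^q_{z,a}\rangle$ already is (by \lem{A(G_triangle)}) and no interaction between copies arises. For the converse, I would take a general element of $\mathcal{I}_{\diamondsuit}$, decompose it in the product basis of $\mathcal{W}$, and show that $H_B$ forces each single-particle factor into the nullspace of $h_{\mathcal{E}^{\triangle}}|_{\spn(\mathcal{W})}$ — i.e. into $\spn\{|\phi^q_{z,a}\rangle\}$ — and simultaneously forbids two particles from lying in blocks that overlap on a diagram element, which is exactly the $\goc$-independence constraint.

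\textbf{The eigenvalue gap and the obstacle.} To bound $\gamma(H_B|_S)$ from below I would decompose $H_B|_S$ into commuting pieces: one term per particle acting as $h_{\mathcal{E}^{\triangle}}$ restricted to the relevant block $\mathcal{W}_{(z,a,q)}$ (whose gap is $\frac{1}{8}\gamma(L_0)\ge\frac{1}{32R^2}$ from the proof of \lem{A(G_triangle)}), plus ``collision'' terms penalizing two particles sharing a diagram element. The smallest nonzero eigenvalue of such a sum is at least the minimum of the individual gaps, giving $\gamma(H_B|_S)\ge\frac{1}{32R^2}$ (the collision terms only help, contributing at least $2$ when active). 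The main obstacle is precisely this step: showing that $H_B|_S$ genuinely decomposes as a sum of \emph{commuting} operators across particles and blocks, and that the cross terms (two particles in overlapping blocks) do not reduce the gap — this requires a careful case analysis of which pairs $\{(z_1,a_1,q_1),(z_2,a_2,q_2)\}$ share a diagram element $l$, what $h_{\mathcal{E}^{\triangle}}$ does on the two-particle subspace at $l$, and verifying the interaction term $\hat n_l(\hat n_l-1)$ is not needed here (it is zero on $\mathcal{I}_\diamondsuit$). Once $\gamma(H_B|_S)\ge\frac{1}{32R^2}$, $\gamma(H_A)>\frac{1}{300}$, and $\|H_B\|\le 2R$ are in hand, \lem{npl} gives
\[
\gamma(H(G^{\triangle},N))\ge\frac{\frac{1}{300}\cdot\frac{1}{32R^2}}{\frac{1}{300}+\frac{1}{32R^2}+2R}\ge\frac{1}{(17R)^7}
\]
for $R\ge2$, after a routine arithmetic simplification (bounding the denominator by a small constant times $R$ and absorbing everything into $(17R)^7$). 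I would double-check the final constant by a crude estimate: the denominator is at most $3R$ for $R\ge 2$, so the ratio is at least $\frac{1}{300\cdot 32\cdot 3\,R^3}=\frac{1}{28800\,R^3}$, and $28800\,R^3\le (17R)^7$ for all $R\ge 2$, which closes the bound with room to spare.
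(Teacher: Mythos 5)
Your overall strategy is the same as the paper's (decompose $H(G^{\triangle},N)=H(G^{\diamondsuit},N)+\sum_w h_{\mathcal{E}^{\triangle}}^{(w)}$, invoke \lem{The-nullspace-of_Hdiamond} for $\gamma(H_A)>\frac{1}{300}$ and $S=\mathcal{I}_{\diamondsuit}$, then apply the \npl), so the issue is entirely in the claimed bound on $\gamma(H_B|_S)$, and there is a genuine gap there.

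You assert $\gamma(H_B|_S)\ge\frac{1}{32R^2}$, reasoning that $H_B|_S$ splits into commuting single-particle pieces plus collision terms that ``only help, contributing at least $2$ when active.'' This is backwards. Once you restrict to $\mathcal{I}_{\diamondsuit}$, the double-occupancy states are already \emph{gone} from the Hilbert space, so there is no $+2$ collision penalty left to invoke. What remains is precisely the problem: take two particles whose block indices $q_1,q_2$ coincide or are adjacent in $\goc$ (so $\mathcal{W}_{(z_1,a_1,q_1)}$ and $\mathcal{W}_{(z_2,a_2,q_2)}$ overlap on a diagram element $l$). The product of single-particle ground states $|\phi^{q_1}_{z_1,a_1}\rangle\otimes|\phi^{q_2}_{z_2,a_2}\rangle$ has only a small amplitude $c_0\sim 1/R$ on the forbidden configuration with both particles at $l$; projecting it into $\mathcal{I}_{\diamondsuit}$ yields a state that is still nearly in the nullspace of $\sum_w (A(G^{\triangle})-e_1)^{(w)}$, with residual energy of order $\gamma(A(G^{\triangle})-e_1)\cdot c_0^4 \sim R^{-2}\cdot R^{-4}=R^{-6}$. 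That is why the paper's bound on $\gamma(H_B|_S)$ is $\frac{1}{(9R)^6}$, not $\frac{1}{32R^2}$, and why the final answer carries an $R^{7}$ rather than the $R^{3}$ your arithmetic would give. The ``commuting across particles'' picture also fails: the hard-core restriction to $\mathcal{I}_{\diamondsuit}$ couples the particles, and the correct block structure is by the occupation numbers $N_{(x,b,r)}$, not by particle index.

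Concretely, the piece you are missing is the paper's classification of blocks of $\sum_w h_{\mathcal{E}^{\triangle}}^{(w)}\big|_{\mathcal{I}_{\diamondsuit}}$ into types 1, 2, 3 according to whether all $N_{(x,b,r)}\in\{0,1\}$ and whether any two occupied blocks overlap. Type 1 blocks (no overlap, distinct blocks) indeed have gap $\theta_2\ge\frac{1}{32R^2}$ as you anticipate, with the nullspace giving $\mathcal{I}_{\triangle}$. But type 2 and 3 blocks (some $N_{(x,b,r)}\ge2$, or two overlapping blocks) have an \emph{empty} nullspace and smallest eigenvalue only $\sim R^{-6}$; bounding these requires the operator inequality $\sum_w(A(G^{\triangle})-e_1)^{(w)}\ge\gamma(A(G^{\triangle})-e_1)(\id-\Pi^{\triangle})$ together with \fct{lin_alg_fact} to show $1-\max_{|\kappa\rangle}\langle\kappa|\Pi^{\triangle}|\kappa\rangle\ge\frac{3}{1024R^4}$. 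Without this analysis, the nullspace identification $S_2 = \mathcal{I}_{\triangle}$ and the quantitative gap bound both remain unjustified, and your constant $\frac{1}{(17R)^7}$ does not follow.
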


In addition to \fct{block_property}, we use the following simple fact in the proof of the Lemma.

\begin{fact}
\label{fct:lin_alg_fact}
Let $|p\rangle=c|\alpha_{0}\rangle+\sqrt{1-c^{2}}|\alpha_{1}\rangle$ with $\langle\alpha_{i}|\alpha_{j}\rangle=\delta_{ij}$ and $c\in[0,1]$. Then
\[
|p\rangle\langle p|=c^{2}|\alpha_{0}\rangle\langle\alpha_{0}|+M
\]
where $\left\Vert M\right\Vert \leq1-\frac{3}{4}c^{4}$.
\end{fact}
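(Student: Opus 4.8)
# Proof Proposal for Fact~\ref{fct:lin_alg_fact}

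\textbf{Setup and strategy.} The claim concerns a unit vector $|p\rangle = c|\alpha_0\rangle + \sqrt{1-c^2}\,|\alpha_1\rangle$, with $|\alpha_0\rangle, |\alpha_1\rangle$ orthonormal and $c\in[0,1]$, and asks us to show that the rank-one projector $|p\rangle\langle p|$ decomposes as $c^2|\alpha_0\rangle\langle\alpha_0| + M$ with $\|M\|\leq 1 - \tfrac34 c^4$. The plan is to simply \emph{define} $M := |p\rangle\langle p| - c^2|\alpha_0\rangle\langle\alpha_0|$ and compute its operator norm directly. Since everything lives in the two-dimensional space $\spn\{|\alpha_0\rangle,|\alpha_1\rangle\}$ (both $|p\rangle\langle p|$ and $|\alpha_0\rangle\langle\alpha_0|$ annihilate the orthogonal complement), this reduces to finding the eigenvalues of an explicit $2\times 2$ real symmetric matrix.

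\textbf{Key steps.} First I would write $M$ in the ordered orthonormal basis $\{|\alpha_0\rangle, |\alpha_1\rangle\}$. Writing $s = \sqrt{1-c^2}$, we have $|p\rangle\langle p| = \begin{pmatrix} c^2 & cs \\ cs & s^2\end{pmatrix}$, so
\[
M = \begin{pmatrix} 0 & cs \\ cs & s^2 \end{pmatrix}.
\]
Next I would compute the eigenvalues of this matrix: the characteristic polynomial is $\lambda^2 - s^2\lambda - c^2 s^2 = 0$, giving $\lambda_\pm = \tfrac12\big(s^2 \pm \sqrt{s^4 + 4c^2 s^2}\big)$. Then $\|M\| = \max(|\lambda_+|,|\lambda_-|) = \lambda_+ = \tfrac12\big(s^2 + s\sqrt{s^2 + 4c^2}\big)$ (note $\lambda_+ \geq 0 \geq \lambda_-$ and $\lambda_+ \geq |\lambda_-|$ since their sum $s^2$ is nonnegative). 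Substituting $s^2 = 1-c^2$ gives $\|M\| = \tfrac12\big((1-c^2) + \sqrt{(1-c^2)(1-c^2+4c^2)}\big) = \tfrac12\big((1-c^2) + \sqrt{(1-c^2)(1+3c^2)}\big)$.

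\textbf{The remaining inequality.} It then remains to verify $\tfrac12\big((1-c^2) + \sqrt{(1-c^2)(1+3c^2)}\big) \leq 1 - \tfrac34 c^4$ for all $c\in[0,1]$. Setting $u = c^2 \in [0,1]$, this is $\sqrt{(1-u)(1+3u)} \leq 2 - \tfrac32 u^2 - (1-u) = 1 + u - \tfrac32 u^2$. The right-hand side is nonnegative on $[0,1]$ (it equals $1$ at $u=0$, $\tfrac12$ at $u=1$, and its only real roots are outside $[0,1]$), so I can square both sides: the claim becomes $(1-u)(1+3u) \leq (1+u-\tfrac32 u^2)^2$, i.e. $1 + 2u - 3u^2 \leq 1 + 2u - 2u^2 + \ldots$; expanding the square and cancelling gives, after routine algebra, an inequality of the form $0 \leq u^2\big(\text{polynomial in }u\text{ that is nonnegative on }[0,1]\big)$. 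This elementary polynomial verification is the only real ``work,'' and I expect it to be the main (though entirely routine) obstacle — one must be careful that the squaring step is justified and that the resulting polynomial inequality genuinely holds on the whole interval rather than just near the endpoints. Once that is checked, combining it with the eigenvalue computation gives $\|M\| \leq 1 - \tfrac34 c^4$, completing the proof.
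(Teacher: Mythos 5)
Your proposal is correct and takes essentially the same approach as the paper: reduce to the $2\times 2$ matrix, compute $\|M\| = \tfrac12(1-c^2) + \tfrac12\sqrt{1+2c^2-3c^4}$, and then bound. The paper finishes by quoting $\sqrt{1+x}\le 1+\tfrac{x}{2}$ with $x=2c^2-3c^4$, which is just the packaged form of your square-and-compare step (indeed, after squaring you would find the slack is $u^2(1-\tfrac32 u)^2\ge 0$, exactly the $x^2/4$ term behind that inequality).
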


To prove this Fact, one can calculate $\left\Vert M\right\Vert =\frac{1}{2}(1-c^{2})+\frac{1}{2}\sqrt{1+2c^{2}-3c^{4}}$ and use the inequality $\sqrt{1+x}\leq1+\frac{x}{2}$ for $x\geq-1$.

\begin{proof}[Proof of \protect\lem{The-nullspace-of_H_triangle}]
Using equation \eq{A_g_diamond_triangle} and the fact that the smallest eigenvalues of $A(G^{\diamondsuit})$ and $A(G^{\triangle})$ are the same (equal to $e_{1}$, from \sec{The-gate-graph_G_DIAMOND} and \lem{A(G_triangle)}), we have
\begin{equation}
H(G^{\triangle},N)=H(G^{\diamondsuit},N)+\sum_{w=1}^{N}h_{\mathcal{E}^{\triangle}}^{(w)}\bigg|_{\mathcal{Z}_{N}(G^{\triangle})}.\label{eq:H_G^triangle,diamond}
\end{equation}
Recall from \lem{The-nullspace-of_Hdiamond} that the nullspace of $H(G^{\diamondsuit},N)$ is $\mathcal{\mathcal{I}_{\diamondsuit}}$. We consider 
\begin{equation}
\sum_{w=1}^{N}h_{\mathcal{E}^{\triangle}}^{(w)}\bigg|_{\mathcal{I}_{\diamondsuit}}.\label{eq:restriction to script R}
\end{equation}
We show that its nullspace is equal to $\mathcal{I}_{\triangle}$ (establishing the first part of the Lemma), and we lower bound its smallest nonzero eigenvalue. Specifically, we prove
\begin{equation}
\gamma\left(\sum_{w=1}^{N}h_{\mathcal{E}^{\triangle}}^{(w)}\bigg|_{\mathcal{I}_{\diamondsuit}}\right)>\frac{1}{(9R)^{6}}.\label{eq:bound_R6}
\end{equation}

Now we prove equation \eq{lowerbound_HG_triangle} using this bound. We apply the \npl (\lem{npl}) with $H_{A}$ and $H_{B}$ given by the first and second terms in equation \eq{H_G^triangle,diamond}; in this case the nullspace of $H_{A}$ is $S=\mathcal{I}_{\diamondsuit}$ (from \lem{The-nullspace-of_Hdiamond}). Now applying \lem{npl} and using the bounds $\gamma(H_{A})>\frac{1}{300}$ (from \lem{The-nullspace-of_Hdiamond}), $\left\Vert H_{B}\right\Vert \leq N\left\Vert h_{\mathcal{E}^{\triangle}}\right\Vert =2N\leq2R$ (from equation \eq{h_E_bound} and the fact that $N\leq R$), and the bound \eq{bound_R6} on $\gamma(H_{B}|_{S})$, we find
\[
\gamma(H(G^{\triangle},N))\geq\frac{\frac{1}{300(9R)^{6}}}{\frac{1}{300}+\frac{1}{(9R)^{6}}+2R}\geq\left(\frac{1}{9^6+300+600\cdot 9^6}\right)\frac{1}{R^{7}}>\frac{1}{\left(17R\right)^{7}}.
\]

To complete the proof we must establish that the nullspace of \eq{restriction to script R} is $\mathcal{I}_{\triangle}$ and prove the lower bound \eq{bound_R6}. To analyze \eq{restriction to script R} we use the fact (established in \sec{The-gate-graph_G_triangle}) that \eq{restriction_h_triangle} is block diagonal with a block $\mathcal{W}_{(z,a,q)}\subset\mathcal{W}$ for each triple $(z,a,q)$ with $z,a\in\{0,1\}$ and $q\in[R]$. The operator \eq{restriction to script R} inherits a block structure from this fact. For any basis vector 
\begin{equation}
\Sym(|\psi_{z_{1},a_{1}}^{q_{1}}\rangle|\psi_{z_{2},a_{2}}^{q_{2}}\rangle\ldots|\psi_{z_{N},a_{N}}^{q_{N}}\rangle)\in\mathcal{I}_{\diamondsuit},
\label{eq:basis_vector_in_scriptF}
\end{equation}
we define a set of occupation numbers
\[
\mathcal{N}=\left\{ N_{(x,b,r)} \colon x,b\in\{0,1\},\, r\in[R]\right\} 
\]
where
\[
  N_{(x,b,r)}
  =|\{j \colon |\psi_{z_{j},a_{j}}^{q_{j}}\rangle\in\mathcal{W}_{(x,b,r)}\}|.
\]
Now observe that \eq{restriction to script R} conserves the set of occupation numbers and is therefore block diagonal with a block for each possible set $\mathcal{N}$. 

For a given block corresponding to a set of occupation numbers $\mathcal{N}$, we write $\mathcal{I}_{\diamondsuit}(\mathcal{N})\subset\mathcal{\mathcal{I}_{\diamondsuit}}$ for the subspace spanned by basis vectors \eq{basis_vector_in_scriptF} in the block. We classify the blocks into three categories depending on $\mathcal{N}$.

\begin{mdframed}[frametitle=Classification of the blocks of \eq{restriction to script R} according to $\mathcal{N}$]
Consider the following two conditions on a set $\mathcal{N}=\{ N_{(x,b,r)}\colon x,b\in\{0,1\},\, r\in[R]\}$
of occupation numbers:
\begin{enumerate}[label=(\alph*)]
\item $N_{(x,b,r)}\in\{0,1\}$ for all $x,b\in\{0,1\}$ and
$r\in[R]$. If this holds, write $\left(y_{i},c_{i},s_{i}\right)$ for the nonzero occupation numbers (with some arbitrary ordering), i.e., $N_{(y_{i},c_{i},s_{i})}=1$ for $i\in[N]$.
\item The sets $\mathcal{W}_{(y_{i},c_{i},s_{i})}$ and $\mathcal{W}_{(y_{j},c_{j},s_{j})}$
do not overlap on a diagram element for all distinct $i,j\in[N]$.
\end{enumerate}
We say a block is of type 1 if $\mathcal{N}$ satisfies (a) and (b). We say it is of type 2 if $\mathcal{N}$ does not satisfy (a). We say it is of type 3 if $\mathcal{N}$ satisfies (a) but does not satisfy (b).
\end{mdframed}

Note that every block is either of type 1, 2, or 3. We consider each type separately. Specifically, we show that each block of type $1$ contains one state in the nullspace of \eq{restriction to script R} and, ranging over all blocks of this type, we obtain a basis for $\mathcal{I}_{\triangle}$. We also show that the smallest nonzero eigenvalue within a block of type 1 is at least $\frac{1}{32R^{2}}$. Finally, we show that blocks of type 2 and 3 do not contain any states in the nullspace of \eq{restriction to script R} and that the smallest eigenvalue within any block of type 2 or 3 is greater than $\frac{1}{(9R)^{6}}$. Hence, the nullspace of \eq{restriction to script R} is $\mathcal{I}_{\triangle}$ and its smallest nonzero eigenvalue is lower bounded as in equation \eq{bound_R6}.

\medskip

\noindent \textbf{Type 1}

\smallskip

\noindent Note (from \defn{overlap_diagram_element}) that (b) implies $q\neq r$ whenever 
\[
|\psi_{x,b}^{q}\rangle\in\mathcal{W}_{(y_{i},c_{i},s_{i})}\text{ and }|\psi_{z,a}^{r}\rangle\in\mathcal{W}_{(y_{j},c_{j},s_{j})}
\]
for distinct $i,j\in[N]$. Hence 
\begin{align*}
\mathcal{\mathcal{I}_{\diamondsuit}}(\mathcal{N}) & =\spn\{ \Sym(|\psi_{z_{1},a_{1}}^{q_{1}}\rangle|\psi_{z_{2},a_{2}}^{q_{2}}\rangle\ldots|\psi_{z_{N},a_{N}}^{q_{N}}\rangle)\colon q_{i}\neq q_{j}\text{ and }|\psi_{z_{j},a_{j}}^{q_{j}}\rangle\in\mathcal{W}_{(y_{j},c_{j},s_{j})}\} \\
 & =\spn\{ \Sym(|\psi_{z_{1},a_{1}}^{q_{1}}\rangle|\psi_{z_{2},a_{2}}^{q_{2}}\rangle\ldots|\psi_{z_{N},a_{N}}^{q_{N}}\rangle)\colon|\psi_{z_{j},a_{j}}^{q_{j}}\rangle\in\mathcal{W}_{(y_{j},c_{j},s_{j})}\}.
\end{align*}
From this we see that 
\[
\dim(\mathcal{I}_{\diamondsuit}(\mathcal{N}))=\prod_{j=1}^{N}\left|{\mathcal{W}_{(y_{j},c_{j},s_{j})}}\right|=\begin{cases}
\left(3R+2\right)^{N} & R\text{ odd}\\
\left(3R-1\right)^{N} & R\text{ even.}
\end{cases}
\]
We now solve for all the eigenstates of \eq{restriction to script R} within the block. 

It is convenient to write an orthonormal basis of eigenvectors of the $|\mathcal{W}_{(z,a,q)}|\times|\mathcal{W}_{(z,a,q)}|$ matrix described by \fig{mat_els_for_a_blockA} as 
\begin{equation}
|\phi_{z,a}^{q}(u)\rangle, \quad u\in[|\mathcal{W}_{(z,a,q)}|]\label{eq:phi_u}
\end{equation}
and their ordered eigenvalues as
\[
\theta_{1}\leq\theta_{2}\leq\ldots\leq\theta_{|\mathcal{W}_{(z,a,q)}|}.
\]
From the proof of \lem{A(G_triangle)}, the eigenvector with smallest eigenvalue $\theta_{1}=0$ is $|\phi_{z,a}^{q}\rangle=|\phi_{z,a}^{q}(1)\rangle$ and $\theta_{2}\geq\frac{1}{32R^{2}}$. For any $u_{1},u_{2},\ldots,u_{N}\in [|\mathcal{W}_{(z,a,q)}|]$, the state 
\[
\Sym(|\phi_{y_{1},c_{1}}^{s_{1}}(u_{1})\rangle|\phi_{y_{2},c_{2}}^{s_{2}}(u_{2})\rangle\ldots|\phi_{y_{N},c_{N}}^{s_{N}}(u_{N})\rangle)
\]
is an eigenvector of \eq{restriction to script R} with eigenvalue $\sum_{j=1}^{N}\theta_{j}.$ Furthermore, states corresponding to different choices of $u_{1},\ldots,u_{N}$ are orthogonal, and ranging over all $\dim(\mathcal{I}_{\diamondsuit}(\mathcal{N}))$ choices we get every eigenvector in the block. The smallest eigenvalue within the block is $N\theta_{1}=0$ and there is a unique vector in the nullspace, given by
\begin{equation}
\Sym(|\phi_{y_{1},c_{1}}^{s_{1}}\rangle|\phi_{y_{2},c_{2}}^{s_{2}}\rangle\ldots|\phi_{y_{N},c_{N}}^{s_{N}}\rangle)\label{eq:vector_type1}
\end{equation}
(recall $|\phi_{z,a}^{q}\rangle=|\phi_{z,a}^{q}(1)\rangle$). The smallest nonzero eigenvalue of \eq{restriction to script R} within the block is $(N-1)\theta_{1}+\theta_{2}=\theta_{2}\geq\frac{1}{32R^{2}}$.

Finally, we show that the collection of states \eq{vector_type1} obtained from all blocks of type 1 spans the space $\mathcal{I}_{\triangle}$. Each block of type 1 corresponds to a set of occupation numbers 
\[
N_{(y_{1},c_{1},s_{1})}=N_{(y_{2},c_{2},s_{2})}=\cdots=N_{(y_{N},c_{N},s_{N})}=1\quad\text{(with all other occupation numbers zero)}
\]
and gives a unique vector \eq{vector_type1} in the nullspace of $H(G^{\triangle},N)$. The sets $\mathcal{W}_{(y_{i},c_{i},s_{i})}$ and $\mathcal{W}_{(y_{j},c_{j},s_{j})}$ do not overlap on a diagram element for all distinct $i,j\in[N]$. Using \fct{block_property} we see this is equivalent to $s_{i}\neq s_{j}$ and $\{s_{i},s_{j}\}\notin E(\goc)$ for distinct $i,j\in[N]$. Hence the set of states \eq{vector_type1} obtained from of all blocks of type 1 is 
\[
\left\{ \Sym(|\phi_{y_{1},c_{1}}^{s_{1}}\rangle|\phi_{y_{2},c_{2}}^{s_{2}}\rangle\ldots|\phi_{y_{N},c_{N}}^{s_{N}}\rangle)\colon y_{i},c_{i}\in\{0,1\},\, s_{i}\in[R],\, s_{i}\neq s_{j},\,\{s_{i},s_{j}\}\notin E(\goc)\right\} 
\]
which spans $\mathcal{I}_{\triangle}$.

\medskip

\noindent \textbf{Type 2}

\smallskip

\noindent If $\mathcal{N}$ is of type 2 then there exist $x,b\in\{0,1\}$ and $r\in[R]$ such that $N_{(x,b,r)}\geq2$. We show there are no eigenvectors in the nullspace of \eq{restriction to script R} within a block of this type and we lower bound the smallest eigenvalue within the block. Specifically, we show
\begin{equation}
\min_{|\kappa\rangle\in\mathcal{I}_{\diamondsuit}(\mathcal{N})}\langle\kappa|\sum_{w=1}^{N}h_{\mathcal{E}^{\triangle}}^{(w)}|\kappa\rangle >\frac{1}{(9R)^6}.\label{eq:boundfortype2}
\end{equation}
First note that all $|\kappa\rangle\in\mathcal{I}_{\diamondsuit}$ satisfy $(A(G^{\diamondsuit})-e_{1})^{(w)}|\kappa\rangle=0$ for each $w\in [N]$, which can be seen using the definition of $\mathcal{I}_{\diamondsuit}$ and the fact that $\mathcal{W}$ spans the nullspace of $A(G^{\diamondsuit})-e_{1}$. Using this fact and equation \eq{A_g_diamond_triangle}, we get
\begin{equation}
\min_{|\kappa\rangle\in\mathcal{I}_{\diamondsuit}(\mathcal{N})}\langle\kappa|\sum_{w=1}^{N}h_{\mathcal{E}^{\triangle}}^{(w)}|\kappa\rangle=\min_{|\kappa\rangle\in\mathcal{I}_{\diamondsuit}(\mathcal{N})}\langle\kappa|\sum_{w=1}^{N}\left(A(G^{\triangle})-e_{1}\right)^{(w)}|\kappa\rangle.\label{eq:min_over_blockFbl}
\end{equation}
Now we use the operator inequality 
\begin{align}
  \sum_{w=1}^{N}\left(A(G^{\triangle})-e_{1}\right)^{(w)}
  & \geq\gamma\left(\sum_{w=1}^{N}\left(A(G^{\triangle})-e_{1}\right)^{(w)}\right)\cdot
    \left(1-\Pi^{\triangle}\right) \nonumber \\
  &= \gamma(A(G^{\triangle})-e_{1})\cdot\left(1-\Pi^{\triangle}\right)
   > \frac{1}{(30R)^{2}}\left(1-\Pi^{\triangle}\right),
\label{eq:op_ineq_A_Gtriangle}
\end{align}
where $\Pi^{\triangle}$ is the projector onto the nullspace of $\sum_{w=1}^{N}\left(A(G^{\triangle})-e_{1}\right)^{(w)}$, and where in the last step we used \lem{A(G_triangle)}. Plugging equation \eq{op_ineq_A_Gtriangle} into equation \eq{min_over_blockFbl} gives
\begin{equation}
\min_{|\kappa\rangle\in\mathcal{I}_{\diamondsuit}(\mathcal{N})}\langle\kappa|\sum_{w=1}^{N}h_{\mathcal{E}^{\triangle}}^{(w)}|\kappa\rangle>\frac{1}{(30R)^{2}}\Big(1-\max_{|\kappa\rangle\in\mathcal{I}_{\diamondsuit}(\mathcal{N})}\langle\kappa|\Pi^{\triangle}|\kappa\rangle\Big).\label{eq:bound1}
\end{equation}

In the following we show that $\langle\kappa|\Pi^{\triangle}|\kappa\rangle=\langle\kappa|\Pi_{\mathcal{N}}^{\triangle}|\kappa\rangle$ for all $|\kappa\rangle\in\mathcal{I}_{\diamondsuit}(\mathcal{N})$, where $\Pi_{\mathcal{N}}^{\triangle}$ is a Hermitian operator with
\begin{equation}
  1-\big\Vert \Pi_{\mathcal{N}}^{\triangle}\big\Vert \ge \frac{3}{4}\left(\frac{1}{4R}\right)^{4} = \frac{3}{1024R^4}.
  \label{eq:bound_norm_pi_triangle_n}
\end{equation}
Plugging this into \eq{bound1} gives 
\[
\min_{|\kappa\rangle\in\mathcal{I}_{\diamondsuit}(\mathcal{N})}\langle\kappa|\sum_{w=1}^{N}h_{\mathcal{E}^{\triangle}}^{(w)}|\kappa\rangle>\frac{3}{(30R)^2 \cdot 1024R^4}>\frac{1}{(9R)^6}.
\]

To complete the proof, we exhibit the operator $\Pi_{\mathcal{N}}^{\triangle}$ and show that its norm is bounded as \eq{bound_norm_pi_triangle_n}. Using \lem{A(G_triangle)} we can write $\Pi^{\triangle}$ explicitly as
\begin{equation}
\Pi^{\triangle}=\sum_{(\vec{z},\vec{a},\vec{q})\in\mathcal{Q}}\mathcal{P}_{(\vec{z},\vec{a},\vec{q})}\label{eq:Pi_N_triangle}
\end{equation}
where 
\begin{align*}
\mathcal{P}_{(\vec{z},\vec{a},\vec{q})} & =|\phi_{z_{1},a_{1}}^{q_{1}}\rangle\langle\phi_{z_{1},a_{1}}^{q_{1}}|\otimes|\phi_{z_{2},a_{2}}^{q_{2}}\rangle\langle\phi_{z_{2},a_{2}}^{q_{2}}|\otimes\cdots\otimes|\phi_{z_{N},a_{N}}^{q_{N}}\rangle\langle\phi_{z_{N},a_{N}}^{q_{N}}|\\
\mathcal{Q} & =\left\{ (z_{1},\ldots z_{N},a_{1},\ldots,a_{N},q_{1},\ldots,q_{N}):\, z_{i},a_{i}\in\{0,1\}\text{ and }q_{i}\in[R]\right\} .
\end{align*}
For each $(\vec{z},\vec{a},\vec{q})\in\mathcal{Q}$ we also define a space 
\[
S_{(\vec{z},\vec{a},\vec{q})}=\spn(\mathcal{W}_{(z_{1},a_{1},q_{1})})\otimes\spn(\mathcal{W}_{(z_{2},a_{2},q_{2})})\otimes\cdots\otimes\spn(\mathcal{W}_{(z_{N},a_{N},q_{N})}).
\]
Note that $\mathcal{P}_{(\vec{z},\vec{a},\vec{q})}$ has all of its support in $S_{(\vec{z},\vec{a},\vec{q})}$, and that
\begin{equation}
S_{(\vec{z},\vec{a},\vec{q})}\perp S_{(\vec{z}^{\prime},\vec{a}^{\prime},\vec{q}^{\prime})}\text{ for distinct }(\vec{z},\vec{a},\vec{q}),(\vec{z}^{\prime},\vec{a}^{\prime},\vec{q}^{\prime})\in\mathcal{Q}.\label{eq:perp_S}
\end{equation}
Therefore $\mathcal{P}_{(\vec{z},\vec{a},\vec{q})}\mathcal{P}_{(\vec{z}^{\prime},\vec{a}^{\prime},\vec{q}^{\prime})}=0$ for distinct $(\vec{z},\vec{a},\vec{q}),(\vec{z}^{\prime},\vec{a}^{\prime},\vec{q}^{\prime})\in\mathcal{Q}$. (Below we use similar reasoning to obtain a less obvious result.) Note that $\mathcal{P}_{(\vec{z},\vec{a},\vec{q})}$ is orthogonal to $\mathcal{I}_{\diamondsuit}(\mathcal{N})$ unless 
\begin{equation}
\left|\left\{ j:(z_{j},a_{j},q_{j})=(w,u,v)\right\} \right|=N_{(w,u,v)}\text{ for all }w,u\in\{0,1\},\, v\in[R].\label{eq:satisfy_occ_numbers}
\end{equation}
We restrict our attention to the projectors that are not orthogonal to $\mathcal{I}_{\diamondsuit}(\mathcal{N})$. Letting $\mathcal{Q}(\mathcal{N})\subset\mathcal{Q}$ be the set of $(\vec{z},\vec{a},\vec{q})$ satisfying equation \eq{satisfy_occ_numbers}, we have
\begin{equation}
\langle\kappa|\sum_{(\vec{z},\vec{a},\vec{q})\in\mathcal{Q}}\mathcal{P}_{(\vec{z},\vec{a},\vec{q})}|\kappa\rangle=\langle\kappa|\sum_{(\vec{z},\vec{a},\vec{q})\in\mathcal{Q}(\mathcal{N})}\mathcal{P}_{(\vec{z},\vec{a},\vec{q})}|\kappa\rangle\quad\text{for all }|\kappa\rangle\in\mathcal{I}_{\diamondsuit}(\mathcal{N}).\label{eq:restrict_attention_mathcalN}
\end{equation}

Since $N_{(x,b,r)}\geq2$, note that in each term $\mathcal{P}_{(\vec{z},\vec{a},\vec{q})}$ with $(\vec{z},\vec{a},\vec{q})\in\mathcal{Q}(\mathcal{N})$, the operator 
\[
|\phi_{x,b}^{r}\rangle\langle\phi_{x,b}^{r}|\otimes|\phi_{x,b}^{r}\rangle\langle\phi_{x,b}^{r}|
\]
appears between two of the $N$ registers (tensored with rank-1 projectors on the other $N-2$ registers). Using equation \eq{phi_z_a_q} we may expand $|\phi_{x,b}^{r}\rangle$ as a sum of states from $\mathcal{W}_{(x,b,r)}$. This gives
\[
|\phi_{x,b}^{r}\rangle|\phi_{x,b}^{r}\rangle=c_{0}|\psi_{x,b}^{r_{\inn}}\rangle|\psi_{x,b}^{r_{\inn}}\rangle+\left(1-c_{0}^{2}\right)^{\frac{1}{2}}|\Phi_{x,b}^{r}\rangle
\]
where $c_{0}$ is either $\frac{1}{3R+2}$ (if $R$ is odd) or $\frac{1}{3R-1}$ (if $R$ is even), and where $|\psi_{x,b}^{r_{\inn}}\rangle|\psi_{x,b}^{r_{\inn}}\rangle$ is orthogonal to $|\Phi_{x,b}^{r}\rangle$. Note that each of the states $|\phi_{x,b}^{r}\rangle|\phi_{x,b}^{r}\rangle$, $|\psi_{x,b}^{r_{\inn}}\rangle|\psi_{x,b}^{r_{\inn}}\rangle$, and $|\Phi_{x,b}^{r}\rangle$ lie in the space 
\begin{equation}
\spn(\mathcal{W}_{(x,b,r)})\otimes\spn(\mathcal{W}_{(x,b,r)}).
\label{eq:tensor_prod_WW}
\end{equation}
Now applying \fct{lin_alg_fact} gives
\begin{equation}
|\phi_{x,b}^{r}\rangle\langle\phi_{x,b}^{r}|\otimes|\phi_{x,b}^{r}\rangle\langle\phi_{x,b}^{r}|=c_{0}^{2}|\psi_{x,b}^{r_{\inn}}\rangle\langle\psi_{x,b}^{r_{\inn}}|\otimes|\psi_{x,b}^{r_{\inn}}\rangle\langle\psi_{x,b}^{r_{\inn}}|+M_{x,b}^{r}\label{eq:expand_phi_phi_proj}
\end{equation}
where $M_{x,b}^{r}$ is a Hermitian operator with all of its support on the space \eq{tensor_prod_WW} and
\begin{equation}
\left\Vert M_{x,b}^{r}\right\Vert \leq1-\frac{3}{4}c_{0}^{4}\leq1-\frac{3}{4}\left(\frac{1}{3R+2}\right)^{4}\leq1-\frac{3}{4}\frac{1}{(4R)^4}
\end{equation}
since $R\geq2$. For each $(\vec{z},\vec{a},\vec{q})\in\mathcal{Q}(\mathcal{N})$ we define $\mathcal{P}_{(\vec{z},\vec{a},\vec{q})}^{M}$ to be the operator obtained from $\mathcal{P}_{(\vec{z},\vec{a},\vec{q})}$ by replacing
\[
|\phi_{x,b}^{r}\rangle\langle\phi_{x,b}^{r}|\otimes|\phi_{x,b}^{r}\rangle\langle\phi_{x,b}^{r}|\mapsto M_{x,b}^{r}
\]
on two of the registers (if $N_{(x,b,r)}>2$ there is more than one way to do this; we fix one choice for each $(\vec{z},\vec{a},\vec{q})\in\mathcal{Q}(\mathcal{N})$). Note that $\mathcal{P}_{(\vec{z},\vec{a},\vec{q})}^{M}$ has all of its support in the space $S_{(\vec{z},\vec{a},\vec{q})}$. Using \eq{perp_S} gives
\[
\mathcal{P}_{(\vec{z},\vec{a},\vec{q})}^{M}\mathcal{P}_{(\vec{z}^{\prime},\vec{a}^{\prime},\vec{q}^{\prime})}^{M}=0\text{ for distinct }(\vec{z},\vec{a},\vec{q}),(\vec{z}^{\prime},\vec{a}^{\prime},\vec{q}^{\prime})\in\mathcal{Q}(\mathcal{N}).
\]
Using equation \eq{expand_phi_phi_proj} and the fact that
\[
\langle\kappa|\Big(|\psi_{x,b}^{r_{\inn}}\rangle\langle\psi_{x,b}^{r_{\inn}}|^{(w_{1})}\Big)\Big(|\psi_{x,b}^{r_{\inn}}\rangle\langle\psi_{x,b}^{r_{\inn}}|^{(w_{2})}\Big)|\kappa\rangle=0\quad\text{for all }|\kappa\rangle\in\mathcal{I}_{\diamondsuit}(\mathcal{N})\text{ and distinct }w_{1},w_{2}\in[N]
\]
(which can be seen from the definition of $\mathcal{I}_\diamondsuit$), we have 
\[
\langle\kappa|\mathcal{P}_{(\vec{z},\vec{a},\vec{q})}|\kappa\rangle=\langle\kappa|\mathcal{P}_{(\vec{z},\vec{a},\vec{q})}^{M}|\kappa\rangle\quad\text{for all }|\kappa\rangle\in\mathcal{I}_{\diamondsuit}(\mathcal{N}).
\]
Hence, letting 
\begin{equation}
\Pi_{\mathcal{N}}^{\triangle}=\sum_{(\vec{z},\vec{a},\vec{q})\in\mathcal{Q}(\mathcal{N})}\mathcal{P}_{(\vec{z},\vec{a},\vec{q})}^{M},\label{eq:pi_triangle_N}
\end{equation}
we have $\langle\kappa|\Pi^{\triangle}|\kappa\rangle=\langle\kappa|\Pi_{\mathcal{N}}^{\triangle}|\kappa\rangle$ for all $|\kappa\rangle\in\mathcal{I}_{\diamondsuit}(\mathcal{N})$. To obtain the desired bound \eq{bound_norm_pi_triangle_n} on the norm of $\Pi_{\mathcal{N}}^{\triangle}$, we use the fact that the norm of a sum of pairwise orthogonal Hermitian operators is upper bounded by the maximum norm of an operator in the sum, so
\begin{equation}
\big\Vert \Pi_{\mathcal{N}}^{\triangle}\big\Vert 
=\Bigg\Vert \sum_{(\vec{z},\vec{a},\vec{q})\in\mathcal{Q}(\mathcal{N})}\mathcal{P}_{(\vec{z},\vec{a},\vec{q})}^{M}\Bigg\Vert 
=\max_{(\vec{z},\vec{a},\vec{q})\in\mathcal{Q}(\mathcal{N})}\big\Vert \mathcal{P}_{(\vec{z},\vec{a},\vec{q})}^{M}\big\Vert 
=\left\Vert M_{x,b}^{r}\right\Vert \leq1-\frac{3}{4}\frac{1}{\left(4R\right)^{4}}.
\label{eq:bound_on_norm_pi_n_triangle}
\end{equation}

\medskip

\noindent \textbf{Type 3}

\smallskip

\noindent If $\mathcal{N}$ is of type 3 then $N_{(x,b,r)}\in\{0,1\}$
for all $x,b\in\{0,1\}$ and $r\in[R]$, and 
\[
N_{(y,c,s)}=N_{(t,d,u)}=1
\]
for some $(y,c,s)\neq(t,d,u)$ with either $u=s$ or $\{u,s\}\in E(\goc)$ (using property (b) and \fct{block_property}). We show there are no eigenvectors in the nullspace of \eq{restriction to script R} within a block of this type and we lower bound the smallest eigenvalue within the block. We establish the same bound \eq{boundfortype2} as for blocks of Type 2.

The proof is very similar to that given above for blocks of Type 2. In fact, the first part of proof is identical, from equation \eq{min_over_blockFbl} up to and including equation \eq{restrict_attention_mathcalN}. That is to say, as in the previous case we have
\begin{equation}
\langle\kappa|\sum_{(\vec{z},\vec{a},\vec{q})\in\mathcal{Q}}\mathcal{P}_{(\vec{z},\vec{a},\vec{q})}|\kappa\rangle=\langle\kappa|\sum_{(\vec{z},\vec{a},\vec{q})\in\mathcal{Q}(\mathcal{N})}\mathcal{P}_{(\vec{z},\vec{a},\vec{q})}|\kappa\rangle\quad\text{for all }|\kappa\rangle\in\mathcal{I}_{\diamondsuit}(\mathcal{N}).\label{eq:restrict_attention_mathcalN-1}
\end{equation}
In this case, since $N_{(y,c,s)}=N_{(t,d,u)}=1$, in each term $\mathcal{P}_{(\vec{z},\vec{a},\vec{q})}$ with $(\vec{z},\vec{a},\vec{q})\in\mathcal{Q}(\mathcal{N})$, the operator 
\[
|\phi_{y,c}^{s}\rangle\langle\phi_{y,c}^{s}|\otimes|\phi_{t,d}^{u}\rangle\langle\phi_{t,d}^{u}|
\]
appears between two of the $N$ registers (tensored with rank 1 projectors on the other $N-2$ registers). Using equation \eq{phi_z_a_q} we may expand $|\phi_{y,c}^{s}\rangle$ and $|\phi_{t,d}^{u}\rangle$ as superpositions (with amplitudes $\pm\frac{1}{\sqrt{3R+2}}$ if $R$ is odd or $\pm\frac{1}{\sqrt{3R-1}}$ if $R$ is even) of the basis states from $\mathcal{W}_{(y,c,s)}$ and $\mathcal{W}_{(t,d,u)}$ respectively. Since $\mathcal{W}_{(y,c,s)}$ and $\mathcal{W}_{(t,d,u)}$ overlap on some diagram element, there exists $l\in L^{\square}$ such that $|\psi_{x_{1},b_{1}}^{l}\rangle\in\mathcal{W}_{(y,c,s)}$ and $|\psi_{x_{2},b_{2}}^{l}\rangle\in\mathcal{W}_{(t,d,u)}$ for some $x_{1},x_{2},b_{1},b_{2}\in\{0,1\}$. Hence
\[
|\phi_{y,c}^{s}\rangle|\phi_{t,d}^{u}\rangle=c_{0}\left(\pm|\psi_{x_{1},b_{1}}^{l}\rangle|\psi_{x_{2},b_{2}}^{l}\rangle\right)+\left(1-c_{0}^{2}\right)^{\frac{1}{2}}|\Phi_{y,c,t,d}^{s,u}\rangle
\]
where $c_{0}$ is either $\frac{1}{3R+2}$ (if $R$ is odd) or $\frac{1}{3R-1}$ (if $R$ is even). Now applying \fct{lin_alg_fact} we get
\begin{equation}
|\phi_{y,c}^{s}\rangle\langle\phi_{y,c}^{s}|\otimes|\phi_{t,d}^{u}\rangle\langle\phi_{t,d}^{u}|=c_{0}^{2}|\psi_{x_{1},b_{1}}^{l}\rangle\langle\psi_{x_{1},b_{1}}^{l}|\otimes|\psi_{x_{2},b_{2}}^{l}\rangle\langle\psi_{x_{2},b_{2}}^{l}|+M_{y,c,t,d}^{s,u}\label{eq:expand_phi_phi_proj-1-1}
\end{equation}
where $\Vert M_{y,c,t,d}^{s,u}\Vert \leq1-\frac{3}{4}\left(\frac{1}{4R}\right)^{4}$. For each $(\vec{z},\vec{a},\vec{q})\in\mathcal{Q}(\mathcal{N})$ we define $\mathcal{P}_{(\vec{z},\vec{a},\vec{q})}^{M}$ to be the operator obtained from $\mathcal{P}_{(\vec{z},\vec{a},\vec{q})}$ by replacing
\[
|\phi_{y,c}^{s}\rangle\langle\phi_{y,c}^{s}|\otimes|\phi_{t,d}^{u}\rangle\langle\phi_{t,d}^{u}|\mapsto M_{y,c,t,d}^{s,u}
\]
on two of the registers and we let $\Pi_{\mathcal{N}}^{\triangle}$ be given by \eq{pi_triangle_N}. Then, as in the previous case, $\langle\kappa|\Pi^{\triangle}|\kappa\rangle=\langle\kappa|\Pi_{\mathcal{N}}^{\triangle}|\kappa\rangle$ for all $|\kappa\rangle\in\mathcal{I}_{\diamondsuit}(\mathcal{N})$ and using the same reasoning as before, we get the bound \eq{bound_norm_pi_triangle_n} on $\Vert \Pi_{\mathcal{N}}^{\triangle}\Vert $. Using these two facts we get the same bound on the smallest eigenvalue within a block of type 3 as the bound we obtained for blocks of type 2:
\begin{align*}
\min_{|\kappa\rangle\in\mathcal{I}_{\diamondsuit}(\mathcal{N})}\langle\kappa|\sum_{w=1}^{N}h_{\mathcal{E}^{\triangle}}^{(w)}|\kappa\rangle &>\frac{1}{(30R)^{2}}\left(1-\big\Vert \Pi_{\mathcal{N}}^{\triangle}\big\Vert \right)>\frac{1}{(9R)^{6}}.\qedhere
\end{align*}
\end{proof}

\subsection{The gate graph $G^{\square}$ }

We now consider the gate graph $G^{\square}$ and prove \lem{oc}. We first show that $G^{\square}$ is an $e_{1}$-gate graph. From equations \eq{A_G_squAre}, \eq{h_se_square}, and \eq{A_g_triangle} we have
\begin{equation}
A(G^{\square})=A(G^{\triangle})+h_{\mathcal{E}^{0}}+h_{\mathcal{S}^{0}}.\label{eq:A_g_square_triangle}
\end{equation}
\lem{A(G_triangle)} characterizes the $e_{1}$-energy ground space of $G^{\triangle}$ and gives an orthonormal basis $\{|\phi_{z,a}^{q}\rangle\colon z,a\in\{0,1\},\, q\in[R]\}$ for it. To solve for the $e_{1}$-energy ground space of $A(G^{\square})$, we solve for superpositions of the states $\{|\phi_{z,a}^{q}\rangle\}$ in the nullspace of $h_{\mathcal{E}^{0}}+h_{\mathcal{S}^{0}}$.

Recall the definition of the sets $\mathcal{E}^{0}$ and $\mathcal{S}^{0}$. From \sec{Definitions-and-Notation_G_square}, each node $(q,z,t)$ in the gate diagram for $G$ is associated with a node $\new(q,z,t)$ in the gate diagram for $G^{\square}$ as described by \eq{node_mapping_G_G_square}. This mapping is depicted in \fig{replace_gate_diagram} by the black and grey arrows. Applying this mapping to each pair of nodes in the edge set $\mathcal{E}^{G}$ and each node in the self-loop set $\mathcal{S}^{G}$ of the gate diagram for $G$, we get the sets $\mathcal{E}^{0}$ and $\mathcal{S}^{0}$. Hence, using equations \eq{h_loops} and \eq{h_edges},
\begin{align}
h_{\mathcal{S}^0} &=\sum_{(q,z,t)\in \mathcal{S}^G} |{\new(q,z,t)}\rangle\langle{\new(q,z,t)}|\otimes \id \label{eq:hS0}\\
h_{\mathcal{E}^0} &=\sum_{\{(q,z,t),(q^\prime,z^\prime,t^\prime)\}\in \mathcal{E}^G} \left(|{\new(q,z,t)}\rangle+|{\new(q^\prime,z^\prime,t^\prime)}\rangle\right)\left(\langle{\new(q,z,t)}|+\langle{\new(q^\prime,z^\prime,t^\prime)}|\right)\otimes \id.\label{eq:hE0}
\end{align}

Using equation \eq{phi_z_a_q}, we see that for all nodes $(q,z,t)$ in the gate diagram for $G$ and for all $j\in\{0,\ldots,7\}$, $x,b\in\{0,1\}$, and $r\in[R]$,
\begin{align}
\langle{\new(q,z,t),j}|\phi^r_{x,b}\rangle & =\sqrt{c_0}
\begin{cases}
\langle q_{\mathrm{in}},z,t,j|\psi^{r_{\mathrm{in}}}_{x,b}\rangle & \text{ if $(q,z,t)$ is an input node}\\
\langle q_{\mathrm{out}},z,t,j|\psi^{r_{\mathrm{out}}}_{x,b}\rangle & \text{ if $(q,z,t)$ is an output node}
\end{cases}\nonumber \\
&=\sqrt{c_0}\delta_{r,q} \langle z,t,j|\psi_{x,b}\rangle
\label{eq:mat_el_newnodes}
\end{align}
where $c_0$ is $\frac{1}{3R+2}$ if $R$ is odd or $\frac{1}{3R-1}$ if $R$ is even, and where $|\psi_{x,b}\rangle$ is defined by equations \eq{psi0m} and \eq{psi1m}. The matrix element on the left-hand side of this equation is evaluated in the Hilbert space $\mathcal{Z}_1 (G^\square)$ where each basis vector corresponds to a vertex of the graph $G^\square$; these vertices are labeled $(l,z,t,j)$ with $l\in L^\square$, $z\in\{0,1\}$, $t\in[8]$, and $j\in\{0,\ldots,7\}$. However, from \eq{mat_el_newnodes} we see that
\begin{equation}
\underbrace{\langle \new(q,z,t),j|\phi^r_{x,b}\rangle}_{\text{in } \mathcal{Z}_1 (G^\square)}=\sqrt{c_0}\underbrace{\langle q,z,t,j|\psi^r_{x,b}\rangle}_{\text{in } \mathcal{Z}_1 (G)} \label{eq:twoHspaces}
\end{equation}
where the right-hand side is evaluated in the Hilbert space $\mathcal{Z}_1 (G)$.

Putting together equations \eq{hS0}, \eq{hE0}, and \eq{twoHspaces} gives
\begin{equation}
\langle\phi_{z,a}^{q}|h_{\mathcal{E}^{0}}+h_{\mathcal{S}^{0}}|\phi_{x,b}^{r}\rangle=\langle\psi_{z,a}^{q}|h_{\mathcal{E}^{G}}+h_{\mathcal{S}^{G}}|\psi_{x,b}^{r}\rangle\cdot\begin{cases}
\frac{1}{3R+2} & R\text{ odd}\\
\frac{1}{3R-1} & R\text{ even}
\end{cases}\label{eq:h_eG_hsG}
\end{equation}
for all $z,a,x,b\in\{0,1\}$ and $q,r\in[R]$. On the left-hand side of this equation, the Hilbert space is $\mathcal{Z}_{1}(G^{\square})$; on the right-hand side it is $\mathcal{Z}_{1}(G)$.

We use equation \eq{h_eG_hsG} to relate the $e_1$-energy ground states of $A(G)$ to those of $A(G^\square)$. Since $G$ is an $e_{1}$-gate graph, there is a state 
\[
|\Gamma\rangle=\sum_{z,a,q}\alpha_{z,a,q}|\psi_{z,a}^{q}\rangle\in\mathcal{Z}_{1}(G)
\]
that satisfies $A(G)|\Gamma\rangle=e_1|\Gamma\rangle$ and hence $h_{\mathcal{E}^{G}}|\Gamma\rangle=h_{\mathcal{S}^{G}}|\Gamma\rangle=0$. Letting
\[
|\Gamma^{\prime}\rangle=\sum_{z,a,q}\alpha_{z,a,q}|\phi_{z,a}^{q}\rangle\in\mathcal{Z}_{1}(G^{\square})
\]
and using equation \eq{h_eG_hsG}, we see that $\langle\Gamma^{\prime}|h_{\mathcal{E}^{0}}+h_{\mathcal{S}^{0}}|\Gamma^{\prime}\rangle=0$ and therefore $\langle\Gamma^{\prime}|A(G^{\square})|\Gamma^{\prime}\rangle=e_{1}$. Hence $G^{\square}$ is an $e_{1}$-gate graph. Moreover, the linear mapping from $\mathcal{Z}_{1}(G)$ to $\mathcal{Z}_{1}(G^\square)$ defined by
\begin{equation}
|\psi_{z,a}^{q}\rangle \mapsto |\phi_{z,a}^{q}\rangle\label{eq:map_1particle}
\end{equation}
maps each $e_{1}$-energy eigenstate of $A(G)$ to an $e_{1}$-energy eigenstate of $A(G^{\square})$.

Now consider the $N$-particle Hamiltonian $H(G^{\square},N)$. Using equation \eq{A_g_square_triangle} and the fact that both $A(G^{\square})$ and $A(G^{\triangle})$ have smallest eigenvalue $e_{1}$, we have
\[
H(G^{\square},N)=H(G^{\triangle},N)+\sum_{w=1}^{N}\left(h_{\mathcal{E}^{0}}+h_{\mathcal{S}^{0}}\right)^{(w)}\bigg|_{\mathcal{Z}_{N}(G^{\square})}.
\]
Recall from \lem{A(G_triangle)} that the nullspace of the first term is $\mathcal{I}_{\triangle}$. The $N$-fold tensor product of the mapping \eq{map_1particle} acts on basis vectors of $\mathcal{I}(G,\goc,N)$ as 
\begin{equation}
\Sym(|\psi_{z_{1},a_{1}}^{q_{1}}\rangle|\psi_{z_{2},a_{2}}^{q_{2}}\rangle\ldots|\psi_{z_{N},a_{N}}^{q_{N}}\rangle)
\mapsto
\Sym(|\phi_{z_{1},a_{1}}^{q_{1}}\rangle|\phi_{z_{2},a_{2}}^{q_{2}}\rangle\ldots|\phi_{z_{N},a_{N}}^{q_{N}}\rangle),
\label{eq:N_particle_mapping}
\end{equation}
where $z_{i},a_{i}\in\{0,1\},\; q_{i}\neq q_{j},\;\text{and }\{q_{i},q_{j}\}\notin E(\goc)$.
Clearly this defines an invertible linear map between the two spaces
$\mathcal{I}(G,\goc,N)$ and $\mathcal{I}_{\triangle}$. Let $|\Theta\rangle\in\mathcal{I}(G,\goc,N)$
and write $|\Theta^{\prime}\rangle\in\mathcal{I}_{\triangle}$ for its image under the map \eq{N_particle_mapping}. Then 
\begin{equation}
\langle\Theta^{\prime}|H(G^{\square},N)|\Theta^{\prime}\rangle=\langle\Theta^{\prime}|\sum_{w=1}^{N}\left(h_{\mathcal{E}^{0}}+h_{\mathcal{S}^{0}}\right)^{(w)}|\Theta^{\prime}\rangle=\langle\Theta|\sum_{w=1}^{N}\left(h_{\mathcal{E}^{G}}+h_{\mathcal{S}^{G}}\right)^{(w)}|\Theta\rangle\cdot\begin{cases}
\frac{1}{3R+2} & R\text{ odd}\\
\frac{1}{3R-1} & R\text{ even}
\end{cases}\label{eq:Theta_Theta_prime_eqn}
\end{equation}
where in the first equality we used the fact that $|\Theta^\prime\rangle$ is in the nullspace $\mathcal{I}_{\triangle}$ of $H(G^\triangle,N)$ and in the second equality we used equation \eq{h_eG_hsG} and the fact that $\langle\phi_{z,a}^{q}|\phi_{x,b}^{r}\rangle=\langle\psi_{z,a}^{q}|\psi_{x,b}^{r}\rangle$. We now complete the proof of \lem{oc} using equation \eq{Theta_Theta_prime_eqn}.

\subsubsection*{Case 1: $\lambda_{N}(G,\goc)\leq a$}

In this case there exists a state $|\Theta\rangle\in\mathcal{I}(G,\goc,N)$ satisfying 
\[
\langle\Theta|\sum_{w=1}^{N}\left(h_{\mathcal{E}^{G}}+h_{\mathcal{S}^{G}}\right)^{(w)}|\Theta\rangle\leq a.
\]
From equation \eq{Theta_Theta_prime_eqn} we see that the state $|\Theta^{\prime}\rangle\in\mathcal{I}_{\triangle}$ satisfies $\langle\Theta^{\prime}|H(G^{\square},N)|\Theta^{\prime}\rangle\leq\frac{a}{3R-1}\leq\frac{a}{R}$. 

\subsubsection*{Case 2: $\lambda_{N}(G,\goc)\geq b$}

In this case 
\begin{align*}
\lambda_{N}(G,\goc)=\min_{|\Theta\rangle\in\mathcal{I}(G,\goc,N)}\langle\Theta|H(G,\goc,N)|\Theta\rangle & = \min_{|\Theta\rangle\in\mathcal{I}(G,\goc,N)}\langle\Theta|\sum_{w=1}^{N}\left(h_{\mathcal{E}^{G}}+h_{\mathcal{S}^{G}}\right)^{(w)}|\Theta\rangle\geq b.
\end{align*}
 Now applying equation \eq{Theta_Theta_prime_eqn} gives
\begin{equation}
\min_{|\Theta^{\prime}\rangle\in\mathcal{I}_{\triangle}}\langle\Theta^{\prime}|H(G^{\square},N)|\Theta^{\prime}\rangle=\min_{|\Theta^{\prime}\rangle\in\mathcal{I}_{\triangle}}\langle\Theta^{\prime}|\sum_{w=1}^{N}\left(h_{\mathcal{E}^{0}}+h_{\mathcal{S}^{0}}\right)^{(w)}|\Theta^{\prime}\rangle\geq\frac{1}{3R+2}b.\label{eq:bound_thetaprime}
\end{equation}
This establishes that the nullspace of $H(G^{\square},N)$ is empty, i.e., $\lambda_{N}^{1}(G^{\square})>0$, so $\lambda_{N}^{1}(G^{\square})=\gamma(H(G^{\square},N))$. We lower bound $\lambda_{N}^{1}(G^{\square})$ using the \npl (\lem{npl}) with
\[
H_{A}=H(G^{\triangle},N)\qquad H_{B}=\sum_{w=1}^{N}\left(h_{\mathcal{E}^{0}}+h_{\mathcal{S}^{0}}\right)^{(w)}\bigg|_{\mathcal{Z}_{N}(G^{\square})}
\]
and where the nullspace of $H_{A}$ is $S=\mathcal{I}_{\triangle}$. We apply \lem{npl} and use the bounds $\gamma(H_{A})>\frac{1}{\left(17R\right)^{7}}$ (from \lem{The-nullspace-of_H_triangle}), $\gamma(H_{B}|_{S})\geq\frac{b}{3R+2}$ (from equation \eq{bound_thetaprime}), and $\left\Vert H_{B}\right\Vert \leq N\left\Vert {h_{\mathcal{E}^{0}}+h_{\mathcal{S}^{0}}}\right\Vert \leq3N\leq3R$ (using equations \eq{h_E_bound} and \eq{h_S_bound} and the fact that $N\leq R$) to find 
\begin{align*}
\lambda_{N}^{1}(G^{\square}) &= \gamma(H(G^{\square},N)) \\
 &\geq \frac{b}{(3R+2)(17R)^7(\frac{1}{(17R)^{7}}+\frac{b}{3R+2}+3R)} \\
 & \geq\frac{b}{R^{9}} \cdot \frac{1}{3+2+b\cdot(17)^{7}+3\cdot\left(3+2\right)(17)^{7}} \\
 & >\frac{b}{(13R)^{9}}
\end{align*}
where in the denominator we used the fact that $b\leq1$.

\section{Analysis of gadgets for two-qubit gates}
\label{app:graph_gadgets}

In this Section we prove \lem{2qub_gate}.

\Twoqub*

\begin{proof}
Recall that the gate graph $G_U$ is specified by its gate diagram, shown in \fig{GVucnot}. The adjacency matrix of the gate graph $G_{U}$ is of the form in equation \eq{adj_gate_graph}. There are 6 diagram elements for each of the move-together gadgets, so there are $32$ diagram elements in total. We will need to refer to those diagram elements labeled $q\in [8]$ in \fig{GVucnot} (i.e., those not contained in the move-together gadgets).

Write 
\[
A(G_{U})=A(G_{U}^{\prime})+h_{\mathcal{E}^{\prime}}
\]
where $G_{U}^{\prime}$ is the gate graph obtained from $G_{U}$ by removing all $24$ edges shown in \fig{GVucnot} ($G_U^{\prime}$ does include the edges within each of the move-together gadgets). Here $h_{\mathcal{E}^{\prime}}$ is given by equation \eq{h_edges} with $\mathcal{E^{\prime}}$ the set of $24$ edges shown in \fig{GVucnot}.

One basis for the $e_1$-energy ground space of $A(G_{U}^{\prime})$ is given by the $64$ states 
\begin{align*}
&|\psi_{z,a}^{q}\rangle,\quad q\in [8],\, z,a\in\{0,1\} \\
&|\chi_{L,a}^{xy}\rangle,\quad x,y,a\in\{0,1\},\, L\in[4].
\end{align*}
However, it is convenient to work with the following slightly different basis for this space:
\begin{align*}
&|\psi_{z,a}^{q}\rangle,\quad q\in[8],\, z,a\in\{0,1\} \\
&\sum_{x\in\{0,1\}}\tilde{U}(a)_{xz}|\chi_{1,a}^{xy}\rangle,\quad y,z,a\in\{0,1\} \\
&|\chi_{L,a}^{xy}\rangle,\quad x,y,a\in\{0,1\},\, L\in \{2,3,4\}.
\end{align*}
Here some of the states are in a superposition corresponding to the output of the single-qubit unitary $\tilde U$.

We are interested in the intersection of the ground space of $A(G_U')$ with the nullspace of $h_{\mathcal{E}^{\prime}}$, so we compute the matrix elements of $h_{\mathcal{E}^{\prime}}$ in the above basis. The resulting $64\times 64$ matrix is block diagonal with sixteen $4\times 4$ blocks. Each block is identical, with entries
\begin{equation}
\begin{pmatrix}
\frac{3}{8} & \frac{1}{8} & \frac{1}{8\sqrt{3}} & \frac{1}{8\sqrt{3}}\\
\frac{1}{8} & \frac{1}{8} & 0 & 0\\
\frac{1}{8\sqrt{3}} & 0 & \frac{1}{24} & 0\\
\frac{1}{8\sqrt{3}} & 0 & 0 &  \frac{1}{24}
\end{pmatrix}.\label{eq:three_by_three_blocks}
\end{equation}
The four states involved in each block are given by (in order from left to right as in the matrix above):
\begin{align*}
 & |\psi_{z,a}^{1}\rangle, |\psi_{z,a}^{5+z}\rangle,\sum_{x\in\{0,1\}}\tilde{U}(a)_{xz}|\chi_{1,a}^{x0}\rangle,\sum_{x\in\{0,1\}}\tilde{U}(a)_{xz}|\chi_{1,a}^{x1}\rangle\\
 & |\psi_{z,a}^{2}\rangle,|\psi_{z,a}^{6-z}\rangle,|\chi_{2,a}^{z0}\rangle,|\chi_{2,a}^{z1}\rangle\\
 & |\psi_{z,a}^{3}\rangle,|\psi_{z,a}^{7}\rangle,|\chi_{3,a}^{0z}\rangle,|\chi_{3,a}^{1z}\rangle\\
 & |\psi_{z,a}^{4}\rangle,|\psi_{z,a}^{8}\rangle,|\chi_{4,a}^{0z}\rangle,|\chi_{4,a}^{1\left(z\oplus1\right)}\rangle.
\end{align*}

The unique zero eigenvector of the matrix \eq{three_by_three_blocks} is 
\[
\frac{1}{\sqrt{8}} \begin{pmatrix}
1\\
-1\\
-\sqrt{3}\\
-\sqrt{3}
\end{pmatrix}.
\]
Constructing this vector within each of the $16$ blocks, we get the states $\{|\rho_{z,a}^{1,U}\rangle,|\rho_{z,a}^{2,U}\rangle,|\rho_{z,a}^{3,U}\rangle,|\rho_{z,a}^{4,U}\rangle\}$. 

Now consider the two-particle sector. Using \lem{FF_characterization} we can write any two-particle frustration-free state as 
\begin{equation}
|\Theta\rangle=\sum_{z,a,x,b\in \{0,1\}}\sum_{I,J\in [4]}B_{\left(z,a,I\right),\left(x,b,J\right)}|\rho_{z,a}^{I,U}\rangle|\rho_{x,b}^{J,U}\rangle
\label{eq:thetastate}
\end{equation}
where 
\begin{equation}
B_{\left(z,a,I\right),\left(x,b,J\right)}=B_{\left(x,b,J\right),\left(z,a,I\right)} \label{eq:sym_B}
\end{equation}
 and 
\begin{equation}
\langle\psi_{x,a}^{q}|\langle\psi_{z,b}^{q}|\Theta\rangle=0\label{eq:ff_condition}
\end{equation}
for all $x,z,a,b\in\{0,1\}$ and $q\in[32]$. To enforce equation \eq{ff_condition} we consider the diagram elements $q\in [8]$ (as labeled in \fig{GVucnot}) separately from the other $24$ diagram elements (those inside the move-together gadgets). 

Using equation \eq{ff_condition} with $q\in \{1,2,3,4,7,8\}$ and $x,z,a,b\in\{0,1\}$ gives
\begin{equation}
B_{\left(x,a,I\right),\left(z,b,I\right)}=0\quad I\in[4], ~ x,z,a,b \in \{0,1\}\label{eq:II_constraint}.
\end{equation}
Using $q=5$, $x=0$, and $z=1$ in equation \eq{ff_condition} gives 
\begin{equation}
  \langle \psi_{0,a}^5|\langle{\psi_{1,b}^5}|\Theta\rangle = \frac{1}{8} B_{(0,a,1),(1,b,2)} = 0,\label{eq:constraint_q5}
\end{equation}
for $a,b \in \{0,1\}$, while $q=6$, $x=0$, and $z=1$ gives
\begin{equation}
  \langle \psi_{0,a}^6|\langle{\psi_{1,b}^6}|\Theta\rangle = \frac{1}{8} B_{(0,a,2),(1,b,1)} = 0.\label{eq:constraint_q6}
\end{equation}
Applying equation \eq{ff_condition} with $q=5$ or $q=6$ and other choices for $x$ and $z$ does not lead to any additional independent constraints on the state $|\Theta\rangle$.

Now consider the constraint \eq{ff_condition} for diagram elements inside the move-together gadgets in \fig{GVucnot}. Let $\Pi_{xy}$ be the projector onto two-particle states where both particles are located at vertices contained within the move-together gadget labeled $xy\in \{00,01,10,11\}$. Using the results of \lem{Wgadget_lemma}, we see that for diagram elements inside the move-together gadgets, \eq{ff_condition} is satisfied if and only if 
\[
\Pi_{xy}|\Theta\rangle\in \mathrm{span}\big\{ \mathrm{Sym}\left(|\chi_{1,a}^{xy}\rangle|\chi_{3,b}^{xy}\rangle+|\chi_{2,a}^{xy}\rangle|\chi_{4,b}^{xy}\rangle\right), \; a,b\in \{0,1\}\big\}.
\]
Since we already know
\[
\Pi_{xy}|\Theta\rangle\in \mathrm{span}\big\{\mathrm{Sym}\left( |\chi_{i,a}^{xy}\rangle|\chi_{j,b}^{xy}\rangle\right), \; i,j\in [4], a,b\in \{0,1\}\big\}
\]
we get
\begin{align}
\langle\chi_{K,a}^{xy}|\langle\chi_{K,b}^{xy}|\Theta\rangle & =0\quad K\in[4]\label{eq:diag_chi_constraint}\\
\langle\chi_{K,a}^{xy}|\langle\chi_{L,b}^{xy}|\Theta\rangle & =0\quad(K,L)\in\{(1,2),(2,3),(3,4),(1,4)\}\label{eq:off_diag_chi_constraint}\\
\big(\langle\chi_{1,a}^{xy}|\langle\chi_{3,b}^{xy}|-\langle\chi_{2,a}^{xy}|\langle\chi_{4,b}^{xy}|\big)|\Theta\rangle & =0\label{eq:sum_chi_constraint}
\end{align}
for all $a,b\in \{0,1\}$. Note that \eq{diag_chi_constraint} is automatically satisfied whenever \eq{II_constraint} holds.  

Applying equation \eq{off_diag_chi_constraint} with $(K,L)=(1,2)$ and $a,b,x,y\in\{0,1\}$, we get 
\begin{equation}
\langle\chi_{1,a}^{xy}|\langle\chi_{2,b}^{xy}|\Theta\rangle=\frac{3}{8}\sum_{z\in \{0,1\}}\tilde U (a)_{xz}B_{\left(z,a,1\right),\left(x,b,2\right)}
= \frac{3}{8} \tilde U (a)_{xx}B_{\left(x,a,1\right),\left(x,b,2\right)}
=0.\label{eq:Uxx_eqn}
\end{equation}
In the second equality we used the fact that $B_{\left(z,a,1\right),\left(x,b,2\right)}$ is zero whenever $z\neq x$ (from equations \eq{sym_B}, \eq{constraint_q5}, and \eq{constraint_q6}). Since $\tilde{U}\in \{1,H,HT\}$ we have $\tilde{U}(a)_{xx} \neq 0$, and it follows that
\begin{equation}
B_{\left(x,a,1\right),\left(x,b,2\right)}=0\label{eq:diag12}
\end{equation}
for all $x,a,b\in \{0,1\}$.

Applying equation \eq{off_diag_chi_constraint} with $(K,L)=(1,4)$ gives
\begin{equation}
\langle\chi_{1,a}^{xy}|\langle\chi_{4,b}^{xy}|\Theta\rangle=\frac{3}{8}\sum_{z\in \{0,1\}} \tilde U (a)_{xz} B_{\left(z,a,1\right),\left(x\oplus y,b,4\right)}=0 \qquad x,y,a,b\in \{0,1\}.
\end{equation}
By taking appropriate combinations of these equations, we have
\begin{equation}
\sum_{x\in \{0,1\}} \tilde U (a)^{\dagger}_{wx} \langle\chi_{1,a}^{x(y\oplus x)}|\langle\chi_{4,b}^{x (y\oplus x)}|\Theta\rangle= B_{\left(w,a,1\right),\left(y,b,4\right)}=0 \qquad w,y,a,b\in \{0,1\}. \label{eq:14constraint}
\end{equation}

Applying equation \eq{off_diag_chi_constraint} with $(K,L)=(2,3)$ and $(K,L)=(3,4)$  gives
\begin{align}
\langle\chi_{2,a}^{xy}|\langle\chi_{3,b}^{xy}|\Theta\rangle& =\frac{3}{8}B_{\left(x,a,2\right),\left(y,b,3\right)}=0 \label{eq:23constraint}\\
\langle\chi_{3,a}^{xy}|\langle\chi_{4,b}^{xy}|\Theta\rangle& =\frac{3}{8}B_{\left(x,a,3\right),\left(x\oplus y,b,4\right)}=0 \label{eq:43constraint}
\end{align}
for all $x,y,a,b\in \{0,1\}$.

Now putting together equations \eq{II_constraint}, \eq{constraint_q5}, \eq{constraint_q6}, \eq{diag12}, \eq{14constraint}, \eq{23constraint}, and \eq{43constraint} (and using the symmetrization \eq{sym_B}), we get
\[
B_{\left(x,a,I\right),\left(z,b,J\right)}=0\quad \text{ for all } x,z,a,b\in \{0,1\},\text{ where }I = J \text{ or } \{I,J\} \in \big\{\{1,2\},\{1,4\},\{2,3\},\{3,4\}\big\},
\]
so
\begin{equation}
|\Theta\rangle=\sum_{z,c,w,d\in \{0,1\} }B_{\left(z,c,1\right),\left(w,d,3\right)}\left(|\rho_{z,c}^{1,U}\rangle|\rho_{w,d}^{3,U}\rangle+|\rho_{w,d}^{3,U}\rangle|\rho_{z,c}^{1,U}\rangle\right)+B_{\left(z,c,2\right),\left(w,d,4\right)}\left(|\rho_{z,c}^{2,U}\rangle|\rho_{w,d}^{4,U}\rangle+|\rho_{w,d}^{4,U}\rangle|\rho_{z,c}^{2,U}\rangle\right).\label{eq:sum4terms}
\end{equation}
Now 
\begin{align*}
\langle\chi_{1,a}^{xy}|\langle\chi_{3,b}^{xy}|\rho_{z,c}^{1,U}\rangle|\rho_{w,d}^{3,U}\rangle & =\frac{3}{8}\delta_{a,c}\delta_{b,d} \tilde{U}(a)_{xz} \delta_{y,w}\\
\langle\chi_{2,a}^{xy}|\langle\chi_{4,b}^{xy}|\rho_{z,c}^{2,U}\rangle|\rho_{w,d}^{4,U}\rangle & =\frac{3}{8}\delta_{a,c}\delta_{b,d} \delta_{x,z} \delta_{y,w\oplus x},
\end{align*}
so enforcing equation \eq{sum_chi_constraint} gives 
\[
\sum_{z\in\{0,1\}}\tilde{U}(a)_{xz}B_{\left(z,a,1\right),\left(y,b,3\right)}=B_{\left(x,a,2\right),\left(x\oplus y,b,4\right)}
\]
for each $x,y,a,b\in\{0,1\}$. In other words 
\[
B_{\left(z,c,2\right),\left(w,d,4\right)} =\sum_{x\in\{0,1\}}\tilde{U}(c)_{zx} B_{\left(x,c,1\right),\left(z\oplus w,d,3\right)} =\sum_{x,y\in\{0,1\}}U(c)_{zw,xy}B_{\left(x,c,1\right),\left(y,d,3\right)}
\]
where we used $U(a)=\mathrm{CNOT}_{12} (\tilde U(a)\otimes 1)$. Plugging this into \eq{sum4terms} gives 
\begin{align*}
|\Theta\rangle&=\sum_{z,c,w,d\in \{0,1\} }\bigg(B_{\left(z,c,1\right),\left(w,d,3\right)}\left(|\rho_{z,c}^{1,U}\rangle|\rho_{w,d}^{3,U}\rangle+|\rho_{w,d}^{3,U}\rangle|\rho_{z,c}^{1,U}\rangle\right)\\
&\quad+\sum_{x,y\in\{0,1\}}U(c)_{zw,xy}B_{\left(x,c,1\right),\left(y,d,3\right)}\left(|\rho_{z,c}^{2,U}\rangle|\rho_{w,d}^{4,U}\rangle+|\rho_{w,d}^{4,U}\rangle|\rho_{z,c}^{2,U}\rangle\right)\bigg)\\
&= \sum_{z,c,w,d\in \{0,1\} }B_{\left(z,c,1\right),\left(w,d,3\right)}\bigg[|\rho_{z,c}^{1,U}\rangle|\rho_{w,d}^{3,U}\rangle+|\rho_{w,d}^{3,U}\rangle|\rho_{z,c}^{1,U}\rangle\\
&\quad+\sum_{x,y\in\{0,1\}}U(c)_{xy,zw}\left(|\rho_{x,c}^{2,U}\rangle|\rho_{y,d}^{4,U}\rangle+|\rho_{y,d}^{4,U}\rangle|\rho_{x,c}^{2,U}\rangle\right)\bigg]\\
&= \sum_{z,c,w,d\in \{0,1\} }2 B_{\left(z,c,1\right),\left(w,d,3\right)}\Sym(|T_{z,c,w,d}\rangle).
\end{align*}
This is the general solution to equations \eq{thetastate}--\eq{ff_condition}, so the space of two-particle frustration-free states for $G_U$ is spanned by the 16 orthonormal states \eq{twopartstate_1}.

Finally, we show that there are no three-particle frustration-free states. By \lem{increase_part_number}, this implies that there are no frustration-free states for more than two particles. Suppose (to reach a contradiction) that $|\Gamma\rangle$ is a normalized three-particle frustration-free state. Write
\[
|\Gamma\rangle=\sum E_{(x,a,q),(y,b,r),(z,c,s)}|\rho_{x,a}^{q}\rangle|\rho_{y,b}^{r}\rangle|\rho_{z,c}^{s}\rangle
\]
and note that each reduced density matrix of $|\Gamma\rangle$ on two of the three subsystems must have all of its support on two-particle frustration-free states (see the remark following \lem{FF_characterization}). Using this fact for each two-particle subsystem we get
\begin{align*}
(q,r)\notin\{(1,3),(3,1),(2,4),(4,2)\}\quad\Longrightarrow\quad E_{(x,a,q),(y,b,r),(z,c,s)} & =0\\
(q,s)\notin\{(1,3),(3,1),(2,4),(4,2)\}\quad\Longrightarrow\quad E_{(x,a,q),(y,b,r),(z,c,s)} & =0\\
(r,s)\notin\{(1,3),(3,1),(2,4),(4,2)\}\quad\Longrightarrow\quad E_{(x,a,q),(y,b,r),(z,c,s)} & =0
\end{align*}
which together imply that $|\Gamma\rangle=0$ (a contradiction). Hence no three-particle frustration-free state exists.
\end{proof}

\section{Technical supporting material}
\label{app:tech_support}
\subsection{Basic properties of the Bose-Hubbard model}\label{sec:basic_properties}

In this short section we prove \lem{increase_part_number} and \lem{BH_disconnected_graphs}.
\incrementN*

\begin{proof}
Let $\widehat{n}_{i}^{N}$ be the number operator \eq{n_hat} defined in the $N$-particle space and let $\widehat{n}_{i}^{N+1}$ be the corresponding operator in the $\left(N+1\right)$-particle space. Note that
\[
\widehat{n}_{i}^{N+1} = \widehat{n}_{i}^{N}\otimes\id+|i\rangle\langle i|^{(N+1)} \geq \widehat{n}_{i}^{N}\otimes\id.
\]
Using this and the fact that $A(G)\geq\mu(G)$, we get 
\[
H_{G}^{N+1}-\left(N+1\right)\mu(G)\geq\left(H_{G}^{N}-N\mu(G)\right)\otimes\id.
\]
Hence 
\begin{align*}
\lambda_{N+1}^{1}(G) & =\min_{|\psi\rangle\in\mathcal{Z}_{N+1}(G)\colon \langle\psi|\psi\rangle=1}\langle\psi|H_{G}^{N+1}-\left(N+1\right)\mu(G)|\psi\rangle\\
 & \geq\min_{|\psi\rangle\in\mathcal{Z}_{N}(G)\otimes\C^{|V|}\colon \langle\psi|\psi\rangle=1}\langle\psi|\left(H_{G}^{N}-N\mu(G)\right)\otimes\id|\psi\rangle\\
 & =\lambda_{N}^{1}(G)
\end{align*}
(using the fact that $\mathcal{Z}_{N+1}(G)\subset\mathcal{Z}_{N}(G)\otimes\C^{|V|}$). 
\end{proof}

\disc*

\begin{proof}
Recall that the action of $H_{G}-N\mu(G)$ on the Hilbert space \eq{occupation_num_states} is the same as the action of $H(G,N)$ on the Hilbert space $\mathcal{Z}_{N}(G)$. States in these Hilbert spaces are identified via the mapping described in equation \eq{occup_num_symmetrized}. It is convenient to prove the Lemma by working with the second-quantized Hamiltonian $H_{G}$. We then translate our results into the first-quantized picture to obtain the stated claims.

For a graph with $k$ components, equation \eq{Bose-Hubbard_Ham} gives 
\begin{equation}
H_{G}=\sum_{i=1}^{k}H_{G_{i}}\label{eq:H_G_disconnected}
\end{equation}
where $[H_{G_{i}},H_{G_{j}}]=0.$ Label each vertex of $G$ by $(a,b)$ where $b\in[k]$ and $a\in[|V_{b}|]$, where $V_{b}$ is the vertex set of the component $G_{b}$. An occupation number basis state \eq{occupation_num_states} can be written 
\begin{equation}
|l_{1,1},\ldots,l_{|V_{1}|,1}\rangle|l_{1,2},\ldots,l_{|V_{2}|,2}\rangle\ldots|l_{1,k},\ldots,l_{|V_{k}|,k}\rangle.\label{eq:prod_basis_occ_num}
\end{equation}
The Hamiltonian $H_{G}-N\mu(G)$ conserves the number of particles $N_{b}$ in each component $b$. Within the sector corresponding to a given set $N_{1},\ldots,N_{k}$ with $\sum_{i \in [k]} N_i=N$, we have
\begin{align*}
& \left(H_{G}-N\mu(G)\right)|l_{1,1},\ldots,l_{|V_{1}|,1}\rangle|l_{1,2},\ldots,l_{|V_{2}|,2}\rangle\ldots|l_{1,k},\ldots,l_{|V_{k}|,k}\rangle\\
&\quad =\big(H_{G_1}-N_1\mu(G_1)|l_{1,1},\ldots,l_{|V_{1}|,1}\rangle\big)|l_{1,2},\ldots,l_{|V_{2}|,2}\rangle\ldots|l_{1,k},\ldots,l_{|V_{k}|,k}\rangle\\
&\qquad+|l_{1,1},\ldots,l_{|V_{1}|,1}\rangle\big(H_{G_2}-N_2\mu(G_2)|l_{1,2},\ldots,l_{|V_{2}|,2}\rangle\big)\ldots|l_{1,k},\ldots,l_{|V_{k}|,k}\rangle + \cdots\\
&\qquad+|l_{1,1},\ldots,l_{|V_{1}|,1}\rangle|l_{1,2},\ldots,l_{|V_{2}|,2}\rangle\ldots\big(H_{G_k}-N_k\mu(G_k)|l_{1,k},\ldots,l_{|V_{k}|,k}\rangle\big),
\end{align*}
where we used the fact that $\mu(G_i)=\mu(G)$ for $i\in[k]$.  From this equation we see that the eigenstates of $H_{G}$ can be obtained as product states with $k$ factors in the basis \eq{prod_basis_occ_num}. In each such product state, the $i$th factor is an eigenstate of $H_{G_{i}}-N_{i}\mu(G_{i})=H_{G_{i}}-N_{i}\mu(G)$ in the $N_{i}$-particle sector, with eigenvalue $\lambda_{N_{i}}^{j_{i}}(G_{i})$. Rewriting this result in the ``first-quantized'' language, we obtain the Lemma. 
\end{proof}
\subsection{\texorpdfstring{Proof of the \npl}{Proof of the Nullspace Projection Lemma}}
\label{sec:Proof-of-Lemma_MLM}

The following proof fills in the details of the argument given in reference \cite{MLM99}.

\NPL*

\begin{proof}
Let $|\psi\rangle$ be a normalized state satisfying 
\[
\langle\psi|H_{A}+H_{B}|\psi\rangle=\gamma(H_{A}+H_{B}).
\]
Let $\Pi_{S}$ be the projector onto the nullspace of $H_{A}$. First suppose that $\Pi_{S}|\psi\rangle=0$, in which case 
\[
\langle\psi|H_{A}+H_{B}|\psi\rangle\geq\langle\psi|H_{A}|\psi\rangle\geq\gamma(H_{A})
\]
and the result follows. On the other hand, if $\Pi_{S}|\psi\rangle\neq0$ then we can write 
\[
|\psi\rangle=\alpha|a\rangle+\beta|a^{\perp}\rangle
\]
with $|\alpha|^{2}+|\beta|^{2}=1$, $\alpha\neq0$, and two normalized states $|a\rangle$ and $|a^{\perp}\rangle$ such that $|a\rangle\in S$ and $|a^{\perp}\rangle\in S^{\perp}$. (If $\beta=0$ then we may choose $|a^{\perp}\rangle$ to be an arbitrary state in $S^{\perp}$ but in the following we fix one specific choice for concreteness.) Note that any state $|\phi\rangle$ in the nullspace of $H_{A}+H_{B}$ satisfies $H_{A}|\phi\rangle=0$ and hence $\langle\phi|a^{\perp}\rangle=0$. Since $\langle\phi|\psi\rangle=0$ and $\alpha\neq0$ we also see that $\langle\phi|a\rangle=0$. Hence any state
\[
|f(q,r)\rangle=q|a\rangle+r|a^{\perp}\rangle
\]
is orthogonal to the nullspace of $H_{A}+H_{B}$, and
\[
\gamma(H_{A}+H_{B})=\min_{|q|^{2}+|r|^{2}=1}\langle f(q,r)|H_{A}+H_{B}|f(q,r)\rangle.
\]
The operator $H_{A}+H_{B}$ acts on the two-dimensional space spanned by $|a\rangle$ and $|a^{\perp}\rangle$ as the matrix 
\[
\begin{pmatrix}
w & v^{*}\\
v & y+z
\end{pmatrix}
\]
where $w=\langle a|H_{B}|a\rangle$, $v=\langle a^{\perp}|H_{B}|a\rangle$, $y=\langle a^{\perp}|H_{A}|a^{\perp}\rangle$, and $z=\langle a^{\perp}|H_{B}|a^{\perp}\rangle$. The smallest eigenvalue of this matrix is 
\[
\gamma(H_{A}+H_{B})=\frac{w+y+z-\sqrt{\left(w+y+z\right)^{2}+4\left(|v|^{2}-wy-wz\right)}}{2}.
\]
Since $H_{B}$ is positive semidefinite, its principal minors are nonnegative, and in particular $wz-|v|^{2}\geq0$. Using this inequality in the above gives
\begin{align}
  \gamma(H_{A}+H_{B}) 
  &\geq \frac{w+y+z}{2} \left(1-\sqrt{1-\frac{4wy}{(w+y+z)^{2}}}\right) 
   \geq \frac{wy}{w+y+z}
   = \frac{1}{\frac{1}{w}+\frac{1}{y}+\frac{z}{wy}}
\label{eq:HAHBbound}
\end{align}
where in the second step we used the fact that $\sqrt{1-x}\leq1-\frac{x}{2}$ for $x\in[0,1]$. Since $|a\rangle$ is orthogonal to the nullspace of $H_{A}+H_{B}$, we have
\[
  w\geq\gamma(H_{B}|_{S})\geq c.
\]
We also have $y\geq\gamma(H_{A})\geq d$ and $z\leq\left\Vert H_{B}\right\Vert$.  Since the right-hand side of \eq{HAHBbound} increases monotonically with $w$ and $y$ and decreases monotonically with $z$, we find 
\[
  \gamma(H_{A}+H_{B})
  \geq \frac{cd}{c+d+\left\Vert H_{B}\right\Vert}
\]
as claimed.
\end{proof}

\subsection{\texorpdfstring{Proof of \lem{Pi_0_restriction}}{Proof of Lemma~\ref{lem:Pi_0_restriction}}}
\label{sec:Proof-of-Lemma Pi0_restriction}

Here we prove \lem{Pi_0_restriction}. We begin with the following Lemma relating the occupancy constraints graph $\gxoc$ to the illegal configurations. The proof of this Lemma uses the definitions of a configuration (\defn{configuration}), the sets of legal and illegal configurations (from \sec{Legal-configurations-and}), and the occupancy constraints graph $\gxoc$ (from \sec{The-occupancy-constraints}).

\begin{lemma}
\label{lem:legalconfig}
For any illegal configuration 
\begin{equation}
(J_{1},\ldots,J_{Y},L_{1},\ldots,L_{n-2Y})\label{eq:config1}
\end{equation}
there exist diagram elements $\{Q_{1},Q_{2}\}\in E(\gxoc)$ satisfying at least one of the following conditions:
\begin{enumerate}[label=(\roman*)]
\item $Q_{1}=(1,J_{k},0)$ and $Q_{2}=(1,J_{l},0)$ for some $k,l\in[Y]$.
\item $Y\in\{0,1\}$, $Q_{1}=L_{s}$, and $Q_{2}=L_{t}$ for some $s,t\in[n-2Y]$.
\item $Y=1$ and $Q_{1}=(i,J_{1},d)$ and $Q_{2}=L_{t}$ for some $i\in\{1,s(J_{1})\}$, $t\in[n-2]$ and $d\in\{0,1\}$.
\end{enumerate}
\end{lemma}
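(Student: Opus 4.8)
The statement is a purely combinatorial claim about when a configuration fails to be legal, so the plan is to set up a clean case analysis on the structure of the configuration \eqref{eq:config1} and, in each case where legality fails, exhibit a pair of diagram elements that the definition of $E(\gxoc)$ (from \sec{The-occupancy-constraints}) forces into the edge set. First I would recall from \sec{Legal-configurations-and} the precise characterization of which configurations are legal: every legal configuration has $Y\in\{0,1\}$, and the legal ones are exactly those of the form \eqref{eq:config_c01} (when $Y=0$) or \eqref{eq:config_2} (when $Y=1$). So an illegal configuration satisfies at least one of: (A) $Y\geq 2$; (B) $Y\leq 1$ but the configuration does not have the prescribed form. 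I would handle (A) first since it is immediate.

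For case (A), if $Y\geq 2$ then $J_1,J_2$ are both two-qubit gates containing two particles each; since every two-qubit gate in $\mathcal{C}_X$ acts on qubit $1$, the particles in gadgets $J_1$ and $J_2$ both occupy row $1$ (the diagram elements $(1,J_k,0)$ are in row $1$). By the ``no two particles in the same row'' constraints \eqref{eq:occ_constraints_type1}, $\{(1,J_1,0),(1,J_2,0)\}\in E(\gxoc)$, giving condition (i). For case (B), I would split according to $Y$. When $Y=0$, an illegal configuration $(L_1,\ldots,L_n)$ must fail to match \eqref{eq:config_c01}. Here I would further subdivide: either two of the $L_k$ lie in the same row $i$ (then condition (ii) follows from \eqref{eq:occ_constraints_type1}), or all particles are in distinct rows but the rows-$\neq 1$ particles are not at the positions dictated by $F$ relative to the row-$1$ particle, or the row-$1$ particle's ``side'' bit $d_1$ does not agree with $d_{s(j)}$. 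In the latter subcases I would invoke the ``synchronization with the particle in the first row'' constraints from item 2 of \sec{The-occupancy-constraints}: letting $(1,j,c)$ be the row-$1$ diagram element, if some $L_t=(i,k,d)$ with $i\neq 1$ is not in $\{F(i,j,0),F(i,j,1)\}$, then $\{(1,j,c),(i,k,d)\}\in E(\gxoc)$, which is again condition (ii) (both diagram elements are among the $L$'s). When $Y=1$, the configuration is $(J_1,L_1,\ldots,L_{n-2})$ with the $L$'s in rows $[n]\setminus\{1,s(J_1)\}$; illegality means some $L_t=(i,k,d)$ fails $(i,k,d)\in\{F(i,J_1,0),F(i,J_1,1)\}$, or two $L$'s share a row. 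In the first situation, the synchronization constraint between row $1$ (where $(1,J_1,\cdot)$ lives, since $J_1$ contains the row-$1$ particle) and row $i$ gives $\{(1,J_1,c),L_t\}\in E(\gxoc)$ for the appropriate $c$ — but since the row-$1$ diagram element here is part of a two-particle gadget, I should state it via the diagram element $(i',J_1,d)$ with $i'\in\{1,s(J_1)\}$, which is exactly condition (iii); in the second situation condition (ii) applies.

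I expect the main obstacle to be bookkeeping precision rather than any conceptual difficulty: the function $F$ (equations \eqref{eq:F_bit0}--\eqref{eq:F_bit1}) and the exact wording of the constraints in item 2 of \sec{The-occupancy-constraints} have several sub-cases (e.g.\ whether the relevant gate exists to the left/right, the convention $d_1=d_{s(j)}=2$ for $Y=1$ configurations), and I must make sure that in every illegality scenario the pair I produce genuinely lies in $E(\gxoc)$ \emph{and} matches one of (i), (ii), (iii) verbatim, including the constraint $i\in\{1,s(J_1)\}$ in (iii) and the requirement $Y\in\{0,1\}$ in (ii). A secondary subtlety is ensuring exhaustiveness of the case split — in particular that an illegal $Y\le 1$ configuration that is not ruled out by a same-row collision must violate an $F$-synchronization condition, which I would argue by directly contrasting the configuration against the explicit forms \eqref{eq:config_c01} and \eqref{eq:config_2}. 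Once the case structure is fixed, each case is a one- or two-line verification citing \eqref{eq:occ_constraints_type1} or the synchronization constraints.
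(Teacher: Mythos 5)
Your plan is correct and is essentially the paper's proof run in the direct rather than contrapositive direction; the case structure ($Y\ge 2$, then $Y\in\{0,1\}$ split by same-row collision versus $F$-misplacement) and the appeals to items~1 and~2 of the occupancy-constraints construction are the same as in the paper's argument, just with the implications reversed. The one detail that needs care when you flesh it out is the ``$d_1\neq d_{s(j)}$'' subcase for $Y=0$: the synchronization constraint you actually quote fires only when $L_t\notin\{F(i,j,0),F(i,j,1)\}$, which does not detect a disagreeing bit in row $s(j)$, since $F(s(j),j,d_{s(j)})$ is still one of those two elements. For row $s(j)$ you must invoke the row-$s(j)$-specific constraint from item~2, namely $\{(1,j,c),(s(j),k,d)\}\in E(\gxoc)$ whenever $k\neq j$ and $(s(j),k,d)\neq F(s(j),j,c)$ for the particular bit $c$ fixed by $L_1=(1,j,c)$; that stronger constraint is what produces the edge when the bits disagree and is what the paper's deduction implicitly uses.
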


\begin{proof}
We prove the contrapositive: we suppose the configuration \eq{config1} violates each of the conditions (i), (ii), and (iii) for all $\{Q_{1},Q_{2}\}\in E(\gxoc)$ and show it is a legal configuration.

From part (1) of the definition of the occupancy constraints graph in \sec{The-occupancy-constraints}, we see that
\[
  \{(1,j,0),(1,k,0)\}\in E(\gxoc)
\]
for all $j,k\in[M]$ with $j\neq k$. If the configuration \eq{config1} has $Y\geq 2$, then we may choose $Q_1=(1,J_1,0)$ and $Q_2=(1,J_2,0)$ so that $\{Q_1,Q_2\}\in E(\gxoc)$ and condition (i) is satisfied (note $J_1\neq J_2$ follows from the definition of a configuration). Since by assumption, \eq{config1} violates condition (i) for all $\{Q_{1},Q_{2}\}\in E(\gxoc)$, this implies that $Y\in \{0,1\}$. We consider the cases $Y=0$ and $Y=1$ separately.

First suppose $Y=0$, so \eq{config1} is equal to
\[
(L_{1},\ldots,L_{n}).
\]
Since (ii) is violated, $\{L_s,L_t\}\notin E(\gxoc)$ for all $s,t\in [n]$. Using part (1) of the definition of $\gxoc$ and the definition of a configuration, this implies that each diagram element is in a different row, i.e., $L_{i}=(i,j_{i},c_{i})$ for each $i\in[n]$. From part (2) of the definition of $\gxoc$, we see in particular that $\{L_{1},L_{t}\}\notin E(\gxoc)$
for each $t\in\{2,\ldots,n\}$ implies
\[
L_{s(j_{1})}=(s(j_{1}),j_{1},c_{1})\quad\text{and}\quad L_{i}=F(i,j_{1},d_{i})
\]
for $i\in[n]\setminus\{1,s(j_{1})\}$ and bits $d_2,\ldots,d_n\in\{0,1\}$, i.e., the configuration is legal.

Now suppose $Y=1$, so \eq{config1} is
\[
  (J_1,L_{1},\ldots,L_{n-2}).
\]
Since (ii) is violated, each $L_{i}$ for $i\in[n-2]$ is from a different row. Since (iii) is violated, none of these diagram elements are in rows $1$ or $s(J_1)$. Now applying part (2) of the definition of $\gxoc$, we see that the configuration is legal:
\begin{align*}
&(J_1,L_{1},\ldots,L_{n-2}) \\
&\quad=(J_1,F(2,J_1,d_{2}),\ldots,F(s(J_1)-1,J_1,d_{s(J_1)-1}),F(s(J_1)+1,J_1,d_{s(J_1)+1}),\ldots,F(n,J_1,d_{n}))
\end{align*}
where $d_{i}\in\{0,1\}$ for $i\in[n]\setminus\{1,s(J_1)\}$.
\end{proof}

\restrictionlemma*

\begin{proof}
We begin with equation \eq{subspace_eqn0}. Recall from \eq{occup_space_defn} that
\[
\mathcal{I}(G_{1},\gxoc,n)=\spn\{ \Sym(|\psi_{z_{1},a_{1}}^{q_{1}}\rangle|\psi_{z_{2},a_{2}}^{q_{2}}\rangle\ldots|\psi_{z_{n},a_{n}}^{q_{n}}\rangle)\colon z_{i},a_{i}\in\{0,1\},\; q_{i}\neq q_{j},\;\text{and }\{q_{i},q_{j}\}\notin E(\gxoc)\} 
\]
which can alternatively be characterized as the subspace of 
\[
\mathcal{I}(G_{1},n)=\spn\{ \Sym(|\psi_{z_{1},a_{1}}^{q_{1}}\rangle|\psi_{z_{2},a_{2}}^{q_{2}}\rangle\ldots|\psi_{z_{n},a_{n}}^{q_{n}}\rangle)\colon z_{i},a_{i}\in\{0,1\},\; q_{i}\neq q_{j}\} 
\]
consisting of zero eigenvectors of each of the operators 
\begin{equation}
|\psi_{s,t}^{q}\rangle\langle\psi_{s,t}^{q}|\otimes|\psi_{u,v}^{r}\rangle\langle\psi_{u,v}^{r}|\otimes\id^{\otimes n-2},\qquad\{q,r\}\in E(\gxoc),\quad s,t,u,v\in\{0,1\}.\label{eq:annihilate_ops}
\end{equation}
Now using equation \eq{legal_states} and the fact that 
\begin{equation}
  \langle\psi_{x,b}^{\tilde{L}}|\rho_{z,a}^{L}\rangle
  =\frac{1}{\sqrt{8}}\delta_{\tilde{L},L}\delta_{x,z}\delta_{a,b}
  =\frac{1}{\sqrt{8}}\langle\rho_{x,b}^{\tilde{L}}|\rho_{z,a}^{L}\rangle
\label{eq:psi_rho_eqn}
\end{equation}
for all $x,z,a,b\in\{0,1\}$ and $\tilde{L},L\in\mathcal{L}$ (from equations \eq{rho1_1}, \eq{rho2_1}, and \eq{rho_bnd}), we get
\begin{align*}
&\langle j,\vec{d},\vec{z},\vec{a}|\left(|\psi_{s,t}^{q}\rangle\langle\psi_{s,t}^{q}|\otimes|\psi_{u,v}^{r}\rangle\langle\psi_{u,v}^{r}|\otimes\id^{\otimes n-2}\right)|j,\vec{d},\vec{z},\vec{a}\rangle\\
&\quad=\frac{1}{64}\langle j,\vec{d},\vec{z},\vec{a}|\left(|\rho_{s,t}^{q}\rangle\langle\rho_{s,t}^{q}|\otimes|\rho_{u,v}^{r}\rangle\langle\rho_{u,v}^{r}|\otimes\id^{\otimes n-2}\right)|j,\vec{d},\vec{z},\vec{a}\rangle\\
&\quad=0\quad\text{if $\{q,r\}\in E(\gxoc)$}.
\end{align*}
In the last line we used equations \eq{T_state} and \eq{legal_states} and the definition of the occupancy constraints graph $\gxoc$ from \sec{The-occupancy-constraints}. Hence each legal state $|j,\vec{d},\vec{z},\vec{a}\rangle\in\mathcal{I}(G_{1},n)$ is a zero eigenvector of each of the operators \eq{annihilate_ops}, so $|j,\vec{d},\vec{z},\vec{a}\rangle\in\mathcal{I}(G_{1},\gxoc,n)$. This gives equation \eq{subspace_eqn0}.

Now we prove equation \eq{subspace_eqn1}. For each illegal configuration we associate two diagram elements $Q_{1}$ and $Q_{2}$ with $(Q_{1},Q_{2})\in E(\gxoc$) as in \lem{legalconfig} (if there is more than one such pair we fix a specific choice). Likewise for each basis vector $|\phi\rangle\in\mathcal{B}_{\illegal}$ we associate the two diagram elements $Q_{1}$ and $Q_{2}$ corresponding to its (illegal) configuration. Let $P_{\phi}$ be the projector onto the space
\[
\spn\{ \Sym(|\psi_{z_{1},a_{1}}^{Q_{1}}\rangle|\psi_{z_{2},a_{2}}^{Q_{2}}\rangle|\psi_{z_{3},a_{3}}^{q_{3}}\rangle\ldots|\psi_{z_{n},a_{n}}^{q_{n}}\rangle)\colon z_{i},a_{i}\in\{0,1\},\, q_{i}\notin\{Q_{1},Q_{2}\}\} 
\]
where (exactly) one particle is located at $Q_{1}$ and (exactly) one particle is located at $Q_{2}$. We show that 
\begin{equation}
\langle\phi|P_{\phi}|\phi\rangle\geq\frac{1}{256}.\label{eq:q1_q2_eqn}
\end{equation}
Note that $\Pi_{0}P_{\phi}=0$ since $\Pi_0$ projects onto a subspace for which no two (or more) particles are simultaneously located at $Q_1$ and $Q_2$. Therefore 
\[
\langle\phi|\Pi_{0}|\phi\rangle+\langle\phi|P_{\phi}|\phi\rangle\leq1,
\]
and applying \eq{q1_q2_eqn} gives \eq{subspace_eqn0}. Equation \eq{q1_q2_eqn} can be shown by considering cases (i), (ii), and (iii) from \lem{legalconfig}. It is convenient to define 
\begin{align*}
\Pi_{Q_{1}} &= \sum_{x,y\in\{0,1\}}|\psi_{x,y}^{Q_{1}}\rangle\langle\psi_{x,y}^{Q_{1}}| &
\Pi_{Q_{2}} &= \sum_{x,y\in\{0,1\}}|\psi_{x,y}^{Q_{2}}\rangle\langle\psi_{x,y}^{Q_{2}}|.
\end{align*}

In case (i) we have $Q_{1}=(1,J_{k},0)$ and $Q_{2}=(1,J_{l},0)$ for some $k,l \in [Y]$.  Here we consider the case $k=1,l=2$ without loss of generality. Then
\begin{align}
P_{\phi}|\phi\rangle & =\Sym\big(P_{\phi}(|T_{z_{1},a_{1},z_{2},a_{2}}^{J_{1}}\rangle|T_{z_{3},a_{3},z_{4},a_{4}}^{J_{2}}\rangle\ldots|T_{z_{2Y-1},a_{2Y-1},z_{2Y},a_{2Y}}^{J_{Y}}\rangle|\rho_{z_{2Y+1},a_{2Y+1}}^{L_{1}}\rangle\ldots|\rho_{z_{n},a_{n}}^{L_{n-2Y}}\rangle)\big)\nonumber \\
 & =\Sym\big((\Pi_{Q_{1}}\otimes\id)|T_{z_{1},a_{1},z_{2},a_{2}}^{J_{1}}\rangle(\Pi_{Q_{2}}\otimes\id)|T_{z_{3},a_{3},z_{4},a_{4}}^{J_{2}}\rangle\ldots|T_{z_{2Y-1},a_{2Y-1},z_{2Y},a_{2Y}}^{J_{Y}}\rangle|\rho_{z_{2Y+1},a_{2Y+1}}^{L_{1}}\rangle\ldots|\rho_{z_{n},a_{n}}^{L_{n-2Y}}\rangle\big)\label{eq:P_phi_phi}
\end{align}
for some configuration and some $\vec z,\vec a$, where in the first line we used the fact that $P_{\phi}$ commutes with any permutation of the $n$ registers and in the second line we used the fact that
\begin{align*}
\Pi_{Q_{1}}^{(w)}|T_{z_{1},a_{1},z_{2},a_{2}}^{J_{1}}\rangle|T_{z_{3},a_{3},z_{4},a_{4}}^{J_{2}}\rangle\ldots|T_{z_{2Y-1},a_{2Y-1},z_{2Y},a_{2Y}}^{J_{Y}}\rangle|\rho_{z_{2Y+1},a_{2Y+1}}^{L_{1}}\rangle\ldots|\rho_{z_{n},a_{n}}^{L_{n-2Y}}\rangle & =0~\text{unless }w=1\\
\Pi_{Q_{2}}^{(w)}|T_{z_{1},a_{1},z_{2},a_{2}}^{J_{1}}\rangle|T_{z_{3},a_{3},z_{4},a_{4}}^{J_{2}}\rangle\ldots|T_{z_{2Y-1},a_{2Y-1},z_{2Y},a_{2Y}}^{J_{Y}}\rangle|\rho_{z_{2Y+1},a_{2Y+1}}^{L_{1}}\rangle\ldots|\rho_{z_{n},a_{n}}^{L_{n-2Y}}\rangle & =0~\text{unless }w=3.
\end{align*}
We find 
\begin{align}
\langle\phi|P_{\phi}|\phi\rangle & =\langle T_{z_{1},a_{1},z_{2},a_{2}}^{J_{1}}|\left(\Pi_{Q_{1}}\otimes\id\right)|T_{z_{1},a_{1},z_{2},a_{2}}^{J_{1}}\rangle\cdot\langle T_{z_{3},a_{3},z_{4},a_{4}}^{J_{2}}|\left(\Pi_{Q_{2}}\otimes\id\right)|T_{z_{3},a_{3},z_{4},a_{4}}^{J_{2}}\rangle\nonumber \\
 & =\left(\frac{1}{2}\langle\rho_{z_{1},a_{1}}^{(1,J_{1},0)}|\langle\rho_{z_{2},a_{2}}^{(s(J_{1}),J_{1},0)}|\Pi_{Q_{1}}\otimes\id|\rho_{z_{1},a_{1}}^{(1,J_{1},0)}\rangle|\rho_{z_{2},a_{2}}^{(s(J_{1}),J_{1},0)}\rangle\right)^{2} \nonumber \\
 & =\left(\frac{1}{16}\right)^{2}=\frac{1}{256}.\label{eq:lowerboundcase_a}
\end{align}
where in the second line we used the fact that both terms in the product are equal and in the third line we used equation \eq{psi_rho_eqn}. 

In case (ii) we have $Y\in\{0,1\}$ and $Q_{1}=L_{s}$, $Q_{2}=L_{t}$ for some $s,t\in[n-2Y]$. By a similar argument as in \eq{P_phi_phi},
\begin{align}
\langle\phi|P_{\phi}|\phi\rangle & =\langle\rho_{z_{s}a_{s}}^{L_{s}}|\Pi_{Q_{1}}|\rho_{z_{s},a_{s}}^{L_{s}}\rangle\cdot\langle\rho_{z_{t}a_{t}}^{L_{t}}|\Pi_{Q_{2}}|\rho_{z_{t},a_{t}}^{L_{t}}\rangle
=\frac{1}{8}\cdot\frac{1}{8}
=\frac{1}{64}.\label{eq:lowerboundcase_b}
\end{align}

In case (iii) we have $Y=1$, $Q_{1}=(i,J_{1},d)$, and $Q_{2}=L_{t}$ for some $i\in\{1,s(J_{1})\}$, $t\in[n-2]$, and $d\in\{0,1\}$. If $i=1$ then, again by a similar reasoning as in \eq{P_phi_phi}, 
\begin{align}
\langle\phi|P_{\phi}|\phi\rangle & =\langle T_{z_{1},a_{1},z_{2},a_{2}}^{J_{1}}|\Pi_{Q_{1}}\otimes\id|T_{z_{1},a_{1},z_{2},a_{2}}^{J_{1}}\rangle\cdot\langle\rho_{z_{2Y+t}a_{2Y+t}}^{L_{t}}|\Pi_{Q_{2}}|\rho_{z_{2Y+t}a_{2Y+t}}^{L_{t}}\rangle\label{eq:lowerboundcase_c}\\
 & =\frac{1}{16}\cdot\frac{1}{8}=\frac{1}{128}.\label{eq:lowerboundcase_c_2}
\end{align}
If $i=s(J_{1})$ then $\Pi_{Q_{1}}\otimes\id$ should be replaced with $\id\otimes\Pi_{Q_{1}}$ in \eq{lowerboundcase_c} but the lower bound in \eq{lowerboundcase_c_2} is the same.

From equations \eq{lowerboundcase_a}, \eq{lowerboundcase_b}, and \eq{lowerboundcase_c_2}, we see that equation \eq{q1_q2_eqn} holds in cases (i), (ii), and (iii), respectively, thereby establishing \eq{subspace_eqn1}.

Finally, we prove equation \eq{subspace_eqn2}, showing that $\Pi_{0}|_{\spn(\mathcal{B}_{n})}$ is diagonal in the basis $\mathcal{B}_n$. Let 
\begin{align}
|\phi\rangle & =\Sym(|T_{z_{1},a_{1},z_{2},a_{2}}^{J_{1}}\rangle\ldots|T_{z_{2Y-1},a_{2Y}-1,z_{2Y},a_{2Y}}^{J_{Y}}\rangle|\rho_{z_{2Y+1},a_{2Y+1}}^{L_{1}}\rangle\ldots|\rho_{z_{n},a_{n}}^{L_{n-2Y}}\rangle)\label{eq:phi_illegal}\\
|\psi\rangle & =\Sym(|T_{x_{1},b_{1},x_{2},b_{2}}^{\tilde{J}_{1}}\rangle\ldots|T_{x_{2K-1},b_{2K}-1,x_{2K},b_{2K}}^{\tilde{J}_{K}}\rangle|\rho_{x_{2K+1},b_{2K+1}}^{\tilde{L}_{1}}\rangle\ldots|\rho_{x_{n},b_{n}}^{\tilde{L}_{n-2K}}\rangle)
\end{align}
be distinct vectors from $\mathcal{B}_{n}$ (note it is possible that $K=0$ or $Y=0$ or both). Expand each of the $|T\rangle$ states using equation \eq{T_state},
which we can also write as
\[
|T_{z,a,y,b}^{J}\rangle=\frac{1}{\sqrt{2}}\sum_{c=0}^{1}U_{J}(a)^{c}|\rho_{z,a}^{(1,J,c)}\rangle|\rho_{y,b}^{(s(J),J,c)}\rangle
\]
where we use the shorthand
\begin{equation}
U_{J}(a)|\rho_{z,a}^{(1,J,1)}\rangle|\rho_{y,b}^{(s(J),J,1)}\rangle=\sum_{x_{1},x_{2}\in\{0,1\}}U_{J}(a)_{x_{1}x_{2},zy}|\rho_{x_{1},a}^{(1,J,1)}\rangle|\rho_{x_{2},b}^{(s(J),J,1)}\rangle.\label{eq:Vtothec}
\end{equation}
For the state $|\phi\rangle$, this gives the expansion
\[
|\phi\rangle=\left(\frac{1}{\sqrt{2}}\right)^{Y}\!\!\sum_{c_{1},\ldots,c_{Y}\in\{0,1\}}\Sym(|O_{\vec{z},\vec{a}}^{(J_{1},\ldots,J_{Y},L_{1},\ldots,L_{n-2Y}),(c_{1},\ldots,c_{Y})}\rangle)
\]
where 
\begin{equation}
|O_{\vec{z},\vec{a}}^{(J_{1},\ldots,J_{Y},L_{1},\ldots,L_{n-2Y}),(c_{1},\ldots,c_{Y})}\rangle=\left(\bigotimes_{i=1}^{Y}U_{J_{i}}(a_{2i-1})^{c_{i}}|\rho_{z_{2i-1},a_{2i-1}}^{(1,J_{i},c_{i})}\rangle|\rho_{z_{2i},a_{2i}}^{(s(J_{i}),J_{i},c_{i})}\rangle\right)|\rho_{z_{2Y+1},a_{2Y+1}}^{L_{1}}\rangle\ldots|\rho_{z_{n},a_{n}}^{L_{n-2Y}}\rangle.\label{eq:O_state}
\end{equation}

Define the projector 
\[
P_{\mathcal{L}}^{1}=\sum_{z,a\in\{0,1\}}\sum_{L\in\mathcal{L}}|\psi_{z,a}^{L}\rangle\langle\psi_{z,a}^{L}|
\]
which has support only on diagram elements contained in $\mathcal{L}$, and let $P_{\mathcal{L}}^{0}=\id-P_{\mathcal{L}}^{1}$.  Note that for each $L\in\mathcal{L}$ and $z,a\in\{0,1\}$, we can write 
\begin{equation}
|\rho_{z,a}^{L}\rangle=P_{\mathcal{L}}^{1}|\rho_{z,a}^{L}\rangle+P_{\mathcal{L}}^{0}|\rho_{z,a}^{L}\rangle\label{eq:rho_expand}
\end{equation}
where (from equation \eq{psi_rho_eqn})
\[
P_{\mathcal{L}}^{1}|\rho_{z,a}^{L}\rangle=\frac{1}{\sqrt{8}}|\psi_{z,a}^{L}\rangle.
\]
Since the states $|\rho_{z,a}^{L}\rangle$ are orthonormal, and similarly for the states $|\psi_{z,a}^{L}\rangle$, we get 
\begin{equation}
\langle\rho_{x,b}^{\tilde{L}}|P_{\mathcal{L}}^{\alpha}|\rho_{z,a}^{L}\rangle=\begin{cases}
\frac{1}{8}\delta_{z,x}\delta_{a,b}\delta_{L,\tilde{L}} & \alpha=1\\
\frac{7}{8}\delta_{z,x}\delta_{a,b}\delta_{L,\tilde{L}} & \alpha=0.
\end{cases}\label{eq:projector_L_rho}
\end{equation}
Inserting $n$ copies of the identity $P_{\mathcal{L}}^{1}+P_{\mathcal{L}}^{0}=1$ gives
\begin{equation}
|\phi\rangle=\left(\frac{1}{\sqrt{2}}\right)^{Y}\!\!\sum_{c_{1},\ldots,c_{Y}\in\{0,1\}}\sum_{\alpha_{1},\ldots,\alpha_{n}\in\{0,1\}}\Sym(P_{\mathcal{L}}^{\alpha_{1}}\otimes\cdots\otimes P_{\mathcal{L}}^{\alpha_{n}}|O_{\vec{z},\vec{a}}^{(J_{1},\ldots,J_{Y},L_{1},\ldots,L_{n-2Y}),(c_{1},\ldots,c_{Y})}\rangle)\label{eq:phi_sum}
\end{equation}
Likewise for $|\psi\rangle$ we get 
\begin{equation}
|\psi\rangle=\left(\frac{1}{\sqrt{2}}\right)^{K}\!\!\sum_{e_{1},\ldots,e_{K}\in\{0,1\}}\sum_{\beta_{1},\ldots,\beta_{n}\in\{0,1\}}\Sym(P_{\mathcal{L}}^{\beta_{1}}\otimes\cdots\otimes P_{\mathcal{L}}^{\beta_{n}}|O_{\vec{x},\vec{b}}^{(\tilde{J}_{1},\ldots,\tilde{J}_{K},\tilde{L}_{1},\ldots,\tilde{L}_{n-2K}),(e_{1},\ldots,e_{K})}\rangle).\label{eq:psi_sum}
\end{equation}

Using equations \eq{psi_rho_eqn}, \eq{O_state}, and \eq{Vtothec}, we see that the states 
\[
|O_{\vec{z},\vec{a}}^{(J_{1},\ldots,J_{Y},L_{1},\ldots,L_{n-2Y}),(c_{1},\ldots,c_{Y})}\rangle \qquad \text{and} \qquad | O_{\vec{x},\vec{b}}^{(\tilde{J}_{1},\ldots,\tilde{J}_{K},\tilde{L}_{1},\ldots,\tilde{L}_{n-2K}),(e_{1},\ldots,e_{K})}\rangle
\]
are orthogonal for any choice of bit strings $c_{1},\ldots,c_{Y}$ and $e_{1},\ldots,e_{K}$, since $|\phi\rangle \ne |\psi\rangle$ implies that
\[
((J_{1},\ldots,J_{Y},L_{1},\ldots,L_{n-2Y}),\vec{z},\vec{a})\neq ((\tilde{J}_{1},\ldots,\tilde{J}_{K},\tilde{L}_{1},\ldots,\tilde{L}_{n-2K}),\vec{x},\vec{b}).
\]
 
Using equation \eq{projector_L_rho}, we have
\begin{align*}
& \langle O_{\vec{x},\vec{b}}^{(\tilde{J}_{1},\ldots,\tilde{J}_{K},\tilde{L}_{1},\ldots,\tilde{L}_{n-2K}),(e_{1},\ldots,e_{K})}|P_{\mathcal{L}}^{\alpha_{1}}\otimes \cdots\otimes P_{\mathcal{L}}^{\alpha_{n}}|O_{\vec{z},\vec{a}}^{(J_{1},\ldots,J_{Y},L_{1},\ldots,L_{n-2Y}),(c_{1},\ldots,c_{Y})}\rangle \\
&\quad= \left(\frac{1}{8}\right)^{\sum_{i=1}^{n}\alpha_i}\left(\frac{7}{8}\right)^{n-\sum_{i=1}^{n}\alpha_i}\langle O_{\vec{x},\vec{b}}^{(\tilde{J}_{1},\ldots,\tilde{J}_{K},\tilde{L}_{1},\ldots,\tilde{L}_{n-2K}),(e_{1},\ldots,e_{K})}|O_{\vec{z},\vec{a}}^{(J_{1},\ldots,J_{Y},L_{1},\ldots,L_{n-2Y}),(c_{1},\ldots,c_{Y})}\rangle,
\end{align*}
so the states
\begin{equation}
P_{\mathcal{L}}^{\alpha_{1}}\otimes \cdots\otimes P_{\mathcal{L}}^{\alpha_{n}}|O_{\vec{z},\vec{a}}^{(J_{1},\ldots,J_{Y},L_{1},\ldots,L_{n-2Y}),(c_{1},\ldots,c_{Y})}\rangle\label{eq:intermediate_O_states}
\end{equation}
and 
\begin{equation}
P_{\mathcal{L}}^{\beta_{1}}\otimes \cdots\otimes P_{\mathcal{L}}^{\beta_{n}}| O_{\vec{x},\vec{b}}^{(\tilde{J}_{1},\ldots,\tilde{J}_{K},\tilde{L}_{1},\ldots,\tilde{L}_{n-2K}),(e_{1},\ldots,e_{K})}\rangle
\end{equation}
are orthogonal for each choice of bit strings $\alpha_1,\ldots,\alpha_n$, $\beta_1,\ldots,\beta_n$, $c_{1},\ldots,c_{Y}$, and $e_{1},\ldots,e_{K}$. Furthermore, observe that
\begin{align*}
& \Sym(\langle O_{\vec{x},\vec{b}}^{(\tilde{J}_{1},\ldots,\tilde{J}_{K},\tilde{L}_{1},\ldots,\tilde{L}_{n-2K}),(e_{1},\ldots,e_{K})}|P_{\mathcal{L}}^{\beta_{1}}\!\otimes\! \cdots\!\otimes\! P_{\mathcal{L}}^{\beta_{n}})
\Sym(P_{\mathcal{L}}^{\alpha_{1}}\!\otimes\! \cdots\!\otimes\! P_{\mathcal{L}}^{\alpha_{n}}|O_{\vec{z},\vec{a}}^{(J_{1},\ldots,J_{Y},L_{1},\ldots,L_{n-2Y}),(c_{1},\ldots,c_{Y})}\rangle)\\
&\quad=(\langle O_{\vec{x},\vec{b}}^{(\tilde{J}_{1},\ldots,\tilde{J}_{K},\tilde{L}_{1},\ldots,\tilde{L}_{n-2K}),(e_{1},\ldots,e_{K})}|P_{\mathcal{L}}^{\beta_{1}}\otimes \cdots\otimes P_{\mathcal{L}}^{\beta_{n}})(P_{\mathcal{L}}^{\alpha_{1}}\otimes \cdots\otimes P_{\mathcal{L}}^{\alpha_{n}}|O_{\vec{z},\vec{a}}^{(J_{1},\ldots,J_{Y},L_{1},\ldots,L_{n-2Y}),(c_{1},\ldots,c_{Y})}\rangle).
\end{align*}
Thus each symmetrized state in the sum \eq{phi_sum} is orthogonal to each symmetrized state in the sum \eq{psi_sum}.

To complete the proof, we show that 
\[
\Pi_{0}|\phi\rangle
\]
is a superposition of a \emph{subset} of the states in the sum \eq{phi_sum} and hence is orthogonal to $|\psi\rangle$. To see this, first note that $|\phi\rangle\in \mathcal{I}(G_1,n)$ by \lem{FF_characterization} since it is in the nullspace of $H(G_1,n)$ (by \lem{gs_g_alpha}) and $G_1$ is an $e_1$-gate graph (by \lem{The-smallest-eigenvalues}). Now comparing $\mathcal{I}(G_1,n)$ (defined in \eq{Ign}) and $\mathcal{I}(G_1,\gxoc,n)$ (defined in \eq{occup_space_defn}), we see that 
\[
\Pi_0 |\Gamma\rangle=\Pi_0^{\occ} |\Gamma\rangle \text{ for all }|\Gamma\rangle\in \mathcal{I}(G_1,n)
\]
where $\Pi_0^\occ$ projects onto the space
\begin{equation}
\spn\{|\psi_{z_1,a_1}^{q_1}\rangle |\psi_{z_2,a_2}^{q_2}\rangle\ldots|\psi_{z_n,a_n}^{q_n}\rangle\colon z_i,a_i \in\{0,1\},\,	q_i\in [R],\, \{q_i,q_j\}\notin E(\gxoc)\}.\label{eq:hatpi}
\end{equation}
In particular, $\Pi_0 |\phi\rangle=\Pi_0^\occ |\phi\rangle$. We claim that this quantity is a superposition of a subset of the states in the sum \eq{phi_sum}.

The diagram elements $q_1,\ldots,q_n$ appearing in \eq{hatpi} range over the set of all $R$ diagram elements in the gate graph $G_1$; however, recall that $E(\gxoc)$ only contains edges between diagram elements in the subset $\mathcal{L}$ of these diagram elements. Since the $P_{\mathcal{L}}^\alpha$  either project onto this set of diagram elements or onto the complement, each state 
\[
\Sym(P_{\mathcal{L}}^{\alpha_{1}}\otimes \cdots\otimes P_{\mathcal{L}}^{\alpha_{n}}|O_{\vec{z},\vec{a}}^{(J_{1},\ldots,J_{Y},L_{1},\ldots,L_{n-2Y}),(c_{1},\ldots,c_{Y})}\rangle)
\]
is an eigenvector of $\Pi_{0}^{\occ}$. Hence  $\Pi_0 |\phi\rangle$  is a superposition of the terms in \eq{phi_sum} that are $+1$ eigenvectors of  $\Pi_0^{\occ}$ (as the 0 eigenvectors are annihilated). It follows that $\langle \psi |\Pi_0 |\phi\rangle=0$ since we established above that each such term is orthogonal to $|\psi\rangle$.
\end{proof}

\subsection{Computation of matrix elements between states with legal configurations}
\label{sec:matrix_els_details}

In this Section we compute the matrix elements of
\begin{equation}
  H_{1}\big|_{S_{1}},
  H_{2}\big|_{S_{1}},
  H_{\inn,i}\big|_{S_{1}},
  H_{\out}\big|_{S_{1}}
\label{eq:ops_restriction_S1-1}
\end{equation}
in the basis of $S_{1}$ consisting of the states 
\begin{equation}
  |j,\vec{d},\In(\vec{z}),\vec{a}\rangle
  =\sum_{\vec{x}\in\{0,1\}^{n}} \big(\langle\vec{x} 
   |\bar{U}_{j,d_1}(a_{1}) 
   |\vec{z}\rangle\big)|j,\vec{d},\vec{x},\vec{a}\rangle 
  \label{eq:phi_init_basis_restated}
\end{equation}
for $\vec{z},\vec{a}\in\{0,1\}$, $j\in[M]$, and 
\[
\vec{d}=(d_{1},\ldots,d_{n})\quad\text{with}\quad d_{1}=d_{s(j)}\in\{0,1,2\}\quad\text{and}\quad d_{i}\in\{0,1\}\; i\notin\{1,s(j)\},
\]where
\begin{equation}
  \bar{U}_{j,d_1}(a_1) = \begin{cases}
    U_{j-1}(a_1) U_{j-2}(a_1) \ldots U_1(a_1) & \text{if $d_1 \in \{0,2\}$} \\
    U_j(a_1) U_{j-1}(a_1) \ldots U_1(a_1) & \text{if $d_1=1$}.
  \end{cases}
  \label{eq:ubar_restated}
\end{equation}
Specifically, we prove the results stated in the four boxes in \sec{Matrix-elements-in}. 

\subsubsection{Matrix elements of $H_{1}$}

We begin by computing the matrix elements of 
\[
H_{1}=\sum_{w=1}^{n}h_{1}^{(w)}
\]
in the basis $\mathcal{B}_{\legal}$; then we use them to compute the matrix elements of $H_1$ in the basis \eq{phi_init_basis_restated}.

Note that since $H_{1}$ is symmetric under permutations of the $n$ registers, 
\begin{equation}
H_{1}|j,\vec{d},\vec{z},\vec{a}\rangle=\begin{cases}
\Sym\Big(H_{1}|\rho_{z_{1},a_{1}}^{(1,j,d_{1})}\rangle\underset{i=2}{\overset{n}{\bigotimes}}|\rho_{z_{i},a_{i}}^{F(i,j,d_{i})}\rangle\Big) & d_{1}=d_{s(j)}\in\{0,1\}\\
\Sym\Bigg(H_{1}|T_{z_{1},a_{1},z_{s(j)},a_{s(j)}}^{j}\rangle\underset{\substack{i=2\\
i\neq s(j)
}
}{\overset{n}{\bigotimes}}|\rho_{z_{i},a_{i}}^{F(i,j,d_{i})}\rangle\Bigg) & d_{1}=d_{s(j)}=2
\end{cases}
\label{eq:H1_acting}
\end{equation}
and recall that
\begin{equation}
|T_{z_{1},a_{1},z_{s(j)},a_{s(j)}}^{j}\rangle=\frac{1}{\sqrt{2}}|\rho_{z_{1},a_{1}}^{(1,j,0)}\rangle|\rho_{z_{s(j)},a_{s(j)}}^{(s(j),j,0)}\rangle+\frac{1}{\sqrt{2}}\sum_{x_{1},x_{2}\in\{0,1\}}U_{j}(a_{1})_{x_{1}x_{2},z_{1}z_{s(j)}}|\rho_{x_{1},a_{1}}^{(1,j,1)}\rangle|\rho_{x_{2},a_{s(j)}}^{(s(j),j,1)}\rangle.\label{eq:reminder_T_states}
\end{equation}
To compute $\langle k,\vec{c},\vec{x},\vec{b}|H_{1}|j,\vec{d},\vec{z},\vec{a}\rangle$, we first evaluate the matrix elements of $h_{1}$ between single-particle states of the form
\[
  |\rho_{z,a}^{(1,j,d)}\rangle,\,
  |\rho_{z,a}^{(s(j),j,d)}\rangle,\,
  |\rho_{z,a}^{F(i,j,d)}\rangle
\]
(for $j\in[M]$, $i\in\{2,\ldots,n\}$, and $z,a,d\in\{0,1\}$) that appear in equation \eq{H1_acting}. To evaluate these matrix elements, we use the fact that $h_1$ is of the form \eq{h_edges}, where $\mathcal{E}$ is the set of edges in rows $2,\ldots,n$ that are added to the gate diagram in step 3 of \sec{The-gate-graph}.

We have
\begin{align}
\langle\rho_{x,b}^{F(i,j,0)}|h_{1}|\rho_{z,a}^{F(i,j,0)}\rangle & =\frac{1}{8}\langle\psi_{x,b}^{F(i,j,0)}|h_{1}|\psi_{z,a}^{F(i,j,0)}\rangle=\frac{1}{64}\delta_{x,z}\delta_{a,b}\label{eq:diag_F0}\\
\langle\rho_{x,b}^{F(i,j,1)}|h_{1}|\rho_{z,a}^{F(i,j,1)}\rangle & =\frac{1}{64}\delta_{x,z}\delta_{a,b}\label{eq:diag_F1}
\end{align}
for all $i\in\{2,\ldots,n\}$, $j\in[M]$, and $x,z,a,b\in\{0,1\}$. Similarly,
\begin{equation}
\langle\rho_{x,b}^{F(i,j,0)}|h_{1}|\rho_{z,a}^{F(i,j,1)}\rangle=\langle\rho_{x,b}^{F(i,j,1)}|h_{1}|\rho_{z,a}^{F(i,j,0)}\rangle=\frac{1}{64}\delta_{x,z}\delta_{a,b}\label{eq:F_h1_matels}
\end{equation}
for all $i \in [n]\setminus\{1,s(j)\}$, $j\in[M]$, and $z,x,a,b\in\{0,1\}$. Furthermore,
\begin{equation}
h_{1}|\rho_{z,a}^{(1,j,d)}\rangle=0\label{eq:h1_row1}
\end{equation}
for all $j\in[M]$ and $z,a,d\in\{0,1\}$, and
\begin{align}
\langle\rho_{x,b}^{F(s(j),j,c)}|h_{1}|\rho_{z,a}^{\left(s(j),j,d\right)}\rangle & =\frac{1}{64}\delta_{x,z}\delta_{a,b}\delta_{c,d}\label{eq:F_L_eqn_h1}\\
\langle\rho_{x,b}^{(s(j),j,c)}|h_{1}|\rho_{z,a}^{\left(s(j),j,d\right)}\rangle & =\frac{1}{64}\delta_{x,z}\delta_{a,b}\delta_{c,d}\label{eq:L_L_eqn_h1}
\end{align}
for all $j\in[M]$ and $z,x,a,b,c,d\in\{0,1\}$.

Using equations \eq{diag_F0}, \eq{diag_F1}, \eq{h1_row1}, and \eq{L_L_eqn_h1}, we compute the diagonal matrix elements of $H_{1}$:
\begin{align}
\langle j,\vec{d},\vec{z},\vec{a}|H_{1}|j,\vec{d},\vec{z},\vec{a}\rangle & =\begin{cases}
\sum_{i=2}^{n}\langle\rho_{z_{i},a_{i}}^{F(i,j,d_{i})}|h_{1}|\rho_{z_{i},a_{i}}^{F(i,j,d_{i})}\rangle & d_{1}\in\{0,1\}\\
\langle T_{z_{1},a_{1},z_{s(j)},a_{s(j)}}^{j}|\id\otimes h_{1}|T_{z_{1},a_{1},z_{s(j)},a_{s(j)}}^{j}\rangle+\underset{\substack{i=2\\
i\neq s(j)
}
}{\overset{n}{\sum}}\langle\rho_{z_{i},a_{i}}^{F(i,j,d_{i})}|h_{1}|\rho_{z_{i},a_{i}}^{F(i,j,d_{i})}\rangle & d_{1}=2.
\end{cases}\nonumber \\
 & =\frac{n-1}{64} \label{eq:diag_H1}
\end{align}
where in the last line we used equation \eq{reminder_T_states} and the fact that $U_{j}(a_{1})$ is unitary. We use equations \eq{F_h1_matels} and \eq{F_L_eqn_h1} to compute the nonzero off-diagonal matrix elements of $H_{1}$ between states in $\mathcal{B}_{\legal}$. We get
\begin{equation}
\langle k,\vec{c},\vec{x},\vec{b}|H_{1}|j,\vec{d},\vec{z},\vec{a}\rangle=\delta_{j,k}\delta_{\vec{a},\vec{b}}\cdot\begin{cases}
\frac{1}{64}\delta_{\vec{x},\vec{z}}(n-1) & \vec{c}=\vec{d}\\
\frac{1}{64}\delta_{\vec{x},\vec{z}}\underset{\substack{r=1\\
r\neq i
}
}{\overset{n}{\prod}}\delta_{c_{r},d_{r}} & c_{i} \neq d_{i} \text{ for some } i\in[n]\setminus\{1,s(j)\} \\
\frac{1}{64\sqrt{2}}\delta_{\vec{x},\vec{z}}\underset{\substack{r=2\\
r\neq s(j)
}
}{\overset{n}{\prod}}\delta_{c_{r},d_{r}} & (c_{1},d_{1})\in\{(2,0),(0,2)\}\\
\frac{1}{64\sqrt{2}} U_{j}(a_{1})^{*}_{z_{1}z_{s(j)},x_{1}x_{s(j)}} \underset{\substack{r=2\\
r\neq s(j)
}
}{\overset{n}{\prod}}\delta_{c_{r},d_{r}}\delta_{x_r,z_r} & (c_{1},d_{1})=(2,1)\\
\frac{1}{64\sqrt{2}} U_{j}(a_{1})_{x_{1}x_{s(j)},z_{1}z_{s(j)}} \underset{\substack{r=2\\
r\neq s(j)
}
}{\overset{n}{\prod}}\delta_{c_{r},d_{r}}\delta_{x_r,z_r} & (c_{1},d_{1})=(1,2)\\
0 & \text{otherwise.}
\end{cases}\label{eq:off_diag_H1}
\end{equation}
For the second case we used equation \eq{F_h1_matels}, for the third case we used equation \eq{F_L_eqn_h1} to get
\[
\langle T_{x_{1},b_{1},x_{s(j)},b_{s(j)}}^{j}|\id\otimes h_{1}|\rho_{z_{1},a_{1}}^{(1,j,0)}\rangle|\rho_{z_{s(j)}a_{s(j)}}^{F(s(j),j,0)}\rangle=\delta_{x_1,z_1}\delta_{b_1,a_1}\delta_{x_{s(j)},z_{s(j)}}\delta_{b_{s(j)},a_{s(j)}}\frac{1}{64\sqrt{2}},
\]
and for the fourth and fifth cases we used equation \eq{F_L_eqn_h1} to get
\[
\langle T_{x_{1},b_{1},x_{s(j)},b_{s(j)}}^{j}|\id\otimes h_{1}|\rho_{z_{1},a_{1}}^{(1,j,1)}\rangle|\rho_{z_{s(j)}a_{s(j)}}^{F(s(j),j,1)}\rangle=\delta_{b_1,a_1}\delta_{b_{s(j)},a_{s(j)}}\frac{1}{64\sqrt{2}} U_{j}(a_{1})^{*}_{z_{1}z_{s(j)},x_{1}x_{s(j)}}.
\]
In the remaining case, $(c_1,d_1) \in \{(1,0),(0,1)\}$ and the matrix element is $0$.

We now compute the matrix elements of $H_1$ in the basis \eq{phi_init_basis_restated}. We have
\begin{equation}
\langle j,\vec{c},\In(\vec{x}),\vec{a}|H_1|j,\vec{d},\In(\vec{z}),\vec{a}\rangle
= \sum_{\vec{x}',\vec{z}' \in \{0,1\}^n}
  \langle j,\vec{c},\vec{x}',\vec{a}|H_1|j,\vec{d},\vec{z}',\vec{a}\rangle
  \langle \vec{x}|\bar{U}_{j,d_1}(a_1)^\dag|\vec{x}'\rangle
  \langle \vec{z}'|\bar{U}_{j,d_1}(a_1)|\vec{z}\rangle.
\label{eq:basischange}
\end{equation}
Using this with \eq{off_diag_H1} gives
\[
\langle k,\vec{c},\In(\vec{x}),\vec{b}|H_{1}|j,\vec{d},\In(\vec{z}),\vec{a}\rangle=\delta_{j,k}\delta_{\vec{a},\vec{b}}\delta_{\vec{x},\vec{z}}\cdot\begin{cases}
\frac{n-1}{64} & \vec{c}=\vec{d}\\
\frac{1}{64}\underset{\substack{r=1\\
r\neq i
}
}{\overset{n}{\prod}}\delta_{c_{r},d_{r}} & c_{i}\neq d_{i}\;\text{for some}\; i\in[n]\setminus\{1,s(j)\}\\
\frac{1}{64\sqrt{2}}\underset{\substack{r=2\\
r\neq s(j)
}
}{\overset{n}{\prod}}\delta_{c_{r},d_{r}} & (c_{1},d_{1})\in\{(2,0),(0,2),(2,1),(1,2)\}\\
0 & \text{otherwise}
\end{cases}
\]
as claimed in equation \eq{H1_goodbasis-1}. Note that in the basis $\mathcal{B}_{\legal}$, $H_{1}$ has nonzero matrix elements between states with different values of $\vec{z}$; the basis \eq{phi_init_basis_restated} is convenient because $H_{1}$ only connects basis states with the same value of $\vec{z}$.

\subsubsection{Matrix elements of $H_{2}$}

Recall that
\[
H_{2}=\sum_{w=1}^{n}h_{2}^{(w)}
\]
and note, just as in \eq{H1_acting}, that because $H_2$ is permutation invariant, 
\begin{equation}
H_{2}|j,\vec{d},\vec{z},\vec{a}\rangle=\begin{cases}
\Sym\Big(H_{2}|\rho_{z_{1},a_{1}}^{(1,j,d_{1})}\rangle\underset{i=2}{\overset{n}{\bigotimes}}|\rho_{z_{i},a_{i}}^{F(i,j,d_{i})}\rangle\Big) & d_{1}=d_{s(j)}\in\{0,1\}\\
\Sym\Bigg(H_{2}|T_{z_{1},a_{1},z_{s(j)},a_{s(j)}}^{j}\rangle\underset{\substack{i=2\\
i\neq s(j)
}
}{\overset{n}{\bigotimes}}|\rho_{z_{i},a_{i}}^{F(i,j,d_{i})}\rangle\Bigg) & d_{1}=d_{s(j)}=2.
\end{cases}
\label{eq:actH2}
\end{equation}
Also recall that $h_2$ is of the form \eq{h_edges}, where $\mathcal{E}$ is the set of edges in row 1 that are added in step 3 of \sec{The-gate-graph}.

To compute $\langle k,\vec{c},\vec{x},\vec{b}|H_{2}|j,\vec{d},\vec{z},\vec{a}\rangle$ we first evaluate the matrix elements of $h_{2}$ between the relevant single-particle states $|\rho_{z,a}^{L}\rangle$ with $L\in\mathcal{L}$ and $z,a\in\{0,1\}$. The only such matrix elements that are nonzero are
\begin{equation}
\langle\rho_{x,b}^{(1,j,0)}|h_{2}|\rho_{z,a}^{(1,j,0)}\rangle=\begin{cases}
0 & j=1\\
\frac{1}{64}\delta_{z,x}\delta_{a,b} & j\in\{2,\ldots,M\}
\end{cases}\qquad\langle\rho_{x,b}^{(1,j,1)}|h_{2}|\rho_{z,a}^{(1,j,1)}\rangle=\begin{cases}
\frac{1}{64}\delta_{z,x}\delta_{a,b} & j\in\{1,\ldots,M-1\}\\
0 & j=M
\end{cases}\label{eq:diag_h2}
\end{equation}
for $z,a,x,b\in\{0,1\}$ and
\begin{equation}
\langle\rho_{x,b}^{(1,j-1,1)}|h_{2}|\rho_{z,a}^{(1,j,0)}\rangle=\langle\rho_{z,a}^{(1,j,0)}|h_{2}|\rho_{x,b}^{(1,j-1,1)}\rangle=\frac{1}{64}\delta_{z,x}\delta_{a,b}\label{eq:single_part_offdiag}
\end{equation}
for $j\in\{2,\ldots,M\}$ and $z,x,a,b\in\{0,1\}$.

Using equations \eq{actH2} and \eq{diag_h2}, we compute the diagonal matrix elements of $H_{2}$ in the basis $\mathcal{B}_{\legal}$:
\begin{equation}
\langle j,\vec{d},\vec{z},\vec{a}|H_{2}|j,\vec{d},\vec{z},\vec{a}\rangle=\begin{cases}
0 & d_{1}=0\text{ and }j=1,\text{ or }d_{1}=1\text{ and }j=M\\
\frac{1}{128} & d_{1}=2\text{ and }j\in\{1,M\}\\
\frac{1}{64} & \text{otherwise.}
\end{cases}\label{eq:diag_H2_legalbasis}
\end{equation}
Using equations \eq{actH2} and \eq{single_part_offdiag}, we compute the nonzero off-diagonal matrix elements, which are all of the form 
\[
\langle j-1,\vec{c},\vec{x},\vec{a}|H_{2}|j,\vec{d},\vec{z},\vec{a}\rangle\quad\text{or}\quad
\langle j,\vec{d},\vec{z},\vec{a}|H_{2}|j-1,\vec{c},\vec{x},\vec{a}\rangle
=\big(\langle j-1,\vec{c},\vec{x},\vec{a}|H_{2}|j,\vec{d},\vec{z},\vec{a}\rangle\big)^{*}
\]
for $j\in\{2,\ldots,M\}$, $\vec{x},\vec{z},\vec{a}\in\{0,1\}^{n}$, and
\[
\vec{d}=(d_{1},\ldots,d_{n})\quad\text{with}\quad d_{1}=d_{s(j)}\in\{0,1,2\}\quad\text{and}\quad d_{i}\in\{0,1\} \text{ for all } i\notin\{1,s(j)\}.
\]
We get
\begin{align}
&\langle j-1,\vec{c},\vec{x},\vec{a}|H_{2}|j,\vec{d},\vec{z},\vec{a}\rangle \nonumber\\
&= \underset{\substack{r=2\\
r\notin\{s(j),s(j-1)\}
}
}{\overset{n}{\prod}}\delta_{d_{r},c_{r}} \begin{cases}
\frac{1}{64}\delta_{\vec{x},\vec{z}} & (c_{1},c_{s(j)},d_{1},d_{s(j-1)})=(1,0,0,1)\\
\frac{1}{64\sqrt{2}} U_{j-1}(a_{1})^{*}_{z_{1}z_{s(j-1)},x_{1}x_{s(j-1)}} \underset{\substack{r=2\\
r\neq s(j-1)
}
}{\overset{n}{\prod}}\delta_{x_{r},z_{r}} & (c_{1},c_{s(j)},d_{1},d_{s(j-1)})=(2,0,0,0)\\
\frac{1}{128} U_{j-1}(a_{1})^{*}_{z_{1}z_{s(j-1)},x_{1}x_{s(j-1)}} \underset{\substack{r=2\\
r\neq s(j-1)
}
}{\overset{n}{\prod}}\delta_{x_{r},z_{r}} & (c_{1},c_{s(j)},d_{1},d_{s(j-1)})=(2,1,2,0)\\
\frac{1}{64\sqrt{2}}\delta_{\vec{x},\vec{z}} & (c_{1},c_{s(j)},d_{1},d_{s(j-1)})=(1,1,2,1).
\end{cases}
\label{eq:H2_offdiag_legalbasis}
\end{align}

Now we compute the diagonal matrix elements of $H_{2}$ in the basis \eq{phi_init_basis_restated} using equations \eq{basischange} and \eq{diag_H2_legalbasis}:
\begin{equation}
\langle j,\vec{d},\In(\vec{z}),\vec{a}|H_{2}|j,\vec{d},\In(\vec{z}),\vec{a}\rangle=\begin{cases}
0 & d_{1}=0\text{ and }j=1,\text{ or }d_{1}=1\text{ and }j=M\\
\frac{1}{128} & d_{1}=2\text{ and }j\in\{1,M\}\\
\frac{1}{64} & \text{otherwise.}
\end{cases}\label{eq:H2_diag_goodbasis}
\end{equation}
The nonzero off-diagonal matrix elements are (using equations \eq{basischange} and \eq{H2_offdiag_legalbasis})
\begin{align}
&\langle j-1,\vec{c},\In(\vec{x}),\vec{b}|H_{2}|j,\vec{d},\In(\vec{z}),\vec{a}\rangle
= \langle j,\vec{d},\In(\vec{z}),\vec{a}|H_{2}|j-1,\vec{c},\In(\vec{x}),\vec{b}\rangle \nonumber\\
&\quad=\delta_{\vec{x},\vec{z}}\delta_{\vec{a},\vec{b}}\left(\underset{\substack{r=2\\
r\notin\{s(j),s(j-1)\}
}
}{\overset{n}{\prod}}\delta_{d_{r},c_{r}}\right)\cdot\begin{cases}
\frac{1}{64} & (c_{1},c_{s(j)},d_{1},d_{s(j-1)})=(1,0,0,1)\\
\frac{1}{64\sqrt{2}} & (c_{1},c_{s(j)},d_{1},d_{s(j-1)})=(2,0,0,0)\\
\frac{1}{128} & (c_{1},c_{s(j)},d_{1},d_{s(j-1)})=(2,1,2,0)\\
\frac{1}{64\sqrt{2}} & (c_{1},c_{s(j)},d_{1},d_{s(j-1)})=(1,1,2,1).
\end{cases}\label{eq:H2_offdiag_goodbasis}
\end{align}
Combining equations \eq{H2_diag_goodbasis} and \eq{H2_offdiag_goodbasis} gives the result claimed in equations \eq{H2_formula_with_f}, \eq{H2_diag_goodbasis-1}, and \eq{H2_offdiag_goodbasis-1}.

\subsubsection{Matrix elements of $H_{\inn,i}$ }

We now consider
\[
H_{\inn,i}=\sum_{w=1}^{n}h_{\inn,i}^{(w)}
\]
where $i$ is from the set of indices of the ancilla qubits, i.e., $i\in\{n_{\inn}+1,\ldots,n\}$. Using equation \eq{hin_i} we get 
\[
\langle\rho_{x,b}^{L_{2}}|h_{\inn,i}|\rho_{z,a}^{L_{1}}\rangle=\frac{1}{8}\langle\psi_{x,b}^{L_{2}}|h_{\inn,i}|\psi_{z,a}^{L_{1}}\rangle=\frac{1}{64}\delta_{x,1}\delta_{z,1}\delta_{a,b}\delta_{L_{1},(i,0,1)}\delta_{L_{2},(i,0,1)}
\]
for $L_{1},L_{2}\in\mathcal{L}$ and $a,b,x,z\in\{0,1\}$. Thus $H_{\inn,i}$ is diagonal in the basis $\mathcal{B}_{\legal}$ with entries
\begin{equation}
\langle j,\vec{d},\vec{z},\vec{a}|H_{\inn,i}|j,\vec{d},\vec{z},\vec{a}\rangle=\begin{cases}
\frac{1}{64} & \text{$d_{i}=0$, $z_{i}=1$, and $F(i,j,0)=(i,0,1)$} \\
0 & \text{otherwise}.
\end{cases}\label{eq:Hin_i_legalbasis}
\end{equation}
Note that $F(i,j,0)=\left(i,0,1\right)$ if and only if none of the gates $U_{1},U_{2},\ldots,U_{j-1}$ acts on the $i$th qubit, i.e.,
\[
j\leq j_{\min,i}
\]
where
\[
j_{\min,i}=\min\{ j\in[M]:\: s(j)=i\}.
\]
Now using this fact and equations \eq{basischange} and \eq{Hin_i_legalbasis}, we get the following expression for the nonzero matrix elements of $H_{\inn,i}$ in the basis \eq{phi_init_basis_restated}: 
\begin{align*}
 & \langle j,\vec{d},\In(\vec{x}),\vec{a}|H_{\inn,i}|j,\vec{d},\In(\vec{z}),\vec{a}\rangle\\
 & \quad =\sum_{\vec{w},\vec{y}\in\{0,1\}^{n}}\langle\vec{x}|\bar{U}_{j,d_1}^{\dagger}(a_{1})|\vec{w}\rangle\langle\vec{y}|\bar{U}_{j,d_1}(a_{1})|\vec{z}\rangle\langle j,\vec{d},\vec{w},\vec{a}|H_{\inn,i}|j,\vec{d},\vec{y},\vec{a}\rangle\\
 &\quad = \begin{cases}
\underset{\vec{y}\in\{0,1\}^{n}}{\sum}\langle\vec{x}|\bar{U}_{j,d_1}^{\dagger}(a_{1})|\vec{y}\rangle\langle\vec{y}|\bar{U}_{j,d_1}(a_{1})|\vec{z}\rangle\frac{1}{64}\delta_{y_{i},1} & j\leq j_{\min,i}\text{ and }d_{i}=0\\
0 & \text{otherwise}
\end{cases}\\
 &\quad =\begin{cases}
\langle\vec{x}|\bar{U}_{j,d_1}^{\dagger}(a_{1})|1\rangle\langle1|_{i}\bar{U}_{j,d_1}(a_{1})|\vec{z}\rangle\frac{1}{64} & j\leq j_{\min,i}\text{ and }d_{i}=0\\
0 & \text{otherwise}
\end{cases}\\
 &\quad =\begin{cases}
\langle\vec{x}|U_1^{\dagger}(a_1)\ldots U_{j-1}^{\dagger}(a_1)|1\rangle\langle1|_{i}U_{j-1}(a_1)\ldots U_{1}(a_1)|\vec{z}\rangle\frac{1}{64} & \begin{aligned}
&(j< j_{\min,i}\text{ and }d_{i}=0 \text{ and } d_1\in\{0,2\})\\
&\text{or } (j= j_{\min,i}\text{ and }d_{i}=0)
\end{aligned}\\
\langle\vec{x}|U_1^{\dagger}(a_1)\ldots U_{j}^{\dagger}(a_1)|1\rangle\langle1|_{i}U_{j}(a_1)\ldots U_{1}(a_1)|\vec{z}\rangle\frac{1}{64} & j< j_{\min,i}\text{ and }d_{i}=0 \text{ and } d_1=1\\
0 & \text{otherwise.}
\end{cases}
\end{align*}
In the last line we use the fact that $d_1=d_i$ when $j=j_{\min,i}$ (since $s(j_{\min,i})=i$). Since $[U_{J}(a_{1}),|1\rangle\langle1|_{i}]=0$ for $J<j_{\min,i}$, we have
\begin{align}
\langle j,\vec{d},\In(\vec{x}),\vec{a}|H_{\inn,i}|j,\vec{d},\In(\vec{z}),\vec{a}\rangle & =\begin{cases}
\frac{1}{64}\langle\vec{x}|\left(|1\rangle\langle1|_{i}\right)|\vec{z}\rangle & j\leq j_{\min,i}\text{ and }d_{i}=0\\
0 & \text{otherwise}
\end{cases}\label{eq:mat_els_Hinj}\\
 & =\begin{cases}
\frac{1}{64}\delta_{\vec{x},\vec{z}}\delta_{x_{i},1} & j\leq j_{\min,i}\text{ and }d_{i}=0\\
0 & \text{otherwise}
\end{cases}
\end{align}
(with all other matrix elements equal to zero), which confirms the result stated in equation \eq{Hin_mat_els}.

\subsubsection{Matrix elements of $H_{\out}$}
\label{app:Matrix-elements-of_Hout}

Finally, consider 
\[
H_{\out}=\sum_{w=1}^{n}h_{\out}^{(w)}
\]
where (from equation \eq{hin_i})
\[
\langle\rho_{x,b}^{L_{2}}|h_{\out}|\rho_{z,a}^{L_{1}}\rangle=\frac{1}{64}\delta_{a,b}\delta_{x,0}\delta_{z,0}\delta_{L_{1},(2,M+1,0)}\delta_{L_{2},(2,M+1,0)}
\]
for $L_{1},L_{2}\in\mathcal{L}$ and $z,a,x,b\in\{0,1\}$. From this we see that $H_{\out}$ is diagonal in the basis $\mathcal{B}_{\legal}$, with entries 
\[
\langle j,\vec{d},\vec{z},\vec{a}|H_{\out}|j,\vec{d},\vec{z},\vec{a}\rangle=\begin{cases}
\frac{1}{64} & d_{2}=1\text{, }F(2,j,1)=(2,M+1,0)\text{ and }z_{2}=0\\
0 & \text{otherwise}.
\end{cases}
\]
Note that $F(2,j,1)=(2,M+1,0)$ if and only if $j\geq j_{\max}$, where 
\[
j_{\max}=\max\{j\in[M]:\: s(j)=2\}.
\]
Using this fact we compute the nonzero matrix elements of $H_{\out}$ in the basis \eq{phi_init_basis_restated}:
\begin{align*}
&\langle j,\vec{d},\In(\vec{x}),\vec{a}|H_{\out}|j,\vec{d},\In(\vec{z}),\vec{a}\rangle \nonumber\\
&\quad=\sum_{\vec{w},\vec{y}\in\{0,1\}^{n}}\langle\vec{x}|\bar{U}_{j,d_1}^{\dagger}(a_{1})|\vec{w}\rangle\langle\vec{y}| \bar{U}_{j,d_1}(a_{1})|\vec{z}\rangle\langle j,\vec{d},\vec{w},\vec{a}|H_{\out}|j,\vec{d},\vec{y},\vec{a}\rangle\nonumber \\
&\quad=\begin{cases}
\langle\vec{x}|\bar{U}_{j,d_1}^{\dagger}(a_{1})|0\rangle\langle0|_{2}\bar{U}_{j,d_1}(a_{1})|\vec{z}\rangle\frac{1}{64} & j\geq j_{\max}\text{ and }d_{2}=1\\
0 & \text{otherwise}
\end{cases}\nonumber \\
&\quad=\begin{cases}
\langle\vec{x}|U_{1}^{\dagger}(a_{1})\ldots U_{j}^{\dagger}(a_{1})|0\rangle\langle0|_{2}U_{j}(a_{1})\ldots U_{1}(a_{1})|\vec{z}\rangle\frac{1}{64} & j\geq j_{\max}\text{ and }d_1=d_{2}=1\\
\langle\vec{x}|U_{1}^{\dagger}(a_{1})\ldots U_{j-1}^{\dagger}(a_{1})|0\rangle\langle0|_{2}U_{j-1}(a_{1})\ldots U_{1}(a_{1})|\vec{z}\rangle\frac{1}{64} & j> j_{\max}\text{ and } d_{2}=1 \text{ and } d_1\in \{0,2\} \\
0 & \text{otherwise}
\end{cases}\nonumber\\
&\quad=\begin{cases}
\langle\vec{x}|U_{1}^{\dagger}(a_{1})\ldots U_{M}^{\dagger}(a_{1})|0\rangle\langle0|_{2}U_{M}(a_{1})\ldots U_{1}(a_{1})|\vec{z}\rangle\frac{1}{64} & j\geq j_{\max}\text{ and }d_{2}=1\\
0 & \text{otherwise}
\end{cases}\nonumber \\
&\quad=\begin{cases}
\langle\vec{x}|U_{\mathcal{C}_{X}}^{\dagger}(a_{1})|0\rangle\langle0|_{2}U_{\mathcal{C}_{X}}(a_{1})|\vec{z}\rangle\frac{1}{64} & j\geq j_{\max}\text{ and }d_{2}=1\\
0 & \text{otherwise.}
\end{cases}
\end{align*}
In going from the second to the third equality we use the fact that $j=j_{\max}$ implies $d_1=d_2$ (since $s(j_{\max})=2$).
In the next-to-last line we use the fact that $[U_{J}(a_{1}),|0\rangle\langle0|_{2}]=0$ for $J>j_{\max}$. This confirms the result stated in equation \eq{hout_matels-1}.

\end{document}